\documentclass[10pt, aps, prx,twocolumn,preprintnumbers,
superscriptaddress,amsmath,amssymb,nofootinbib,nobalancelastpage,
longbibliography]{revtex4-2}
\pdfoutput=1

\usepackage[english]{babel}
\usepackage{graphicx, xcolor}
\graphicspath{{figures/}}
\usepackage{amsfonts, amsthm, microtype, mathrsfs, bbm}
\usepackage[colorlinks,allcolors=blue]{hyperref}
\usepackage{braket, mathtools, physics, enumerate}
\usepackage[capitalize]{cleveref}
\usepackage{cases}
\usepackage{thm-restate}
\usepackage[export]{adjustbox}
\usepackage{comment}
\usepackage{lipsum}
\usepackage[normalem]{ulem}

\usepackage[section]{placeins}

\usepackage{tikz}
\usetikzlibrary{arrows.meta}
\usetikzlibrary{angles, quotes}

\usetikzlibrary{positioning}

\newcommand{\Id}{\mathbbm{1}}
\newcommand{\be}{\begin{equation}}
\newcommand{\ee}{\end{equation}}
\newcommand{\bea}{\begin{eqnarray}}
\newcommand{\eea}{\end{eqnarray}}

\newtheorem{thm}{Theorem}%[section]
\newtheorem*{thm*}{Theorem}
\newtheorem{prop}{Proposition}
\newtheorem*{prop*}{Proposition}

\newtheorem*{cor*}{Corollary}
\newtheorem{lem}{Lemma}
\newtheorem{defn}{Definition}

\newtheorem{ex}{Example}

\renewcommand{\ket}[1]{|#1\rangle}
\renewcommand{\bra}[1]{\langle#1|}
\newcommand{\1}{\mathbbm{1}}

\newcommand{\gs}[1]{\textcolor{blue}{#1}}

\usepackage{tikz}
\usepackage{comment}
\usetikzlibrary{positioning}
\usetikzlibrary{decorations.pathreplacing,calligraphy,decorations.markings}

\definecolor{tensorcolor}{rgb}{0.65,0.77,0.95}
\definecolor{btensorcolor}{rgb}{0.65,0.50,0.69}
\definecolor{whitetensorcolor}{rgb}{0.93,0.93,0.93}
\definecolor{yorgosorange}{RGB}{252, 198, 3}
\definecolor{purification}{RGB}{124, 138, 128}

\newcommand\doubledx{1.6}
\newcommand\singledx{1.8}

\newcommand\stradius{0.3}

\newcommand{\HorizontalLine}[2]{
    \begin{scope}[shift={(#1)}]
        \draw[very thick] (0,0) -- ++(#2,0); 
    \end{scope}
}

\newcommand{\HorseshoeTensor}[8]{%
  \begin{scope}[shift={(#1)}]

    \def\barw{#3}
    \def\outer{#6}
    \def\height{#7}
    \def\radius{0.09cm}
    \def\legx{\outer-0.5} % New x-position for vertical legs

    \begin{scope}[shift={(-0.5*\barw,-0.5*\height)}]
      \path[fill=tensorcolor, draw=black, line width=0.7pt, rounded corners=\radius]
        (0,0) -- (\outer,0)
        -- (\outer,\barw)
        -- (\barw,\barw)
        -- (\barw,\height - \barw)
        -- (\outer,\height - \barw)
        -- (\outer,\height)
        -- (0,\height)
        -- cycle;
    \end{scope}

        \draw[very thick] (\legx,0.5*\height) -- ++(0,#2);
        \draw[very thick] (\legx,-0.5*\height) -- ++(0,-#2);

        \draw[very thick] (\legx,0.5*\height-\barw) -- ++(0,-#2);
        \draw[very thick] (\legx,-0.5*\height+\barw) -- ++(0,#2);

    \ifnum#5=0 %
        \draw[very thick] (\barw/2,0) -- ++(#2,0);
        \draw[very thick] (-\barw/2,0) -- ++(-#2,0);
    \fi

    \ifnum#5=1 %
        \draw[very thick] (\barw/2,0) -- ++(#2,0);
    \fi

    \ifnum#5=-1 %
        \draw[very thick] (-\barw/2,0) -- ++(-#2,0);
    \fi

    \ifnum#5=2 %
        \draw[very thick] (\barw/2,0.1) -- ++(#2,0);
        \draw[very thick] (-\barw/2,-0.1) -- ++(-#2,0);
        \draw[very thick] (-\barw/2,0.1) -- ++(-#2,0);
        \draw[very thick] (\barw/2,-0.1) -- ++(#2,0);
    \fi

        \ifnum#5=3 %
    \fi

    \ifnum#8=1
        \pgfmathsetmacro{\centerY}{0}
        \pgfmathsetmacro{\halfHeight}{0.5*\height}
        \pgfmathsetmacro{\offsetY}{0.5*\barw}
        
        \node at (-\barw-0.2, 0.2) {$m$};                  % left leg
        \node at (\barw+0.1, 0.2) {$n$};                  % right leg
        \node at (\legx+0.2, \halfHeight - \offsetY - 0.4) {$i$};  % top inner
        \node at (\legx+0.2, -\halfHeight + \offsetY + 0.5) {$j$}; % bottom inner
        \node at (\legx+0.2, \halfHeight + \offsetY + 0.1) {$o$};      % top outer
        \node at (\legx+0.2, -\halfHeight - \offsetY - 0.1) {$p$};     % bottom outer

        \fi

    \node at (-\barw,-0.5*\height-\barw) {#4};
    
  \end{scope}
}

\newcommand{\HorseshoeTensorForChannelAction}[8]{%
  \begin{scope}[shift={(#1)}]

    \def\barw{#3}
    \def\outer{#6}
    \def\height{#7}
    \def\radius{0.09cm}
    \def\legx{\outer-0.5} % New x-position for vertical legs

    \begin{scope}[shift={(-0.5*\barw,-0.5*\height)}]
      \path[fill=tensorcolor, draw=black, line width=0.7pt, rounded corners=\radius]
        (0,0) -- (\outer,0)
        -- (\outer,\barw)
        -- (\barw,\barw)
        -- (\barw,\height - \barw)
        -- (\outer,\height - \barw)
        -- (\outer,\height)
        -- (0,\height)
        -- cycle;
    \end{scope}

        \draw[very thick] (\legx,0.5*\height) -- ++(0,#2);
        \draw[very thick] (\legx,-0.5*\height) -- ++(0,-#2);

        \draw[very thick] (\legx,0.5*\height-\barw) -- ++(0,-#2);
        \draw[very thick] (\legx,-0.5*\height+\barw) -- ++(0,#2);

        \draw[very thick] (\legx,0.5*\height-\barw) -- (\legx,-0.5*\height+\barw);
        \draw[very thick] (\legx - #3 - #2,0) -- (\legx + #3 + #2,0);

    \ifnum#5=0 %
        \draw[very thick] (\barw/2,\height/4) -- ++(#2,0);
        \draw[very thick] (-\barw/2,\height/4) -- ++(-#2,0);
    \fi

    \ifnum#5=1 %
        \draw[very thick] (\barw/2,0) -- ++(#2,0);
    \fi

    \ifnum#5=-1 %
        \draw[very thick] (-\barw/2,0) -- ++(-#2,0);
    \fi

    \ifnum#8=1
        \pgfmathsetmacro{\centerY}{0}
        \pgfmathsetmacro{\halfHeight}{0.5*\height}
        \pgfmathsetmacro{\offsetY}{0.5*\barw}
        
        \node at (-\barw-0.2, 0.2) {$m$};                  % left leg
        \node at (\barw+0.1, 0.2) {$n$};                  % right leg
        \node at (\legx+0.2, \halfHeight - \offsetY - 0.4) {$i$};  % top inner
        \node at (\legx+0.2, -\halfHeight + \offsetY + 0.5) {$j$}; % bottom inner
        \node at (\legx+0.2, \halfHeight + \offsetY + 0.1) {$o$};      % top outer
        \node at (\legx+0.2, -\halfHeight - \offsetY - 0.1) {$p$};     % bottom outer

        \fi

    \node at (-\barw,-0.5*\height-\barw) {\scriptsize #4};
    \pgfmathsetmacro{\squarehalf}{0.35}
    \draw[ thick, fill=yorgosorange, rounded corners=2pt] (-\squarehalf + \legx,-\squarehalf) rectangle (\squarehalf  + \legx ,\squarehalf);
    \draw (\legx,0) node {\scriptsize $G_k$};
    
  \end{scope}
}

\newcommand{\IsomTensor}[5]{
    \begin{scope}[shift={(#1)}]
        \pgfmathsetmacro{\topY}{#3}     % top of the box
        \pgfmathsetmacro{\bottomY}{-#3} % bottom of the box
        
        \draw[very thick] (#3/4,\topY) -- ++(0,#2);
        \draw[very thick, color=purification] (-#3/4,\topY) -- ++(0,#2);
        
        \draw[very thick] (0,\bottomY) -- ++(0,-#2);
    
        \ifnum#5=0
            \draw[very thick] (-#3,0) -- +(-#2,0);  % Left leg from box edge
            \draw[very thick] (#3,0) -- +(#2,0);    % Right leg from box edge
        \fi
        
        \ifnum#5=-1
            \draw[very thick] (#3,0) -- +(#2,0);    % Right leg from box edge
        \fi
        
        \ifnum#5=1
            \draw[very thick] (-#3,0) -- +(-#2,0);  % Left leg from box edge
        \fi
    
        \draw[ thick, fill=tensorcolor, rounded corners=2pt] (-#3,-#3) rectangle (#3,#3);
        \draw (0,0) node {\scriptsize #4};
    \end{scope}
}

\newcommand{\IsomConjTensor}[5]{
    \begin{scope}[shift={(#1)}]

        \pgfmathsetmacro{\topY}{#3}     % top of the box
        \pgfmathsetmacro{\bottomY}{-#3} % bottom of the box
        
        \draw[very thick] (#3/4,\bottomY) -- ++(0,-#2);
        \draw[very thick, color=purification] (-#3/4,\bottomY) -- ++(0,-#2);
        
        \draw[very thick] (0,\topY) -- ++(0,#2);
    
        \ifnum#5=0
            \draw[very thick] (-#3,0) -- +(-#2,0);  % Left leg
            \draw[very thick] (#3,0) -- +(#2,0);    % Right leg
        \fi
    
        \ifnum#5=-1
            \draw[very thick] (#3,0) -- +(#2,0);    % Right leg only
        \fi
    
        \ifnum#5=1
            \draw[very thick] (-#3,0) -- +(-#2,0);  % Left leg only
        \fi

        \draw[ thick, fill=tensorcolor, rounded corners=2pt] (-#3,-#3) rectangle (#3,#3);
        \draw (0,0) node {\scriptsize \ensuremath{\overline{\expandafter\stripdollar#4}}};

    \end{scope}
}

\newcommand{\IsomTensorPermuted}[5]{
    \begin{scope}[shift={(#1)}]
        \pgfmathsetmacro{\lateralpush}{#3 + #2/2}
        \pgfmathsetmacro{\topY}{#3}     % top of the box
        \pgfmathsetmacro{\bottomY}{-#3} % bottom of the box
        
        \draw[very thick] (#3/4,\topY) -- ++(0,#2);
        \draw[very thick, color=purification] (-#3/4,\topY) -- ++(0,#2);
        \draw [very thick, color=purification] (-#3/4,\topY + #2) to  [bend right=90] (-#3/4-\lateralpush, \topY + #2);
        \draw [very thick, color=purification] (-#3/4-\lateralpush,\topY + #2) to (-#3/4-\lateralpush, \bottomY - #2);
        
        \draw[very thick] (0,\bottomY) -- ++(0,-#2);
        \draw [very thick] (0,\bottomY - #2) to  [bend right=90] (\lateralpush, \bottomY - #2);
        \draw [very thick] (\lateralpush,\bottomY - #2) to (\lateralpush, \topY + #2);
    
        \ifnum#5=0
            \draw[very thick] (-#3,0) -- +(-#2,0);  % Left leg from box edge
            \draw[very thick] (#3,0) -- +(#2,0);    % Right leg from box edge
        \fi
        
        \ifnum#5=-1
            \draw[very thick] (#3,0) -- +(#2,0);    % Right leg from box edge
        \fi
        
        \ifnum#5=1
            \draw[very thick] (-#3,0) -- +(-#2,0);  % Left leg from box edge
        \fi
    
        \draw[ thick, fill=tensorcolor, rounded corners=2pt] (-#3,-#3) rectangle (#3,#3);
        \draw (0,0) node {\scriptsize #4};
    \end{scope}
}

\def\stripdollar$#1${\def\temp{#1}\temp}  % Helper to remove $ signs

\newcommand{\LPTensor}[6]{
	\begin{scope}[shift={(#1)}]

    \def\distancefromcenter{#6}
    \IsomTensor{(0,\distancefromcenter)}{#2}{#3}{#4}{#5};
    \IsomConjTensor{(0,-\distancefromcenter)}{#2}{#3}{#4}{#5};

    \pgfmathsetmacro{\dx}{#3 + #2 / 2};
	\draw [very thick, color=purification] (-#3/4,-\distancefromcenter -#2-#3) to  [bend left=90] (-\dx,-\distancefromcenter -#2-#3);
	\draw [very thick, color=purification] (-#3/4,\distancefromcenter +#2+#3) to  [bend right=90] (-\dx,\distancefromcenter +#2+#3);
	\draw [very thick, color=purification] (-\dx,\distancefromcenter +#2+#3) to  (-\dx,-\distancefromcenter -#2 -#3);
        
	\end{scope}
}

\newcommand{\SingleSiteDouble}[5]{
	\begin{scope}[shift={(#1)}]
        \IsomTensor{(0,-#2-#3)}{#2}{#3}{#4}{#5};
        \IsomConjTensor{(0,#2+#3)}{#2}{#3}{#4}{#5};    
	\end{scope}
}

\newcommand{\TransferMatrixTensor}[5]{
	\begin{scope}[shift={(#1)}]
        \SingleSiteDouble{(0,0)}{#2}{#3}{#4}{#5}
        
        \pgfmathsetmacro{\topY}{2*#3+#2}     % top of the box
        \pgfmathsetmacro{\bottomY}{-2*#3-#2}
        \def\dx{.75};
        \draw [very thick] (0,\topY + #2) to  [bend right=90] (-\dx,\topY + #2);
        \draw [very thick] (0,\bottomY - #2) to  [bend left=90] (-\dx, \bottomY - #2);
        \draw [very thick] (-\dx,\topY + #2) to  (-\dx, \bottomY -#2);   
    
	\end{scope}
}

\newcommand{\SMatrixTensor}[6]{
	\begin{scope}[shift={(#1)}]
        \SingleSiteDouble{(0,0)}{#2}{#3}{#4}{#5}
        \filldraw[color=black, fill=whitetensorcolor, thick] (0,2*#2+2*#3 + \stradius) circle (\stradius);

        \node at (0,2*#2+2*#3 + \stradius) {\tiny  \ensuremath{\expandafter\stripdollar#6}};

        \pgfmathsetmacro{\topY}{2*#3+#2}     % top of the box
        \pgfmathsetmacro{\bottomY}{-2*#3-#2}
        \def\dx{.75};
        \draw [very thick] (0,\topY + #2 + 2*\stradius) to  [bend right=90] (-\dx,\topY + 2*\stradius+ #2);
        \draw [very thick] (0,\bottomY - #2) to  [bend left=90] (-\dx, \bottomY - #2);
        \draw [very thick] (-\dx,\topY + #2+ 2*\stradius) to  (-\dx, \bottomY -#2);   
    
	\end{scope}
}

\newcommand{\SMatrixTensorMiddleSigma}[8]{
	\begin{scope}[shift={(#1)}]
        \SingleSiteDouble{(0,0)}{#2}{#3}{#4}{#5}

        \pgfmathsetmacro{\topY}{2*#3+#2}     % top of the box
        \pgfmathsetmacro{\bottomY}{-2*#3-#2}
        \pgfmathsetmacro{\dx}{#8}
        \draw [very thick] (0,\topY + #2) to  [bend right=90] (-\dx,\topY + #2);
        \draw [very thick] (0,\bottomY - #2) to  [bend left=90] (-\dx, \bottomY - #2);
        \draw [very thick] (-\dx,\topY + #2) to  (-\dx, \bottomY -#2);    

        \filldraw[color=black, fill=whitetensorcolor, thick] (-\dx,0) circle (#7);
        \node at (-\dx,0) {\scriptsize \ensuremath{\expandafter\stripdollar#6}};
    
	\end{scope}
}

\newcommand{\TracedTensor}[6]{
	\begin{scope}[shift={(#1)}]

        \pgfmathsetmacro{\topY}{2*#3+#2}     % top of the box
        \pgfmathsetmacro{\bottomY}{-2*#3-#2}
        \def\dx{.75};
        \draw [very thick] (0,\topY + #2) to  [bend right=90] (-\dx,\topY + #2);
        \draw [very thick] (0,\bottomY - #2) to  [bend left=90] (-\dx, \bottomY - #2);
        \draw [very thick] (-\dx,\topY + #2) to  (-\dx, \bottomY -#2);
        \draw [very thick] (-0,\topY + #2) to  (-0, \bottomY -#2);

        \filldraw[color=black, fill=whitetensorcolor, thick] (-\dx,0) circle (\stradius);
        \node at (-\dx,0) {\scriptsize \ensuremath{\expandafter\stripdollar#6}};
    
	\end{scope}
}

\newcommand{\RightEigen}[4]{
    \def\legs{#2}
    \def\circleradius{#3}
    \def\borderradius{0.09cm}

	\begin{scope}[shift={(#1)}]
    \begin{scope}[shift={(0, -\legs-\circleradius)}]
        \path[draw=black, very thick, rounded corners=\borderradius]
            (0,0) -- (\legs,0)
            -- (\legs,2*\legs+2*\circleradius)
            -- (0,2*\legs+2*\circleradius);
    \end{scope}

    \filldraw[color=black, fill=whitetensorcolor, thick] (\legs,0) circle (\circleradius);
    \node at (\legs,0) {\scriptsize #4};
    
	\end{scope}
}

\newcommand{\LeftEigen}[4]{
    \def\legs{#2}
    \def\circleradius{#3}
    \def\borderradius{0.09cm}

	\begin{scope}[shift={(#1)}]
    \begin{scope}[shift={(0, -\legs-\circleradius)}]
        \path[draw=black, very thick, rounded corners=\borderradius]
            (0,0) -- (-\legs,0)
            -- (-\legs,2*\legs+2*\circleradius)
            -- (0,2*\legs+2*\circleradius);
    \end{scope}

    \filldraw[color=black, fill=whitetensorcolor, thick] (-\legs,0) circle (\circleradius);
    \node at (-\legs,0) {\scriptsize #4};
    
	\end{scope}
}

\newcommand{\LeftEigenStripped}[3]{
    \def\legs{#2}
    \def\circleradius{#3}
    \def\borderradius{0.09cm}

	\begin{scope}[shift={(#1)}]
    \begin{scope}[shift={(0, -\legs-\circleradius)}]
        \path[draw=black, very thick, rounded corners=\borderradius]
            (0,0) -- (-\legs,0)
            -- (-\legs,2*\legs+2*\circleradius)
            -- (0,2*\legs+2*\circleradius);
    \end{scope}
    
	\end{scope}
}

\newcommand{\IdentityLine}[2]{
    \pgfmathsetmacro{\linelength}{#2}

	\begin{scope}[shift={(#1)}]
    \draw[color=black,very thick] (0,-\linelength / 2) -- (0,\linelength / 2);
	\end{scope}
}

\newcommand{\SingleTrLeft}[1]{
	\begin{scope}[shift={(#1)}]
      \draw [very thick] (0,0) to (\doubledx-0.8,0);
	   \draw [very thick] (0,0) to  [bend left=90] (0,0.8);
	   \draw [very thick, dotted] (0,0.8) to  (1,0.8);
	\end{scope}
}

\newcommand{\SingleDots}[2]{
	\begin{scope}[shift={(#1)}]
        \pgfmathsetmacro{\halflength}{#2/2}
        \draw [very thick, dotted] (-\halflength,0) -- (\halflength,0);
	\end{scope}
}

\newcommand{\DoubleDots}[3]{
	\begin{scope}[shift={(#1)}]
      \SingleDots{0,#3}{#2};
      \SingleDots{0,-#3}{#2};
	\end{scope}
}

\newcommand{\SingleTrRight}[1]{
	\begin{scope}[shift={(#1)}, xscale=-1]
	   \SingleTrLeft{(0,0)};
	\end{scope}
}

\newcommand{\BipartiteHorseshoeTensor}[8]{%
  \begin{scope}[shift={(#1)}]

    \def\barw{#3}
    \def\outer{#6}
    \def\height{#7}
    \def\radius{0.09cm}
    \def\legx{\outer-0.5} % New x-position for vertical legs
    \def\legxsec{\outer-1.0} % New x-position for vertical legs

    \begin{scope}[shift={(-0.5*\barw,-0.5*\height)}]
      \path[fill=tensorcolor, draw=black, line width=0.7pt, rounded corners=\radius]
        (0,0) -- (\outer,0)
        -- (\outer,\barw)
        -- (\barw,\barw)
        -- (\barw,\height - \barw)
        -- (\outer,\height - \barw)
        -- (\outer,\height)
        -- (0,\height)
        -- cycle;
    \end{scope}

        \draw[very thick] (\legx,0.5*\height) -- ++(0,#2);
        \draw[very thick] (\legx,-0.5*\height) -- ++(0,-#2);

        \draw[very thick] (\legxsec,0.5*\height) -- ++(0,#2);
        \draw[very thick] (\legxsec,-0.5*\height) -- ++(0,-#2);

        \draw[very thick] (\legx,0.5*\height-\barw) -- ++(0,-#2);
        \draw[very thick] (\legx,-0.5*\height+\barw) -- ++(0,#2);

        \draw[very thick] (\legxsec,0.5*\height-\barw) -- ++(0,-#2);
        \draw[very thick] (\legxsec,-0.5*\height+\barw) -- ++(0,#2);

    \ifnum#8=1
        \pgfmathsetmacro{\centerY}{0}
        \pgfmathsetmacro{\halfHeight}{0.5*\height}
        \pgfmathsetmacro{\offsetY}{0.5*\barw}
        
        \node at (\legx+0.3, \halfHeight - \offsetY - 0.4) {\scriptsize $i_2$};  % top inner
        \node at (\legx+0.3, -\halfHeight + \offsetY + 0.5) {\scriptsize $i_2$}; % bottom inner
        \node at (\legx+0.3, \halfHeight + \offsetY + 0.1) {\scriptsize $o_2$};      % top outer
        \node at (\legx+0.3, -\halfHeight - \offsetY - 0.1) {\scriptsize $o_2$};     % bottom outer

        \node at (\legxsec+0.3, \halfHeight - \offsetY - 0.4) {\scriptsize $i_1$};  % top inner
        \node at (\legxsec+0.3, -\halfHeight + \offsetY + 0.5) {\scriptsize $i_1$}; % bottom inner
        \node at (\legxsec+0.3, \halfHeight + \offsetY + 0.1) {\scriptsize $o_1$};      % top outer
        \node at (\legxsec+0.3, -\halfHeight - \offsetY - 0.1) {\scriptsize $o_1$};     % bottom outer

        \fi

    \node at (-\barw,-0.5*\height-\barw) {\scriptsize #4};
    
  \end{scope}
}

\newcommand{\BipartiteVectorized}[6]{
    \begin{scope}[shift={(#1)}]
        
        \pgfmathsetmacro{\topY}{#3}     % top of the box
        \pgfmathsetmacro{\bottomY}{-#3} % bottom of the box
        \pgfmathsetmacro{\legx}{#5 - 0.2} % bottom of the box
        \pgfmathsetmacro{\legxsec}{#5 - 0.6} % bottom of the box

        \draw[very thick] (\legx,\topY) -- ++(0,#2);
        \draw[very thick] (\legxsec,\topY) -- ++(0,#2);
        \draw[very thick] (-\legx,\topY) -- ++(0,#2);
        \draw[very thick] (-\legxsec,\topY) -- ++(0,#2);
        
        \draw[very thick] (\legx,\bottomY) -- ++(0,-#2);
        \draw[very thick] (\legxsec,\bottomY) -- ++(0,-#2);
        \draw[very thick] (-\legx,\bottomY) -- ++(0,-#2);
        \draw[very thick] (-\legxsec,\bottomY) -- ++(0,-#2);

        \ifnum#6=1
        
            \pgfmathsetmacro{\Yheight}{\topY + #2 + 0.3}

            \node at (\legx, -\Yheight) {\scriptsize $i_2$};  % top inner
            \node at (\legxsec, -\Yheight ) {\scriptsize $i_2$}; % bottom inner
            \node at (\legx, \Yheight) {\scriptsize $o_2$};      % top outer
            \node at (\legxsec, \Yheight) {\scriptsize $o_2$};     % bottom outer
    
            \node at (-\legxsec, -\Yheight) {\scriptsize $i_1$}; 
            \node at (-\legx, -\Yheight) {\scriptsize $i_1$};
            \node at (-\legxsec, \Yheight) {\scriptsize $o_1$};   
            \node at (-\legx,\Yheight) {\scriptsize $o_1$};
            
            \fi
        
        \draw[ thick, fill=tensorcolor, rounded corners=2pt] (-#5,-#3) rectangle (#5,#3);
        \draw (0,0) node {\scriptsize #4};
    \end{scope}
}

\newcommand{\SVDBipartite}[6]{
    \begin{scope}[shift={(#1)}]
        
        \pgfmathsetmacro{\topY}{#3}     % top of the box
        \pgfmathsetmacro{\bottomY}{-#3} % bottom of the box
        \pgfmathsetmacro{\legx}{#5 - 0.2} % bottom of the box
        \pgfmathsetmacro{\legxsec}{#5 - 0.6} % bottom of the box

        \draw[very thick] (\legx,\topY) -- ++(0,#2);
        \draw[very thick] (\legxsec,\topY) -- ++(0,#2);
        \draw[very thick] (-\legx,\topY) -- ++(0,#2);
        \draw[very thick] (-\legxsec,\topY) -- ++(0,#2);
        
        \draw[very thick] (\legx,\bottomY) -- ++(0,-#2);
        \draw[very thick] (\legxsec,\bottomY) -- ++(0,-#2);
        \draw[very thick] (-\legx,\bottomY) -- ++(0,-#2);
        \draw[very thick] (-\legxsec,\bottomY) -- ++(0,-#2);

        \ifnum#6=1
        
            \pgfmathsetmacro{\Yheight}{\topY + #2 + 0.3}

            \node at (\legx, -\Yheight) {\scriptsize $i_2$};  % top inner
            \node at (\legxsec, -\Yheight ) {\scriptsize $i_2$}; % bottom inner
            \node at (\legx, \Yheight) {\scriptsize $o_2$};      % top outer
            \node at (\legxsec, \Yheight) {\scriptsize $o_2$};     % bottom outer
    
            \node at (-\legxsec, -\Yheight) {\scriptsize $i_1$}; 
            \node at (-\legx, -\Yheight) {\scriptsize $i_1$};
            \node at (-\legxsec, \Yheight) {\scriptsize $o_1$};   
            \node at (-\legx,\Yheight) {\scriptsize $o_1$};
            
            \fi
            
        \draw[very thick] (-#5,0) to (#5,0);
        \draw[ thick, fill=tensorcolor, rounded corners=2pt] (-#5,-#3) rectangle (-#5 + 0.8,#3);
        \draw[ thick, fill=tensorcolor, rounded corners=2pt] (#5 - 0.8,-#3) rectangle (#5,#3);
        \draw (0,0.4) node {\small $D$};
    \end{scope}
}

\newcommand\subsetsim{\mathrel{%
  \ooalign{\raise0.2ex\hbox{$\subset$}\cr\hidewidth\raise-0.8ex\hbox{\scalebox{0.9}{$\sim$}}\hidewidth\cr}}}

  \newcommand{\GTensor}[5]{
	\begin{scope}[shift={(#1)}]
    \ifnum#5=0
		\draw[very thick] (-#2,0) -- (#2,0);
		\draw[very thick] (0,#2) -- (0,-#2);
    \fi
    \ifnum#5=-1
		\draw[very thick] (0,0) -- (#2,0);
		\draw[very thick] (0,#2) -- (0,-#2);
    \fi
    \ifnum#5=1
		\draw[very thick] (-#2,0) -- (0,0);
		\draw[very thick] (0,#2) -- (0,-#2);
    \fi

    \ifnum#5=2
    \fi

    \ifnum#5=3 % MPS
        \draw[very thick] (-#2,0) -- (#2,0);
        \draw[very thick] (0,#2) -- (0,0);
    \fi
        \draw[ thick, fill=tensorcolor, rounded corners=2pt] (-#3,-#3) rectangle (#3,#3);
		\draw (0,0) node {\scriptsize #4};
	\end{scope}
}

  \newcommand{\GTensorOrange}[5]{
	\begin{scope}[shift={(#1)}]
    \ifnum#5=0
		\draw[very thick] (-#2,0) -- (#2,0);
		\draw[very thick] (0,#2) -- (0,-#2);
    \fi
    \ifnum#5=-1
		\draw[very thick] (0,0) -- (#2,0);
		\draw[very thick] (0,#2) -- (0,-#2);
    \fi
    \ifnum#5=1
		\draw[very thick] (-#2,0) -- (0,0);
		\draw[very thick] (0,#2) -- (0,-#2);
    \fi
        \draw[ thick, fill=yorgosorange, rounded corners=2pt] (-#3,-#3) rectangle (#3,#3);
		\draw (0,0) node {\scriptsize #4};
	\end{scope}
}

\newcommand{\GDTensor}[5]{
	\begin{scope}[shift={(#1)}]
    \ifnum#5=0
		\draw[very thick] (-#2,0) -- (#2,0);
		\draw[very thick] (0,#2) -- (0,-#2);
    \fi
    \ifnum#5=-1
		\draw[very thick] (0,0) -- (#2,0);
		\draw[very thick] (0,#2) -- (0,-#2);
    \fi
    \ifnum#5=1
		\draw[very thick] (-#2,0) -- (0,0);
		\draw[very thick] (0,#2) -- (0,-#2);
    \fi
        \draw[ thick, fill=tensorcolor, rounded corners=2pt] (-#3,-#3) rectangle (#3,#3);
        \draw (0,0) node {\scriptsize \ensuremath{\overline{\expandafter\stripdollar#4}}};
	\end{scope}
}

\newcommand{\MPSTensor}[5]{
	\begin{scope}[shift={(#1)}]
    \ifnum#5=0
		\draw[very thick] (-#2 - #3,0) -- (#2 + #3,0);
		\draw[very thick] (0,#2 + #3) -- (0,0);
    \fi
    \ifnum#5=-1
		\draw[very thick] (0,0) -- (#2 + #3,0);
		\draw[very thick] (0,#2 + #3) -- (0,0);
    \fi
    \ifnum#5=1
		\draw[very thick] (-#2 - #3,0) -- (0,0);
		\draw[very thick] (0,#2 + #3) -- (0,0);
    \fi

    \ifnum#5=2
    \fi
        \draw[ thick, fill=tensorcolor, rounded corners=2pt] (-#3,-#3) rectangle (#3,#3);
		\draw (0,0) node {\scriptsize #4};
	\end{scope}
}

\newcommand{\GHZTensor}[4]{
	\begin{scope}[shift={(#1)}]
    \ifnum#4=0
		\draw[very thick] (-#2 - #3,0) -- (#2 + #3,0);
		\draw[very thick] (0,#2 + #3) -- (0,0);
    \fi
    \ifnum#4=-1
		\draw[very thick] (0,0) -- (#2 + #3,0);
		\draw[very thick] (0,#2 + #3) -- (0,0);
    \fi
    \ifnum#4=1
		\draw[very thick] (-#2 - #3,0) -- (0,0);
		\draw[very thick] (0,#2 + #3) -- (0,0);
    \fi

    \ifnum#4=2
    \fi

    \filldraw[black] (0,0) circle (4pt) node{};

	\end{scope}
}

\newcommand{\MPOTensor}[5]{
	\begin{scope}[shift={(#1)}]
    \ifnum#5=0
		\draw[very thick] (-#2 - #3,0) -- (#2 + #3,0);
		\draw[very thick] (0,#2 + #3) -- (0, - #2 - #3);
    \fi
    \ifnum#5=-1
		\draw[very thick] (0,0) -- (#2 + #3,0);
		\draw[very thick] (0,#2 + #3) -- (0, - #2 - #3);
    \fi
    \ifnum#5=1
		\draw[very thick] (-#2 - #3,0) -- (0,0);
		\draw[very thick] (0,#2 + #3) -- (0, - #2 - #3);
    \fi

        \draw[ thick, fill=tensorcolor, rounded corners=2pt] (-#3,-#3) rectangle (#3,#3);
		\draw (0,0) node {\scriptsize #4};
	\end{scope}
}

\newcommand{\MPOTensorOrange}[5]{
	\begin{scope}[shift={(#1)}]
    \ifnum#5=0
		\draw[very thick] (-#2 - #3,0) -- (#2 + #3,0);
		\draw[very thick] (0,#2 + #3) -- (0, - #2 - #3);
    \fi
    \ifnum#5=-1
		\draw[very thick] (0,0) -- (#2 + #3,0);
		\draw[very thick] (0,#2 + #3) -- (0, - #2 - #3);
    \fi
    \ifnum#5=1
		\draw[very thick] (-#2 - #3,0) -- (0,0);
		\draw[very thick] (0,#2 + #3) -- (0, - #2 - #3);
    \fi

        \draw[ thick, fill=yorgosorange, rounded corners=2pt] (-#3,-#3) rectangle (#3,#3);
		\draw (0,0) node {\scriptsize #4};
	\end{scope}
}

\newcommand{\MPOConjTensor}[5]{
	\begin{scope}[shift={(#1)}]
    \ifnum#5=0
		\draw[very thick] (-#2 - #3,0) -- (#2 + #3,0);
		\draw[very thick] (0,#2 + #3) -- (0, - #2 - #3);
    \fi
    \ifnum#5=-1
		\draw[very thick] (0,0) -- (#2 + #3,0);
		\draw[very thick] (0,#2 + #3) -- (0, - #2 - #3);
    \fi
    \ifnum#5=1
		\draw[very thick] (-#2 - #3,0) -- (0,0);
		\draw[very thick] (0,#2 + #3) -- (0, - #2 - #3);
    \fi

        \draw[ thick, fill=tensorcolor, rounded corners=2pt] (-#3,-#3) rectangle (#3,#3);
		\draw (0,0) node {\scriptsize \ensuremath{\overline{\expandafter\stripdollar#4}}};
	\end{scope}
}

\newcommand{\DoubleMPOTensor}[5]{
	\begin{scope}[shift={(#1)}]
        \MPOConjTensor{(0,#2+#3)}{#2}{#3}{#4}{#5};
        \MPOTensor{(0,-#2-#3)}{#2}{#3}{#4}{#5};
	\end{scope}
}

\newcommand{\MPOTensorDoubleLegs}[5]{
    \begin{scope}[shift={(#1)}]
        \pgfmathsetmacro{\topY}{#3 + #2}     % top of the box
        \pgfmathsetmacro{\bottomY}{-\topY} % bottom of the box

        \draw[very thick] (#3/2,\topY) -- (#3/2,\bottomY);
        \draw[very thick] (-#3/2,\topY) -- (- #3/2,\bottomY);
    
        \ifnum#5=0
            \draw[very thick] (-#3,0) -- +(-#2,0);  % Left leg from box edge
            \draw[very thick] (#3,0) -- +(#2,0);    % Right leg from box edge
        \fi
        
        \ifnum#5=-1
            \draw[very thick] (#3,0) -- +(#2,0);    % Right leg from box edge
        \fi
        
        \ifnum#5=1
            \draw[very thick] (-#3,0) -- +(-#2,0);  % Left leg from box edge
        \fi
    
        \draw[ thick, fill=tensorcolor, rounded corners=2pt] (-#3,-#3) rectangle (#3,#3);
        \draw (0,0) node {\scriptsize #4};
    \end{scope}
}

\newcommand{\MPOTensorDoubleLegsWithInputGHZ}[5]{
    \begin{scope}[shift={(#1)}]
        \pgfmathsetmacro{\topY}{#3 + #2}     % top of the box
        \pgfmathsetmacro{\bottomY}{-\topY} % bottom of the box

        \GHZTensor{#3/2,\bottomY}{#2}{#3}{0} 
        
        \draw[very thick] (#3/2,\topY) -- (#3/2,\bottomY);
        \draw[very thick] (-#3/2,0) -- (- #3/2,2*\bottomY);
        \draw[very thick, color=purification] (-#3/2,\topY) -- (- #3/2,0);
    
        \ifnum#5=0
            \draw[very thick] (-#3,0) -- +(-#2,0);  % Left leg from box edge
            \draw[very thick] (#3,0) -- +(#2,0);    % Right leg from box edge
        \fi
        
        \ifnum#5=-1
            \draw[very thick] (#3,0) -- +(#2,0);    % Right leg from box edge
        \fi
        
        \ifnum#5=1
            \draw[very thick] (-#3,0) -- +(-#2,0);  % Left leg from box edge
        \fi
    
        \draw[ thick, fill=tensorcolor, rounded corners=2pt] (-#3,-#3) rectangle (#3,#3);
        \draw (0,0) node {\scriptsize #4};
    \end{scope}
}

\newcommand{\myarrow}[2]{
	\begin{scope}[shift={(#1)}]
    \ifnum#2=1
\draw[-{Stealth[length=1mm, width=2.3mm]}] (0,0.0) -- (0,0.03);
    \fi
    \ifnum#2=2
\draw[-{Stealth[length=1mm, width=2.3mm]}] (0,0) -- (0,-0.03);
    \fi
    \ifnum#2=3
\draw[-{Stealth[length=1mm, width=2.3mm]}] (0,0) -- (0.03,0);
    \fi
    \ifnum#2=4
\draw[-{Stealth[length=1mm, width=2.3mm]}] (0,0) -- (-0.03,0);
    \fi
\end{scope}
}

\newcommand{\IsomTensorMap}[4]{
    \begin{scope}[shift={(#1)}]
        \pgfmathsetmacro{\topY}{#3}     % top of the box
        \pgfmathsetmacro{\bottomY}{-#3} % bottom of the box
        
        \draw[very thick] (#3/2,\topY) -- ++(0,#2);
        \draw[very thick, color=purification] (-#3/2,\topY) -- ++(0,#2);
        
        \draw[very thick] (0,\bottomY) -- ++(0,-#2);
    
        \draw[very thick, rounded corners=3pt] (-#3,0) -- (-#3-#2,0) -- (-#3-#2,-#3-#2);
        \draw[very thick, rounded corners=3pt] (#3,0) -- (#3+#2,0) -- (#3+#2,-#3-#2);
    
        \draw[ thick, fill=tensorcolor, rounded corners=2pt] (-#3,-#3) rectangle (#3,#3);
        \draw (0,0) node {\scriptsize #4};

        \draw[-{Stealth[length=1mm, width=2.3mm]}] (0,\bottomY - #2/3) -- (0,\bottomY+0.03 - #2/3 );
        \draw[-{Stealth[length=1mm, width=2.3mm]}] (-#3-#2,\bottomY - #2/3) -- (-#3-#2,\bottomY+0.03 - #2/3 );
        \draw[-{Stealth[length=1mm, width=2.3mm]}] (#3+#2,\bottomY - #2/3) -- (#3+#2,\bottomY+0.03 - #2/3 );
        \draw[-{Stealth[length=1mm, width=2.3mm]}] (#3/2,\topY + #2/2) -- (#3/2,\topY +0.03 + #2/2 );
        \draw[-{Stealth[length=1mm, width=2.3mm]}, color=purification] (-#3/2,\topY + #2/2) -- (-#3/2,\topY +0.03 + #2/2 );
        
    \end{scope}
}

\usepackage{tikz}
\usepackage{comment}
\usetikzlibrary{positioning}
\usetikzlibrary{decorations.pathreplacing,calligraphy,decorations.markings}

\definecolor{tensorcolor}{rgb}{0.65,0.77,0.95}
\definecolor{btensorcolor}{rgb}{0.65,0.50,0.69}
\definecolor{whitetensorcolor}{rgb}{0.93,0.93,0.93}
\definecolor{purification}{RGB}{124, 138, 128}
\newcommand{\XoneDeco}[1]{
	\begin{scope}[shift={(#1)}]
        \draw[dashed, ultra thick] (0,0) -- (0,1);
          \draw[very thick] (0.25,3) -- (0.25,1.5);  % Left curve
          \draw[very thick, color=purification] (0,3) -- (0,1.5);       % Right curve
          \draw[very thick] (0,1.5) -- (2,1.5);  % Horizontal right
          \draw[very thick] (0.5,1.5) -- (0,1);    % Diagonal down-left
          \draw[very thick] (0,1.5) -- (0,1);  % Vertical down
	\end{scope}
}

\newcommand{\YoneDeco}[1]{
	\begin{scope}[shift={(#1)}]
          \draw[dashed, ultra thick] (0,-1.5) -- (0,0);
          \draw[very thick] (-1.5,-1.5) -- (0,-1.5); % Horizontal right
          \draw[very thick] (-0.5,-1.5) -- (0,-2);   % Diagonal down-right
          \draw[very thick] (0,-1.5) -- (0,-3);  
	\end{scope}
}

\newcommand{\LeftDeco}[1]{
	\begin{scope}[shift={(#1)}]
        \XoneDeco{0,0}
        \YoneDeco{0,1.}
	\end{scope}
}

\newcommand{\XtwoDeco}[1]{
	\begin{scope}[shift={(#1)}]
          \draw[line width=2pt] (0,0) -- (0,1);
          \draw[very thick, color=purification] (-0.25,3) -- (-0.25,1.5);
          \draw[very thick] (0,3) -- (0,1.5);
          \draw[very thick] (0,1.5) -- (-2,1.5);
          \draw[very thick] (-0.5,1.5) -- (0,1);
          \draw[very thick] (0,1.5) -- (0,1);
        \filldraw[very thick, fill=gray!90] (-0.5,1.5) -- (0,1) -- (0,1.5) -- cycle;
	\end{scope}
}

\newcommand{\YtwoDeco}[1]{
	\begin{scope}[shift={(#1)}]
          \draw[line width=2pt] (0,-1.5) -- (0,0);
          \draw[very thick] (1.5,-1.5) -- (0,-1.5);
          \draw[very thick] (0.5,-1.5) -- (0,-2);
          \draw[very thick] (0,-1.5) -- (0,-3);
        \filldraw[very thick, fill=gray!90] (0.5,-1.5) -- (0,-2) -- (0,-1.5) -- cycle;
	\end{scope}
}

\newcommand{\RightDeco}[1]{
	\begin{scope}[shift={(#1)}]
        \XtwoDeco{0,0}
        \YtwoDeco{0,1.}
	\end{scope}
}

\newcommand{\IsomU}[1]{
	\begin{scope}[shift={(#1)}]
        \YtwoDeco{0,0}
        \YoneDeco{2.,0}
	\end{scope}
}

\newcommand{\IsomV}[1]{
	\begin{scope}[shift={(#1)}]
        \XtwoDeco{2,0}
        \XoneDeco{0,0}
	\end{scope}
}

\begin{document}
\def\defaultscaling{0.6}

\title{Structure and Classification of Matrix Product Quantum Channels}

\author{Giorgio \surname{Stucchi}}
\email{giorgio.stucchi@mpq.mpg.de}
\affiliation{Max-Planck-Institut f{\"{u}}r Quantenoptik,
Hans-Kopfermann-Str.\ 1, 85748 Garching, Germany}
\affiliation{Munich Center for Quantum Science and Technology, Schellingstra\ss e 4, 80799 M\"unchen, Germany}
\author{J.~Ignacio \surname{Cirac}}
\affiliation{Max-Planck-Institut f{\"{u}}r Quantenoptik,
Hans-Kopfermann-Str.\ 1, 85748 Garching, Germany}
\affiliation{Munich Center for Quantum Science and Technology, Schellingstra\ss e 4, 80799 M\"unchen, Germany}
\author{Rahul \surname{Trivedi}}
\affiliation{Max-Planck-Institut f{\"{u}}r Quantenoptik,
Hans-Kopfermann-Str.\ 1, 85748 Garching, Germany}
\affiliation{Munich Center for Quantum Science and Technology, Schellingstra\ss e 4, 80799 M\"unchen, Germany}	
\author{Georgios \surname{Styliaris}}
\email{georgios.styliaris@mpq.mpg.de}
\affiliation{Max-Planck-Institut f{\"{u}}r Quantenoptik,
Hans-Kopfermann-Str.\ 1, 85748 Garching, Germany}
\affiliation{Munich Center for Quantum Science and Technology, Schellingstra\ss e 4, 80799 M\"unchen, Germany}

\begin{abstract}

We develop a framework for \emph{Matrix Product Quantum Channels} (MPQCs), a one-dimensional tensor-network description of completely positive, trace-preserving maps. We focus on translation-invariant channels, generated by a single repeated tensor, that admit a local purification. We show that their purifying isometry can always be implemented by a constant-depth brickwork quantum circuit, implying that such channels generate only short-range correlations when applied on product states. In contrast to the unitary setting, where one-dimensional quantum cellular automata (in one-to-one correspondence with matrix product unitaries) carry a nontrivial index, we prove that all locally purified channels belong to a single phase, that is, they can be continuously deformed into one another. We then extend the framework to a broader class of translation-invariant channels capable of generating long-range entanglement and show that these remain deterministically implementable in constant depth using two rounds of measurements and feedforward.
\end{abstract}

\maketitle

{\em Introduction.---}
Tensor-network methods, particularly matrix product states (MPS) and matrix product density operators (MPDOs), have become indispensable tools in the study of one-dimensional quantum many-body systems. MPS efficiently approximate ground states of 1D local and gapped Hamiltonians \cite{hastings:arealaw,arad2013area} and provide a framework to classify 1D gapped phases of matter~\cite{chen2011complete,schuch2011classifying}. MPDOs can accurately represent thermal states of local 1D Hamiltonians~\cite{hastings2006solving,molnar2014approximating}, but also appear as boundary states of 2D gapped systems~\cite{cirac2011entanglement,cirac_matrix_2017}.
%and have been used to classify 1D mixed-state phases, an ongoing effort.
Yet, a complete description of physical processes requires going beyond states and operators to encompass \emph{quantum channels}, i.e., completely positive trace-preserving maps governing open-system dynamics. Representing these channels in matrix-product form is important for several reasons. First, they preserve the area-law and thus naturally arise in classifying mixed-state phases~\cite{Hastings_2011,de2022symmetry,ma2023average,ruiz2024matrix,sun2025anomalous,lessa2025mixed,liu2025trading,liu2026parent}. Second, they can encode symmetry actions on matrix product density operators, similar to how MPO symmetries function for MPS~\cite{wang2015bosonic,roose2019anomalous,Garre_Rubio_2023}. Finally, they offer a compact tensor-network description for correlated noise models beyond low-weight errors on quantum devices, which can be used as a variational ansatz for noise tomography~\cite{torlai2020quantumprocesstomographyunsupervised,filippov2022matrixproductchannelvariationally,liu2026efficientlylearningglobalquantum}.

Despite their relevance, a systematic analysis of the correlation structure and a classification of matrix product quantum channels (MPQCs) remain lacking. What has been thoroughly analyzed is the unitary case, namely matrix product unitaries (MPUs)~\cite{cirac_matrix_2017-1,sahinoglu2018matrix,piroli2021fermionic,shukla2025simple,styliaris2025matrix, francorubio2025symmetrydefectsgaugingquantum, Lootens_2025}.
In the homogeneous setting, where the same tensor is used at every site, MPUs are equivalent to 1D quantum cellular automata (QCA)~\cite{cirac_matrix_2017-1} -- unitaries with a strict, finite light cone~\cite{arrighi_overview_2019}. This equivalence has several strong consequences. First, it reveals that the correlations homogeneous MPUs can generate are heavily constrained. Second, it implies a classification of homogeneous MPUs, as a direct consequence of the classification of QCA via index theory~\cite{gross_index_2012}. Third, it solves the problem of the physical implementation of MPUs -- a priori nontrivial since individual tensors do not directly correspond to physical processes -- implying a low-depth quantum circuit for their physical implementation~\cite{farrelly_causal_2014,cirac_matrix_2017-1}, a fact that can be extended for more general MPUs~\cite{styliaris2025quantumcircuitcomplexitymatrixproduct}. These properties, derived from homogeneity, raise the question of whether similar structural constraints and classification principles apply to quantum channels.

\begin{figure}
    \centering
    \includegraphics[width=\linewidth]{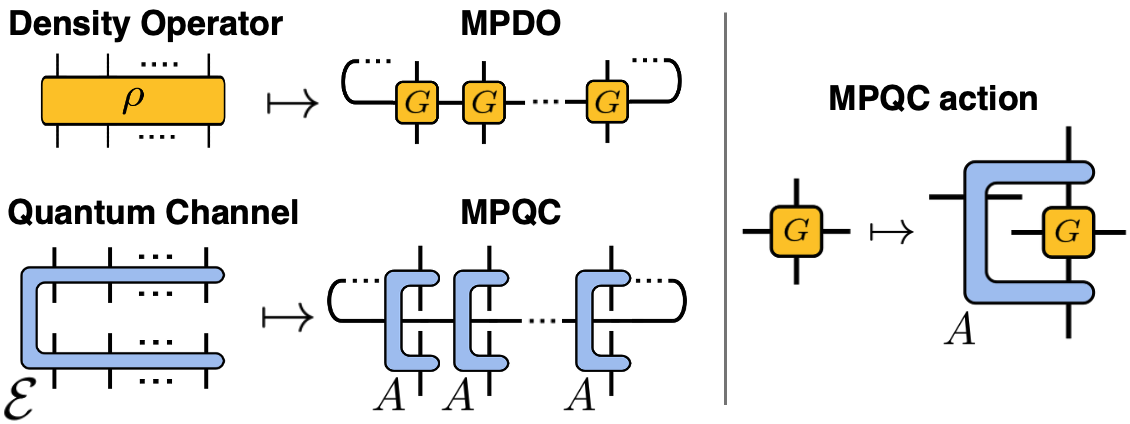}
    \caption{We analyze the structure, classification, and physical implementation of matrix product quantum channels (MPQCs) composed of a repeated tensor $A$. These act on matrix product density operators (MPDOs), preserving their matrix-product form.}
    \label{fig:channels}
\end{figure}

Here, we develop a systematic framework for MPQCs, focusing on those modeling processes that originate from an environment with an underlying matrix-product structure, which we refer to as locally purifiable (LP) channels. We show that imposing homogeneity restricts LP channels to generating only short-range correlations, a consequence of the fact that they all admit a depth-2 circuit representation as isometric gates. Remarkably, we find that all homogeneous LP channels belong to a single equivalence class, connected through continuous deformations. We further examine a class of translation-invariant channels capable of producing long-range correlations and show how these can still be implemented efficiently as quantum circuits. Finally, we present a construction that shows how such channels can be represented in terms of simpler building blocks, namely MPUs and MPS.
\\

{\em Matrix Product Quantum Channels.---}
To define an MPQC, we start from a rank-6 tensor $A$, graphically represented as
\def\leglength{0.5}
\def\barthickness{0.3}
\def\horizontallength{1.8}
\def\verticalheight{2}
\def\squarehalfside{0.3}
\begin{align}
    A^{ij,op}_{mn} =
    \begin{array}{c}
    \begin{tikzpicture}[scale=\defaultscaling,baseline={([yshift=-0.65ex] current bounding box.center)}]
        \HorseshoeTensor{0,0}{\leglength}{\barthickness}{$A$}{0}{\horizontallength}{\verticalheight}{1}
    \end{tikzpicture}
    \end{array}
    \in \mathbb C \,,
\end{align}
    see \cref{fig:channels}. The indices $i,j = 1,\dots,d_{\mathrm{in}}$ label the input physical space, $o,p = 1,\dots,d_{\mathrm{out}}$ the output physical space, and $m,n = 1,\dots,D$ the auxiliary (bond) space. The integer $D$ is the \emph{bond dimension} of the MPQC. Hence, each $A^{ij,op}$ can be represented as a $D \times D$ matrix. The tensor $A$ generates a \emph{homogeneous matrix product quantum channel} (hMPQC) if the superoperator
\begin{align}
    {\rm MPQC}_N^{A}:\;\mathcal{L}(\mathbb{C}^{d_{\mathrm{in}}})^{\otimes N} \to \mathcal{L}(\mathbb{C}^{d_{\mathrm{out}}})^{\otimes N},
\end{align}
where $\mathcal{L}(X)$ is the linear space of operators acting on $X$,
transforming density matrices as
\begin{align} \label{eq:mpqc_def}
\rho'_{\,o_1\dots o_N,\; p_1\dots p_N} =\!\!
\sum_{\substack{i_1,\dots,i_N \\ j_1,\dots,j_N}}
\!\!\Tr\!\!\left[
\prod_{k=1}^{N}
A^{i_k j_k,\, o_k p_k}
\right]
\rho_{\,i_1\dots i_N,\; j_1\dots j_N},
\end{align}
is completely positive (CP) and trace-preserving (TP) for all $N \ge 1$. 
The above construction repeats a fixed tensor $A$ along the chain with periodic boundary conditions --- hence \textit{homogeneous} --- and defines the {\sffamily hMPQC} class\footnote{Channels in {\sffamily hMPQC} are automatically \emph{translation-invariant}, i.e., they commute with the translation operator; note that homogeneity and translation-invariance are two distinct notions.}.
%The above construction repeats a fixed tensor $A$ along the chain with periodic boundary conditions, a setting we call \emph{homogeneous}. We refer to the resulting class of channels as {\sffamily hMPQC} 
We use a stylized font throughout to denote different channel classes.

Homogeneous MPDOs, and thus also the corresponding MPS, can be expressed in the hMPQC form by setting $d_{\rm in} = 1$, i.e., by trivializing the input of the channel. From the matrix-product form of the coefficients in \cref{eq:mpqc_def}, it follows that an hMPQC acting on an MPDO can increase the bond dimension by at most a multiplicative factor $D$ (but decrease it arbitrarily).

As with MPS, one can more generally allow arbitrary site-dependent tensors $A_k$ with varying bond dimensions $D_k$. In fact, any quantum channel can be represented as an MPQC in this general form by successive singular value decompositions after vectorizing, following the standard MPS procedure~\cite{vidal2003efficient,Sch_n_2005}. However, for a generic channel, the bond dimension might grow exponentially with system size $N$, rendering the matrix-product representation useless.
We therefore define the class {\sffamily MPQC}, specified over increasing system sizes $N$, consisting of only channels admitting a representation with a bond dimension uniformly bounded by a \emph{constant} independent of $N$. We henceforth focus on the homogeneous setting, where this property holds by definition, and discuss the general case in the Supplemental Material~\cite{sm}.

CP and TP, needed for the superoperator $\mathrm{MPQC}_N^{A}$ to define a proper quantum channel, can be equivalently formulated in terms of the corresponding Choi-Jamiołkowski state 
${C_N^{A} \equiv (\mathrm{MPQC}_N^{A} \otimes \mathbb{I_{\rm in}}) (\ket{\Phi^+}\bra{\Phi^+}})$:
\begin{align}
    \label{Choi}
    (i) \; C_N^{A} \succeq 0,  \quad
    (ii) \;\Tr_\mathrm{out} (C_N^{A}) = \mathbb{I}_\mathrm{in}  \qquad \forall N,
\end{align}
where $\ket{\Phi^+} \equiv (\sum_{i=1}^{d_{\rm in}} \ket{i} \ket{i})^{\otimes N}$ is an (unnormalized) Bell state.
However, verifying the first condition, i.e.,  whether the matrix product operator $C_N^A$ is positive semidefinite for all system sizes, is undecidable in the general case~\cite{cuevas_fundamental_2016}. Moreover, we want to focus on physically relevant channels, i.e., noise models arising from an environment with a matrix-product structure. Consequently, we focus on a subclass of MPQCs in which CP and TP can be explicitly imposed and verified.

{\em Locally Purified Quantum Channels.---}
We now introduce the \emph{homogeneous locally purified} ({\sffamily{hLP}}) subclass of {\sffamily{hMPQC}}. 
Within this class, CP is automatic
%, circumventing any undecidability issues,
and TP can be characterized on the level of the single tensor.

\begin{defn}[Local purification]
An MPQC tensor $A$ admits a {\em local purification} if there exists another tensor ${\cal V}$ such that\footnote{We are assuming that the virtual legs of the MPQC tensor factorize; this can be imposed by a gauge transformation.
}
\begin{align}
    \begin{array}{c}
        \begin{tikzpicture}[scale=\defaultscaling,baseline={([yshift=-0.65ex] current bounding box.center)}]
            \HorseshoeTensor{0,0}{\leglength}{\barthickness}{$A$}{0}{\horizontallength}{\verticalheight}{0}
        \end{tikzpicture}
    \end{array}
        \;=\;
    \begin{array}{c}
        \begin{tikzpicture}[scale=\defaultscaling,baseline={([yshift=-0.65ex] current bounding box.center)}]
            \pgfmathsetmacro{\calculatedDistance}{\verticalheight/2-\barthickness/2}
            \pgfmathsetmacro{\squarehalf}{0.35}
            \LPTensor{0,0}{\leglength*0.75}{\squarehalf}{${\cal V}$}{0}{\calculatedDistance}
        \end{tikzpicture}
    \end{array},
    \quad
    \begin{array}{c}
        \begin{tikzpicture}[scale=\defaultscaling,baseline={([yshift=-0.65ex] current bounding box.center)}]
            \IsomTensor{0,0}{\leglength*0.75}{0.35}{$\cal{V}$}{0}
        \end{tikzpicture}
    \end{array}
        =
    \begin{array}{c}
    \left(
        \begin{tikzpicture}[scale=\defaultscaling, baseline={([yshift=-0.65ex] current bounding box.center)}]
           \IsomConjTensor{0,0}{\leglength*0.75}{0.35}{$\cal{V}$}{0}
        \end{tikzpicture}
    \right)^*
    \end{array}.
\end{align}
An hMPQC belongs to {\sffamily{hLP}} if its tensor admits a local purification.
\end{defn}

Informally, this condition guarantees the existence of a \emph{local} Kraus decomposition at each physical site. More precisely, admitting a local purification generates, by construction, a CP map; this can be directly verified via the Choi-Jamiołkowski state, which takes the form $C_A^N = W^\dagger W$, and is thus positive-semidefinite~\cite{sm}. The Stinespring dilation of the resulting CP map is the matrix product operator (MPO)
$V_N: (\mathbb{C}^{d_{\mathrm{in}}})^{\otimes N} \to  (\mathbb{C}^{d_{\mathrm{out}} \chi})^{\otimes N}$, defined by
\begin{align}
V_N \;\equiv\;
    \begin{tikzpicture}[scale=\defaultscaling,baseline={([yshift=-0.65ex] current bounding box.center)}]
        \pgfmathsetmacro{\squarehalf}{0.4}
            \SingleTrLeft{(-4*\leglength,0)}
            \draw [very thick] (11*\leglength,0) to  (15*\leglength,0);
            \SingleTrRight{(15*\leglength,0)}
            \draw [very thick] (-4*\leglength,0) to  (0,0);
            \IsomTensor{0,0}{\leglength}{\squarehalf}{${\cal V}$}{0}
            \IsomTensor{2*\leglength + 2*\squarehalf,0}{\leglength}{\squarehalf}{${\cal V}$}{0}
            \SingleDots{4*\leglength + 4*\squarehalf, 0}{2*\leglength}
            \IsomTensor{6*\leglength + 6*\squarehalf,0}{\leglength}{\squarehalf}{${\cal V}$}{0}
    \end{tikzpicture}
    \;.
\end{align}
Here $\chi$ is the dimension of the local purification space, denoted in gray. TP, which needs to be imposed separately, amounts to the condition $V_N^\dagger V^{}_N = \mathbb{I}_{d_{\rm in}}^{\otimes N}$, i.e., that $V_N$ is a \emph{homogeneous matrix product isometry} (hMPI).  The resulting hMPQC is then
\begin{align}
    {\rm MPQC}_{N}^{\mathcal V} = \Tr_{\rm pur}[V^{}_N (\cdot) V^\dagger_N].
\end{align}

Physically, an hMPI acting on an MPS can increase the entanglement entropy across any bipartition by at most a constant amount. Consequently, an hLP channel maps any MPDO whose purifying MPS obeys an area law to one that also obeys it. In fact, locally purified channels are the most general quantum channels with this property, once area-law preservation is required to be stable under ancillary extensions~\cite{sm}.

%We note that, although every channel admits a Stinespring dilation, in the general (inhomogeneous) setting, ${\text{\sffamily LP}  \subset \text{\sffamily MPQC}}$. This is a direct consequence of the analogous separation for mixed states, shown in Ref.~\cite{cuevas_purifications_2013} (see~\cite{sm}).

{\em Structure of hMPIs.---} We will now show that imposing homogeneity and the existence of a local purification has very strong implications for the correlations that the resulting channel can generate. To this end, we will fully characterize their purifying hMPI. This question parallels the characterization of homogeneous matrix product unitaries (MPUs) in Refs.~\cite{cirac_matrix_2017-1,sahinoglu2018matrix}, 
where they were shown to coincide with quantum cellular automata (QCA) in 1D, that is, unitaries possessing a strict light cone~\cite{arrighi_overview_2019, vanrietvelde2025causal}. 
Here, we extend this framework to hMPIs.

For the correlation structure to emerge, short-range entanglement must be coarse-grained away. On the tensor level, this is done by \emph{blocking}, i.e., grouping $q$ consecutive tensors ${\cal V}$ into a single effective tensor ${\cal V}_q$ (see, e.g.,~\cite{cirac_matrix_2021}). Throughout, when we block, $q$ will be independent of the system size $N$ so that blocking eliminates only short-range, but not long-range, correlations.

\begin{thm}
\label{depth2}
        Any hMPI can be written as a depth-two brick wall quantum circuit of isometric gates $u,v$, satisfying $u^\dagger u =\Id_{d_{\mathrm{in}}^2}$ and $v^\dagger v =\Id_{\ell r}$, after blocking at most $D^4$ times:
\pgfmathsetmacro{\spacing}{1.5cm} 
\begin{align*}
\begin{tikzpicture}[scale=\defaultscaling,x=1.3*\spacing]
    \definecolor{cream}{RGB}{255,255,221}
    \pgfmathsetmacro{\height}{1.5}
    \pgfmathsetmacro{\shift}{0.7}
    \pgfmathsetmacro{\labelpositions}{\height / 2 + \shift / 2 + 0.2}
    \pgfmathsetmacro{\labelcloseness}{0.15}
    \pgfmathsetmacro{\extra}{.1}
    \foreach \i in {0,...,6} {
        \draw[color=black,very thick] (\i,\shift) -- (\i,\height);
        \draw[color=black,very thick] (\i,-\shift) -- (\i,-\height);
    }
    \foreach \i in {0,2,4,6} {
        \draw[color=black,dashed, very thick] (\i,-\shift) -- (\i,\shift);
        \draw[color=purification,very thick] (\i,\shift) -- (\i,\height);
        \draw[color=black,very thick] (\i + \extra,\shift) -- (\i + \extra,\height);
        \draw (\i - \labelcloseness, 0) node {\scriptsize $r$};
        \draw (\i + \labelcloseness + 0.05, -\labelpositions) node {\scriptsize $d_{\mathrm{in}}$};
        \draw (\i - \labelcloseness, \labelpositions) node {\scriptsize $\chi$};
        \draw (\i + \labelcloseness + \extra + 0.1, +\labelpositions) node {\scriptsize $d_{\mathrm{out}}$};
    }
    \foreach \i in {1,3,5} {
        \draw[color=black,ultra thick] (\i,-\shift) -- (\i,\shift);
        \draw[color=purification,very thick] (\i - \extra,\shift) -- (\i - \extra,\height);
        \draw (\i + \labelcloseness, 0) node {\scriptsize $\ell$};
        \draw (\i + \labelcloseness + 0.05, -\labelpositions) node {\scriptsize $d_{\mathrm{in}}$};
        \draw (\i + \labelcloseness + 0.1, +\labelpositions) node {\scriptsize $d_{\mathrm{out}}$};
        \draw (\i - \labelcloseness - \extra, +\labelpositions) node {\scriptsize $\chi$};
    }
    \foreach \i in {0,2,4} {
        \draw[thick, fill=cream, rounded corners=2pt]
            (\i - \extra, -0.3 + \shift) rectangle (\i+1+\extra, 0.3 + \shift);
        \draw (\i+0.5,\shift) node {\scriptsize $v$};
    }
    \foreach \i in {1,3,5} {
        \draw[thick, fill=cream, rounded corners=2pt]
            (\i - \extra, -0.3 - \shift) rectangle (\i+1 + \extra, 0.3 - \shift);
            \draw (\i+0.5,-\shift) node {\scriptsize $u$};
    }
    \begin{scope}
        \clip (-0.5, -\height) rectangle (0.1 + \extra, \height); % Clip to right half
        \draw[thick, fill=cream, rounded corners=2pt]
            (-1 - \extra, -0.3 - \shift) rectangle (0 + \extra, 0.3 - \shift);
    \end{scope}
    \begin{scope}
        \clip (6 - \extra - 0.1, -\height) rectangle (6 + 0.5, \height); % Clip to left half
        \draw[thick, fill=cream, rounded corners=2pt]
            (6 - \extra, -0.3 + \shift) rectangle (6 + 1 + \extra, 0.3 + \shift);
    \end{scope}
\end{tikzpicture}
\end{align*}
The isometries $u:\mathbb{C}^{d_{\mathrm{in}}^2}\mapsto \mathbb{C}^{\ell r}$ and $v:\mathbb{C}^{\ell r}\mapsto \mathbb{C}^{\chi^2d_{\mathrm{out}}^2}$ satisfy $d_{\mathrm{in}}^2 \leq r\ell \leq d_{\mathrm{out}}^2 \chi^2$.
\end{thm}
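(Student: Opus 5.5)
The plan follows the strategy used for homogeneous MPUs (Cirac et al.): from the isometry condition, extract a simple structure of the transfer operator, block until it becomes exact, and then read off the two layers of gates from a singular value decomposition.

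\emph{Step 1: the transfer matrix.} Define the transfer matrix
\[
\mathbb E=\sum_{i,o,\alpha}\mathcal V^{o\alpha}_{i}\otimes\overline{\mathcal V^{o\alpha}_{i'}}
\]
with the input legs left open. Because $V_N$ is an hMPI, contracting $V_N^\dagger V_N$ gives the identity on the inputs. Tracing over the inputs then yields $\Tr[\mathbb E_0^N]=d_{\rm in}^N$ for all $N$, where $\mathbb E_0=\sum_i \mathcal V_i\otimes\overline{\mathcal V_i}$ is the doubled transfer matrix with inputs contracted against each other.

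\emph{Step 2: $\mathbb E_0$ is a rank-one projector after blocking.} I would run the MPU argument. First I would bring $\mathcal V$ to canonical form, which is possible up to a gauge. The trace condition holds for all $N$, and $\mathbb E_0$ is a completely positive map with spectral radius bounded by $d_{\rm in}$. Applying a Cauchy--Schwarz-type inequality to $\Tr[\mathbb E_0^N]$, as in the MPU case, should force $\mathbb E_0/d_{\rm in}$ to have a single eigenvalue of modulus one and all other eigenvalues equal to zero, so that it is nilpotent on the complement.

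Since the complement has dimension at most $D^2$, blocking $q\le D^2$ sites makes $\mathbb E_0^q$ exactly rank one. I expect this to be the main obstacle. In the unitary case one uses both $V^\dagger V=\Id$ and $VV^\dagger=\Id$, whereas here only the first identity is available, together with the stronger operator-level statement that $V_N^\dagger V_N=\Id$ with inputs open (not just its trace).

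To handle this, I would use the open-input version to show that, after blocking, the doubled tensor with open inputs factorizes as
\[
\mathcal E_{q}=\Id_{\rm in}\otimes |R)(L|.
\]
The bound of $D^4$ blockings in the statement suggests that the argument needs to be applied to a two-site or doubled object of dimension $D^4$, for example the open-input transfer operator acting on $D^2\times D^2$ matrices, and then nilpotency is applied there.

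\emph{Step 3: simple structure of the blocked tensor.} Once the transfer operator is an exact rank-one projector with an identity on the inputs, I would adapt the ``simple tensor'' argument of the MPU case. The blocked tensor $\mathcal V_q$ is viewed as a map from the left bond $\otimes$ input $\otimes$ right bond. Taking SVDs across the middle of a pair of blocked tensors, I expect the condition $\mathcal E_q=\Id\otimes|R)(L|$ to imply the factorization
\[
\mathcal V_q\mathcal V_q=\bigl(v\text{-type map on bonds}\bigr)\circ\bigl(u\text{-type map on inputs}\bigr),
\]
with bond spaces split as $\ell$ and $r$.

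Concretely, I would define the first-layer gate $u$ as the map from two adjacent input sites into the middle bond $\mathbb C^{\ell}\otimes\mathbb C^{r}$, obtained by contracting two blocked tensors and splitting via SVD. I would then check that $u^\dagger u=\Id$ using the left and right fixed points $L$ and $R$. The second-layer gate $v$ is the residual map from $\mathbb C^{\ell}\otimes\mathbb C^{r}$ (taken from neighbouring bricks) to the output and purification legs $(d_{\rm out}\chi)^2$. The identity $v^\dagger v=\Id_{\ell r}$ should follow from the rank-one property: the doubled transfer operator contracted across a cut must reproduce an identity on the bond.

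\emph{Step 4: dimension inequalities.} Isometries require their output dimension to be at least their input dimension. Applied to $u$ this gives $d_{\rm in}^2\le r\ell$, and applied to $v$ it gives $r\ell\le d_{\rm out}^2\chi^2$. Finally, composing the two layers around the ring with periodic boundary conditions reproduces $V_N$ for the blocked system.
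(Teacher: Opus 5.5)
There is a genuine gap in Step 2. The factorization you aim for, $\mathcal E_q=\Id_{\rm in}\otimes|R)(L|$ for the open-input doubled tensor, is false in general, so Step 3 builds on something that cannot be proven.

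Take the right shift, which is an hMPI with $\chi=1$ and $D=d$. Its tensor is $\mathcal V^{o,i}_{\alpha\beta}=\delta_{o\alpha}\delta_{i\beta}$, with $\alpha,\beta$ the left and right bonds. Its doubled tensor is $(\mathbb E_{ii'})_{\alpha\alpha',\beta\beta'}=\delta_{\alpha\alpha'}\delta_{i\beta}\delta_{i'\beta'}$, which is nonzero for $i\neq i'$. After blocking $q$ sites, the last input of the block still flows into the right bond, so no amount of blocking makes the doubled tensor diagonal in the inputs. In fact your condition would force $V_N$ to be a product of on-site isometries fed with the input and fixed nearest-neighbour pair states. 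That excludes shifts and anything else that transports information.

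What is true, and what is needed, is \emph{simplicity}:
$(\Id|\mathbb E_{ii'}|\rho)=\delta_{ii'}$ and $\mathbb E_{i_1i_1'}\mathbb E_{i_2i_2'}=\mathbb E_{i_1i_1'}|\rho)(\Id|\mathbb E_{i_2i_2'}$.
Here the rank-one projector is inserted \emph{between} two consecutive doubled tensors rather than replacing a single one, and the shift satisfies this. The paper obtains simplicity after at most $D^4$ blockings by invoking Proposition III(b) of Ref.~\cite{cirac_matrix_2017-1}. That argument uses only normality of $\mathcal V/\sqrt{d_{\rm in}}$, canonical form II, and $V_N^\dagger V_N=\Id$. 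So the missing $VV^\dagger=\Id$ is not the obstacle you anticipated, but you do need that argument, and your proposal does not supply it; the $D^4$ is only guessed. The first half of your Step 2 is fine, though the mechanism is not Cauchy--Schwarz: $\tr(E^N)=1$ for all $N$ directly forces a single eigenvalue $1$ and all others $0$ by a power-sum (Vandermonde) argument.

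Step 3 is also missing the ingredient that links simplicity to $v^\dagger v=\Id$. In the paper the brickwork is a pure rewriting of the network:
\begin{itemize}
\item Take two SVDs of a \emph{single} tensor: group (left bond, input) against the rest, giving rank $r$, and group (right bond, input) against the rest, giving rank $\ell$. This gives $\mathcal V=X_1Y_1=X_2Y_2$.
\item Set $u$ to be $Y_2$ contracted with $Y_1$ on the right neighbour, and $v$ to be $X_1$ contracted with $X_2$ on the right neighbour.
\item Fix the gauge so that $X_2^\dagger X_2=\Id$ and $X_1^\dagger(\Id\otimes\rho)X_1=\Id$.
\end{itemize}
With this normalization, $u^\dagger u=\Id$ follows from the isometry condition, as for MPUs. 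Moreover, $(Y_1\otimes Y_2)^\dagger v^\dagger v\,(Y_1\otimes Y_2)$ is the plain two-site doubled tensor $\mathbb E\mathbb E$. By the $\rho$-weighted normalization, $(Y_1\otimes Y_2)^\dagger(Y_1\otimes Y_2)$ equals $\mathbb E|\rho)(\Id|\mathbb E$. Since $Y_1$ and $Y_2$ are surjective onto $\mathbb C^r$ and $\mathbb C^\ell$, it follows that $v^\dagger v=\Id_{\ell r}$ holds if and only if the tensor is simple.

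Without that normalization, and with the wrong rank-one condition, your claim that $v^\dagger v=\Id$ ``follows from the rank-one property'' has no valid basis. Step 4 is fine once $u$ and $v$ are known to be isometries.
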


%After blocking a finite number of sites, the tensor $\mathcal{V}_q$ can be decomposed into two tensors corresponding to different bipartitions of its indices. These decompositions can be rearranged into a two-layer brick-wall circuit, with the global isometry condition ensuring that the resulting gates are themselves isometric~\cite{sm}, analogously to the hMPU case~\cite{cirac_matrix_2017-1}.
As a result, the correlations that a hMPI -- and thus also a hLP channel -- can generate are bounded by a strict light cone because of the depth-two brick-wall circuit structure\footnote{This property was assumed, rather than shown to emerge, in some prior work on channels with tensor-network structure (Ref.~\cite{de_Groot_2022}).}. Concretely, it means that, after blocking, the reduced state $\rho_{AB}$ of the global pure state $V_N \ket{0}^{\otimes N}$ over any regions $A$, $B$, separated by more than 4 sites, factorizes: $\rho_{AB} = \rho_A \otimes \rho_B$ if ${\rm dist}(A,B) > 4$. This clearly generalizes to other input states with a finite correlation range\footnote{If long-range correlations are already present in the input, they are not necessarily erased.}.

Physically, \cref{depth2} also directly implies that the channels in {\sffamily{hLP}} have low circuit complexity. More precisely, the isometries in \cref{depth2} can be physically implemented via local unitary circuits and ancillas with only a constant depth, depending only on the bond dimension but not the system size. This also includes, e.g., left and right lattice translations by one site, known as \emph{shift} unitaries~\cite{schumacher_reversible_2004,gross_index_2012}.
%In particular, the shift is included in \cref{depth2} by choosing $d \coloneqq d_\mathrm{in} = d_\mathrm{out}$ and $r$ or $l$ equal to $d^2$, respectively, as in the MPU case~\cite{cirac_matrix_2017}.
Another direct consequence of \cref{depth2} is that, if there exists two hLP channels mapping the MPDOs $\rho_N \mathrel{\raisebox{-0.2ex}{$\rightleftarrows$}} \sigma_N$, then the two states are necessarily in the same mixed-state phase~\cite{Hastings_2011,de2022symmetry,ma2023average,ruiz2024matrix,sun2025anomalous,lessa2025mixed,liu2025trading,liu2026parent}. In the homogeneous setting, this amounts to compact transformations over a single tensor, despite the channel having a brickwork structure.

\textit{Classification of hLP Channels.---}
A landmark result for one-dimensional QCAs is their classification by a positive-rational-valued \emph{index}~\cite{gross_index_2012}, which quantifies the net transport of quantum information along the chain. The index labels disconnected sectors of the QCA space: two QCAs share the same index iff they can be continuously deformed into one another while preserving the QCA property. Since homogeneous MPUs coincide with translation-invariant 1D QCAs~\cite{cirac_matrix_2017-1}, this classification can be formulated directly at the level of a single MPU tensor: two QCAs have the same index iff their MPU tensors are continuously deformable within the space of MPU-generating tensors~\cite{cirac_matrix_2017-1}.

In the same spirit, we now define an equivalence relation for hLP channels. As in the case of MPUs, this is defined by continuous deformations at the level of the single tensor generating the MPI, i.e., the channel dilation.

\begin{defn}[Equivalence]
Let $\mathcal E_0$ and $\mathcal E_1$ be channels in {\sffamily{hLP}} 
with identical input and output dimensions $d_{\mathrm{in}}$ and $d_{\mathrm{out}}$. 
They are \emph{equivalent} if there exists a continuous map of hMPI-generating tensors
\begin{align}
    [0,1]\ni s \mapsto \mathcal{V}(s) =
    \begin{array}{c}
        \begin{tikzpicture}[scale=0.9,baseline={([yshift=-0.65ex] current bounding box.center)}]
            \IsomTensor{0,0}{\leglength}{0.4}{$\mathcal{V}(s)$}{0}
            \draw (-0.55*\leglength,0.65) node {\scriptsize $\chi$};
            \draw (0.9*\leglength,0.65) node {\scriptsize $d_{\mathrm{out}}$};
            \draw (1.3*\leglength,0.15) node {\scriptsize $D$};
            \draw (-1.3*\leglength,0.15) node {\scriptsize $D$};
            \draw (0.25,-0.65) node {\scriptsize $d_{\mathrm{in}}$};
        \end{tikzpicture}
    \end{array}
\quad D,\chi \text{ $\rm const.$} \;,
\end{align}
with associated MPI $V_N(s)$, such that
\begin{align}
    \mathcal{E}_i = \Tr_{\mathrm{pur}}[V_N(i)^{}(\cdot)V_N(i)^\dagger], \quad i \in \{ 0,1 \} \quad \forall N\;.
\end{align}
\end{defn}
\noindent In essence, two channels are \emph{equivalent} if their purifying isometries can be continuously deformed into one another, allowing enlargement of the (possibly redundant) purification space, such that the endpoints reproduce the respective channels after tracing out the environment. Our result is:

\begin{thm}
    Any two channels in {\sffamily{hLP}} with matching input and output dimensions are equivalent.
\end{thm}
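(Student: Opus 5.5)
Our plan is to show that every channel in {\sffamily hLP} is equivalent to one fixed reference channel: the product channel $\mathcal{E}_\star^{\otimes N}$, with $\mathcal{E}_\star(X)=\Tr(X)\,\ket{0}\bra{0}$. Transitivity of the relation then gives the theorem. The central point is that the obstruction for QCAs, namely the index measuring net transport of quantum information, disappears once we may enlarge the purification space. An isometry can route information into the environment, and the environment is traced out, so there is nothing left to conserve.

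\emph{Step 1 (circuit form).} We first use \cref{depth2}. After blocking at most $D^4$ times, the hMPI $V_N$ purifying $\mathcal{E}$ is a depth-two brick wall of isometries. The first layer consists of copies of $u:\mathbb{C}^{d_{\rm in}^2}\to\mathbb{C}^{\ell r}$, and the second of copies of $v:\mathbb{C}^{\ell r}\to\mathbb{C}^{\chi^2 d_{\rm out}^2}$. Regrouped as a single tensor, this gives a homogeneous MPI tensor with bond dimension $\ell$ (or $r$), representing the same channel.

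We must check that blocking is compatible with the definition of equivalence. Equivalence is stated for the original tensor, and the blocked tensor generates the same $V_N$ only for $N$ a multiple of $q$. The plan is to avoid blocking at the tensor level: we write the unblocked tensor $\mathcal V$ itself as a contraction of pieces of $u$ and $v$, padding the purification and bond spaces so that $D$ and $\chi$ are constant. This is the step where I expect the most technical friction. The fallback is to note that the deformation constructed below acts site-locally, so it can be performed on the unblocked tensor directly.

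\emph{Step 2 (deform the second layer away).} Each $v$ is an isometry. We enlarge the purification space by a register of dimension $\ell r$ and embed $v$ into a path of isometries
\begin{equation*}
v(s)=\cos(\tfrac{\pi s}{2})\,(v\otimes\ket{0}) + \sin(\tfrac{\pi s}{2})\,(\ket{0}_{\rm out}\ket{0}_{\rm pur}\otimes \mathrm{SWAP}_{\to\,\rm pur}).
\end{equation*}
At $s=1$, the map $v(1)$ writes the $\ell r$ input into the fresh environment register and outputs $\ket{0}$ on the physical legs. The two branches have orthogonal ranges because they are flagged by the extra register, so $v(s)^\dagger v(s)=\Id$ for every $s$.

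The resulting tensors are continuous in $s$ and remain hMPI tensors for all $N$, because the brick-wall contraction of isometries is an isometry. Since $v$ is applied on every bond in exactly the same way, homogeneity is preserved. At $s=1$ the physical output is $\ket{0}^{\otimes N}$ on every site, and every virtual leg is dumped into the environment. Tracing out the environment therefore yields $\mathcal{E}_\star^{\otimes N}$ for every $N$, independently of $u$. The first layer $u$ is now irrelevant, but if needed it can be deformed in the same way.

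\emph{Step 3 (conclude).} The deformation above has constant $D$ and $\chi$ and connects the purifying tensor of $\mathcal{E}_0$ to a tensor realizing $\mathcal{E}_\star^{\otimes N}$; the same holds for $\mathcal{E}_1$. Concatenating the path for $\mathcal{E}_0$ with the reverse of the path for $\mathcal{E}_1$ gives the required continuous map. Two bookkeeping adjustments are needed: we pad the two paths to common dimensions $D,\chi$ by embedding into larger spaces, which is continuous and isometry-preserving, and we connect the two realizations of $\mathcal{E}_\star$, which are both trivial tensors and can be joined directly.

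The main obstacle is the point raised in Step 1. The path must be defined on a single-site tensor that generates an isometry for \emph{all} $N$, not only for multiples of the blocking length. I would address this by splitting $u$ and $v$ via SVD into left and right halves living on individual sites, in the same way the MPU literature constructs a single-site representation from the circuit, and then applying the swap-into-environment deformation to those halves.
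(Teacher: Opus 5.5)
Your reduction to a reference channel plus transitivity is fine. The problem is that the mechanism of Steps 1--2 does not produce what the definition of equivalence requires, and your proposed fixes do not remove the obstacle you flag.

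The definition asks for a path of \emph{single-site} tensors $\mathcal V(s)$. Each must generate an MPI for every $N$, and the endpoints must reproduce $\mathcal E_i$ on $N$ sites of dimensions $d_{\rm in}, d_{\rm out}$ for every $N$. Your deformation acts on the two-site gate $v$ of \cref{depth2}. But $v$ is an isometry only if the tensor is simple (\cref{simpleifisometry}), and simplicity in general holds only after blocking (\cref{simpleafterblocking}).

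You can still regroup the unblocked $V_N$ as $(\bigotimes v)(\bigotimes u)$, using the two decompositions of \cref{decomposition} on alternating sites, so only for even $N$. Then, however, $v$ is not isometric and $v(s)$ is not a path of isometries. After flagging the branches, $V_N(s)^\dagger V_N(s)$ contains terms $(\bigotimes u)^\dagger\bigl(\bigotimes_{k\in A}v_k^\dagger v_k\bigr)(\bigotimes u)$ for proper subsets $A$ of the gates. The condition $V_N^\dagger V_N=\Id$ says nothing about these.

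The ``site-local'' fallback also fails. The gate $v$ straddles two (blocked) sites, and re-splitting $v(s)$ by SVD yields halves of $\mathcal V_q$, not of $\mathcal V$. As written, you therefore only get equivalence of the blocked channels, i.e.\ for $N$ a multiple of the block length and with local dimensions $d_{\rm in}^q$, $d_{\rm out}^q$.

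Independently, your $v(s)$ is not isometric even when $v$ is. The second branch writes $\ket{x}$ into the same register that the first branch sets to $\ket{0}$, producing the cross term $\bigl(v^\dagger\ket{0}_{\rm out}\ket{0}_{\rm pur}\bigr)\bra{0}$. An extra flag qubit fixes this.

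The missing idea is that \cref{depth2} is not needed at all: sandwiching a tensor between single-site unitaries keeps it hMPI-generating for all $N$ at once. The paper takes $\mathcal V_1\otimes\mathcal V_2$, with bond dimension $D_1D_2$ and purification spaces padded so that the two output blocks have equal dimension. It feeds the physical input to $\mathcal V_1$ and $\ket{0}$ to $\mathcal V_2$. It then applies single-site unitaries $W(s)$ on the inputs and $U(s)$ on the outputs, running through the connected unitary group from the identity to the permutations that exchange the roles of the two tensors. Then
$V_N(s)=U(s)^{\otimes N}(V_{1,N}\otimes V_{2,N})W(s)^{\otimes N}(\Id\otimes\ket{0}^{\otimes N})$
is an isometry for every $N$ and $s$. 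Declaring all output legs except the first $d_{\rm out}$ one as purification gives $\mathcal E_1$ at $s=0$ and $\mathcal E_2$ at $s=1$, with no blocking.

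Your reference-channel route can be repaired in the same spirit, via $R(s)(\mathcal V\otimes\ket{0}_{E'})$ with $R(s)$ a single-site rotation moving the physical output into a fresh purification register. But the two realizations of $\mathcal E_\star$ obtained this way are not trivial tensors, since the environment still carries $V_{i,N}$ with bond dimension $D_i$. The natural way to join them is the input swap on $\mathcal V_1\otimes\mathcal V_2$, which is the paper's argument.
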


For the proof, we embed the two MPI tensors into a common enlarged purification space, arranged so that at the initial point of the path, one tensor acts over the input while the other acts on a fixed ancilla qubit, and is effectively discarded. The interpolation is then implemented by a continuous local unitary rotation that exchanges their roles, smoothly transferring the action from one tensor to the other. The result of the theorem further extends to arbitrary input and output dimensions once equivalence is relaxed to allow for ancillary inputs that are traced out and ancillary outputs prepared in fixed product states~\cite{sm}. 

At first sight, this seems in tension with the QCA classification, since the shift and the identity, viewed as unitaries, have distinct indices and are therefore inequivalent. The resolution is that we now regard them as channels: only their action after tracing out a purification space is fixed, not their specific unitary realization. This additional freedom permits continuous deformations that are forbidden in the strictly unitary setting. In this sense, the purification space plays the role of ancillas in QCA theory, whose inclusion is known to trivialize any QCA~\cite{arrighi_unitarity_2009,farrelly_causal_2014}.

%\gs{Finally, {\sffamily{hLP}} channels preserve the distinction between area-law and volume-law entanglement. 
%If one adopts as definition for locally purified MPDO phases that two states are considered equivalent if they can be connected by a continuous path consisting entirely of area-law states, then, by \cref{depth2}, the resulting notion of phase does not differ from the conventional one based on local finite-depth  (see \cite{sm} for details).}

{\em Appearance of Long-range Entanglement.---}
As shown before, hMPIs do not produce long-range entanglement.
\cref{depth2} raises the question of what physically forbids long-range correlations. We will show that the key ingredient is the exact isometry condition imposed for all $N$. The situation parallels the MPS setting, where normalized hMPS are necessarily short-range entangled, while relaxing the normalization condition by a constant restores long-range correlated states such as the GHZ state~\cite{schuch2011classifying}. We now show that the analogous relaxation applies to MPIs.
\begin{ex}
\label{longrangeisometry}
    Consider the quantum channel $\mathcal{E}(\cdot) = \Tr(\cdot) \, \ket{\mathrm{GHZ}}\bra{\mathrm{GHZ}}$ acting on $N$ qubits. For all $N$, it admits a purifying MPI with $D_{\rm MPI} = \chi = 2$
\begin{align*} 
V_N = \frac{1}{\sqrt{2}} \;
\begin{tikzpicture}[scale=0.45,baseline={([yshift=-0.65ex] current bounding box.center)}] 
        \pgfmathsetmacro{\squarehalf}{0.4}
        \pgfmathsetmacro{\halfunit}{\squarehalf+\leglength}
        \SingleTrLeft{(-4*\leglength,0)} 
        \draw [very thick] (11*\leglength,0) to (16*\leglength,0); 
        \SingleTrRight{(16*\leglength,0)} 
        \draw [very thick] (-4*\leglength,0) to (0,0); 
        \GHZTensor{0,0}{\leglength}{\squarehalf}{0} 
        \GHZTensor{2*\halfunit,0}{\leglength}{\squarehalf}{0} 
        \SingleDots{4*\halfunit, 0}{2*\leglength} 
        \GHZTensor{6*\halfunit,0}{\leglength}{\squarehalf}{0}
        % y > 0 part
        \begin{scope}
          \clip (0,0) rectangle (7,1.5);
          \draw[very thick,color=purification]
            (\halfunit,\halfunit) .. controls (\halfunit,0) and (0,-0.5*\halfunit) .. (0,-1.5*\halfunit);
          \draw[very thick,color=purification]
            (3*\halfunit,\halfunit) .. controls (3*\halfunit,0) and (2*\halfunit,-0.5*\halfunit) .. (2*\halfunit,-1.5*\halfunit);
          \draw[very thick,color=purification]
            (7*\halfunit,\halfunit) .. controls (7*\halfunit,0) and (6*\halfunit,-0.5*\halfunit) .. (6*\halfunit,-1.5*\halfunit);
        \end{scope}
        % y < 0 part
        \begin{scope}
          \clip (-0.2,-1.5) rectangle (7,0);
          \draw[very thick]
            (\halfunit,\halfunit) .. controls (\halfunit,0) and (0,-0.5*\halfunit) .. (0,-1.5*\halfunit);
          \draw[very thick]
            (3*\halfunit,\halfunit) .. controls (3*\halfunit,0) and (2*\halfunit,-0.5*\halfunit) .. (2*\halfunit,-1.5*\halfunit);
          \draw[very thick]
            (7*\halfunit,\halfunit) .. controls (7*\halfunit,0) and (6*\halfunit,-0.5*\halfunit) .. (6*\halfunit,-1.5*\halfunit);
        \end{scope}
    \end{tikzpicture}
    \;,
\end{align*}
where
$
    \begin{tikzpicture}[scale=0.6,baseline={([yshift=-0.65ex] current bounding box.center)}] 
        \pgfmathsetmacro{\squarehalf}{0.4}
        \pgfmathsetmacro{\labels}{\squarehalf+\leglength/2}
        \GHZTensor{0,0}{\leglength}{\squarehalf}{0} 
        \draw (\labels,0.2) node {\scriptsize 0};
        \draw (-\labels,0.2) node {\scriptsize 0};
        \draw (0.2,\labels) node {\scriptsize 0};
    \end{tikzpicture}
    =
    \begin{tikzpicture}[scale=0.6,baseline={([yshift=-0.65ex] current bounding box.center)}] 
        \pgfmathsetmacro{\squarehalf}{0.4}
        \pgfmathsetmacro{\labels}{\squarehalf+\leglength/2}
        \GHZTensor{0,0}{\leglength}{\squarehalf}{0}
        \draw (\labels,0.2) node {\scriptsize 1};
        \draw (-\labels,0.2) node {\scriptsize 1};
        \draw (0.2,\labels) node {\scriptsize 1};
    \end{tikzpicture}
    =1
$
    or vanishing otherwise is the local tensor of an unnormalized $\mathrm{GHZ}$ state. The resulting channel generates long-range entanglement due to the correlations present in the $\mathrm{GHZ}$ state.
\end{ex}

This appears to be in tension with \cref{depth2}, which states that hMPIs cannot generate long-range entanglement. The discrepancy arises because this example does not fit within the scope of the theorem, but merely because of the $1/\sqrt{2}$ constant: the homogeneous MPI constructed from the tensor
$\mathcal{V} = \begin{tikzpicture}[scale=0.4,baseline={([yshift=-0.65ex] current bounding box.center)}] 
        \pgfmathsetmacro{\squarehalf}{0.4}
        \pgfmathsetmacro{\halfunit}{\squarehalf+\leglength} 
        \GHZTensor{2*\halfunit,0}{\leglength}{\squarehalf}{0} 
        % y > 0 part
        \begin{scope}
          \clip (1,0) rectangle (3,1.5);
          \draw[very thick,color=purification]
            (\halfunit,\halfunit) .. controls (\halfunit,0) and (0,-0.5*\halfunit) .. (0,-1.5*\halfunit);
          \draw[very thick,color=purification]
            (3*\halfunit,\halfunit) .. controls (3*\halfunit,0) and (2*\halfunit,-0.5*\halfunit) .. (2*\halfunit,-1.5*\halfunit);
          \draw[very thick,color=purification]
            (7*\halfunit,\halfunit) .. controls (7*\halfunit,0) and (6*\halfunit,-0.5*\halfunit) .. (6*\halfunit,-1.5*\halfunit);
        \end{scope}
        % y < 0 part
        \begin{scope}
          \clip (1,-1.5) rectangle (3,0);
          \draw[very thick]
            (\halfunit,\halfunit) .. controls (\halfunit,0) and (0,-0.5*\halfunit) .. (0,-1.5*\halfunit);
          \draw[very thick]
            (3*\halfunit,\halfunit) .. controls (3*\halfunit,0) and (2*\halfunit,-0.5*\halfunit) .. (2*\halfunit,-1.5*\halfunit);
          \draw[very thick]
            (7*\halfunit,\halfunit) .. controls (7*\halfunit,0) and (6*\halfunit,-0.5*\halfunit) .. (6*\halfunit,-1.5*\halfunit);
        \end{scope}
    \end{tikzpicture}$
is an isometry only after normalizing with an $N$-independent constant, given that $V_{N}^\dagger V_{N}^{} = 2\,\Id_{d_{\rm in}}^{\otimes N}$. Since this constant can have important physical implications (short-range versus long-range entanglement, as the example shows), this motivates relaxing the strict isometry requirement $V_{N}^\dagger V_{N}^{} =\,\Id_{d_{\rm in}}^{\otimes N}$ of {\sffamily{hMPI}} to
\begin{align}
V_{N}^\dagger V_{N}^{} = c\,\Id_{d_{\rm in}}^{\otimes N} \quad \forall N\ge 1,
\end{align}
for some $N$-independent constant \(c>0\). Notice that the constant cannot be incorporated into the generating tensor $\mathcal V$ of $V_N$ without breaking homogeneity. We call the resulting class satisfying the relaxed condition \emph{scaled homogeneous MPI} or {\sffamily sMPI}, which we study as the minimal possible extension exhibiting features absent in the homogeneous framework.

We emphasize that the resulting isometries are valid physical objects; the only nuance is that they admit a homogeneous matrix-product representation only after an appropriate normalization. However, since \cref{depth2} no longer applies, their quantum circuit complexity and the structure of the entanglement they generate are not immediately clear. We address both questions by explicitly characterizing their structure.
%, and the corresponding channels gLPs.

{\em Physical Implementation of sMPIs.---}
We now address how to efficiently implement isometries in {\sffamily {sMPI}}. Recall that, although the tensor network representation of sMPIs gives a compact mathematical description, the individual tensors do not directly relate to physical operations.
We thus now consider the problem of how to turn these tensor-network isometries into quantum circuits composed of \emph{geometrically local} unitary gates. In this setting, due to the existence of long-range entanglement as in \cref{longrangeisometry}, it follows that necessarily at least an $\Omega(N)$-depth circuit is necessary to prepare these isometries. The same holds when allowing local ancillas initialized in a product state, which are eventually discarded.

Despite these fundamental limitations, we now show how to deterministically implement all isometries in {\sffamily sMPI}, and thus also the resulting MPQCs, in constant (i.e., $N$-independent) depth, by allowing, along with the unitary gates, local product-basis measurements and nonlocal classical communication (feedforward), in the spirit of Refs.~\cite{Raussendorf_2005, Broadbent_2009, Watts_2019,Piroli_2021, Lu_2022, smith2023deterministic,baumer2024efficient,buhrman2024state,stephen2024preparing,zhang2024characterizing,Tantivasadakarn_2024, sahay2025classifying, Lootens_2025}. That is, we allow conditioning subsequent unitary gates on classical measurement outcomes, but forbid any post-selection. Our result is:

\begin{thm}
\label{approach1}
    Any sMPI can be deterministically implemented by a constant-depth circuit of local unitary gates, assisted by two rounds of local measurements, nonlocal classical communication and local unitary corrections, as well as $O(N)$ local ancilla qubits.
\end{thm}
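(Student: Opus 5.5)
Our plan is to first obtain a structure theorem for sMPI tensors, analogous to \cref{depth2}, and then turn that structure into a measurement-based protocol. The guiding analogy is with MPS. A homogeneous MPS that is normalized only up to an $N$-independent constant is, after blocking, a direct sum of injective (short-range) components, and the constant counts the sectors. We therefore expect that an sMPI tensor $\mathcal V$, after blocking and a gauge transformation, takes a block-diagonal form $\mathcal V = \bigoplus_{\alpha=1}^{k} \mathcal V_\alpha$ on the virtual space. Each block $\mathcal V_\alpha$, possibly up to a phase or a product of phases that we absorb, should generate an honest hMPI.

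To establish this, we plan to consider the transfer operator $\mathbb E$ of $V_N^\dagger V_N$, obtained by contracting $\mathcal V$ with $\overline{\mathcal V}$ over output and purification legs while keeping the input legs open. The condition $V_N^\dagger V_N = c\,\Id$ for all $N$ forces $\Tr[\mathbb E^N]$, as an operator, to equal $c\,\Id$. Using the canonical form of MPOs (the decomposition into irreducible blocks, with off-diagonal parts removable by gauge), we would show that each diagonal block $\alpha$ contributes $\lambda_\alpha^N\,\Id$ with $|\lambda_\alpha|=1$. Requiring $\sum_\alpha \lambda_\alpha^N = c$ for all $N$ then forces $\lambda_\alpha=1$ after blocking to remove roots of unity, and gives $c=k$. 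Each $\mathcal V_\alpha$ therefore generates an hMPI. The cross terms $\Tr[\mathbb E_{\alpha\beta}^N]$ must cancel or vanish, which we plan to obtain from the standard fact that the spectral radius of a mixed transfer operator is strictly less than $1$ unless the blocks are gauge-equivalent. By \cref{depth2}, each $V_N^{(\alpha)}$ is a depth-two brickwork of isometries $u_\alpha, v_\alpha$.

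The implementation then has a clean target. We have $V_N = \frac{1}{\sqrt k}\sum_\alpha V_N^{(\alpha)}$, and since $V_N$ is an isometry the branches have mutually orthogonal images for large $N$; we would need to verify that this orthogonality holds exactly, which is where the cross-term analysis enters. The protocol proceeds in four steps. First, prepare a GHZ-like register $\frac{1}{\sqrt k}\sum_\alpha \ket{\alpha}^{\otimes N}$ on ancilla qudits in constant depth, using one round of local measurements and feedforward: prepare Bell pairs or a cluster state, measure, and apply Pauli-type corrections computed by a nonlocal classical parity computation. Second, apply the controlled isometries $\sum_\alpha \ket{\alpha}\bra{\alpha}\otimes u_\alpha$ and then $v_\alpha$ locally, controlled by the local copy of $\alpha$; this is a constant-depth brickwork built from ancillas and local unitaries. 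Third, disentangle the GHZ register by measuring each ancilla in the Fourier basis. This is the second measurement round. The outcomes $m_j$ leave the state $\sum_\alpha \omega^{\alpha\sum_j m_j} V_N^{(\alpha)}\ket\psi$. Fourth, remove the resulting relative phase with a correction that is a product of local unitaries. For this we need a local operator $Z$ acting on the output with $Z V_N^{(\alpha)} = \omega^{\alpha} V_N^{(\alpha)}$; in the GHZ case it acts only on the register. In general we would encode $\alpha$ redundantly into the purification output (enlarging $\chi$) so that such a local phase operator exists.

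The main obstacle is the structure theorem itself, and specifically controlling the MPO canonical form when $V_N^\dagger V_N$ is only required to be proportional to the identity rather than to have a full-rank positive transfer matrix. Nilpotent or off-diagonal parts, together with phases $\lambda_\alpha$ that are roots of unity, must be eliminated by a bounded amount of blocking. We would try first to reduce to the MPS setting by vectorizing the tensor: we would view $\mathcal V$ applied to a maximally entangled input as an MPS, so that normalization up to a constant becomes the known characterization of non-injective MPS as direct sums of injective ones.
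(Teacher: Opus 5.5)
Your protocol has the same skeleton as the paper's: a GHZ-type control register, a controlled depth-two brickwork for each branch, a Fourier-basis measurement of the register, and a phase correction. Two steps fail as written.

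\textbf{Multiplicities.} The count $c=k$ and the uniform register $\frac{1}{\sqrt k}\sum_\alpha\ket{\alpha}^{\otimes N}$ are wrong in general, because the gauge-equivalent case you set aside genuinely occurs. The canonical form is $\mathcal V^i=\bigoplus_j(\Id_{m_j}\otimes\mathcal V^i_j)$. Every ordered pair of copies of the same block contributes $1^N$ to $\tr(E^N)$, so $c=\sum_j m_j^2$ and $V=\frac{1}{\sqrt c}\sum_j m_jV_j$ (\cref{multiplicityrule}). With one control value per copy, the Fourier outcome $z$ multiplies branch $j$ by $\sum_{\alpha\in j}\omega^{-\alpha z}$. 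Its modulus depends on $z$: for $m_1=2$, $m_2=1$ it is $2$ or $1$ on branch $1$ while branch $2$ has modulus $1$. No phase correction can then make the protocol deterministic. You must group equivalent blocks and prepare $\frac{1}{\sqrt c}\sum_j m_j\ket{j}^{\otimes N}$.

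\textbf{The correction step.} This is the more serious gap, because it lacks its key ingredient. Global orthogonality $V_j^\dagger V_{j'}=0$, which is all your cross-term analysis aims at, only gives the nonlocal unitary $\sum_j\omega^{jz}V_jV_j^\dagger+\Id-\sum_jV_jV_j^\dagger$.

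Your workaround of writing $\alpha$ into an enlarged purification output does not rescue this. It implements a different isometry, $\frac{1}{\sqrt k}\sum_\alpha V_N^{(\alpha)}\otimes\ket{\alpha}$, whose channel has lost the coherences $\Tr_{\mathrm{pur}}[V_j\rho V_{j'}^\dagger]$ for $j\neq j'$. In the GHZ example it outputs the classical mixture of $\ket{0\cdots0}$ and $\ket{1\cdots1}$ instead of $\ket{\mathrm{GHZ}}$.

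What the paper proves instead (\cref{sMPIstructure}) is \emph{strong} orthogonality. After blocking, the single-site maps $\widetilde{\mathcal V}_j$, from input and virtual legs to output and purification legs, have mutually orthogonal images. Hence the local projectors $P^j_{\mathrm{loc}}$ onto these images are orthogonal. Then $O=(\sum_j\omega^{zj}P^j_{\mathrm{loc}}+\Id-P)\otimes\Id^{\otimes N-1}$, with $P=\sum_jP^j_{\mathrm{loc}}$, is a single-site unitary. It maps the post-measurement operator $\frac{1}{\sqrt c}\sum_jm_j\omega^{-jz}V_j$ to $V$, with no modification of $V$.

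Neither strong orthogonality nor the hMPI property of each block follows from normalizing the vectorized Choi MPS, which only encodes $\tr(E^N)=c$. The latter is your claim that ``each diagonal block contributes $\lambda_\alpha^N\Id$'', which you assert without a mechanism. The paper instead uses the full operator identity $V_N^\dagger V_N=c\,\Id$, in three steps:
\begin{itemize}
\item Inserting traceless operators on any input legs gives zero, so the operators $S^{\beta}$ generate a nilpotent algebra.
\item Nagata--Higman (\cref{nagata-higman}) then kills every product of boundedly many $\mathbb S^{\beta}_{jj'}$. Together with $\mathbb E_{jj'}=0$, this gives strong orthogonality after blocking.
\item A Vandermonde argument on $\tr[(ES^{\beta_1}\cdots S^{\beta_\ell}E)^z]=0$ shows that each block generates an hMPI.
\end{itemize}
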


The result provides an explicit protocol for implementation, given the single-site tensor that generates the sMPI. The steps, explained in detail in~\cite{sm}, are:
\begin{enumerate}[(i)]
    \item Prepare a state of the form $\ket{\mathrm{GHZ}} = \frac{1}{\sqrt{c}} \sum_{j=0}^{g-1} m_j \ket{j}_A^{\otimes N}$ on an ancillary system consisting of $N$ $g$-dimensional qudits distributed across the system ($g \le D$). This can be achieved deterministically with local operations in constant depth with a single round of measurements and feedforward~\cite{Piroli_2024}, 
    \item Feed this to the control space of a depth-2 control-brickwork circuit of the form
    \begin{equation}
\label{controlunitary_main}
    CU = 
\pgfmathsetmacro{\spacing}{1.5cm}
\begin{tikzpicture}[scale=0.5,x=\spacing, baseline={([yshift=-0.65ex] current bounding box.center)}]
    \definecolor{cream}{RGB}{255,255,221}
    \pgfmathsetmacro{\height}{1.5}
    \pgfmathsetmacro{\shift}{0.7}
    \pgfmathsetmacro{\labelpositions}{\height / 2 + \shift / 2 + 0.2}
    \pgfmathsetmacro{\labelcloseness}{0.15}
    \pgfmathsetmacro{\extra}{.15}

    \clip (0.9,-1.5) rectangle (8.1,1.5);
    
    \foreach \i in {0,1,3,4,6,7,9} {
        \draw[color=black,very thick] (\i,\shift) -- (\i,\height);
    }
    \foreach \i in {0,3,6,9} {
        \draw[color=purification,very thick] (\i,\shift) -- (\i,\height);
    }
    \foreach \i in {1.5, 4.5, 7.5} {
        \draw[color=black,very thick] (\i,\height) -- (\i,-\height);
        \fill (\i,\shift) circle (3pt);
        \fill (\i,-\shift) circle (3pt);
        
    }
    \foreach \i in {0,2,3,5,6,8,9} {
        \draw[color=black,very thick] (\i,-\shift) -- (\i,-\height);
        %\draw (\i - \labelcloseness, -\labelpositions) node {\scriptsize $d$};
    }
    \foreach \i in {0,3,6,9} {
        \draw[color=black,very thick] (\i,-\shift) -- (\i,\shift);
        \draw[color=black,very thick] (\i + \extra,\shift) -- (\i + \extra,\height);
        %\draw (\i - \labelcloseness, 0) node {\scriptsize $(d\chi)^2$};
        %\draw (\i - \labelcloseness, -\labelpositions) node {\scriptsize $d$};
        %\draw (\i - \labelcloseness, \labelpositions) node {\scriptsize $d$};
        %\draw (\i + \labelcloseness + \extra, +\labelpositions) node {\scriptsize $\chi$};
    }
    \foreach \i in {1,4,7} {
        %\draw[color=black,very thick] (\i,-\shift) -- (\i,\shift);
        \draw[color=purification,very thick] (\i - \extra,\shift) -- (\i - \extra,\height);
        %\draw (\i + \labelcloseness, -\labelpositions) node {\scriptsize $d$};
        %\draw (\i + \labelcloseness, +\labelpositions) node {\scriptsize $d$};
        %\draw (\i - \labelcloseness - \extra, +\labelpositions) node {\scriptsize $\chi$};
    }
    \foreach \i in {1,4,7} {
            \draw[color=black,very thick, rounded corners=8pt]
    (\i, \shift) -- (\i, 0) -- (\i+1, 0) -- (\i+1, -\shift);
        }
    \foreach \i in {0,3,6} {
        \draw[color=black,very thick] (\i,\shift) -- (\i+1.5,\shift);
        \draw[thick, fill=cream, rounded corners=2pt]
            (\i - \extra, -0.3 + \shift) rectangle (\i+1+\extra, 0.3 + \shift);
        \draw (\i+0.5,\shift) node {\scriptsize $T$};
    }
    \foreach \i in {-1,2,5,8} {
        \draw[color=black,very thick] (\i,-\shift) -- (\i+2.5,-\shift);
        \draw[thick, fill=cream, rounded corners=2pt]
            (\i - \extra, -0.3 - \shift) rectangle (\i+1 + \extra, 0.3 - \shift);
            \draw (\i+0.5,-\shift) node {\scriptsize $B$};     
    }
\end{tikzpicture},
\end{equation}
where $B$ and $T$ are unitary gates for all values of the control.
    \item Measure all ancilla qudits in parallel and record the outcomes. Discard the ancillas, which factorize after the measurements.
    \item Apply a phase unitary at a single site, which depends on all classical outcomes of the previous measurement step.
\end{enumerate}
%
%For the special case of a hMPI, the circuit of $(ii)$ can be simplified to that of \cref{depth2}, with a trivial control space, thus without the need for measurements and feedforward. 
The key technical result we obtain, which allows us to design the protocol --- specifically the circuit of $(ii)$ --- and guarantee its validity and overcome the need for post-selection, is a complete characterization of the structure of all sMPIs. More precisely, we show that they all decompose as
\begin{align}
    V_N = \frac{1}{\sqrt{c}}\sum_{j=0}^{g-1} m_j V_{j,N}
\end{align}
where all $V_j$ are hMPIs which are \emph{orthogonal}, i.e., $V^\dagger_{i,N} V_{j,N} = \delta_{ij} \Id_{d_{\rm in}}^{\otimes N}$, $g \le D$ and the integer constants $m_j$ satisfy $\sum_j m_j^2 = c$ (see \cite{sm} for details). 

{\em sMPIs from MPUs.---}
Any hMPI can be viewed as a hMPU acting on a fixed input MPS (a product state): the isometric gates in \cref{depth2} can be extended to unitaries, whose action on a fixed product-state input reproduces the original isometries. Likewise, \cref{longrangeisometry} can be viewed as a product MPU --- namely, the product of swaps --- of which half of the input legs are contracted with a $\mathrm{GHZ}$ state. 

This naturally raises the question of whether every sMPI admits a representation in terms of simpler components, that is, as a hMPU acting on an input MPS. We show that the answer is indeed positive, with the corresponding MPU having homogeneous bulk but also a boundary, graphically:
\begin{equation}
    U = \begin{tikzpicture}[scale=\defaultscaling,baseline={([yshift=-0.65ex] current bounding box.center)}]
            \pgfmathsetmacro{\leglength}{0.5}
            \pgfmathsetmacro{\squarehalfside}{0.4}
            \SingleTrLeft{(-4*\leglength,0)}
            \draw [very thick] (-4*\leglength,0) to  (0,0);
            \MPOTensor{0,0}{\leglength}{\squarehalfside}{${\cal U}$}{0}
            \MPOTensor{2*\leglength + 2*\squarehalfside,0}{\leglength}{\squarehalfside}{${\cal U}$}{0}
            \SingleDots{4*\leglength + 4*\squarehalfside, 0}{2*\leglength}
            \MPOTensor{6*\leglength + 6*\squarehalfside,0}{\leglength}{\squarehalfside}{${\cal U}$}{0}
            %\draw [very thick] (11*\leglength,0) to  (14*\leglength,0);
            \SingleTrRight{(8*\leglength + 7.5*\squarehalfside,0)}
            \filldraw[color=black, fill=whitetensorcolor, thick] (-2.5*\leglength,0) circle (0.5);
            \node at (-2.5*\leglength,0) {\scriptsize $\lambda$};
    \end{tikzpicture}.
\end{equation}
The need for a boundary can be expected from the case of sMPUs that create long-range correlations, e.g., $U = \frac{1}{\sqrt{2}} (\Id^{\otimes N} + i X^{\otimes N})$~\cite{styliaris2025matrix}.

\begin{thm}

Any normalized sMPI $V$ on an {$N$\nobreakdash-qudit} system can be represented as an MPU with homogeneous bulk and a boundary, acting over a $\ket{\mathrm{GHZ}}=\frac{1}{\sqrt{c}}\sum_{j=0}^{g-1} m_j \ket{j}^{\otimes N}$ input state.
\end{thm}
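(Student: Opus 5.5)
Starting point: the structural decomposition already stated for sMPIs, $V_N = \frac{1}{\sqrt{c}}\sum_{j=0}^{g-1} m_j V_{j,N}$, where the $V_{j,N}$ are mutually orthogonal hMPIs ($V_{i,N}^\dagger V_{j,N}=\delta_{ij}\Id$) and $\sum_j m_j^2=c$. The MPU will be built so that it acts as $V_{j,N}$ conditioned on the ancilla being in $\ket{j}^{\otimes N}$. Contracting it with $\ket{\mathrm{GHZ}}=\frac{1}{\sqrt c}\sum_j m_j\ket{j}^{\otimes N}$ then reproduces $V_N$.

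The plan has three steps.

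\textbf{Step 1: dilate each isometry to a unitary.} By \cref{depth2}, after blocking each $V_{j,N}$ is a depth-two brickwork of isometries $u_j,v_j$. I complete each $u_j,v_j$ to unitaries $\tilde u_j,\tilde v_j$ by adjoining fixed product ancillas, as in the remark that hMPIs are hMPUs acting on product inputs. This gives homogeneous MPUs $U_{j,N}$ with $U_{j,N}(\cdot\otimes\ket{0}^{\otimes N}) = V_{j,N}$. I pad all $j$ to common local dimensions, which is possible because $g\le D$ and all dimensions are bounded.

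\textbf{Step 2: assemble the controlled unitary.} A first try is $\sum_j \ket{j}\bra{j}^{\otimes N}\otimes U_{j,N}$. This is not unitary, since it annihilates control configurations that are not all-equal. I fix this by defining a controlled brickwork in the spirit of \eqref{controlunitary_main}:
\[
\tilde T=\sum_{a,b}\ket{a b}\bra{a b}\otimes T_{ab},\qquad \tilde B=\sum_{a,b}\ket{ab}\bra{ab}\otimes B_{ab},
\]
with each gate controlled by the ancilla qudits it touches. I set $T_{jj}=\tilde v_j$ and $B_{jj}=\tilde u_j$, and take the identity (or any fixed unitary) on mismatched control pairs $a\neq b$. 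The result is a homogeneous depth-two circuit of unitaries, hence a homogeneous MPU. On $\ket{j}^{\otimes N}$ it acts exactly as $U_{j,N}$.

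\textbf{Step 3: handle the boundary.} Orthogonality of the $V_{j,N}$ is what makes the superposition $\frac{1}{\sqrt c}\sum_j m_j V_{j,N}$ an isometry with normalization $\sum_j m_j^2=c$. The MPU itself does not need this property. A residual phase or gauge may differ between branches $j$, since each $V_{j,N}$ is defined only up to the boundary matrix in its trace. I absorb this into the boundary tensor $\lambda$, a diagonal matrix on the control/virtual space carrying the $j$-dependent factors. I also absorb into $\lambda$ the phase unitary from step (iv) of the protocol, so that the trace closure with $\lambda$ reproduces each $V_{j,N}$ with the correct sign.

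\textbf{Main obstacle.} The hardest point is that the decomposition gives the $V_{j,N}$ as block components of a single MPI tensor in canonical form. These components may have $N$-dependent boundary phases, or may be related by the $\mathbb{Z}_p$-type periodic structure typical of GHZ-like blocks. Applying \cref{depth2} to each block requires blocking them to a common length $q$, and checking that the remaining $N$-dependence reduces to a diagonal boundary matrix $\lambda$. I would first try to show that after blocking each block is an hMPI up to a scalar phase $e^{i\phi_j N}$. That phase is homogeneous and can be pushed into the local gates, leaving only a constant boundary $\lambda$.
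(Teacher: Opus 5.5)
There is a genuine gap in Steps 2--3: your construction never decouples the control register. The gates $\tilde T,\tilde B$ are diagonal in the control basis, so feeding in the GHZ state gives
\[
CU\bigl(\ket{\psi}\otimes\ket{\mathrm{GHZ}}\bigr)=\frac{1}{\sqrt c}\sum_{j} m_j\,\bigl(V_{j,N}\ket{\psi}\bigr)\otimes\ket{j}^{\otimes N}.
\]
The branch label is still stored in $N$ physical qudits. This is not $V_N\ket{\psi}$ times a fixed ancilla state, which is what the statement requires: the ancillas must decouple after the MPU acts.

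Discarding the ancillas gives the dephased map $\frac{1}{c}\sum_j m_j^2\,V_{j,N}(\cdot)V_{j,N}^\dagger$. In the GHZ example this outputs the classical mixture $\frac12(\ket{0}\bra{0}^{\otimes N}+\ket{1}\bra{1}^{\otimes N})$ instead of $\ket{\mathrm{GHZ}}$. Projecting the ancilla outputs onto a fixed state such as $(g^{-1/2}\sum_k\ket{k})^{\otimes N}$ recovers $V_N$ only with amplitude $g^{-N/2}$, which is postselection. A boundary matrix $\lambda$ acts on virtual indices and cannot undo entanglement between physical registers.

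The obstacle you focus on instead does not arise. The multiplicity lemma already forces $\mu_{j,q}=e^{i\phi_j}$, which is absorbed into $\mathcal V_j$ so each $V_{j,N}$ is homogeneous. It also shows that all nonzero eigenvalues of $E$ equal $1$, which rules out any $\mathbb Z_p$-type periodicity. The phase unitary of step (iv) exists only because of the Fourier-basis measurement, which has no counterpart in a unitary construction.

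What is missing is a unitary, local way to erase $j$ from the ancillas. The paper applies $F^\dagger$ to the ancillas and then runs oblivious amplitude amplification restricted to the subspace $\mathcal S$ on which $M=\frac{1}{\sqrt c}\sum_j m_jU_j$ is isometric. The reflections $R_\Phi$ and $R_\Psi$ are linear combinations of MPOs, hence MPUs with homogeneous bulk and a boundary. That, not a phase correction, is where the boundary in the statement comes from. If you follow this route, check that the overlap $\sin\theta$ is $N$-independent: with $F^\dagger$ applied to all $N$ ancillas it equals $g^{-N/2}$.

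A simpler repair that fits your Step 2 uses the strong orthogonality of the blocks. The projectors $P^j_{\mathrm{loc}}$ onto the images of the single-site maps $\widetilde{\mathcal V}_j$ are mutually orthogonal, and $V_{j,N}\ket{\psi}$ lies in $\bigotimes_k\mathrm{Im}\,P^j_{\mathrm{loc}}$. Apply one layer of on-site unitaries $\sum_j P^j_{\mathrm{loc}}\otimes X^{-j}+(\Id-\sum_j P^j_{\mathrm{loc}})\otimes\Id$, where $X^{-j}$ is the cyclic shift by $-j$ on each ancilla qudit of that site. This sends every branch's $\ket{j}^{\otimes N}$ to $\ket{0}^{\otimes N}$ and outputs $V_N\ket{\psi}\otimes\ket{0}^{\otimes N}$. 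The result is a homogeneous MPU, with no boundary needed.
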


The construction follows \cref{approach1} up to step (iii), where it diverges due to the requirement that the overall operation be unitary, as we seek an MPU and thus cannot directly employ measurements. The idea is to instead use an \textit{amplitude amplification} method, which deterministically yields the target isometry by iteratively suppressing undesired components using unitary operations alone~\cite{styliaris2025quantumcircuitcomplexitymatrixproduct,gilyen2019quantum} (see \cite{sm} for details).

{\em Outlook.---}
We have developed the theory of MPQCs as a tensor-network representation of one-dimensional quantum channels and the physically motivated LP subclass, where the environment admits a locality structure. For homogeneous channels, we proved a circuit decomposition into depth-two isometries after blocking, introduced a classification, and established the existence of a unique equivalence class. Finally, we identified a translation-invariant class with long-range entanglement and provided an efficient constant-depth implementation protocol using measurements and feedforward. We have also established how this class connects to elementary components, namely, MPUs and MPS.

The present work suggests several future directions. First, the focus was on translationally invariant channels, but non-translationally invariant MPQCs could be relevant as an ansatz for describing correlated noise beyond the low-weight regime \cite{torlai2020quantumprocesstomographyunsupervised,filippov2022matrixproductchannelvariationally,liu2026efficientlylearningglobalquantum}. \cref{depth2} shows that tomography schemes based on locally purified matrix-product ansätze and translational invariance will fail to reconstruct long-range correlated noise, motivating the development of more expressive channel classes such as the {\sffamily{sLP}} introduced here. From the many-body perspective, one could study MPDO symmetries arising from homogeneous MPQC, systematically extending the results of Ref.~\cite{de_Groot_2022}, similar to how MPO symmetries function for MPS~\cite{wang2015bosonic,roose2019anomalous,Garre_Rubio_2023}. \gs{hLP} may also be relevant for describing edge dynamics in dissipative 2D Floquet systems. This parallels how causality-preserving unitaries, which are equivalent to 1D MPUs, determine edge behavior in the closed setting~\cite{Potter2016,Po_2016}. Moreover, local quantum channels naturally arise in methods for classifying mixed-state phases~\cite{Hastings_2011,de2022symmetry,ma2023average,ruiz2024matrix,sun2025anomalous,lessa2025mixed,liu2025trading,liu2026parent}. In that context, it would be relevant to extend the structure result to MPIs allowing for normalization constants dependent on the system size. Finally, it would be interesting to investigate if the strict light cone of hMPIs, and thus also the classification of the resulting channels, extends to higher spatial dimensions.

{\em Acknowledgments.---} We thank Yuhan Liu for helpful discussions. I.C. acknowledges the project THEQUCO within the Munich Quantum Valley (MQV), which is supported by the Bavarian state government with funds from the Hightech Agenda Bayern Plus. R.T. acknowledges funding from the European Union’s Horizon Europe research and innovation program under grant agreement number 101221560 (ToNQS). I.C., R.T., and G.S. acknowledge funding from the Deutsche Forschungsgemeinschaft (DFG, German Research Foundation) under Germany’s Excellence Strategy EXC2111-390814868.

\bibliography{references}

%apsrev4-2.bst 2019-01-14 (MD) hand-edited version of apsrev4-1.bst
%Control: key (0)
%Control: author (8) initials jnrlst
%Control: editor formatted (1) identically to author
%Control: production of article title (0) allowed
%Control: page (0) single
%Control: year (1) truncated
%Control: production of eprint (0) enabled
\begin{thebibliography}{76}%
\makeatletter
\providecommand \@ifxundefined [1]{%
 \@ifx{#1\undefined}
}%
\providecommand \@ifnum [1]{%
 \ifnum #1\expandafter \@firstoftwo
 \else \expandafter \@secondoftwo
 \fi
}%
\providecommand \@ifx [1]{%
 \ifx #1\expandafter \@firstoftwo
 \else \expandafter \@secondoftwo
 \fi
}%
\providecommand \natexlab [1]{#1}%
\providecommand \enquote  [1]{``#1''}%
\providecommand \bibnamefont  [1]{#1}%
\providecommand \bibfnamefont [1]{#1}%
\providecommand \citenamefont [1]{#1}%
\providecommand \href@noop [0]{\@secondoftwo}%
\providecommand \href [0]{\begingroup \@sanitize@url \@href}%
\providecommand \@href[1]{\@@startlink{#1}\@@href}%
\providecommand \@@href[1]{\endgroup#1\@@endlink}%
\providecommand \@sanitize@url [0]{\catcode `\\12\catcode `\$12\catcode
  `\&12\catcode `\#12\catcode `\^12\catcode `\_12\catcode `\%12\relax}%
\providecommand \@@startlink[1]{}%
\providecommand \@@endlink[0]{}%
\providecommand \url  [0]{\begingroup\@sanitize@url \@url }%
\providecommand \@url [1]{\endgroup\@href {#1}{\urlprefix }}%
\providecommand \urlprefix  [0]{URL }%
\providecommand \Eprint [0]{\href }%
\providecommand \doibase [0]{https://doi.org/}%
\providecommand \selectlanguage [0]{\@gobble}%
\providecommand \bibinfo  [0]{\@secondoftwo}%
\providecommand \bibfield  [0]{\@secondoftwo}%
\providecommand \translation [1]{[#1]}%
\providecommand \BibitemOpen [0]{}%
\providecommand \bibitemStop [0]{}%
\providecommand \bibitemNoStop [0]{.\EOS\space}%
\providecommand \EOS [0]{\spacefactor3000\relax}%
\providecommand \BibitemShut  [1]{\csname bibitem#1\endcsname}%
\let\auto@bib@innerbib\@empty
%</preamble>
\bibitem [{\citenamefont {Hastings}(2007)}]{hastings:arealaw}%
  \BibitemOpen
  \bibfield  {author} {\bibinfo {author} {\bibfnamefont {M.~B.}\ \bibnamefont
  {Hastings}},\ }\bibfield  {title} {\bibinfo {title} {An area law for
  one-dimensional quantum systems},\ }\href
  {https://doi.org/10.1088/1742-5468/2007/08/P08024} {\bibfield  {journal}
  {\bibinfo  {journal} {Journal of Statistical Mechanics: Theory and
  Experiment}\ }\textbf {\bibinfo {volume} {2007}},\ \bibinfo {pages} {P08024}
  (\bibinfo {year} {2007})},\ \Eprint {https://arxiv.org/abs/0705.2024}
  {arXiv:0705.2024} \BibitemShut {NoStop}%
\bibitem [{\citenamefont {Arad}\ \emph {et~al.}(2013)\citenamefont {Arad},
  \citenamefont {Kitaev}, \citenamefont {Landau},\ and\ \citenamefont
  {Vazirani}}]{arad2013area}%
  \BibitemOpen
  \bibfield  {author} {\bibinfo {author} {\bibfnamefont {I.}~\bibnamefont
  {Arad}}, \bibinfo {author} {\bibfnamefont {A.}~\bibnamefont {Kitaev}},
  \bibinfo {author} {\bibfnamefont {Z.}~\bibnamefont {Landau}},\ and\ \bibinfo
  {author} {\bibfnamefont {U.}~\bibnamefont {Vazirani}},\ }\href@noop {}
  {\bibinfo {title} {An area law and sub-exponential algorithm for {1D}
  systems}} (\bibinfo {year} {2013}),\ \Eprint
  {https://arxiv.org/abs/1301.1162} {arXiv:1301.1162 [quant-ph]} \BibitemShut
  {NoStop}%
\bibitem [{\citenamefont {Chen}\ \emph {et~al.}(2011)\citenamefont {Chen},
  \citenamefont {Gu},\ and\ \citenamefont {Wen}}]{chen2011complete}%
  \BibitemOpen
  \bibfield  {author} {\bibinfo {author} {\bibfnamefont {X.}~\bibnamefont
  {Chen}}, \bibinfo {author} {\bibfnamefont {Z.-C.}\ \bibnamefont {Gu}},\ and\
  \bibinfo {author} {\bibfnamefont {X.-G.}\ \bibnamefont {Wen}},\ }\bibfield
  {title} {\bibinfo {title} {Complete classification of one-dimensional gapped
  quantum phases in interacting spin systems},\ }\href
  {https://doi.org/10.1103/PhysRevB.84.235128} {\bibfield  {journal} {\bibinfo
  {journal} {Physical Review B}\ }\textbf {\bibinfo {volume} {84}},\ \bibinfo
  {pages} {235128} (\bibinfo {year} {2011})}\BibitemShut {NoStop}%
\bibitem [{\citenamefont {Schuch}\ \emph {et~al.}(2011)\citenamefont {Schuch},
  \citenamefont {P\'{e}rez-Garc\'{i}a},\ and\ \citenamefont
  {Cirac}}]{schuch2011classifying}%
  \BibitemOpen
  \bibfield  {author} {\bibinfo {author} {\bibfnamefont {N.}~\bibnamefont
  {Schuch}}, \bibinfo {author} {\bibfnamefont {D.}~\bibnamefont
  {P\'{e}rez-Garc\'{i}a}},\ and\ \bibinfo {author} {\bibfnamefont
  {I.}~\bibnamefont {Cirac}},\ }\bibfield  {title} {\bibinfo {title}
  {Classifying quantum phases using matrix product states and projected
  entangled pair states},\ }\href {https://doi.org/10.1103/PhysRevB.84.165139}
  {\bibfield  {journal} {\bibinfo  {journal} {Physical Review B}\ }\textbf
  {\bibinfo {volume} {84}},\ \bibinfo {pages} {165139} (\bibinfo {year}
  {2011})}\BibitemShut {NoStop}%
\bibitem [{\citenamefont {Hastings}(2006)}]{hastings2006solving}%
  \BibitemOpen
  \bibfield  {author} {\bibinfo {author} {\bibfnamefont {M.~B.}\ \bibnamefont
  {Hastings}},\ }\bibfield  {title} {\bibinfo {title} {Solving gapped
  {Hamiltonians} locally},\ }\href {https://doi.org/10.1103/PhysRevB.73.085115}
  {\bibfield  {journal} {\bibinfo  {journal} {Physical Review B}\ }\textbf
  {\bibinfo {volume} {73}},\ \bibinfo {pages} {085115} (\bibinfo {year}
  {2006})}\BibitemShut {NoStop}%
\bibitem [{\citenamefont {Molnar}\ \emph {et~al.}(2015)\citenamefont {Molnar},
  \citenamefont {Schuch}, \citenamefont {Verstraete},\ and\ \citenamefont
  {Cirac}}]{molnar2014approximating}%
  \BibitemOpen
  \bibfield  {author} {\bibinfo {author} {\bibfnamefont {A.}~\bibnamefont
  {Molnar}}, \bibinfo {author} {\bibfnamefont {N.}~\bibnamefont {Schuch}},
  \bibinfo {author} {\bibfnamefont {F.}~\bibnamefont {Verstraete}},\ and\
  \bibinfo {author} {\bibfnamefont {J.~I.}\ \bibnamefont {Cirac}},\ }\bibfield
  {title} {\bibinfo {title} {Approximating {Gibbs} states of local
  {Hamiltonians} efficiently with projected entangled pair states},\ }\href
  {https://doi.org/10.1103/PhysRevB.91.045138} {\bibfield  {journal} {\bibinfo
  {journal} {Physical Review B}\ }\textbf {\bibinfo {volume} {91}},\ \bibinfo
  {pages} {045138} (\bibinfo {year} {2015})}\BibitemShut {NoStop}%
\bibitem [{\citenamefont {Cirac}\ \emph {et~al.}(2011)\citenamefont {Cirac},
  \citenamefont {Poilblanc}, \citenamefont {Schuch},\ and\ \citenamefont
  {Verstraete}}]{cirac2011entanglement}%
  \BibitemOpen
  \bibfield  {author} {\bibinfo {author} {\bibfnamefont {J.~I.}\ \bibnamefont
  {Cirac}}, \bibinfo {author} {\bibfnamefont {D.}~\bibnamefont {Poilblanc}},
  \bibinfo {author} {\bibfnamefont {N.}~\bibnamefont {Schuch}},\ and\ \bibinfo
  {author} {\bibfnamefont {F.}~\bibnamefont {Verstraete}},\ }\bibfield  {title}
  {\bibinfo {title} {Entanglement spectrum and boundary theories with projected
  entangled-pair states},\ }\href {https://doi.org/10.1103/PhysRevB.83.245134}
  {\bibfield  {journal} {\bibinfo  {journal} {Physical Review B}\ }\textbf
  {\bibinfo {volume} {83}},\ \bibinfo {pages} {245134} (\bibinfo {year}
  {2011})}\BibitemShut {NoStop}%
\bibitem [{\citenamefont {Cirac}\ \emph
  {et~al.}(2017{\natexlab{a}})\citenamefont {Cirac}, \citenamefont
  {P\'{e}rez-Garc\'{i}a}, \citenamefont {Schuch},\ and\ \citenamefont
  {Verstraete}}]{cirac_matrix_2017}%
  \BibitemOpen
  \bibfield  {author} {\bibinfo {author} {\bibfnamefont {J.~I.}\ \bibnamefont
  {Cirac}}, \bibinfo {author} {\bibfnamefont {D.}~\bibnamefont
  {P\'{e}rez-Garc\'{i}a}}, \bibinfo {author} {\bibfnamefont {N.}~\bibnamefont
  {Schuch}},\ and\ \bibinfo {author} {\bibfnamefont {F.}~\bibnamefont
  {Verstraete}},\ }\bibfield  {title} {\bibinfo {title} {Matrix product density
  operators: {Renormalization} fixed points and boundary theories},\ }\href
  {https://doi.org/10.1016/j.aop.2016.12.030} {\bibfield  {journal} {\bibinfo
  {journal} {Annals of Physics}\ }\textbf {\bibinfo {volume} {378}},\ \bibinfo
  {pages} {100} (\bibinfo {year} {2017}{\natexlab{a}})}\BibitemShut {NoStop}%
\bibitem [{\citenamefont {Hastings}(2011)}]{Hastings_2011}%
  \BibitemOpen
  \bibfield  {author} {\bibinfo {author} {\bibfnamefont {M.~B.}\ \bibnamefont
  {Hastings}},\ }\bibfield  {title} {\bibinfo {title} {Topological order at
  nonzero temperature},\ }\href
  {https://doi.org/10.1103/PhysRevLett.107.210501} {\bibfield  {journal}
  {\bibinfo  {journal} {Physical Review Letters}\ }\textbf {\bibinfo {volume}
  {107}},\ \bibinfo {pages} {210501} (\bibinfo {year} {2011})}\BibitemShut
  {NoStop}%
\bibitem [{\citenamefont {de~Groot}\ \emph
  {et~al.}(2022{\natexlab{a}})\citenamefont {de~Groot}, \citenamefont
  {Turzillo},\ and\ \citenamefont {Schuch}}]{de2022symmetry}%
  \BibitemOpen
  \bibfield  {author} {\bibinfo {author} {\bibfnamefont {C.}~\bibnamefont
  {de~Groot}}, \bibinfo {author} {\bibfnamefont {A.}~\bibnamefont {Turzillo}},\
  and\ \bibinfo {author} {\bibfnamefont {N.}~\bibnamefont {Schuch}},\
  }\bibfield  {title} {\bibinfo {title} {Symmetry protected topological order
  in open quantum systems},\ }\href {https://doi.org/10.22331/q-2022-11-10-856}
  {\bibfield  {journal} {\bibinfo  {journal} {Quantum}\ }\textbf {\bibinfo
  {volume} {6}},\ \bibinfo {pages} {856} (\bibinfo {year}
  {2022}{\natexlab{a}})}\BibitemShut {NoStop}%
\bibitem [{\citenamefont {Ma}\ and\ \citenamefont
  {Wang}(2023)}]{ma2023average}%
  \BibitemOpen
  \bibfield  {author} {\bibinfo {author} {\bibfnamefont {R.}~\bibnamefont
  {Ma}}\ and\ \bibinfo {author} {\bibfnamefont {C.}~\bibnamefont {Wang}},\
  }\bibfield  {title} {\bibinfo {title} {Average symmetry-protected topological
  phases},\ }\href {https://doi.org/10.1103/PhysRevX.13.031016} {\bibfield
  {journal} {\bibinfo  {journal} {Physical Review X}\ }\textbf {\bibinfo
  {volume} {13}},\ \bibinfo {pages} {031016} (\bibinfo {year}
  {2023})}\BibitemShut {NoStop}%
\bibitem [{\citenamefont {Ruiz-de Alarc\'{o}n}\ \emph
  {et~al.}(2024)\citenamefont {Ruiz-de Alarc\'{o}n}, \citenamefont
  {Garre-Rubio}, \citenamefont {Moln\'{a}r},\ and\ \citenamefont
  {P\'{e}rez-Garc\'{i}a}}]{ruiz2024matrix}%
  \BibitemOpen
  \bibfield  {author} {\bibinfo {author} {\bibfnamefont {A.}~\bibnamefont
  {Ruiz-de Alarc\'{o}n}}, \bibinfo {author} {\bibfnamefont {J.}~\bibnamefont
  {Garre-Rubio}}, \bibinfo {author} {\bibfnamefont {A.}~\bibnamefont
  {Moln\'{a}r}},\ and\ \bibinfo {author} {\bibfnamefont {D.}~\bibnamefont
  {P\'{e}rez-Garc\'{i}a}},\ }\bibfield  {title} {\bibinfo {title} {Matrix
  product operator algebras {II}: {Phases} of matter for {1D} mixed states},\
  }\href {https://doi.org/10.1007/s11005-024-01778-z} {\bibfield  {journal}
  {\bibinfo  {journal} {Letters in Mathematical Physics}\ }\textbf {\bibinfo
  {volume} {114}},\ \bibinfo {pages} {43} (\bibinfo {year} {2024})}\BibitemShut
  {NoStop}%
\bibitem [{\citenamefont {Sun}(2025)}]{sun2025anomalous}%
  \BibitemOpen
  \bibfield  {author} {\bibinfo {author} {\bibfnamefont {X.-Q.}\ \bibnamefont
  {Sun}},\ }\href@noop {} {\bibinfo {title} {Anomalous matrix product operator
  symmetries and {1D} mixed-state phases}} (\bibinfo {year} {2025}),\ \Eprint
  {https://arxiv.org/abs/2504.16985} {arXiv:2504.16985 [quant-ph]} \BibitemShut
  {NoStop}%
\bibitem [{\citenamefont {Lessa}\ \emph {et~al.}(2025)\citenamefont {Lessa},
  \citenamefont {Cheng},\ and\ \citenamefont {Wang}}]{lessa2025mixed}%
  \BibitemOpen
  \bibfield  {author} {\bibinfo {author} {\bibfnamefont {L.~A.}\ \bibnamefont
  {Lessa}}, \bibinfo {author} {\bibfnamefont {M.}~\bibnamefont {Cheng}},\ and\
  \bibinfo {author} {\bibfnamefont {C.}~\bibnamefont {Wang}},\ }\bibfield
  {title} {\bibinfo {title} {Mixed-state quantum anomaly and multipartite
  entanglement},\ }\href {https://doi.org/10.1103/PhysRevX.15.011069}
  {\bibfield  {journal} {\bibinfo  {journal} {Physical Review X}\ }\textbf
  {\bibinfo {volume} {15}},\ \bibinfo {pages} {011069} (\bibinfo {year}
  {2025})}\BibitemShut {NoStop}%
\bibitem [{\citenamefont {Liu}\ \emph {et~al.}(2025)\citenamefont {Liu},
  \citenamefont {Molnar}, \citenamefont {Sun}, \citenamefont {Verstraete},
  \citenamefont {Kato},\ and\ \citenamefont {Lootens}}]{liu2025trading}%
  \BibitemOpen
  \bibfield  {author} {\bibinfo {author} {\bibfnamefont {Y.}~\bibnamefont
  {Liu}}, \bibinfo {author} {\bibfnamefont {A.}~\bibnamefont {Molnar}},
  \bibinfo {author} {\bibfnamefont {X.-Q.}\ \bibnamefont {Sun}}, \bibinfo
  {author} {\bibfnamefont {F.}~\bibnamefont {Verstraete}}, \bibinfo {author}
  {\bibfnamefont {K.}~\bibnamefont {Kato}},\ and\ \bibinfo {author}
  {\bibfnamefont {L.}~\bibnamefont {Lootens}},\ }\href@noop {} {\bibinfo
  {title} {Trading mathematical for physical simplicity: {Bialgebraic}
  structures in matrix product operator symmetries}} (\bibinfo {year} {2025}),\
  \Eprint {https://arxiv.org/abs/2509.03600} {arXiv:2509.03600 [quant-ph]}
  \BibitemShut {NoStop}%
\bibitem [{\citenamefont {Liu}\ \emph {et~al.}(2026)\citenamefont {Liu},
  \citenamefont {Ruiz-de Alarc\'{o}n}, \citenamefont {Styliaris}, \citenamefont
  {Sun}, \citenamefont {P\'{e}rez-Garc\'{i}a},\ and\ \citenamefont
  {Cirac}}]{liu2026parent}%
  \BibitemOpen
  \bibfield  {author} {\bibinfo {author} {\bibfnamefont {Y.}~\bibnamefont
  {Liu}}, \bibinfo {author} {\bibfnamefont {A.}~\bibnamefont {Ruiz-de
  Alarc\'{o}n}}, \bibinfo {author} {\bibfnamefont {G.}~\bibnamefont
  {Styliaris}}, \bibinfo {author} {\bibfnamefont {X.-Q.}\ \bibnamefont {Sun}},
  \bibinfo {author} {\bibfnamefont {D.}~\bibnamefont {P\'{e}rez-Garc\'{i}a}},\
  and\ \bibinfo {author} {\bibfnamefont {J.~I.}\ \bibnamefont {Cirac}},\
  }\bibfield  {title} {\bibinfo {title} {Parent {Lindbladians} for matrix
  product density operators},\ }\href {https://doi.org/10.1103/1qyd-59z7}
  {\bibfield  {journal} {\bibinfo  {journal} {Physical Review Research}\
  }\textbf {\bibinfo {volume} {8}},\ \bibinfo {pages} {013210} (\bibinfo {year}
  {2026})}\BibitemShut {NoStop}%
\bibitem [{\citenamefont {Wang}\ \emph {et~al.}(2015)\citenamefont {Wang},
  \citenamefont {Santos},\ and\ \citenamefont {Wen}}]{wang2015bosonic}%
  \BibitemOpen
  \bibfield  {author} {\bibinfo {author} {\bibfnamefont {J.~C.}\ \bibnamefont
  {Wang}}, \bibinfo {author} {\bibfnamefont {L.~H.}\ \bibnamefont {Santos}},\
  and\ \bibinfo {author} {\bibfnamefont {X.-G.}\ \bibnamefont {Wen}},\
  }\bibfield  {title} {\bibinfo {title} {Bosonic anomalies, induced fractional
  quantum numbers, and degenerate zero modes: {The} anomalous edge physics of
  symmetry-protected topological states},\ }\href
  {https://doi.org/10.1103/PhysRevB.91.195134} {\bibfield  {journal} {\bibinfo
  {journal} {Physical Review B}\ }\textbf {\bibinfo {volume} {91}},\ \bibinfo
  {pages} {195134} (\bibinfo {year} {2015})}\BibitemShut {NoStop}%
\bibitem [{\citenamefont {Roose}\ \emph {et~al.}(2019)\citenamefont {Roose},
  \citenamefont {Vanderstraeten}, \citenamefont {Haegeman},\ and\ \citenamefont
  {Bultinck}}]{roose2019anomalous}%
  \BibitemOpen
  \bibfield  {author} {\bibinfo {author} {\bibfnamefont {G.}~\bibnamefont
  {Roose}}, \bibinfo {author} {\bibfnamefont {L.}~\bibnamefont
  {Vanderstraeten}}, \bibinfo {author} {\bibfnamefont {J.}~\bibnamefont
  {Haegeman}},\ and\ \bibinfo {author} {\bibfnamefont {N.}~\bibnamefont
  {Bultinck}},\ }\bibfield  {title} {\bibinfo {title} {Anomalous domain wall
  condensation in a modified {Ising} chain},\ }\href
  {https://doi.org/10.1103/PhysRevB.99.195132} {\bibfield  {journal} {\bibinfo
  {journal} {Physical Review B}\ }\textbf {\bibinfo {volume} {99}},\ \bibinfo
  {pages} {195132} (\bibinfo {year} {2019})}\BibitemShut {NoStop}%
\bibitem [{\citenamefont {Garre-Rubio}\ \emph {et~al.}(2023)\citenamefont
  {Garre-Rubio}, \citenamefont {Lootens},\ and\ \citenamefont
  {Moln\'{a}r}}]{Garre_Rubio_2023}%
  \BibitemOpen
  \bibfield  {author} {\bibinfo {author} {\bibfnamefont {J.}~\bibnamefont
  {Garre-Rubio}}, \bibinfo {author} {\bibfnamefont {L.}~\bibnamefont
  {Lootens}},\ and\ \bibinfo {author} {\bibfnamefont {A.}~\bibnamefont
  {Moln\'{a}r}},\ }\bibfield  {title} {\bibinfo {title} {Classifying phases
  protected by matrix product operator symmetries using matrix product
  states},\ }\href {https://doi.org/10.22331/q-2023-02-21-927} {\bibfield
  {journal} {\bibinfo  {journal} {Quantum}\ }\textbf {\bibinfo {volume} {7}},\
  \bibinfo {pages} {927} (\bibinfo {year} {2023})}\BibitemShut {NoStop}%
\bibitem [{\citenamefont {Torlai}\ \emph {et~al.}(2020)\citenamefont {Torlai},
  \citenamefont {Wood}, \citenamefont {Acharya}, \citenamefont {Carleo},
  \citenamefont {Carrasquilla},\ and\ \citenamefont
  {Aolita}}]{torlai2020quantumprocesstomographyunsupervised}%
  \BibitemOpen
  \bibfield  {author} {\bibinfo {author} {\bibfnamefont {G.}~\bibnamefont
  {Torlai}}, \bibinfo {author} {\bibfnamefont {C.~J.}\ \bibnamefont {Wood}},
  \bibinfo {author} {\bibfnamefont {A.}~\bibnamefont {Acharya}}, \bibinfo
  {author} {\bibfnamefont {G.}~\bibnamefont {Carleo}}, \bibinfo {author}
  {\bibfnamefont {J.}~\bibnamefont {Carrasquilla}},\ and\ \bibinfo {author}
  {\bibfnamefont {L.}~\bibnamefont {Aolita}},\ }\href
  {https://arxiv.org/abs/2006.02424} {\bibinfo {title} {Quantum process
  tomography with unsupervised learning and tensor networks}} (\bibinfo {year}
  {2020}),\ \Eprint {https://arxiv.org/abs/2006.02424} {arXiv:2006.02424
  [quant-ph]} \BibitemShut {NoStop}%
\bibitem [{\citenamefont {Filippov}\ \emph {et~al.}(2022)\citenamefont
  {Filippov}, \citenamefont {Sokolov}, \citenamefont {Rossi}, \citenamefont
  {Malmi}, \citenamefont {Borrelli}, \citenamefont {Cavalcanti}, \citenamefont
  {Maniscalco},\ and\ \citenamefont
  {Garc\'{i}a-P\'{e}rez}}]{filippov2022matrixproductchannelvariationally}%
  \BibitemOpen
  \bibfield  {author} {\bibinfo {author} {\bibfnamefont {S.}~\bibnamefont
  {Filippov}}, \bibinfo {author} {\bibfnamefont {B.}~\bibnamefont {Sokolov}},
  \bibinfo {author} {\bibfnamefont {M.~A.~C.}\ \bibnamefont {Rossi}}, \bibinfo
  {author} {\bibfnamefont {J.}~\bibnamefont {Malmi}}, \bibinfo {author}
  {\bibfnamefont {E.-M.}\ \bibnamefont {Borrelli}}, \bibinfo {author}
  {\bibfnamefont {D.}~\bibnamefont {Cavalcanti}}, \bibinfo {author}
  {\bibfnamefont {S.}~\bibnamefont {Maniscalco}},\ and\ \bibinfo {author}
  {\bibfnamefont {G.}~\bibnamefont {Garc\'{i}a-P\'{e}rez}},\ }\href
  {https://arxiv.org/abs/2212.10225} {\bibinfo {title} {Matrix product channel:
  {Variationally} optimized quantum tensor network to mitigate noise and reduce
  errors for the variational quantum eigensolver}} (\bibinfo {year} {2022}),\
  \Eprint {https://arxiv.org/abs/2212.10225} {arXiv:2212.10225 [quant-ph]}
  \BibitemShut {NoStop}%
\bibitem [{\citenamefont {Liu}\ and\ \citenamefont
  {Wild}(2026)}]{liu2026efficientlylearningglobalquantum}%
  \BibitemOpen
  \bibfield  {author} {\bibinfo {author} {\bibfnamefont {Z.}~\bibnamefont
  {Liu}}\ and\ \bibinfo {author} {\bibfnamefont {D.~S.}\ \bibnamefont {Wild}},\
  }\href {https://arxiv.org/abs/2603.07037} {\bibinfo {title} {Efficiently
  learning global quantum channels with local tomography}} (\bibinfo {year}
  {2026}),\ \Eprint {https://arxiv.org/abs/2603.07037} {arXiv:2603.07037
  [quant-ph]} \BibitemShut {NoStop}%
\bibitem [{\citenamefont {Cirac}\ \emph
  {et~al.}(2017{\natexlab{b}})\citenamefont {Cirac}, \citenamefont
  {Perez-Garcia}, \citenamefont {Schuch},\ and\ \citenamefont
  {Verstraete}}]{cirac_matrix_2017-1}%
  \BibitemOpen
  \bibfield  {author} {\bibinfo {author} {\bibfnamefont {J.~I.}\ \bibnamefont
  {Cirac}}, \bibinfo {author} {\bibfnamefont {D.}~\bibnamefont {Perez-Garcia}},
  \bibinfo {author} {\bibfnamefont {N.}~\bibnamefont {Schuch}},\ and\ \bibinfo
  {author} {\bibfnamefont {F.}~\bibnamefont {Verstraete}},\ }\bibfield  {title}
  {\bibinfo {title} {Matrix product unitaries: {Structure}, symmetries, and
  topological invariants},\ }\href {https://doi.org/10.1088/1742-5468/aa7e55}
  {\bibfield  {journal} {\bibinfo  {journal} {Journal of Statistical Mechanics:
  Theory and Experiment}\ }\textbf {\bibinfo {volume} {2017}},\ \bibinfo
  {pages} {083105} (\bibinfo {year} {2017}{\natexlab{b}})}\BibitemShut
  {NoStop}%
\bibitem [{\citenamefont {\c{S}ahino\v{g}lu}\ \emph {et~al.}(2018)\citenamefont
  {\c{S}ahino\v{g}lu}, \citenamefont {Shukla}, \citenamefont {Bi},\ and\
  \citenamefont {Chen}}]{sahinoglu2018matrix}%
  \BibitemOpen
  \bibfield  {author} {\bibinfo {author} {\bibfnamefont {M.~B.}\ \bibnamefont
  {\c{S}ahino\v{g}lu}}, \bibinfo {author} {\bibfnamefont {S.~K.}\ \bibnamefont
  {Shukla}}, \bibinfo {author} {\bibfnamefont {F.}~\bibnamefont {Bi}},\ and\
  \bibinfo {author} {\bibfnamefont {X.}~\bibnamefont {Chen}},\ }\bibfield
  {title} {\bibinfo {title} {Matrix product representation of locality
  preserving unitaries},\ }\href {https://doi.org/10.1103/PhysRevB.98.245122}
  {\bibfield  {journal} {\bibinfo  {journal} {Physical Review B}\ }\textbf
  {\bibinfo {volume} {98}},\ \bibinfo {pages} {245122} (\bibinfo {year}
  {2018})}\BibitemShut {NoStop}%
\bibitem [{\citenamefont {Piroli}\ \emph
  {et~al.}(2021{\natexlab{a}})\citenamefont {Piroli}, \citenamefont {Turzillo},
  \citenamefont {Shukla},\ and\ \citenamefont {Cirac}}]{piroli2021fermionic}%
  \BibitemOpen
  \bibfield  {author} {\bibinfo {author} {\bibfnamefont {L.}~\bibnamefont
  {Piroli}}, \bibinfo {author} {\bibfnamefont {A.}~\bibnamefont {Turzillo}},
  \bibinfo {author} {\bibfnamefont {S.~K.}\ \bibnamefont {Shukla}},\ and\
  \bibinfo {author} {\bibfnamefont {J.~I.}\ \bibnamefont {Cirac}},\ }\bibfield
  {title} {\bibinfo {title} {Fermionic quantum cellular automata and
  generalized matrix-product unitaries},\ }\href
  {https://doi.org/10.1088/1742-5468/abd30f} {\bibfield  {journal} {\bibinfo
  {journal} {Journal of Statistical Mechanics: Theory and Experiment}\ }\textbf
  {\bibinfo {volume} {2021}},\ \bibinfo {pages} {013107} (\bibinfo {year}
  {2021}{\natexlab{a}})}\BibitemShut {NoStop}%
\bibitem [{\citenamefont {Shukla}(2025)}]{shukla2025simple}%
  \BibitemOpen
  \bibfield  {author} {\bibinfo {author} {\bibfnamefont {S.~K.}\ \bibnamefont
  {Shukla}},\ }\bibfield  {title} {\bibinfo {title} {A simple and general
  equation for matrix product unitary generation},\ }\bibfield  {journal}
  {\bibinfo  {journal} {Journal of Mathematical Physics}\ }\textbf {\bibinfo
  {volume} {66}},\ \href {https://doi.org/10.1088/1742-5468/abd30f}
  {10.1088/1742-5468/abd30f} (\bibinfo {year} {2025})\BibitemShut {NoStop}%
\bibitem [{\citenamefont {Styliaris}\ \emph
  {et~al.}(2025{\natexlab{a}})\citenamefont {Styliaris}, \citenamefont
  {Trivedi}, \citenamefont {P\'{e}rez-Garc\'{i}a},\ and\ \citenamefont
  {Cirac}}]{styliaris2025matrix}%
  \BibitemOpen
  \bibfield  {author} {\bibinfo {author} {\bibfnamefont {G.}~\bibnamefont
  {Styliaris}}, \bibinfo {author} {\bibfnamefont {R.}~\bibnamefont {Trivedi}},
  \bibinfo {author} {\bibfnamefont {D.}~\bibnamefont {P\'{e}rez-Garc\'{i}a}},\
  and\ \bibinfo {author} {\bibfnamefont {J.~I.}\ \bibnamefont {Cirac}},\
  }\bibfield  {title} {\bibinfo {title} {Matrix-product unitaries: {Beyond}
  quantum cellular automata},\ }\href
  {https://doi.org/10.22331/q-2025-02-25-1645} {\bibfield  {journal} {\bibinfo
  {journal} {Quantum}\ }\textbf {\bibinfo {volume} {9}},\ \bibinfo {pages}
  {1645} (\bibinfo {year} {2025}{\natexlab{a}})}\BibitemShut {NoStop}%
\bibitem [{\citenamefont {Franco-Rubio}\ \emph {et~al.}(2025)\citenamefont
  {Franco-Rubio}, \citenamefont {Bochniak},\ and\ \citenamefont
  {Cirac}}]{francorubio2025symmetrydefectsgaugingquantum}%
  \BibitemOpen
  \bibfield  {author} {\bibinfo {author} {\bibfnamefont {A.}~\bibnamefont
  {Franco-Rubio}}, \bibinfo {author} {\bibfnamefont {A.}~\bibnamefont
  {Bochniak}},\ and\ \bibinfo {author} {\bibfnamefont {J.~I.}\ \bibnamefont
  {Cirac}},\ }\href {https://arxiv.org/abs/2502.20257} {\bibinfo {title}
  {Symmetry defects and gauging for quantum states with matrix product unitary
  symmetries}} (\bibinfo {year} {2025}),\ \Eprint
  {https://arxiv.org/abs/2502.20257} {arXiv:2502.20257 [quant-ph]} \BibitemShut
  {NoStop}%
\bibitem [{\citenamefont {Lootens}\ \emph {et~al.}(2025)\citenamefont
  {Lootens}, \citenamefont {Delcamp}, \citenamefont {Williamson},\ and\
  \citenamefont {Verstraete}}]{Lootens_2025}%
  \BibitemOpen
  \bibfield  {author} {\bibinfo {author} {\bibfnamefont {L.}~\bibnamefont
  {Lootens}}, \bibinfo {author} {\bibfnamefont {C.}~\bibnamefont {Delcamp}},
  \bibinfo {author} {\bibfnamefont {D.}~\bibnamefont {Williamson}},\ and\
  \bibinfo {author} {\bibfnamefont {F.}~\bibnamefont {Verstraete}},\ }\bibfield
   {title} {\bibinfo {title} {Low-depth unitary quantum circuits for dualities
  in one-dimensional quantum lattice models},\ }\href
  {https://doi.org/10.1103/PhysRevLett.134.130403} {\bibfield  {journal}
  {\bibinfo  {journal} {Physical Review Letters}\ }\textbf {\bibinfo {volume}
  {134}},\ \bibinfo {pages} {130403} (\bibinfo {year} {2025})}\BibitemShut
  {NoStop}%
\bibitem [{\citenamefont {Arrighi}(2019)}]{arrighi_overview_2019}%
  \BibitemOpen
  \bibfield  {author} {\bibinfo {author} {\bibfnamefont {P.}~\bibnamefont
  {Arrighi}},\ }\bibfield  {title} {\bibinfo {title} {An overview of quantum
  cellular automata},\ }\href {https://doi.org/10.1007/s11047-019-09762-6}
  {\bibfield  {journal} {\bibinfo  {journal} {Natural Computing}\ }\textbf
  {\bibinfo {volume} {18}},\ \bibinfo {pages} {885} (\bibinfo {year}
  {2019})}\BibitemShut {NoStop}%
\bibitem [{\citenamefont {Gross}\ \emph {et~al.}(2012)\citenamefont {Gross},
  \citenamefont {Nesme}, \citenamefont {Vogts},\ and\ \citenamefont
  {Werner}}]{gross_index_2012}%
  \BibitemOpen
  \bibfield  {author} {\bibinfo {author} {\bibfnamefont {D.}~\bibnamefont
  {Gross}}, \bibinfo {author} {\bibfnamefont {V.}~\bibnamefont {Nesme}},
  \bibinfo {author} {\bibfnamefont {H.}~\bibnamefont {Vogts}},\ and\ \bibinfo
  {author} {\bibfnamefont {R.~F.}\ \bibnamefont {Werner}},\ }\bibfield  {title}
  {\bibinfo {title} {Index theory of one dimensional quantum walks and cellular
  automata},\ }\href {https://doi.org/10.1007/s00220-012-1423-1} {\bibfield
  {journal} {\bibinfo  {journal} {Communications in Mathematical Physics}\
  }\textbf {\bibinfo {volume} {310}},\ \bibinfo {pages} {419} (\bibinfo {year}
  {2012})}\BibitemShut {NoStop}%
\bibitem [{\citenamefont {Farrelly}\ and\ \citenamefont
  {Short}(2014)}]{farrelly_causal_2014}%
  \BibitemOpen
  \bibfield  {author} {\bibinfo {author} {\bibfnamefont {T.~C.}\ \bibnamefont
  {Farrelly}}\ and\ \bibinfo {author} {\bibfnamefont {A.~J.}\ \bibnamefont
  {Short}},\ }\bibfield  {title} {\bibinfo {title} {Causal fermions in discrete
  space-time},\ }\href {https://doi.org/10.1103/PhysRevA.89.012302} {\bibfield
  {journal} {\bibinfo  {journal} {Physical Review A}\ }\textbf {\bibinfo
  {volume} {89}},\ \bibinfo {pages} {012302} (\bibinfo {year}
  {2014})}\BibitemShut {NoStop}%
\bibitem [{\citenamefont {Styliaris}\ \emph
  {et~al.}(2025{\natexlab{b}})\citenamefont {Styliaris}, \citenamefont
  {Trivedi},\ and\ \citenamefont
  {Cirac}}]{styliaris2025quantumcircuitcomplexitymatrixproduct}%
  \BibitemOpen
  \bibfield  {author} {\bibinfo {author} {\bibfnamefont {G.}~\bibnamefont
  {Styliaris}}, \bibinfo {author} {\bibfnamefont {R.}~\bibnamefont {Trivedi}},\
  and\ \bibinfo {author} {\bibfnamefont {J.~I.}\ \bibnamefont {Cirac}},\
  }\bibfield  {title} {\bibinfo {title} {Quantum circuits for matrix-product
  unitaries},\ }\bibfield  {journal} {\bibinfo  {journal} {Physical Review
  Letters}\ }\textbf {\bibinfo {volume} {135}},\ \href
  {https://doi.org/10.1103/yshb-hmml} {10.1103/yshb-hmml} (\bibinfo {year}
  {2025}{\natexlab{b}})\BibitemShut {NoStop}%
\bibitem [{\citenamefont {Vidal}(2003)}]{vidal2003efficient}%
  \BibitemOpen
  \bibfield  {author} {\bibinfo {author} {\bibfnamefont {G.}~\bibnamefont
  {Vidal}},\ }\bibfield  {title} {\bibinfo {title} {Efficient classical
  simulation of slightly entangled quantum computations},\ }\href
  {https://doi.org/10.1103/PhysRevLett.91.147902} {\bibfield  {journal}
  {\bibinfo  {journal} {Physical Review Letters}\ }\textbf {\bibinfo {volume}
  {91}},\ \bibinfo {pages} {147902} (\bibinfo {year} {2003})}\BibitemShut
  {NoStop}%
\bibitem [{\citenamefont {Sch\"{o}n}\ \emph {et~al.}(2005)\citenamefont
  {Sch\"{o}n}, \citenamefont {Solano}, \citenamefont {Verstraete},
  \citenamefont {Cirac},\ and\ \citenamefont {Wolf}}]{Sch_n_2005}%
  \BibitemOpen
  \bibfield  {author} {\bibinfo {author} {\bibfnamefont {C.}~\bibnamefont
  {Sch\"{o}n}}, \bibinfo {author} {\bibfnamefont {E.}~\bibnamefont {Solano}},
  \bibinfo {author} {\bibfnamefont {F.}~\bibnamefont {Verstraete}}, \bibinfo
  {author} {\bibfnamefont {J.~I.}\ \bibnamefont {Cirac}},\ and\ \bibinfo
  {author} {\bibfnamefont {M.~M.}\ \bibnamefont {Wolf}},\ }\bibfield  {title}
  {\bibinfo {title} {Sequential generation of entangled multiqubit states},\
  }\href {https://doi.org/10.1103/PhysRevLett.95.110503} {\bibfield  {journal}
  {\bibinfo  {journal} {Physical Review Letters}\ }\textbf {\bibinfo {volume}
  {95}},\ \bibinfo {pages} {110503} (\bibinfo {year} {2005})}\BibitemShut
  {NoStop}%
\bibitem [{sm()}]{sm}%
  \BibitemOpen
  \href@noop {} {}\bibinfo {note} {See Supplemental Material for further
  details.}\BibitemShut {Stop}%
\bibitem [{\citenamefont {Cuevas}\ \emph {et~al.}(2016)\citenamefont {Cuevas},
  \citenamefont {Cubitt}, \citenamefont {Cirac}, \citenamefont {Wolf},\ and\
  \citenamefont {P\'{e}rez-Garc\'{i}a}}]{cuevas_fundamental_2016}%
  \BibitemOpen
  \bibfield  {author} {\bibinfo {author} {\bibfnamefont {G.~D.~l.}\
  \bibnamefont {Cuevas}}, \bibinfo {author} {\bibfnamefont {T.~S.}\
  \bibnamefont {Cubitt}}, \bibinfo {author} {\bibfnamefont {J.~I.}\
  \bibnamefont {Cirac}}, \bibinfo {author} {\bibfnamefont {M.~M.}\ \bibnamefont
  {Wolf}},\ and\ \bibinfo {author} {\bibfnamefont {D.}~\bibnamefont
  {P\'{e}rez-Garc\'{i}a}},\ }\bibfield  {title} {\bibinfo {title} {Fundamental
  limitations in the purifications of tensor networks},\ }\href
  {https://doi.org/10.1063/1.4954983} {\bibfield  {journal} {\bibinfo
  {journal} {Journal of Mathematical Physics}\ }\textbf {\bibinfo {volume}
  {57}},\ \bibinfo {pages} {071902} (\bibinfo {year} {2016})}\BibitemShut
  {NoStop}%
\bibitem [{\citenamefont {Vanrietvelde}\ \emph {et~al.}(2025)\citenamefont
  {Vanrietvelde}, \citenamefont {Mestoudjian},\ and\ \citenamefont
  {Arrighi}}]{vanrietvelde2025causal}%
  \BibitemOpen
  \bibfield  {author} {\bibinfo {author} {\bibfnamefont {A.}~\bibnamefont
  {Vanrietvelde}}, \bibinfo {author} {\bibfnamefont {O.}~\bibnamefont
  {Mestoudjian}},\ and\ \bibinfo {author} {\bibfnamefont {P.}~\bibnamefont
  {Arrighi}},\ }\href@noop {} {\bibinfo {title} {Causal decompositions of {1D}
  quantum cellular automata}} (\bibinfo {year} {2025}),\ \Eprint
  {https://arxiv.org/abs/2506.22219} {arXiv:2506.22219 [quant-ph]} \BibitemShut
  {NoStop}%
\bibitem [{\citenamefont {Cirac}\ \emph {et~al.}(2021)\citenamefont {Cirac},
  \citenamefont {P\'{e}rez-Garc\'{i}a}, \citenamefont {Schuch},\ and\
  \citenamefont {Verstraete}}]{cirac_matrix_2021}%
  \BibitemOpen
  \bibfield  {author} {\bibinfo {author} {\bibfnamefont {J.~I.}\ \bibnamefont
  {Cirac}}, \bibinfo {author} {\bibfnamefont {D.}~\bibnamefont
  {P\'{e}rez-Garc\'{i}a}}, \bibinfo {author} {\bibfnamefont {N.}~\bibnamefont
  {Schuch}},\ and\ \bibinfo {author} {\bibfnamefont {F.}~\bibnamefont
  {Verstraete}},\ }\bibfield  {title} {\bibinfo {title} {Matrix product states
  and projected entangled pair states: {Concepts}, symmetries, theorems},\
  }\href {https://doi.org/10.1103/RevModPhys.93.045003} {\bibfield  {journal}
  {\bibinfo  {journal} {Reviews of Modern Physics}\ }\textbf {\bibinfo {volume}
  {93}},\ \bibinfo {pages} {045003} (\bibinfo {year} {2021})}\BibitemShut
  {NoStop}%
\bibitem [{\citenamefont {de~Groot}\ \emph
  {et~al.}(2022{\natexlab{b}})\citenamefont {de~Groot}, \citenamefont
  {Turzillo},\ and\ \citenamefont {Schuch}}]{de_Groot_2022}%
  \BibitemOpen
  \bibfield  {author} {\bibinfo {author} {\bibfnamefont {C.}~\bibnamefont
  {de~Groot}}, \bibinfo {author} {\bibfnamefont {A.}~\bibnamefont {Turzillo}},\
  and\ \bibinfo {author} {\bibfnamefont {N.}~\bibnamefont {Schuch}},\
  }\bibfield  {title} {\bibinfo {title} {Symmetry protected topological order
  in open quantum systems},\ }\href {https://doi.org/10.22331/q-2022-11-10-856}
  {\bibfield  {journal} {\bibinfo  {journal} {Quantum}\ }\textbf {\bibinfo
  {volume} {6}},\ \bibinfo {pages} {856} (\bibinfo {year}
  {2022}{\natexlab{b}})}\BibitemShut {NoStop}%
\bibitem [{\citenamefont {Schumacher}\ and\ \citenamefont
  {Werner}(2004)}]{schumacher_reversible_2004}%
  \BibitemOpen
  \bibfield  {author} {\bibinfo {author} {\bibfnamefont {B.}~\bibnamefont
  {Schumacher}}\ and\ \bibinfo {author} {\bibfnamefont {R.~F.}\ \bibnamefont
  {Werner}},\ }\href@noop {} {\bibinfo {title} {Reversible quantum cellular
  automata}} (\bibinfo {year} {2004}),\ \Eprint
  {https://arxiv.org/abs/quant-ph/0405174} {arXiv:quant-ph/0405174}
  \BibitemShut {NoStop}%
\bibitem [{\citenamefont {Arrighi}\ \emph {et~al.}(2011)\citenamefont
  {Arrighi}, \citenamefont {Nesme},\ and\ \citenamefont
  {Werner}}]{arrighi_unitarity_2009}%
  \BibitemOpen
  \bibfield  {author} {\bibinfo {author} {\bibfnamefont {P.}~\bibnamefont
  {Arrighi}}, \bibinfo {author} {\bibfnamefont {V.}~\bibnamefont {Nesme}},\
  and\ \bibinfo {author} {\bibfnamefont {R.~F.}\ \bibnamefont {Werner}},\
  }\bibfield  {title} {\bibinfo {title} {Unitarity plus causality implies
  localizability},\ }\href {https://doi.org/10.1016/j.jcss.2010.05.004}
  {\bibfield  {journal} {\bibinfo  {journal} {Journal of Computer and System
  Sciences}\ }\textbf {\bibinfo {volume} {77}},\ \bibinfo {pages} {372}
  (\bibinfo {year} {2011})}\BibitemShut {NoStop}%
\bibitem [{\citenamefont {Raussendorf}\ \emph {et~al.}(2005)\citenamefont
  {Raussendorf}, \citenamefont {Bravyi},\ and\ \citenamefont
  {Harrington}}]{Raussendorf_2005}%
  \BibitemOpen
  \bibfield  {author} {\bibinfo {author} {\bibfnamefont {R.}~\bibnamefont
  {Raussendorf}}, \bibinfo {author} {\bibfnamefont {S.}~\bibnamefont
  {Bravyi}},\ and\ \bibinfo {author} {\bibfnamefont {J.}~\bibnamefont
  {Harrington}},\ }\bibfield  {title} {\bibinfo {title} {Long-range quantum
  entanglement in noisy cluster states},\ }\href
  {https://doi.org/10.1103/PhysRevA.71.062313} {\bibfield  {journal} {\bibinfo
  {journal} {Physical Review A}\ }\textbf {\bibinfo {volume} {71}},\ \bibinfo
  {pages} {062313} (\bibinfo {year} {2005})}\BibitemShut {NoStop}%
\bibitem [{\citenamefont {Broadbent}\ and\ \citenamefont
  {Kashefi}(2009)}]{Broadbent_2009}%
  \BibitemOpen
  \bibfield  {author} {\bibinfo {author} {\bibfnamefont {A.}~\bibnamefont
  {Broadbent}}\ and\ \bibinfo {author} {\bibfnamefont {E.}~\bibnamefont
  {Kashefi}},\ }\bibfield  {title} {\bibinfo {title} {Parallelizing quantum
  circuits},\ }\href {https://doi.org/10.1016/j.tcs.2008.12.046} {\bibfield
  {journal} {\bibinfo  {journal} {Theoretical Computer Science}\ }\textbf
  {\bibinfo {volume} {410}},\ \bibinfo {pages} {2489} (\bibinfo {year}
  {2009})}\BibitemShut {NoStop}%
\bibitem [{\citenamefont {Watts}\ \emph {et~al.}(2019)\citenamefont {Watts},
  \citenamefont {Kothari}, \citenamefont {Schaeffer},\ and\ \citenamefont
  {Tal}}]{Watts_2019}%
  \BibitemOpen
  \bibfield  {author} {\bibinfo {author} {\bibfnamefont {A.~B.}\ \bibnamefont
  {Watts}}, \bibinfo {author} {\bibfnamefont {R.}~\bibnamefont {Kothari}},
  \bibinfo {author} {\bibfnamefont {L.}~\bibnamefont {Schaeffer}},\ and\
  \bibinfo {author} {\bibfnamefont {A.}~\bibnamefont {Tal}},\ }\bibfield
  {title} {\bibinfo {title} {Exponential separation between shallow quantum
  circuits and unbounded fan-in shallow classical circuits},\ }in\ \href
  {https://doi.org/10.1145/3313276.3316404} {\emph {\bibinfo {booktitle}
  {Proceedings of the 51st Annual {ACM} {SIGACT} Symposium on Theory of
  Computing}}},\ \bibinfo {series and number} {STOC '19}\ (\bibinfo
  {publisher} {ACM},\ \bibinfo {year} {2019})\ pp.\ \bibinfo {pages}
  {515--526}\BibitemShut {NoStop}%
\bibitem [{\citenamefont {Piroli}\ \emph
  {et~al.}(2021{\natexlab{b}})\citenamefont {Piroli}, \citenamefont
  {Styliaris},\ and\ \citenamefont {Cirac}}]{Piroli_2021}%
  \BibitemOpen
  \bibfield  {author} {\bibinfo {author} {\bibfnamefont {L.}~\bibnamefont
  {Piroli}}, \bibinfo {author} {\bibfnamefont {G.}~\bibnamefont {Styliaris}},\
  and\ \bibinfo {author} {\bibfnamefont {J.~I.}\ \bibnamefont {Cirac}},\
  }\bibfield  {title} {\bibinfo {title} {Quantum circuits assisted by local
  operations and classical communication: {Transformations} and phases of
  matter},\ }\href {https://doi.org/10.1103/PhysRevLett.127.220503} {\bibfield
  {journal} {\bibinfo  {journal} {Physical Review Letters}\ }\textbf {\bibinfo
  {volume} {127}},\ \bibinfo {pages} {220503} (\bibinfo {year}
  {2021}{\natexlab{b}})}\BibitemShut {NoStop}%
\bibitem [{\citenamefont {Lu}\ \emph {et~al.}(2022)\citenamefont {Lu},
  \citenamefont {Lessa}, \citenamefont {Kim},\ and\ \citenamefont
  {Hsieh}}]{Lu_2022}%
  \BibitemOpen
  \bibfield  {author} {\bibinfo {author} {\bibfnamefont {T.-C.}\ \bibnamefont
  {Lu}}, \bibinfo {author} {\bibfnamefont {L.~A.}\ \bibnamefont {Lessa}},
  \bibinfo {author} {\bibfnamefont {I.~H.}\ \bibnamefont {Kim}},\ and\ \bibinfo
  {author} {\bibfnamefont {T.~H.}\ \bibnamefont {Hsieh}},\ }\bibfield  {title}
  {\bibinfo {title} {Measurement as a shortcut to long-range entangled quantum
  matter},\ }\href {https://doi.org/10.1103/PRXQuantum.3.040337} {\bibfield
  {journal} {\bibinfo  {journal} {PRX Quantum}\ }\textbf {\bibinfo {volume}
  {3}},\ \bibinfo {pages} {040337} (\bibinfo {year} {2022})}\BibitemShut
  {NoStop}%
\bibitem [{\citenamefont {Smith}\ \emph {et~al.}(2023)\citenamefont {Smith},
  \citenamefont {Crane}, \citenamefont {Wiebe},\ and\ \citenamefont
  {Girvin}}]{smith2023deterministic}%
  \BibitemOpen
  \bibfield  {author} {\bibinfo {author} {\bibfnamefont {K.~C.}\ \bibnamefont
  {Smith}}, \bibinfo {author} {\bibfnamefont {E.}~\bibnamefont {Crane}},
  \bibinfo {author} {\bibfnamefont {N.}~\bibnamefont {Wiebe}},\ and\ \bibinfo
  {author} {\bibfnamefont {S.~M.}\ \bibnamefont {Girvin}},\ }\bibfield  {title}
  {\bibinfo {title} {Deterministic constant-depth preparation of the {AKLT}
  state on a quantum processor using fusion measurements},\ }\href
  {https://doi.org/10.1103/PRXQuantum.4.020315} {\bibfield  {journal} {\bibinfo
   {journal} {PRX Quantum}\ }\textbf {\bibinfo {volume} {4}},\ \bibinfo {pages}
  {020315} (\bibinfo {year} {2023})}\BibitemShut {NoStop}%
\bibitem [{\citenamefont {B\"{a}umer}\ \emph {et~al.}(2024)\citenamefont
  {B\"{a}umer}, \citenamefont {Tripathi}, \citenamefont {Wang}, \citenamefont
  {Rall}, \citenamefont {Chen}, \citenamefont {Majumder}, \citenamefont
  {Seif},\ and\ \citenamefont {Minev}}]{baumer2024efficient}%
  \BibitemOpen
  \bibfield  {author} {\bibinfo {author} {\bibfnamefont {E.}~\bibnamefont
  {B\"{a}umer}}, \bibinfo {author} {\bibfnamefont {V.}~\bibnamefont
  {Tripathi}}, \bibinfo {author} {\bibfnamefont {D.~S.}\ \bibnamefont {Wang}},
  \bibinfo {author} {\bibfnamefont {P.}~\bibnamefont {Rall}}, \bibinfo {author}
  {\bibfnamefont {E.~H.}\ \bibnamefont {Chen}}, \bibinfo {author}
  {\bibfnamefont {S.}~\bibnamefont {Majumder}}, \bibinfo {author}
  {\bibfnamefont {A.}~\bibnamefont {Seif}},\ and\ \bibinfo {author}
  {\bibfnamefont {Z.~K.}\ \bibnamefont {Minev}},\ }\bibfield  {title} {\bibinfo
  {title} {Efficient long-range entanglement using dynamic circuits},\ }\href
  {https://doi.org/10.1103/PRXQuantum.5.030339} {\bibfield  {journal} {\bibinfo
   {journal} {PRX Quantum}\ }\textbf {\bibinfo {volume} {5}},\ \bibinfo {pages}
  {030339} (\bibinfo {year} {2024})}\BibitemShut {NoStop}%
\bibitem [{\citenamefont {Buhrman}\ \emph {et~al.}(2024)\citenamefont
  {Buhrman}, \citenamefont {Folkertsma}, \citenamefont {Loff},\ and\
  \citenamefont {Neumann}}]{buhrman2024state}%
  \BibitemOpen
  \bibfield  {author} {\bibinfo {author} {\bibfnamefont {H.}~\bibnamefont
  {Buhrman}}, \bibinfo {author} {\bibfnamefont {M.}~\bibnamefont {Folkertsma}},
  \bibinfo {author} {\bibfnamefont {B.}~\bibnamefont {Loff}},\ and\ \bibinfo
  {author} {\bibfnamefont {N.~M.~P.}\ \bibnamefont {Neumann}},\ }\bibfield
  {title} {\bibinfo {title} {State preparation by shallow circuits using feed
  forward},\ }\href {https://doi.org/10.22331/q-2024-12-09-1552} {\bibfield
  {journal} {\bibinfo  {journal} {Quantum}\ }\textbf {\bibinfo {volume} {8}},\
  \bibinfo {pages} {1552} (\bibinfo {year} {2024})}\BibitemShut {NoStop}%
\bibitem [{\citenamefont {Stephen}\ and\ \citenamefont
  {Hart}(2024)}]{stephen2024preparing}%
  \BibitemOpen
  \bibfield  {author} {\bibinfo {author} {\bibfnamefont {D.~T.}\ \bibnamefont
  {Stephen}}\ and\ \bibinfo {author} {\bibfnamefont {O.}~\bibnamefont {Hart}},\
  }\href@noop {} {\bibinfo {title} {Preparing matrix product states via fusion:
  {Constraints} and extensions}} (\bibinfo {year} {2024}),\ \Eprint
  {https://arxiv.org/abs/2404.16360} {arXiv:2404.16360 [quant-ph]} \BibitemShut
  {NoStop}%
\bibitem [{\citenamefont {Zhang}\ \emph {et~al.}(2024)\citenamefont {Zhang},
  \citenamefont {Gopalakrishnan},\ and\ \citenamefont
  {Styliaris}}]{zhang2024characterizing}%
  \BibitemOpen
  \bibfield  {author} {\bibinfo {author} {\bibfnamefont {Y.}~\bibnamefont
  {Zhang}}, \bibinfo {author} {\bibfnamefont {S.}~\bibnamefont
  {Gopalakrishnan}},\ and\ \bibinfo {author} {\bibfnamefont {G.}~\bibnamefont
  {Styliaris}},\ }\bibfield  {title} {\bibinfo {title} {Characterizing
  matrix-product states and projected entangled-pair states preparable via
  measurement and feedback},\ }\href
  {https://doi.org/10.1103/PRXQuantum.5.040304} {\bibfield  {journal} {\bibinfo
   {journal} {PRX Quantum}\ }\textbf {\bibinfo {volume} {5}},\ \bibinfo {pages}
  {040304} (\bibinfo {year} {2024})}\BibitemShut {NoStop}%
\bibitem [{\citenamefont {Tantivasadakarn}\ \emph {et~al.}(2024)\citenamefont
  {Tantivasadakarn}, \citenamefont {Thorngren}, \citenamefont {Vishwanath},\
  and\ \citenamefont {Verresen}}]{Tantivasadakarn_2024}%
  \BibitemOpen
  \bibfield  {author} {\bibinfo {author} {\bibfnamefont {N.}~\bibnamefont
  {Tantivasadakarn}}, \bibinfo {author} {\bibfnamefont {R.}~\bibnamefont
  {Thorngren}}, \bibinfo {author} {\bibfnamefont {A.}~\bibnamefont
  {Vishwanath}},\ and\ \bibinfo {author} {\bibfnamefont {R.}~\bibnamefont
  {Verresen}},\ }\bibfield  {title} {\bibinfo {title} {Long-range entanglement
  from measuring symmetry-protected topological phases},\ }\href
  {https://doi.org/10.1103/PhysRevX.14.021040} {\bibfield  {journal} {\bibinfo
  {journal} {Physical Review X}\ }\textbf {\bibinfo {volume} {14}},\ \bibinfo
  {pages} {021040} (\bibinfo {year} {2024})}\BibitemShut {NoStop}%
\bibitem [{\citenamefont {Sahay}\ and\ \citenamefont
  {Verresen}(2025)}]{sahay2025classifying}%
  \BibitemOpen
  \bibfield  {author} {\bibinfo {author} {\bibfnamefont {R.}~\bibnamefont
  {Sahay}}\ and\ \bibinfo {author} {\bibfnamefont {R.}~\bibnamefont
  {Verresen}},\ }\bibfield  {title} {\bibinfo {title} {Classifying
  one-dimensional quantum states prepared by a single round of measurements},\
  }\href {https://doi.org/10.1103/PRXQuantum.6.010329} {\bibfield  {journal}
  {\bibinfo  {journal} {PRX Quantum}\ }\textbf {\bibinfo {volume} {6}},\
  \bibinfo {pages} {010329} (\bibinfo {year} {2025})}\BibitemShut {NoStop}%
\bibitem [{\citenamefont {Piroli}\ \emph {et~al.}(2024)\citenamefont {Piroli},
  \citenamefont {Styliaris},\ and\ \citenamefont {Cirac}}]{Piroli_2024}%
  \BibitemOpen
  \bibfield  {author} {\bibinfo {author} {\bibfnamefont {L.}~\bibnamefont
  {Piroli}}, \bibinfo {author} {\bibfnamefont {G.}~\bibnamefont {Styliaris}},\
  and\ \bibinfo {author} {\bibfnamefont {J.~I.}\ \bibnamefont {Cirac}},\
  }\bibfield  {title} {\bibinfo {title} {Approximating many-body quantum states
  with quantum circuits and measurements},\ }\bibfield  {journal} {\bibinfo
  {journal} {Physical Review Letters}\ }\textbf {\bibinfo {volume} {133}},\
  \href {https://doi.org/10.1103/PhysRevLett.133.230401}
  {10.1103/PhysRevLett.133.230401} (\bibinfo {year} {2024})\BibitemShut
  {NoStop}%
\bibitem [{\citenamefont {Gily\'{e}n}\ \emph {et~al.}(2019)\citenamefont
  {Gily\'{e}n}, \citenamefont {Su}, \citenamefont {Low},\ and\ \citenamefont
  {Wiebe}}]{gilyen2019quantum}%
  \BibitemOpen
  \bibfield  {author} {\bibinfo {author} {\bibfnamefont {A.}~\bibnamefont
  {Gily\'{e}n}}, \bibinfo {author} {\bibfnamefont {Y.}~\bibnamefont {Su}},
  \bibinfo {author} {\bibfnamefont {G.~H.}\ \bibnamefont {Low}},\ and\ \bibinfo
  {author} {\bibfnamefont {N.}~\bibnamefont {Wiebe}},\ }\bibfield  {title}
  {\bibinfo {title} {Quantum singular value transformation and beyond:
  {Exponential} improvements for quantum matrix arithmetics},\ }in\ \href
  {https://doi.org/10.1145/3313276.3316366} {\emph {\bibinfo {booktitle}
  {Proceedings of the 51st Annual {ACM} {SIGACT} Symposium on Theory of
  Computing}}}\ (\bibinfo {year} {2019})\ pp.\ \bibinfo {pages}
  {193--204}\BibitemShut {NoStop}%
\bibitem [{\citenamefont {Potter}\ \emph {et~al.}(2016)\citenamefont {Potter},
  \citenamefont {Morimoto},\ and\ \citenamefont {Vishwanath}}]{Potter2016}%
  \BibitemOpen
  \bibfield  {author} {\bibinfo {author} {\bibfnamefont {A.~C.}\ \bibnamefont
  {Potter}}, \bibinfo {author} {\bibfnamefont {T.}~\bibnamefont {Morimoto}},\
  and\ \bibinfo {author} {\bibfnamefont {A.}~\bibnamefont {Vishwanath}},\
  }\bibfield  {title} {\bibinfo {title} {Classification of interacting
  topological {Floquet} phases in one dimension},\ }\href
  {https://doi.org/10.1103/PhysRevX.6.041001} {\bibfield  {journal} {\bibinfo
  {journal} {Physical Review X}\ }\textbf {\bibinfo {volume} {6}},\ \bibinfo
  {pages} {041001} (\bibinfo {year} {2016})}\BibitemShut {NoStop}%
\bibitem [{\citenamefont {Po}\ \emph {et~al.}(2016)\citenamefont {Po},
  \citenamefont {Fidkowski}, \citenamefont {Morimoto}, \citenamefont {Potter},\
  and\ \citenamefont {Vishwanath}}]{Po_2016}%
  \BibitemOpen
  \bibfield  {author} {\bibinfo {author} {\bibfnamefont {H.~C.}\ \bibnamefont
  {Po}}, \bibinfo {author} {\bibfnamefont {L.}~\bibnamefont {Fidkowski}},
  \bibinfo {author} {\bibfnamefont {T.}~\bibnamefont {Morimoto}}, \bibinfo
  {author} {\bibfnamefont {A.~C.}\ \bibnamefont {Potter}},\ and\ \bibinfo
  {author} {\bibfnamefont {A.}~\bibnamefont {Vishwanath}},\ }\bibfield  {title}
  {\bibinfo {title} {Chiral {Floquet} phases of many-body localized bosons},\
  }\href {https://doi.org/10.1103/PhysRevX.6.041070} {\bibfield  {journal}
  {\bibinfo  {journal} {Physical Review X}\ }\textbf {\bibinfo {volume} {6}},\
  \bibinfo {pages} {041070} (\bibinfo {year} {2016})}\BibitemShut {NoStop}%
\bibitem [{\citenamefont {Stinespring}(1955)}]{stinespring_positive_1955}%
  \BibitemOpen
  \bibfield  {author} {\bibinfo {author} {\bibfnamefont {W.~F.}\ \bibnamefont
  {Stinespring}},\ }\bibfield  {title} {\bibinfo {title} {Positive functions on
  {C*}-algebras},\ }\href {https://doi.org/10.2307/2032342} {\bibfield
  {journal} {\bibinfo  {journal} {Proceedings of the American Mathematical
  Society}\ }\textbf {\bibinfo {volume} {6}},\ \bibinfo {pages} {211} (\bibinfo
  {year} {1955})}\BibitemShut {NoStop}%
\bibitem [{\citenamefont {Choi}(1975)}]{choi_completely_1975}%
  \BibitemOpen
  \bibfield  {author} {\bibinfo {author} {\bibfnamefont {M.-D.}\ \bibnamefont
  {Choi}},\ }\bibfield  {title} {\bibinfo {title} {Completely positive linear
  maps on complex matrices},\ }\href
  {https://doi.org/10.1016/0024-3795(75)90075-0} {\bibfield  {journal}
  {\bibinfo  {journal} {Linear Algebra and its Applications}\ }\textbf
  {\bibinfo {volume} {10}},\ \bibinfo {pages} {285} (\bibinfo {year}
  {1975})}\BibitemShut {NoStop}%
\bibitem [{\citenamefont
  {Jami\o{}{\l}kowski}(1972)}]{jamiolkowski_linear_1972}%
  \BibitemOpen
  \bibfield  {author} {\bibinfo {author} {\bibfnamefont {A.}~\bibnamefont
  {Jami\o{}{\l}kowski}},\ }\bibfield  {title} {\bibinfo {title} {Linear
  transformations which preserve trace and positive semidefiniteness of
  operators},\ }\href {https://doi.org/10.1016/0034-4877(72)90011-0} {\bibfield
   {journal} {\bibinfo  {journal} {Reports on Mathematical Physics}\ }\textbf
  {\bibinfo {volume} {3}},\ \bibinfo {pages} {275} (\bibinfo {year}
  {1972})}\BibitemShut {NoStop}%
\bibitem [{\citenamefont
  {Schollw\"{o}ck}(2011)}]{schollwoeck_density-matrix_2011}%
  \BibitemOpen
  \bibfield  {author} {\bibinfo {author} {\bibfnamefont {U.}~\bibnamefont
  {Schollw\"{o}ck}},\ }\bibfield  {title} {\bibinfo {title} {The density-matrix
  renormalization group in the age of matrix product states},\ }\href
  {https://doi.org/10.1016/j.aop.2010.09.012} {\bibfield  {journal} {\bibinfo
  {journal} {Annals of Physics}\ }\textbf {\bibinfo {volume} {326}},\ \bibinfo
  {pages} {96} (\bibinfo {year} {2011})}\BibitemShut {NoStop}%
\bibitem [{\citenamefont {Cuevas}\ \emph {et~al.}(2013)\citenamefont {Cuevas},
  \citenamefont {Schuch}, \citenamefont {P\'{e}rez-Garc\'{i}a},\ and\
  \citenamefont {Cirac}}]{cuevas_purifications_2013}%
  \BibitemOpen
  \bibfield  {author} {\bibinfo {author} {\bibfnamefont {G.~D.~l.}\
  \bibnamefont {Cuevas}}, \bibinfo {author} {\bibfnamefont {N.}~\bibnamefont
  {Schuch}}, \bibinfo {author} {\bibfnamefont {D.}~\bibnamefont
  {P\'{e}rez-Garc\'{i}a}},\ and\ \bibinfo {author} {\bibfnamefont {J.~I.}\
  \bibnamefont {Cirac}},\ }\bibfield  {title} {\bibinfo {title} {Purifications
  of multipartite states: limitations and constructive methods},\ }\href
  {https://doi.org/10.1088/1367-2630/15/12/123021} {\bibfield  {journal}
  {\bibinfo  {journal} {New Journal of Physics}\ }\textbf {\bibinfo {volume}
  {15}},\ \bibinfo {pages} {123021} (\bibinfo {year} {2013})}\BibitemShut
  {NoStop}%
\bibitem [{\citenamefont {Farrelly}(2020)}]{farrelly_review_2020}%
  \BibitemOpen
  \bibfield  {author} {\bibinfo {author} {\bibfnamefont {T.}~\bibnamefont
  {Farrelly}},\ }\bibfield  {title} {\bibinfo {title} {A review of quantum
  cellular automata},\ }\href {https://doi.org/10.22331/q-2020-11-30-368}
  {\bibfield  {journal} {\bibinfo  {journal} {Quantum}\ }\textbf {\bibinfo
  {volume} {4}},\ \bibinfo {pages} {368} (\bibinfo {year} {2020})}\BibitemShut
  {NoStop}%
\bibitem [{\citenamefont {Perez-Garcia}\ \emph {et~al.}(2007)\citenamefont
  {Perez-Garcia}, \citenamefont {Verstraete}, \citenamefont {Wolf},\ and\
  \citenamefont {Cirac}}]{perez-garcia_matrix_2007}%
  \BibitemOpen
  \bibfield  {author} {\bibinfo {author} {\bibfnamefont {D.}~\bibnamefont
  {Perez-Garcia}}, \bibinfo {author} {\bibfnamefont {F.}~\bibnamefont
  {Verstraete}}, \bibinfo {author} {\bibfnamefont {M.~M.}\ \bibnamefont
  {Wolf}},\ and\ \bibinfo {author} {\bibfnamefont {J.~I.}\ \bibnamefont
  {Cirac}},\ }\bibfield  {title} {\bibinfo {title} {Matrix product state
  representations},\ }\href@noop {} {\bibfield  {journal} {\bibinfo  {journal}
  {Quantum Information and Computation}\ }\textbf {\bibinfo {volume} {7}},\
  \bibinfo {pages} {401} (\bibinfo {year} {2007})},\ \Eprint
  {https://arxiv.org/abs/quant-ph/0608197} {arXiv:quant-ph/0608197}
  \BibitemShut {NoStop}%
\bibitem [{\citenamefont {Wolf}\ \emph {et~al.}(2008)\citenamefont {Wolf},
  \citenamefont {Verstraete}, \citenamefont {Hastings},\ and\ \citenamefont
  {Cirac}}]{Wolf_2008}%
  \BibitemOpen
  \bibfield  {author} {\bibinfo {author} {\bibfnamefont {M.~M.}\ \bibnamefont
  {Wolf}}, \bibinfo {author} {\bibfnamefont {F.}~\bibnamefont {Verstraete}},
  \bibinfo {author} {\bibfnamefont {M.~B.}\ \bibnamefont {Hastings}},\ and\
  \bibinfo {author} {\bibfnamefont {J.~I.}\ \bibnamefont {Cirac}},\ }\bibfield
  {title} {\bibinfo {title} {Area laws in quantum systems: Mutual information
  and correlations},\ }\bibfield  {journal} {\bibinfo  {journal} {Physical
  Review Letters}\ }\textbf {\bibinfo {volume} {100}},\ \href
  {https://doi.org/10.1103/physrevlett.100.070502}
  {10.1103/physrevlett.100.070502} (\bibinfo {year} {2008})\BibitemShut
  {NoStop}%
\bibitem [{\citenamefont {Groisman}\ \emph {et~al.}(2005)\citenamefont
  {Groisman}, \citenamefont {Popescu},\ and\ \citenamefont
  {Winter}}]{Groisman_2005}%
  \BibitemOpen
  \bibfield  {author} {\bibinfo {author} {\bibfnamefont {B.}~\bibnamefont
  {Groisman}}, \bibinfo {author} {\bibfnamefont {S.}~\bibnamefont {Popescu}},\
  and\ \bibinfo {author} {\bibfnamefont {A.}~\bibnamefont {Winter}},\
  }\bibfield  {title} {\bibinfo {title} {Quantum, classical, and total amount
  of correlations in a quantum state},\ }\bibfield  {journal} {\bibinfo
  {journal} {Physical Review A}\ }\textbf {\bibinfo {volume} {72}},\ \href
  {https://doi.org/10.1103/physreva.72.032317} {10.1103/physreva.72.032317}
  (\bibinfo {year} {2005})\BibitemShut {NoStop}%
\bibitem [{\citenamefont {Terhal}\ \emph {et~al.}(2002)\citenamefont {Terhal},
  \citenamefont {Horodecki}, \citenamefont {Leung},\ and\ \citenamefont
  {DiVincenzo}}]{Terhal_2002}%
  \BibitemOpen
  \bibfield  {author} {\bibinfo {author} {\bibfnamefont {B.~M.}\ \bibnamefont
  {Terhal}}, \bibinfo {author} {\bibfnamefont {M.}~\bibnamefont {Horodecki}},
  \bibinfo {author} {\bibfnamefont {D.~W.}\ \bibnamefont {Leung}},\ and\
  \bibinfo {author} {\bibfnamefont {D.~P.}\ \bibnamefont {DiVincenzo}},\
  }\bibfield  {title} {\bibinfo {title} {The entanglement of purification},\
  }\href {https://doi.org/10.1063/1.1498001} {\bibfield  {journal} {\bibinfo
  {journal} {Journal of Mathematical Physics}\ }\textbf {\bibinfo {volume}
  {43}},\ \bibinfo {pages} {4286–4298} (\bibinfo {year} {2002})}\BibitemShut
  {NoStop}%
\bibitem [{\citenamefont {Guth~Jarkovsk{\`y}}\ \emph
  {et~al.}(2020)\citenamefont {Guth~Jarkovsk{\`y}}, \citenamefont {Moln{\'a}r},
  \citenamefont {Schuch},\ and\ \citenamefont {Cirac}}]{Guth_Jarkovsky_2020}%
  \BibitemOpen
  \bibfield  {author} {\bibinfo {author} {\bibfnamefont {J.}~\bibnamefont
  {Guth~Jarkovsk{\`y}}}, \bibinfo {author} {\bibfnamefont {A.}~\bibnamefont
  {Moln{\'a}r}}, \bibinfo {author} {\bibfnamefont {N.}~\bibnamefont {Schuch}},\
  and\ \bibinfo {author} {\bibfnamefont {J.~I.}\ \bibnamefont {Cirac}},\
  }\bibfield  {title} {\bibinfo {title} {Efficient description of many-body
  systems with matrix product density operators},\ }\href
  {https://doi.org/10.1103/prxquantum.1.010304} {\bibfield  {journal} {\bibinfo
   {journal} {PRX Quantum}\ }\textbf {\bibinfo {volume} {1}},\ \bibinfo {pages}
  {010304} (\bibinfo {year} {2020})}\BibitemShut {NoStop}%
\bibitem [{\citenamefont {Affleck}\ \emph {et~al.}(1987)\citenamefont
  {Affleck}, \citenamefont {Kennedy}, \citenamefont {Lieb},\ and\ \citenamefont
  {Tasaki}}]{affleck1987rigorous}%
  \BibitemOpen
  \bibfield  {author} {\bibinfo {author} {\bibfnamefont {I.}~\bibnamefont
  {Affleck}}, \bibinfo {author} {\bibfnamefont {T.}~\bibnamefont {Kennedy}},
  \bibinfo {author} {\bibfnamefont {E.~H.}\ \bibnamefont {Lieb}},\ and\
  \bibinfo {author} {\bibfnamefont {H.}~\bibnamefont {Tasaki}},\ }\bibfield
  {title} {\bibinfo {title} {Rigorous results on valence-bond ground states in
  antiferromagnets},\ }\href {https://doi.org/10.1103/PhysRevLett.59.799}
  {\bibfield  {journal} {\bibinfo  {journal} {Physical Review Letters}\
  }\textbf {\bibinfo {volume} {59}},\ \bibinfo {pages} {799} (\bibinfo {year}
  {1987})}\BibitemShut {NoStop}%
\bibitem [{\citenamefont {Cirac}\ \emph
  {et~al.}(2017{\natexlab{c}})\citenamefont {Cirac}, \citenamefont
  {Perez-Garcia}, \citenamefont {Schuch},\ and\ \citenamefont
  {Verstraete}}]{cirac:mpdo-rgfp}%
  \BibitemOpen
  \bibfield  {author} {\bibinfo {author} {\bibfnamefont {J.~I.}\ \bibnamefont
  {Cirac}}, \bibinfo {author} {\bibfnamefont {D.}~\bibnamefont {Perez-Garcia}},
  \bibinfo {author} {\bibfnamefont {N.}~\bibnamefont {Schuch}},\ and\ \bibinfo
  {author} {\bibfnamefont {F.}~\bibnamefont {Verstraete}},\ }\bibfield  {title}
  {\bibinfo {title} {Matrix product density operators: {Renormalization} fixed
  points and boundary theories},\ }\href
  {https://doi.org/10.1016/j.aop.2016.12.030} {\bibfield  {journal} {\bibinfo
  {journal} {Annals of Physics}\ }\textbf {\bibinfo {volume} {378}},\ \bibinfo
  {pages} {100} (\bibinfo {year} {2017}{\natexlab{c}})},\ \Eprint
  {https://arxiv.org/abs/1606.00608} {arXiv:1606.00608} \BibitemShut {NoStop}%
\bibitem [{\citenamefont {Nagata}(1952)}]{Nagata}%
  \BibitemOpen
  \bibfield  {author} {\bibinfo {author} {\bibfnamefont {M.}~\bibnamefont
  {Nagata}},\ }\bibfield  {title} {\bibinfo {title} {On the theory of
  {Henselian} rings},\ }\href@noop {} {\bibfield  {journal} {\bibinfo
  {journal} {Journal of the Mathematical Society of Japan}\ }\textbf {\bibinfo
  {volume} {4}},\ \bibinfo {pages} {296} (\bibinfo {year} {1952})}\BibitemShut
  {NoStop}%
\bibitem [{\citenamefont {Higman}(1956)}]{Higman}%
  \BibitemOpen
  \bibfield  {author} {\bibinfo {author} {\bibfnamefont {G.}~\bibnamefont
  {Higman}},\ }\bibfield  {title} {\bibinfo {title} {On infinite simple
  permutation groups},\ }\href@noop {} {\bibfield  {journal} {\bibinfo
  {journal} {Proceedings of the Cambridge Philosophical Society}\ }\textbf
  {\bibinfo {volume} {52}},\ \bibinfo {pages} {1} (\bibinfo {year}
  {1956})}\BibitemShut {NoStop}%
\bibitem [{\citenamefont {Razmyslov}(1974)}]{Razmyslov}%
  \BibitemOpen
  \bibfield  {author} {\bibinfo {author} {\bibfnamefont {Y.}~\bibnamefont
  {Razmyslov}},\ }\bibfield  {title} {\bibinfo {title} {Trace identities of
  full matrix algebras over a field of characteristic zero},\ }\href@noop {}
  {\bibfield  {journal} {\bibinfo  {journal} {Izvestiya Akademii Nauk SSSR.
  Seriya Matematicheskaya}\ }\textbf {\bibinfo {volume} {38}},\ \bibinfo
  {pages} {723} (\bibinfo {year} {1974})},\ \bibinfo {note} {english
  translation: Mathematics of the USSR-Izvestiya 8 (1974), 727--760
  (1975)}\BibitemShut {NoStop}%
\bibitem [{\citenamefont {Berry}\ \emph {et~al.}(2014)\citenamefont {Berry},
  \citenamefont {Childs}, \citenamefont {Cleve}, \citenamefont {Kothari},\ and\
  \citenamefont {Somma}}]{Berry_2014}%
  \BibitemOpen
  \bibfield  {author} {\bibinfo {author} {\bibfnamefont {D.~W.}\ \bibnamefont
  {Berry}}, \bibinfo {author} {\bibfnamefont {A.~M.}\ \bibnamefont {Childs}},
  \bibinfo {author} {\bibfnamefont {R.}~\bibnamefont {Cleve}}, \bibinfo
  {author} {\bibfnamefont {R.}~\bibnamefont {Kothari}},\ and\ \bibinfo {author}
  {\bibfnamefont {R.~D.}\ \bibnamefont {Somma}},\ }\bibfield  {title} {\bibinfo
  {title} {Exponential improvement in precision for simulating sparse
  {Hamiltonians}},\ }in\ \href {https://doi.org/10.1145/2591796.2591854} {\emph
  {\bibinfo {booktitle} {Proceedings of the Forty-Sixth Annual {ACM} Symposium
  on Theory of Computing}}},\ \bibinfo {series and number} {STOC '14}\
  (\bibinfo  {publisher} {ACM},\ \bibinfo {year} {2014})\ pp.\ \bibinfo {pages}
  {283--292}\BibitemShut {NoStop}%
\bibitem [{\citenamefont {Piroli}\ and\ \citenamefont
  {Cirac}(2020)}]{piroli_quantum_2020}%
  \BibitemOpen
  \bibfield  {author} {\bibinfo {author} {\bibfnamefont {L.}~\bibnamefont
  {Piroli}}\ and\ \bibinfo {author} {\bibfnamefont {J.~I.}\ \bibnamefont
  {Cirac}},\ }\bibfield  {title} {\bibinfo {title} {Quantum cellular automata,
  tensor networks, and area laws},\ }\href
  {https://doi.org/10.1103/PhysRevLett.125.190402} {\bibfield  {journal}
  {\bibinfo  {journal} {Physical Review Letters}\ }\textbf {\bibinfo {volume}
  {125}},\ \bibinfo {pages} {190402} (\bibinfo {year} {2020})}\BibitemShut
  {NoStop}%
\end{thebibliography}%

\clearpage
\onecolumngrid
\appendix
\setcounter{equation}{0}
\setcounter{figure}{0}
\setcounter{table}{0}
\setcounter{thm}{0}
\setcounter{defn}{0}
\setcounter{prop}{0}

\makeatletter
\renewcommand{\thefigure}{S\arabic{figure}}

\begin{center}
    {\LARGE \textbf{Supplemental Material}}
\end{center}

\section{Quantum Channels Basics}
\label{sec:QuantumChannelsBasics}
A quantum channel is a completely positive, trace-preserving (CPTP) linear map
\[
\mathcal{E} : \mathcal{B}(\mathcal{H}_{\mathrm{in}}) \to \mathcal{B}(\mathcal{H}_{\mathrm{out}}),
\]
where $\mathcal{B}(\mathcal{H})$ denotes the algebra of bounded operators on the Hilbert space $\mathcal{H}$. Complete positivity requires that $\mathcal{E} \otimes \mathbb{I}_n$ maps positive operators to positive operators for any $n \in \mathbb{N}$, ensuring physical consistency in the presence of entanglement with an ancillary system. By the Stinespring dilation theorem~\cite{stinespring_positive_1955}, any such map admits an isometric representation on an extended Hilbert space: there exist an environment $\mathcal{H}_E$ and an isometry
\[
V : \mathcal{H}_{\mathrm{in}} \longrightarrow \mathcal{H}_{\mathrm{out}} \otimes \mathcal{H}_E,
\]
such that
\[
\mathcal{E}(\rho) = \operatorname{Tr}_E\!\left[ V \rho V^\dagger \right],
\]
where $\operatorname{Tr}_E$ denotes the partial trace over $\mathcal{H}_E$.

An alternative characterization is provided by the Choi–Jamiolkowski isomorphism~\cite{choi_completely_1975, jamiolkowski_linear_1972}. Let ${|\Phi^+\rangle = \frac{1}{\sqrt{d}} \sum_{i=1}^d |i\rangle \otimes |i\rangle}$ be a maximally entangled state on $\mathcal{H}_{\mathrm{in}} \otimes \mathcal{H}_{\mathrm{in}}$ with $d = \dim \mathcal{H}_{\mathrm{in}}$. The Choi state associated with $\mathcal{E}$ is
$$
C_{\mathcal{E}} := (\mathcal{E} \otimes \mathbb{I}_{\mathrm{in}})\big( |\Phi^+\rangle\!\langle\Phi^+| \big) .
$$
The map $\mathcal{E}$ is completely positive if and only if $C_{\mathcal{E}} \ge 0$, and it is trace-preserving if and only if $\operatorname{Tr}_{\mathrm{out}} C_{\mathcal{E}} = \mathbb{I}_{\mathrm{in}} / d$.

\section{Tensor-Network Formulation of Quantum Channels}
\label{sec:TensorNetworksFormulation}

\subsection{Matrix Product Quantum Channels (MPQC)}

Graphical notation is frequently employed in our work, thus we begin by setting forth some conventions. Our focus will often be on rank-6 single-site tensors:

\def\leglength{0.5}
\def\barthickness{0.3}
\def\horizontallength{1.8}
\def\verticalheight{2}
\def\squarehalfside{0.3}

\begin{align}
    \left(A_k^{ij,op}\right)_{mn} =
    \begin{array}{c}
    \begin{tikzpicture}[scale=\defaultscaling,baseline={([yshift=-0.65ex] current bounding box.center)}]
        \HorseshoeTensor{0,0}{\leglength}{\barthickness}{$A_k$}{0}{\horizontallength}{\verticalheight}{1}
    \end{tikzpicture}
    \end{array}
    \in \mathbb C ,
\end{align}
where $k$ denotes the site index, $i,j = 1,\dots,d_{\mathrm{in}}$ label the input physical space, $o,p = 1,\dots,d_{\mathrm{out}}$ label the output physical space, and $m = 1,\dots,D_{k-1}$, $n = 1,\dots,D_{k}$ label the auxiliary space. The integers $D_{k-1}$ and $D_k$ represent the left and right bond dimensions of $A_k$, respectively. Hence, $A_k^{ij,op} \in \mathbb{M}_{D_k,D_{k-1}}$ can be understood as a matrix.

Given suitably chosen tensors $A_1,\dots,A_N$ with compatible bond dimension, the corresponding \textit{Matrix Product Quantum Channel} (MPQC) is the map $$\mathrm{MPQC}_N^{\Vec{A}} :\; \mathcal{L}(\mathcal{H}_{\mathrm{in}})
\;\longrightarrow\;
\mathcal{L}(\mathcal{H}_{\mathrm{out}}),$$
where
$$
\mathcal{H}_{\mathrm{in}} \;=\; 
 (\mathbb{C}^{d_{\mathrm{in}}})^{\otimes N},
\qquad
\mathcal{H}_{\mathrm{out}} \;=\;
 (\mathbb{C}^{d_{\mathrm{out}}})^{\otimes N},
 $$
given by
\begin{equation}
    \mathrm{MPQC}_N^{\Vec{A}} = \sum_{ijop}  C_{ij}^{op} \bra{i_1 \!\dots\! i_N} (\cdot) \ket{j_1 \!\dots\! j_N} \ket{o_1 \!\dots\! o_N} \bra{p_1 \!\dots\! p_N}, \label{ntichannels}
\end{equation}
where $ C_{ij}^{op} = A_1^{i_1 j_1, o_1 p_1} \cdot A_2^{i_2 j_2, o_2 p_2} \cdot \dotso \cdot A_N^{i_N j_N, o_N p_N}$. Here it is assumed that $A_1$ and $A_N$ are row and column vectors, that is, they have respectively trivial left and right bond dimension $D_0 = D_N = 1$. We also assumed, wlog, that $d_{\mathrm{in}}$ and $d_{\mathrm{out}}$ are the same for all sites.\\
%The input and output Hilbert spaces can also be generalized to allow for different local dimensions ${d_{\mathrm{in},i}}$.

Our choice of graphical notation for MPQC can be understood by considering their action on \textit{Matrix Product Density Operators} (MPDO)~\cite{cirac_matrix_2021}. Given a sequence $\vec{G}$ of rank-4 tensors $G_1,\dots,G_N$ with compatible bond dimensions, we recall that MPDO are the mixed states of the form:
\begin{equation}
\label{MPDO}
    \rho^{\vec{G}}_N=\sum_{\substack{{i_{1} \ldots i_{N}} \\ {j_{1} \ldots j_{N}}}} G^{i_{1},j_{1}}_1 \cdots G^{i_{N},j_{N}}_N\left|i_{1} \dots i_{N}\right\rangle\left\langle j_{1} \dots j_{N}\right|.
\end{equation}
Graphically, the action of MPQC on MPDO at the level of individual tensors is:
\begin{align*}
    \begin{array}{c}
    \begin{tikzpicture}[scale=\defaultscaling,baseline={([yshift=-0.65ex] current bounding box.center)}]
        \GTensorOrange{0,0}{\leglength*1.5}{0.35}{$G_k$}{0}
    \end{tikzpicture}
\end{array}
    \mapsto
        \begin{array}{c}
    \begin{tikzpicture}[scale=\defaultscaling,baseline={([yshift=-0.65ex] current bounding box.center)}]
        \GTensorOrange{0,0}{\leglength*1.5}{0.35}{$G_k'$}{0}
    \end{tikzpicture}
\end{array} 
=
    \begin{array}{c}
    \begin{tikzpicture}[scale=\defaultscaling,baseline={([yshift=-0.65ex] current bounding box.center)}]
        \HorseshoeTensorForChannelAction{0,0}{\leglength}{\barthickness}{$A_k$}{0}{\horizontallength}{\verticalheight}{0}
        %\draw [red, very thick] (\horizontallength - 0.5,-\verticalheight/2 + \barthickness + \leglength) to (\horizontallength - 0.5,\verticalheight/2 - \barthickness - \leglength);
    \end{tikzpicture}
    \,.
\end{array}
\end{align*}
It is easy to see that the action of the MPQC may increase the bond dimension of the MPDO, but at most by a multiplicative factor given by the bond dimension of the MPQC (see \cref{arealaw} for further details).

In general, any quantum channel can be represented as an MPQC, as we now explain. Consider the vectorized version of the channel, such that we get an unnormalized pure quantum state on a one-dimensional lattice of $N$ sites, each with a $d$-dimensional local Hilbert space, where $d = d_{\mathrm{in}}^2 d_{\mathrm{out}}^2$. In the computational basis, such a state can be written as
\[
|\psi\rangle = \sum_{\sigma_1, \dots, \sigma_N} c_{\sigma_1 \dots \sigma_N} \, |\sigma_1, \dots, \sigma_N\rangle ,
\]
where the coefficients $c_{\sigma_1 \dots \sigma_N}$ form a tensor with $d^N$ components. One can rewrite this tensor by performing successive singular value decompositions (SVDs) \cite{vidal2003efficient,Sch_n_2005}. To begin, one reshapes the coefficients into a matrix $c_{\sigma_1, (\sigma_2 \dots \sigma_N)}$ by grouping the first index against the remaining ones, and performs an SVD. The left isometry is then split into a set of row vectors $A^{\sigma_1}$ labeled by the first site, while the remaining part is reshaped into a new matrix with composite indices $(a_1, \sigma_2)$ and $(\sigma_3 \dots \sigma_N)$, where $a_1 = 1,\dots, {\mathrm {rank}} (c_{\sigma_1, (\sigma_2 \dots \sigma_N)})$. Repeating this procedure iteratively across the chain yields, at each step, matrices $A^{\sigma_i}$ associated with the local indices, and bond dimensions determined by the number of non-zero singular values at the corresponding cut. After $N-1$ steps, the original tensor of coefficients is exactly expressed as a product of matrices,
\[
c_{\sigma_1 \dots \sigma_N} = A^{\sigma_1} A^{\sigma_2} \cdots A^{\sigma_N},
\]
so that the quantum state takes the compact form
\[
|\psi\rangle = \sum_{\sigma_1, \dots, \sigma_N} A^{\sigma_1} A^{\sigma_2} \cdots A^{\sigma_N} \, |\sigma_1, \dots, \sigma_N\rangle .
\]

The above shows that any state on a finite lattice can be written exactly in matrix-product form. Nevertheless, for a generic state, the ranks of the intermediate decompositions, and hence the bond dimensions of the matrices, can grow exponentially with system size (up to order $d^{N/2}$). While the representation is always formally possible, such exponential growth makes the exact construction inefficient for practical purposes \cite{schollwoeck_density-matrix_2011}. The argument extends to quantum channels. From now on, we will focus on MPQC that can be represented with a bounded bond dimension, independent of the system size $N$:
\begin{defn}[{\sffamily{MPQC}} class] \label{def:mpqc_class}
    Consider an infinite sequence of quantum channels, of increasing system size $N$, expressed in the matrix-product form $\{ \mathrm{MPQC}_N^{\Vec{A}} \}_{N}$. The sequence is in the {\sffamily{MPQC}} class if all bond dimensions $D_k$ are uniformly bounded by a constant independent of $k$ and $N$.
\end{defn}

For the map $\mathrm{MPQC}_N^{\vec{A}}$ to be a proper quantum channel for all system sizes, it needs to satisfy the complete positivity (CP) and trace preservation (TP) conditions for all system sizes $N \ge 1$. These conditions can be rephrased in terms of the Choi–Jamiołkowski state 
${C_N^{\vec{A}} =(\mathrm{MPQC}_N^{\vec{A}} \otimes \mathbb{I}) \ket{\Phi^+}\bra{\Phi^+}}$:
\begin{equation}
    \label{ChoiCondition}
    (i) \; C_N^{\vec{A}} \succeq 0,  \quad
    (ii) \;\Tr_\mathrm{out} (C_N^{\vec{A}}) = \mathbb{I}_\mathrm{in}  \qquad \forall N.
\end{equation}
However, verifying the first condition, i.e.,  whether a matrix product operator is positive semidefinite for all system sizes, is undecidable in the general case~\cite{cuevas_fundamental_2016}. Furthermore, we want to focus on physically relevant channels, i.e., noise models arising from local interactions in both space and time. Consequently, we restrict our attention to a subclass of MPQC, which we call \textit{locally purified} class.

\subsection{Local Purification (LP) of Quantum Channels}

Now we introduce the \textit{locally purified} class of MPQC, for which CP is automatically satisfied, TP can be efficiently checked, and which is well-suited to describe channels with a locality structure in the environment, a property expected to hold on physical grounds. We say that an MPQC tensor $A_k$ admits a \textit{local purification} (LP) if there exists a tensor ${\cal V}_k$ such that:
\begin{align} \label{LP def}
    \begin{array}{c}
        \begin{tikzpicture}[scale=\defaultscaling,baseline={([yshift=-0.65ex] current bounding box.center)}]
            \HorseshoeTensor{0,0}{\leglength}{\barthickness}{$A_k$}{0}{\horizontallength}{\verticalheight}{0}
        \end{tikzpicture}
    \end{array}
        =
    \begin{array}{c}
        \begin{tikzpicture}[scale=\defaultscaling,baseline={([yshift=-0.65ex] current bounding box.center)}]
                \pgfmathsetmacro{\calculatedDistance}{\verticalheight/2-\barthickness/2}
                \pgfmathsetmacro{\squarehalf}{0.35}
                \LPTensor{0,0}{\leglength*0.75}{\squarehalf}{${\cal V}_k$}{0}{\calculatedDistance}
        \end{tikzpicture}
    \end{array}
    \,.
\end{align}
Notice that we are assuming that the virtual legs of the MPQC tensor factorize. If the LP form is admissible for all tensors of an MPQC, while allowing for a constant bond dimension, we say the channel belongs to the {\sffamily{LP}} class:
\begin{defn}[{\sffamily{LP}} class]
    The {\sffamily{LP}} (Locally Purified) class consists of sequences of channels in {\sffamily{MPQC}} whose tensors can be expressed in the local purification form [\cref{LP def}], with the bond dimensions of all $\mathcal V_k$ uniformly bounded by a constant independent of $k$ and $N$.
\end{defn}

Importantly, in the {\sffamily{LP}} class, CP is guaranteed by construction because the Choi state satisfies $C_N^A=WW^\dagger \succeq 0$ for 
\begin{align} 
W =     \begin{tikzpicture}[scale=0.6,baseline={([yshift=-0.65ex] current bounding box.center)}] 
        \pgfmathsetmacro{\squarehalf}{0.4}
        \IsomTensorPermuted{0,0}{\leglength}{\squarehalf}{${\cal V}_1$}{-1} 
        \IsomTensorPermuted{2*\leglength + 2*\squarehalf,0}{\leglength}{\squarehalf}{${\cal V}_2$}{0} \SingleDots{4*\leglength + 4*\squarehalf, 0}{2*\leglength} 
        \IsomTensorPermuted{6*\leglength + 6*\squarehalf,0}{\leglength}{\squarehalf}{${\cal V}_N$}{1} 
    \end{tikzpicture} .
\end{align}
Here $W$
%: (\mathbb C^{\mathrm {\chi}})^{\otimes N} \to  (\mathbb C^{d_{{\rm in} } d_{\mathrm{out}} })^{\otimes N}$
is understood as an operator with input the purification (gray) legs and output the physical (black) legs.
TP, on the other hand, is equivalent to the requirement that $V_N$, defined as
\begin{align}
V_N = 
    \begin{tikzpicture}[scale=\defaultscaling,baseline={([yshift=-0.65ex] current bounding box.center)}]
        \pgfmathsetmacro{\squarehalf}{0.4}
            \IsomTensor{0,0}{\leglength}{\squarehalf}{${\cal V}_1$}{-1}
            \IsomTensor{2*\leglength + 2*\squarehalf,0}{\leglength}{\squarehalf}{${\cal V}_2$}{0}
            \SingleDots{4*\leglength + 4*\squarehalf, 0}{2*\leglength}
            \IsomTensor{6*\leglength + 6*\squarehalf,0}{\leglength}{\squarehalf}{${\cal V}_N$}{1}
    \end{tikzpicture},
\end{align}
satisfies $$V_N^\dagger V_N^{} = \Tr_\mathrm{out} (C_N^{\vec{A}}) = \mathbb{I}_{d_\mathrm{in}}^{\otimes N} \,.$$ That is, the map 
${V_N: (\mathbb C^{d_{\mathrm {in}}})^{\otimes N} \to  (\mathbb C^{\chi d_{\mathrm{out}} })^{\otimes N}}$ 
is an isometry. We call $V_N$ a \emph{matrix product isometry} (MPI), the ${\cal V}_k$ --- which have a uniformly bounded bond dimension --- MPI tensors, and the corresponding class {\sffamily{MPI}}. We have therefore arrived at an equivalent characterization of the {\sffamily{LP}} class:
\begin{prop}
    The class {\sffamily{LP}} consists of sequences of quantum channels admitting a Stinespring dilation which is an MPI with uniformly bounded bond dimension.
\end{prop}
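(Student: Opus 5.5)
The statement is essentially a repackaging of the two directions of the Stinespring dilation at the level of tensors, so the plan is to prove both inclusions separately, working with the single-site tensors and the Choi state.

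\emph{Direction 1 (LP $\Rightarrow$ MPI dilation).} Suppose the sequence is in {\sffamily LP}. Each $A_k$ is written as in \cref{LP def} in terms of ${\cal V}_k$ and $\overline{{\cal V}_k}$, with the purification leg contracted. Then:
\begin{itemize}
\item[(a)] Contract the chain and regroup the indices. This shows that the coefficient tensor of \cref{ntichannels} equals $\Tr_{\rm pur}[V_N (\cdot) V_N^\dagger]$, where $V_N$ is the MPO built from the ${\cal V}_k$. This is a relabeling of legs: the ket half of the contraction uses ${\cal V}_k$, the bra half uses $\overline{{\cal V}_k}$, and the gray legs are traced pairwise.
\item[(b)] Use TP via \cref{ChoiCondition}(ii). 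Tracing out the output legs as well yields $V_N^\dagger V_N = \Tr_{\rm out} C_N^{\vec A} = \mathbb{I}$, so $V_N$ is an isometry.
\item[(c)] The bond dimension of $V_N$ equals that of the ${\cal V}_k$, which is uniformly bounded by assumption. Hence $V_N$ is an MPI with bounded bond dimension and a Stinespring dilation of the channel.
\end{itemize}

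\emph{Direction 2 (MPI dilation $\Rightarrow$ LP).} Suppose the channel is $\Tr_{\rm pur}[V_N(\cdot)V_N^\dagger]$ with $V_N$ an MPI built from tensors ${\cal V}_k$ of bounded bond dimension $D'_k$. Then:
\begin{itemize}
\item[(a)] Define $A_k$ as the contraction of ${\cal V}_k$ with $\overline{{\cal V}_k}$ over the purification leg. Its virtual legs are the pairs of ket and bra bonds, so they factorize by construction, with bond dimension $D_k = (D'_k)^2$, still uniformly bounded.
\item[(b)] Reversing the contraction argument of Direction 1 shows that this $A_k$ reproduces the channel. Hence the sequence lies in {\sffamily MPQC} and admits the local purification form \cref{LP def}.
\item[(c)] Complete positivity is automatic from $C_N = WW^\dagger$. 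Trace preservation follows from $\Tr_{\rm out} C_N = V_N^\dagger V_N = \mathbb{I}$.
\end{itemize}

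The only genuine point of care, and the main obstacle, is the identity $\Tr_{\rm out} C_N^{\vec A} = V_N^\dagger V_N$. One must check the index conventions carefully: which leg is conjugated, and the transpose introduced by the Choi isomorphism. The transpose is harmless because the identity is invariant under it.

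A secondary subtlety is that the purification space must be a fixed local environment of dimension $\chi$ per site. This holds by definition of the tensors ${\cal V}_k$, so no global dilation argument is needed.
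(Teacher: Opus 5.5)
Your proposal is correct and follows the paper's reasoning. The paper also observes that the LP form makes the channel equal to $\Tr_{\rm pur}[V_N(\cdot)V_N^\dagger]$, with complete positivity automatic via $C_N^{\vec A} = WW^\dagger$, and that trace preservation is exactly $V_N^\dagger V_N = \Tr_{\rm out}(C_N^{\vec A}) = \mathbb{I}$, so the two descriptions coincide. Your separate treatment of both inclusions, the bond-dimension bookkeeping $D_k = (D'_k)^2$, and the remark about the transpose only spell out what the paper leaves implicit.
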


We refer to the space of dimension $\chi$ as the \emph{purification space}. The channel expressed in this form will be indicated as $$\mathrm{LP}_N^{\vec{\cal V}}(\cdot) = \Tr_{\mathrm{pur}}[V_N^{}(\cdot)V_N^\dagger]\,.$$ Notation-wise, the grey wires represent throughout the purification space, i.e., the legs that are traced out when forming the channel corresponding to the isometry.

\subsubsection{Inequivalence of the {\text{\sffamily{LP}}} and {\text{\sffamily{MPQC}}} Classes}
\label{sec:LocalDilationNTI}

We would now like to investigate whether the $\text{\sffamily{MPQC}} = \text{\sffamily{LP}}$, i.e.\ if all MPQC can be written in LP form. Note that the converse is automatically true. Every MPQC admits a Stinespring dilation. However, the difficulty lies in the fact that the bond dimension of the dilation might fail to be uniformly bounded by a constant, independent of the system size, even if this holds true for the original channel. We first work in the most general setting of general, non translationally invariant (NTI) MPQC, as introduced in \cref{def:mpqc_class}.

\begin{prop}[${\text{\sffamily{MPQC}}} \ne {\text{\sffamily{LP}}}$]
\label{ineqLocNTI}
There exist MPQC with uniformly bounded bond dimension that admit no Stinespring dilation with uniformly bounded bond dimension.
\end{prop}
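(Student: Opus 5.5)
Reduce the statement to the known separation for mixed states. The paper already notes that homogeneous MPDOs embed into MPQCs with trivial input, $d_{\rm in}=1$, and the same holds for inhomogeneous MPDOs. So I take a sequence of MPDOs $\rho_N^{\vec G}$, as in \cref{MPDO}, with uniformly bounded bond dimension whose local purifications (MPS purifications with purifying ancilla per site) must have bond dimension diverging with $N$. Such a sequence is the content of the separation result of Ref.~\cite{cuevas_purifications_2013} (see also \cite{cuevas_fundamental_2016}). I then define the MPQC with $d_{\rm in}=1$ and tensors $A_k^{11,op}\coloneqq G_k^{o,p}$. This is trivially CPTP for every $N$: the map sends the number $1$ to $\rho_N$, and $\rho_N\succeq 0$ with $\Tr\rho_N=1$. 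Its bond dimension equals that of the MPDO, hence it lies in {\sffamily MPQC}.

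Next I check that a Stinespring dilation of this channel of the LP form is exactly a local purification of $\rho_N$. With $d_{\rm in}=1$, an MPI $V_N$ is a map $\mathbb C\to(\mathbb C^{\chi d_{\rm out}})^{\otimes N}$, i.e.\ an MPS $\ket{\psi_N}=V_N\ket{1}$ with tensors $\mathcal V_k$. The isometry condition is just normalization, and $\Tr_{\rm pur}\ket{\psi_N}\bra{\psi_N}=\rho_N$. This is precisely the definition of a local purification of the MPDO, with purification bond dimension equal to the MPI bond dimension. Therefore a uniformly bounded dilation bond dimension would give uniformly bounded purification bond dimension, contradicting the chosen family.

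The main obstacle is pinning down the explicit MPDO family and the form of the known result. If I want a self-contained example rather than a citation, I would use the construction behind that separation: an inhomogeneous MPDO of bond dimension $2$ built from separable states. It is diagonal in a product basis and has site-dependent weights encoding a positive matrix whose positive-semidefinite rank grows unboundedly (e.g.\ via slack-matrix or PSD-rank arguments). The purification bond dimension across a cut is then lower-bounded by the PSD rank of the reshaped coefficient matrix, while the operator Schmidt rank stays $2$.

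I would check carefully that the lower bound holds for every cut choice and for any purifying dimension $\chi$. The cited result already allows arbitrary $\chi$, so this is automatic if I invoke it directly. One point of care is that \cref{ChoiCondition}(ii) uses the normalization $\mathbb I_{\rm in}$, which for $d_{\rm in}=1$ is just $\Tr\rho_N=1$.
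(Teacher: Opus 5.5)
Your argument is correct and is exactly the paper's proof. Its ``first trivial example'' is the same $d_{\rm in}=1$ channel preparing the MPDO of Ref.~\cite{cuevas_purifications_2013}, whose LP dilations are precisely bounded local purifications; the paper additionally gives the replacement channel $\rho_0\operatorname{Tr}(\cdot)$, with Choi state $\rho_0\otimes\Id$, to get an example with nontrivial input. One caution on your optional self-contained sketch: for a diagonal state, operator Schmidt rank $2$ across a cut forces the reshaped nonnegative matrix to have rank at most $2$, hence nonnegative rank and therefore PSD rank at most $2$, so the single-cut PSD-rank bound you invoke cannot grow and the sketch needs bond dimension at least $3$ (e.g.\ slack matrices of regular $n$-gons, which have rank $3$ but PSD rank unbounded in $n$).
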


To show the inequivalence of the two classes, we adapt a construction of Ref.~\cite{  cuevas_purifications_2013}. There, it is shown that there exist explicit NTI MPDOs $\rho_0$ of constant bond dimension that cannot be expressed with a uniformly bounded bond dimension (i.e., independent of the system size) in local purification form, obtained by purifying $\rho_0$ into a pure state and expressing the latter as an MPS.

A first trivial example consists of a channel without input, whose output is a state $\rho_0$ as in Ref.~\cite{cuevas_purifications_2013}. The fact that $\rho_0$ cannot be purified to a constant bond dimension MPS implies that the constructed channel with $d_{{\rm in}} = 1$ is not in \text{\sffamily{LP}}. However, this example might be unsatisfying, as channels are usually thought of as having a non-trivial input. To address this, we consider replacement channels, where any state is taken as input and discarded, and a specific state is given as output. A second example is then obtained by choosing the same state $\rho_0$ as before as output: $$ \mathcal{E}(\cdot) = \rho_0 \operatorname{Tr} (\cdot)\,. $$ Its Choi matrix is given by $\rho_0 \otimes \Id$, which admits a local purification in finite bond dimension if and only if $\rho_0$ does, implying again that the constructed channel is in the \text{\sffamily{MPQC}} class but not in the \text{\sffamily{LP}} one.

\subsubsection{Local Purification and Homogeneity}

We would now like to investigate whether the {\sffamily{MPQC}} and {\sffamily{LP}} classes are inequivalent, even under the stricter assumption of \textit{translational invariance} (TI). Although we cannot make a statement under the full TI classes, we establish an inequivalence between two related subclasses that arise by repeating a fixed tensor.

A natural way to build sequences of MPQC (with an increasing system size $N$) is to repeat a suitably chosen, but fixed, tensor $N$ times with periodic boundary conditions. The same can also be done at the MPI level, resulting in a TI LP channel. This is captured by the concept of homogeneity:

\begin{defn}[Homogeneity ({\sffamily homo})]
    {\sffamily{hMPQC}} is the subclass of {\sffamily{MPQC}} generated by a repeated tensor $A$, for all system sizes $N$, with a periodic boundary:
\def\gap{1.5} % adjust this value as needed
\begin{align*}
\mathrm{MPQC}_N^A = 
\begin{array}{c}
    \begin{tikzpicture}[scale=\defaultscaling,baseline={([yshift=-0.75ex] current bounding box.center)}]
        \SingleTrLeft{(-\gap,0)}
        \HorizontalLine{(-\gap,0)}{\gap}
        \HorseshoeTensor{(0,0)}{\leglength}{\barthickness}{$A$}{1}{\horizontallength}{\verticalheight}{0}
        \HorizontalLine{(\leglength,0)}{\gap}
        \HorseshoeTensor{(\gap + \horizontallength/2,0)}{\leglength}{\barthickness}{$A$}{0}{\horizontallength}{\verticalheight}{0}
        \HorizontalLine{(\gap + \horizontallength/2 + \leglength,0)}{\gap}
        \SingleDots{(\leglength + \gap + \horizontallength + \leglength + \gap/2,0)}{\singledx/2}
        \pgfmathsetmacro{\tmp}{\leglength + \gap + \horizontallength + \leglength + \gap + \singledx/2}
        \HorseshoeTensor{(\tmp,0)}{\leglength}{\barthickness}{$A$}{0}{\horizontallength}{\verticalheight}{0}
        \HorizontalLine{(\tmp + \leglength,0)}{1.5*\gap}
        %\fill[red] (\tmp + \leglength + \gap,0) circle (2pt);
        \SingleTrRight{(\tmp + \leglength + 1.5*\gap + 0.8,0)}
    \end{tikzpicture}
\end{array}
\,.
\end{align*}
Analogously, {\sffamily{hMPI}} is the subclass of {\sffamily{MPI}} generated by a repeated tensor $\mathcal V$
\begin{align}
V_N = 
    \begin{tikzpicture}[scale=\defaultscaling,baseline={([yshift=-0.65ex] current bounding box.center)}]
        \pgfmathsetmacro{\squarehalf}{0.4}
            \SingleTrLeft{(-4*\leglength,0)}
            \draw [very thick] (11*\leglength,0) to  (15*\leglength,0);
            \SingleTrRight{(15*\leglength,0)}
            \draw [very thick] (-4*\leglength,0) to  (0,0);
            \IsomTensor{0,0}{\leglength}{\squarehalf}{${\cal V}$}{0}
            \IsomTensor{2*\leglength + 2*\squarehalf,0}{\leglength}{\squarehalf}{${\cal V}$}{0}
            \SingleDots{4*\leglength + 4*\squarehalf, 0}{2*\leglength}
            \IsomTensor{6*\leglength + 6*\squarehalf,0}{\leglength}{\squarehalf}{${\cal V}$}{0}
    \end{tikzpicture},
\end{align}
and {\sffamily{hLP}} the class of channels generated after tracing out a purification space.
\end{defn}

We emphasize that translational invariance and homogeneity are distinct notions: 
Homogeneity refers to constructing the channel by contracting the same tensor $N$ times with periodic boundary conditions, 
whereas translational invariance means that the channel, regarded as an operator, commutes with the translation operator.
All homogeneous channels are trivially TI, but the reverse implication does not hold. For example, consider CP maps built from homogeneous tensors whose output trace is only proportional to that of the input. Converting them into proper channels (i.e., TP maps) requires introducing a normalization via a boundary tensor, i.e., a special extra tensor different than the homogeneous bulk, which breaks homogeneity while leaving translational invariance intact. In this section, we focus on the class of channels that might necessitate such normalization, and mark them with an overline: ${\overline{\text{\sffamily{hMPQC}}}}$ and ${\overline{\text{\sffamily{hLP}}}}$ for the locally purified class. We analogously use $\overline{\mathrm{MPQC}_N^A}$ and $\overline{\mathrm{LP}_N^B}$ for the individual channels.

In Ref.~\cite{cuevas_fundamental_2016}, it was shown that there exist unnormalized classical --- i.e.\ diagonal in the computational basis --- states which can be represented for all system sizes as (unnormalized) TI MPDO, 
\begin{equation}
    \rho^{A}_{N}
    = \sum_{i_{1}\ldots i_{N}}
      \operatorname{Tr}\!\left(A^{i_{1}}\cdots A^{i_{N}}\right)
      |i_{1}\cdots i_{N}\rangle\langle i_{1}\cdots i_{N}|,
    \label{SMeq:MPDO}
\end{equation}
but for which no TI \textit{local purification} of the form
\begin{align}
    |\Psi(B)\rangle_{N}
    &= \sum_{i,e}
       \operatorname{Tr}\!\left(B^{i_{1},e_{1}}\cdots B^{i_{N},e_{N}}\right)
       |i_{1}e_{1}\cdots i_{N}e_{N}\rangle,
    \label{eq:sigma_B}
\end{align}
exists such that 
\[
    \sigma^{B}_{N} = \operatorname{Tr}_{e_{1},\ldots,e_{N}}
    |\Psi(B)\rangle_{N}\langle\Psi(B)|_{N}
    \quad\text{and}\quad
    \rho^{A}_{N}/\Tr\rho^{A}_{N} = \sigma^{B}_{N}/\Tr\sigma^{B}_{N}
    \;\;\forall N.
\]
Graphically, the previous result states that it is impossible to find for all (possibly unnormalized) MPDO $\rho^{A}_{N}$
\be    
\rho^{A}_{N} =
    \begin{array}{c}
        \begin{tikzpicture}[scale=\defaultscaling,baseline={([yshift=-0.75ex] current bounding box.center)}]
            \foreach \x in {0,...,0}{
                \SingleTrRight{(0,0)}
            }
            \foreach \x in {1,...,1}{
                \MPOTensor{(-\doubledx*\x,0)}{0.5}{.5}{$A$}{0}
            }
            \foreach \x in {2,...,2}{
                \MPOTensor{(-\doubledx*\x,0)}{0.5}{.5}{$A$}{0}
            }
            \foreach \x in {3,...,3}{
                \SingleDots{-\doubledx*\x,0}{\doubledx/2}
            }
            \foreach \x in {4,...,4}{
                \MPOTensor{(-\doubledx*\x,0)}{0.5}{.5}{$A$}{0}
            }
            \foreach \x in {5,...,5}{
                \SingleTrLeft{(-\doubledx*\x,0)}
            }
        \end{tikzpicture}
    \end{array},
\ee
a $\sigma^{B}_{N}$
\be
\label{SMlocalpurification}
    \sigma^{B}_{N} = 
    \begin{array}{c}
        \begin{tikzpicture}[scale=\defaultscaling,baseline={([yshift=-0.75ex] current bounding box.center)}]
            \foreach \x in {1,...,1}{
                \DoubleMPOTensor{(-\doubledx*\x,0)}{0.5}{0.5}{$B$}{0}
                \SingleTrRight{(0,1.0)}
                \begin{scope}[yscale=-1]{
                    \SingleTrRight{(0,1.0)};
                }
                \end{scope}
            }
            \foreach \x in {2,...,2}{
                \DoubleMPOTensor{(-\doubledx*\x,0)}{0.5}{0.5}{$B$}{0}
            }
            \foreach \x in {3,...,3}{
                \DoubleDots{-\doubledx*\x,0}{\doubledx/2}{1.0};
            }
            \foreach \x in {4,...,4}{
                \DoubleMPOTensor{(-\doubledx*\x,0)}{0.5}{0.5}{$B$}{0}
            }
            \foreach \x in {5,...,5}{
                \SingleTrLeft{(-\doubledx*\x,1.0)};
                \begin{scope}[yscale=-1]{
                    \SingleTrLeft{(-\doubledx*\x,1.0)};
                }
                \end{scope}
            }
        \end{tikzpicture}
    \end{array}
    ,
\ee
such that $\rho^{A}_{N}$ and $\sigma^{B}_{N}$ are equal after normalization.

We remark that the tensors $A$ and $B$ are independent of $N$.
The definitions of MPQC and LP allow for general input and output dimensions; they need not be equal. Using this freedom, a trivial example to verify the inequivalence of the MPQC and LP classes can be constructed. The example consists in a family of channels without input outputting the state given in Ref.~\cite{cuevas_fundamental_2016}. This family cannot be brought into the local purification form of \cref{SMlocalpurification}, and hence the channel does not admit an LP representation.

In order to show an example of TI MPQC with non-trivial input not admitting TI LP representation, we again use a replacement channel $\mathcal{E}(\cdot) = \tr(\cdot) \rho$ and the fact that its Choi state is $C_{\mathcal E} = \rho \otimes \Id$; thus, the question of the local purification for $\mathcal E$ reduces to that of $\rho$. Using this, combined with the results of Ref.~\cite{cuevas_fundamental_2016}, we directly obtain:

\begin{prop}[${\overline{\text{\sffamily{hMPQC}}}} \ne {\overline{\text{\sffamily{hLP}}}}$]
\label{TIinequivalence}
There exists a tensor A such that there exists no tensor B such that
\be
\overline{\mathrm{MPQC}_N^A} \;=\; \overline{\mathrm{LP}_N^B}
\quad\text{for all }N.
\ee
\end{prop}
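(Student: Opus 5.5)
The plan is to reduce the statement to the result of Ref.~\cite{cuevas_fundamental_2016} through the replacement-channel construction already used for \cref{ineqLocNTI}. Let $\rho^{A_0}_N$ be the unnormalized classical TI MPDO of Ref.~\cite{cuevas_fundamental_2016}, generated by a rank-4 tensor $A_0$. By that result, no $N$-independent tensor $B$ gives a homogeneous local purification $\sigma^B_N$ with $\rho^{A_0}_N/\Tr\rho^{A_0}_N=\sigma^B_N/\Tr\sigma^B_N$ for all $N$.

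I will define the MPQC tensor
\begin{equation*}
A^{ij,op}_{mn}=\delta_{ij}\,(A_0^{o,p})_{mn},
\end{equation*}
which has the same bond dimension $D$ as $A_0$. By \cref{eq:mpqc_def}, the map $\mathrm{MPQC}^A_N$ sends $\rho$ to $\Tr(\rho)\,\rho^{A_0}_N$. After dividing by the $N$-dependent normalization $\Tr\rho^{A_0}_N$ (this is the boundary normalization permitted by the overline), we obtain the replacement channel
\begin{equation*}
\overline{\mathrm{MPQC}^A_N}(\cdot)=\Tr(\cdot)\,\rho_0,\qquad \rho_0=\rho^{A_0}_N/\Tr\rho^{A_0}_N,
\end{equation*}
which is CPTP for every $N$. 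Hence $A$ is a valid tensor in $\overline{\text{\sffamily{hMPQC}}}$.

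Next, suppose for contradiction that some homogeneous MPI tensor $B$ (with output, purification and input legs) satisfies $\overline{\mathrm{LP}^B_N}=\overline{\mathrm{MPQC}^A_N}$ for all $N$. I will extract from $B$ a homogeneous local purification of $\rho^{A_0}_N$. Contract the input leg of $B$ with a fixed vector $\ket{0}$, or equivalently treat the input leg as an extra purification leg. This gives a rank-4 tensor $B'$ with physical leg $o$ and environment leg $(e,\text{in})$. Then
\begin{equation*}
\overline{\mathrm{LP}^B_N}(\ket{0}\bra{0}^{\otimes N})\propto \Tr_{\mathrm{env}}\ket{\Psi(B')}\bra{\Psi(B')},
\end{equation*}
and the left-hand side equals $\rho_0$ by assumption. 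Homogeneity is preserved because the same fixed vector is attached at every site. So $\sigma^{B'}_N/\Tr\sigma^{B'}_N=\rho^{A_0}_N/\Tr\rho^{A_0}_N$ for all $N$, which contradicts Ref.~\cite{cuevas_fundamental_2016}. An equivalent route uses the Choi state $C_N=\rho_0\otimes\Id$: its local purification $W$, built from $B$, yields a purification of $\rho_0$ after projecting the input copy onto $\ket{0\cdots0}$. The projected term has nonzero weight because of the tensor-product structure.

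The main obstacle is checking that the notions match exactly. The result of Ref.~\cite{cuevas_fundamental_2016} must cover purifications up to normalization with an $N$-independent tensor and a possibly enlarged environment, of the form \cref{eq:sigma_B}. Our $B'$ has environment dimension $\chi d_{\mathrm{in}}$ or $\chi$, which is still constant, so this fits. I also need the overline in $\overline{\mathrm{LP}^B_N}$ to allow only a scalar normalization, not a nontrivial boundary tensor, so that $\sigma^{B'}_N$ really has the periodic homogeneous form. I would state this convention explicitly and note that the classical structure of $\rho^{A_0}$ plays no further role.
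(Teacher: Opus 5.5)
Your proposal is correct and follows the paper's own route: a replacement channel whose output is the translation-invariant MPDO of Ref.~\cite{cuevas_fundamental_2016}, so that a homogeneous local purification of the channel would give one of that state, which that reference rules out. The paper states this reduction via the Choi state $\rho_0\otimes\Id$; your step of fixing the input leg to $\ket{0}$ (or absorbing it into the environment) is the same argument made explicit, and your notes on the scalar-only normalization and on the environment dimension staying constant are consistent with the paper's conventions.
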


\section{Structure of Homogeneous MPI}
\label{sec:MPIstructure}
Here, we analyze the structure of channels in {\sffamily{hLP}}. By definition, these are TI and admit a dilation which is an MPI, that is, an isometry with a 1D locality, which is expected to hold in many physical models of system-environment interaction. We therefore work directly with the {\sffamily{hMPI}} class, which we characterize. In the literature, a class of locally purified channels has already been studied extensively, namely homogeneous unitary channels in matrix-product form, constructed from homogeneous matrix product unitaries (MPUs)~\cite{cirac_matrix_2017-1,sahinoglu2018matrix}. These correspond to the case where the purification space is trivial, i.e., of dimension 1, and input and output dimensions coincide. In Ref.~\cite{cirac_matrix_2017-1}, a standard form for homogeneous MPUs was derived and in the one-dimensional case, they were shown to coincide with Quantum Cellular Automata (QCAs)~\cite{farrelly_review_2020}, i.e., unitaries whose Heisenberg evolution maps strictly local observables to observables supported within a finite light cone. Crucially, the QCA property is a consequence of the imposition of homogeneity and does not follow solely from translational invariance~\cite{styliaris2025matrix}. Our goal is to now determine to what extent these results can be extended to the present isometric setting, and to identify which aspects do not carry over. We specifically address the question of whether the locally purified form, and specifically the homogeneity of the MPI, imposes additional constraints on the correlations present in the output states of the channel.

As noted earlier, MPIs arise from the condition $V_N^\dagger V^{}_N = \mathbb{I}_d^{\otimes N}$, which ensures trace preservation of channels in {\sffamily{LP}}.
Our main finding, analogous to the MPU case, is that any homogeneous MPI can be represented as a depth-two quantum circuit of isometric gates, after blocking. Blocking refers to the procedure of grouping $q$ neighboring tensors into a single effective tensor (see, e.g.,~\cite{cirac_matrix_2021}). The resulting tensor has physical dimension $d^q$ while keeping the bond dimension unchanged. For a one-site tensor ${\cal V}$, we denote by ${\cal V}_k$ the tensor obtained after blocking $k$ consecutive sites. Our main result is:

\begin{thm}
\label{SMdepth2}
        Any hMPI can be written as a depth-two brick wall quantum circuit of isometric gates $u,v$, satisfying $u^\dagger u =\Id_{d_{\mathrm{in}}^2}$ and $v^\dagger v =\Id_{\ell r}$, after blocking at most $D^4$ times:
\pgfmathsetmacro{\spacing}{1.5cm} % define spacing once
\begin{align*}
\begin{tikzpicture}[scale=\defaultscaling,x=1.3*\spacing]
    \definecolor{cream}{RGB}{255,255,221}
    \pgfmathsetmacro{\height}{1.5}
    \pgfmathsetmacro{\shift}{0.7}
    \pgfmathsetmacro{\labelpositions}{\height / 2 + \shift / 2 + 0.2}
    \pgfmathsetmacro{\labelcloseness}{0.15}
    \pgfmathsetmacro{\extra}{.1}
    \foreach \i in {0,...,6} {
        \draw[color=black,very thick] (\i,\shift) -- (\i,\height);
        \draw[color=black,very thick] (\i,-\shift) -- (\i,-\height);
    }
    \foreach \i in {0,2,4,6} {
        \draw[color=black,dashed, very thick] (\i,-\shift) -- (\i,\shift);
        \draw[color=purification,very thick] (\i,\shift) -- (\i,\height);
        \draw[color=black,very thick] (\i + \extra,\shift) -- (\i + \extra,\height);
        \draw (\i - \labelcloseness, 0) node {\scriptsize $r$};
        \draw (\i + \labelcloseness + 0.05, -\labelpositions) node {\scriptsize $d_{\mathrm{in}}$};
        \draw (\i - \labelcloseness, \labelpositions) node {\scriptsize $\chi$};
        \draw (\i + \labelcloseness + \extra + 0.1, +\labelpositions) node {\scriptsize $d_{\mathrm{out}}$};
    }
    \foreach \i in {1,3,5} {
        \draw[color=black,ultra thick] (\i,-\shift) -- (\i,\shift);
        \draw[color=purification,very thick] (\i - \extra,\shift) -- (\i - \extra,\height);
        \draw (\i + \labelcloseness, 0) node {\scriptsize $\ell$};
        \draw (\i + \labelcloseness + 0.05, -\labelpositions) node {\scriptsize $d_{\mathrm{in}}$};
        \draw (\i + \labelcloseness + 0.1, +\labelpositions) node {\scriptsize $d_{\mathrm{out}}$};
        \draw (\i - \labelcloseness - \extra, +\labelpositions) node {\scriptsize $\chi$};
    }
    \foreach \i in {0,2,4} {
        \draw[thick, fill=cream, rounded corners=2pt]
            (\i - \extra, -0.3 + \shift) rectangle (\i+1+\extra, 0.3 + \shift);
        \draw (\i+0.5,\shift) node {\scriptsize $v$};
    }
    \foreach \i in {1,3,5} {
        \draw[thick, fill=cream, rounded corners=2pt]
            (\i - \extra, -0.3 - \shift) rectangle (\i+1 + \extra, 0.3 - \shift);
            \draw (\i+0.5,-\shift) node {\scriptsize $u$};
    }
    % Add right half of bottom rectangle at x=0
    \begin{scope}
        \clip (-0.5, -\height) rectangle (0.1 + \extra, \height); % Clip to right half
        \draw[thick, fill=cream, rounded corners=2pt]
            (-1 - \extra, -0.3 - \shift) rectangle (0 + \extra, 0.3 - \shift);
    \end{scope}
%
    % Add left half of top rectangle at x=6
    \begin{scope}
        \clip (6 - \extra - 0.1, -\height) rectangle (6 + 0.5, \height); % Clip to left half
        \draw[thick, fill=cream, rounded corners=2pt]
            (6 - \extra, -0.3 + \shift) rectangle (6 + 1 + \extra, 0.3 + \shift);
    \end{scope}
\end{tikzpicture}
\end{align*}
The isometries $u:\mathbb{C}^{d_{\mathrm{in}}^2}\mapsto \mathbb{C}^{\ell r}$ and $v:\mathbb{C}^{\ell r}\mapsto \mathbb{C}^{\chi^2d_{\mathrm{out}}^2}$ satisfy $d_{\mathrm{in}}^2 \leq r\ell \leq d_{\mathrm{out}}^2 \chi^2$.
\end{thm}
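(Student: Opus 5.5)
We follow the strategy of Ref.~\cite{cirac_matrix_2017-1} for homogeneous MPUs, adapted to the isometric setting. The starting point is the isometry condition $V_N^\dagger V_N^{}=\Id_{d_{\rm in}}^{\otimes N}$ for all $N$. We rewrite it as a statement about the transfer matrix $\mathbb E=\sum_{i,j,e,o}\bar{\mathcal V}^{i,eo}\otimes \mathcal V^{j,eo}\otimes\ket{j}\bra{i}$. Concretely, $V_N^\dagger V_N$ is the homogeneous MPO with the tensor obtained by contracting $\mathcal V$ with $\bar{\mathcal V}$ over output and purification legs, and it must equal $\Id^{\otimes N}$.

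First we show that this doubled tensor is ``simple'', in the sense that the bond-dimension-$D^2$ MPO equals a product of identities for every $N$. We compare traces of powers: $\Tr(\mathbb E_{\rm tot}^N)$ with the identity inserted must equal $d_{\rm in}^N$. In addition, the Hilbert--Schmidt norm $\Tr[(V_N^\dagger V_N)^2]=d_{\rm in}^N$ is controlled by a second transfer matrix of dimension $D^4$. Using the trace identity for all $N$, the eigenvalues of these transfer matrices are forced, with $d_{\rm in}$ the only nonzero eigenvalue and all others zero. Hence the relevant transfer operators $T$ with $T/d_{\rm in}$ are of the form ``projector plus nilpotent''. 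A nilpotent of size at most $D^4$ vanishes after $D^4$ powers, so after blocking $q\le D^4$ sites the blocked transfer operator is exactly rank one. This is the origin of the $D^4$ blocking bound.

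Second, we turn rank one into the ``simple tensor'' identities. For the blocked tensor $\mathcal V_q$, rank one of the mixed transfer operator gives two zipper-type identities. Contracting two consecutive $\mathcal V_q$ with their conjugates over all physical and purification legs except at one cut factorizes across the bond: the left virtual leg decouples from the right. Taking SVDs of $\mathcal V_q$ along the two natural bipartitions yields, as in the MPU case, $\mathcal V_q = (\text{left piece}) \cdot (\text{right piece})$. The right piece maps $(d_{\rm in}, \text{right bond})$ to a space of dimension $r$, and the left piece maps (left bond) to a space of dimension $\ell$. Regrouping two neighboring blocked tensors across a bond then defines $u$, which collects the input legs $d_{\rm in}^2$ (of the two half-blocks) into $\ell r$, and $v$, which takes $\ell r$ to the outputs $(\chi d_{\rm out})^2$. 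Reassociating the chain gives the brick wall.

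Third, we check isometricity of the gates. The rank-one factorization gives $v^\dagger v=\Id_{\ell r}$ locally, since it is exactly the statement that the doubled contraction over outputs collapses to an identity on the bond spaces. Given that, $V_N^\dagger V_N=(u^{\otimes})^\dagger u^{\otimes}=\Id$ forces $u^\dagger u=\Id$ after localizing with the factorized structure, e.g.\ by taking a partial trace and using positivity. The dimension inequalities $d_{\rm in}^2\le \ell r\le d_{\rm out}^2\chi^2$ then follow immediately from the existence of isometries $u$ and $v$.

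The main obstacle is the first step: showing that the isometry condition for all $N$, rather than unitarity, still forces the transfer operators to be rank one up to nilpotents. In the MPU case both $U^\dagger U$ and $UU^\dagger$ are available, whereas here we only have one side. We will instead use the positive operator $V_NV_N^\dagger$, which is a projector of rank $d_{\rm in}^N$. The identities $\Tr(V_NV_N^\dagger)=\Tr((V_NV_N^\dagger)^2)=d_{\rm in}^N$, both expressed via transfer matrices of dimension at most $D^4$, should suffice to pin down the spectra by the standard argument that $\sum_k\lambda_k^N=d_{\rm in}^N$ for all $N$ forces a single nonzero eigenvalue.
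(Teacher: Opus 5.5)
Your outline matches the paper's: normality from $\Tr(E^N)=1$, simplicity after at most $D^4$ blockings, two SVD splittings, and isometric gates. The gap is the step that carries the content, simplicity of the blocked tensor. You assert it rather than derive it, and as stated it does not follow.

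Simplicity is a condition on the doubled tensor with the input legs \emph{open}, $\mathbb{A}=E\otimes\Id+\sum_\alpha S_\alpha\otimes\sigma_\alpha$. It requires $\mathbb{A}\mathbb{A}=\mathbb{A}|\rho\rangle\langle\Phi|\mathbb{A}$ and $\langle\Phi|\mathbb{A}|\rho\rangle=\Id$. Rank one of a fully contracted transfer operator does not give this:
\begin{itemize}
\item If your ``mixed transfer operator'' is $E$, then $E_q=|\rho\rangle\langle\Phi|$ handles the $E$-components. It says nothing about $S_\alpha S_\beta$ versus $S_\alpha|\rho\rangle\langle\Phi|S_\beta$.
\item If it is your $D^4$-dimensional $\mathbb{T}$ from $\Tr[(V_N^\dagger V_N)^2]$, rank one only says the vectorized $V_N^\dagger V_N$ is a fixed-point MPS, i.e.\ local isometries acting on bond pairs. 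That is compatible with entangled bonds: the vectorized shift MPO has a rank-one transfer matrix. So ``the left virtual leg decouples from the right'' is not yet justified.
\end{itemize}
Relatedly, the one-sidedness you flag as the main obstacle is not one. The spectral step is immediate, and the paper notes that the MPU simplicity argument uses only $V_N^\dagger V_N=\Id$, normality and CFII.

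To close the gap you must use the full identity $V_N^\dagger V_N=\Id^{\otimes N}$, not only the two trace identities. There are two ways.

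\emph{The paper's route.} This is imported from the MPU case, and the paper uses the same machinery in its sMPI section. Every trace of a word in $\{E,S_\alpha\}$ containing some $S_\alpha$ vanishes. Hence the $S_\alpha$ generate a nilpotent algebra of $D^2\times D^2$ matrices, with index at most $D^4$ by the quoted Nagata--Higman bound. Once $E$ is rank one, you also get $\langle\Phi|S_{\alpha_1}\cdots S_{\alpha_k}|\rho\rangle=0$. After blocking, every nonvanishing blocked $S'$ contains an $E$, and a short case check gives the zipper identity.

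\emph{Finishing your own route.} By the isometric freedom of Kraus decompositions, rank one of $\mathbb{T}_q$ writes $\mathbb{A}_q$ as an isometry $W$ applied to $\sqrt{R}\,|a\rangle\langle b|\sqrt{L}$. The vectorized $V^\dagger V$ is then $W^{\otimes N}$ acting on bond pairs $\omega=\sqrt{L}\sqrt{R}$. Since $\Id^{\otimes N}$ has Schmidt rank one across every cut, $\omega$ has rank one, which is the factorization. You still need to identify the factorizing vectors with $|\rho\rangle$ and $\langle\Phi|$ using $E$.

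Two smaller points on the gate step. First, $v^\dagger v=\Id$ holds only in the specific gauge $X_2^\dagger X_2=\Id$, $X_1^\dagger(\Id\otimes\rho)X_1=\Id$ (with $\Phi=\Id$), together with full row rank of $Y_1$ and $Y_2$; you should state this. Second, your deduction of $u^\dagger u=\Id$ from $(u^\dagger u)^{\otimes m}=\Id$ is fine.
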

 
The origin of $r$ and $\ell$ will be specified later, in \cref{decomposition}. Since any circuit of the above form gives rise to a homogeneous MPI (possibly after blocking), this characterizes the {\sffamily{hMPI}} family.
Physically, the result imposes a bound on the distance that the {\sffamily{hLP}} channels are capable of generating correlations. Indeed, consider the connected two-point correlation function 
\[
C(i,j) = \langle O_i^A O_j^B \rangle - \langle O_i^A \rangle \langle O_j^B \rangle ,
\]
where the operators $O_i^A$ and $O_j^B$ act on sites $i$ and $j$, respectively. In the case of a brick-wall circuit of depth~2 acting on a product initial state, one finds that $C(i,j) = 0$ whenever $|j-i|> 4$. Since we performed blocking only a finite number of times that depends on the bond dimension $D$ but not on the system size $N$, the resulting light cone is strictly finite and independent of $N$.

To prove the theorem, a few definitions and intermediate lemmas are needed.
Firstly, we need the concept of \emph{normality} and \emph{canonical form}~\cite{perez-garcia_matrix_2007, cirac_matrix_2021}. We provide the definitions for a tensor $A$ generating a (homogeneous) matrix product state (MPS):
\begin{equation}
    \ket{\Psi({A})}_{N}=\sum_{i_{1} \ldots i_{N}} \operatorname{Tr}\left(A^{i_{1}} \cdots A^{i_{N}}\right)\left|i_{1} \cdots i_{N}\right\rangle.
\end{equation}
This applies to our case simply by vectorizing the physical input and output legs of the MPI tensor together into a single index $i$. We define the transfer matrix
\be
E =  \frac{1}{d_\mathrm{in}} \; \begin{tikzpicture}[scale=\defaultscaling,baseline={([yshift=-0.65ex] current bounding box.center)}]
    \TransferMatrixTensor{0,0}{\leglength}{0.4}{${\cal V}$}{0}
    \end{tikzpicture}
    \;.
\ee

\begin{defn}[Normal tensor]
A tensor, $A$, is a normal tensor (NT) if: (i) there exists no non-trivial projector $P$ such that $A^i P=P A^i P$; (ii) its associated transfer matrix has a unique eigenvalue of magnitude (and value) equal to its spectral radius, which is equal to one. 
\end{defn}

\begin{defn}[Canonical Form]
A tensor, $A$, is in a canonical form (CF), if
$$
A^i=\bigoplus_{k=1}^r \mu_k A_k^i,
$$
where $\mu_k \in \mathbb{C}$ and the tensors $A_k$ are NT.
\end{defn}

Let us denote by $\Phi$ and $\rho$ the left and right eigenvectors of $E$ corresponding to the eigenvalue 1 of a normal tensor.

\begin{defn}[Canonical Form II]
    A normal tensor generating MPS is in Canonical Form II (CFII) if
$$
\begin{aligned}
 \bra{\Phi} &=\sum_{n=1}^D  \bra{n, n} , \\
\ket{\rho}&=\sum_{n=1}^D \rho_n \ket{n, n}.
\end{aligned}
$$
where $\rho_n>0$ and $\bra{\Phi} \rho\rangle=1$.
\end{defn}

In the homogeneous setting, every normal tensor can be brought to CF or CFII without loss of generality \cite{perez-garcia_matrix_2007}. The relevance of the CFII is that the transfer matrix of such a tensor can be interpreted as a channel, i.e., a CPTP map ruling the evolution of states in the bond space.
Finally, we need the concept of \textit{simplicity}.
\begin{defn}[Simplicity \cite{cirac_matrix_2017-1}]
    A tensor ${\cal V}$ is \textit{simple} if there exist two tensors $a$ and $b$ such that
\begin{subequations}
\begin{equation}
\label{simplicity1}
\begin{array}{c}
\begin{tikzpicture}[scale=\defaultscaling,baseline={([yshift=-0.8ex] current bounding box.center)}]
        \SingleSiteDouble{0,0}{\leglength}{0.4}{${\cal V}$}{0};
        \RightEigen{\leglength,0}{\leglength}{0.4}{b};
        \LeftEigen{-\leglength,0}{\leglength}{0.4}{a};
    \end{tikzpicture}
\end{array}
=
\begin{array}{c}
    \begin{tikzpicture}[scale=\defaultscaling,baseline={([yshift=-0.8ex] current bounding box.center)}]
        \pgfmathsetmacro{\identityheight}{4*\leglength + 0.4*2*2}
        \IdentityLine{0,0}{\identityheight}
    \end{tikzpicture}
\end{array}
\end{equation}
\begin{equation}
\label{simplicity2}
\begin{array}{c}
    \begin{tikzpicture}[scale=\defaultscaling,baseline={([yshift=-0.8ex] current bounding box.center)}]
        \SingleSiteDouble{0,0}{\leglength}{0.4}{${\cal V}$}{0}
        \SingleSiteDouble{2*0.4+2*\leglength,0}{\leglength}{0.4}{${\cal V}$}{0}
    \end{tikzpicture}
\end{array}
=
\begin{array}{c}
    \begin{tikzpicture}[scale=\defaultscaling,baseline={([yshift=-0.8ex] current bounding box.center)}]
        \SingleSiteDouble{0,0}{\leglength}{0.4}{${\cal V}$}{0};
        \RightEigen{\leglength,0}{\leglength}{0.4}{b};
        \LeftEigen{5*\leglength,0}{\leglength}{0.4}{a};
        \SingleSiteDouble{6*\leglength,0}{\leglength}{0.4}{${\cal V}$}{0};
    \end{tikzpicture}
\end{array}
\end{equation}
\end{subequations}
\end{defn}

The proof of \cref{SMdepth2} consists of combining the following lemmas, which generalize the results of Ref.~\cite{cirac_matrix_2017-1} from MPUs to MPIs.

\begin{lem}
\label{Vnormal}
The operator $E$ possesses a single nonzero eigenvalue, equal to one, and the tensor $\frac{1}{\sqrt{d_{\mathrm{in}}}}{\cal V}$ is normal.
\end{lem}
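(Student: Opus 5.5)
We will mirror the MPU argument of Ref.~\cite{cirac_matrix_2017-1}, using the isometry condition in place of unitarity. Write $E$ for the transfer matrix above, acting on $\mathbb{C}^D\otimes\mathbb{C}^D$. By construction, $\Tr(E^N)$ equals the $N$-site periodic contraction of ${\cal V}$ with $\overline{{\cal V}}$ over all output and purification legs, with the input legs traced and a factor $d_{\mathrm{in}}^{-N}$. That is,
\begin{equation*}
\Tr(E^N)=d_{\mathrm{in}}^{-N}\Tr\bigl(V_N^\dagger V_N\bigr)=d_{\mathrm{in}}^{-N}\Tr\bigl(\mathbb{I}_{d_{\mathrm{in}}}^{\otimes N}\bigr)=1 \qquad \forall N\ge 1 .
\end{equation*}
The first step is this identity. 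It uses only $V_N^\dagger V_N=\mathbb{I}$ and the fact that contracting the purification and output legs of $V_N$ with those of $V_N^\dagger$ reproduces the tensor network of $E^N$ closed periodically.

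Next we read off the spectrum. Let $\lambda_1,\dots,\lambda_{D^2}$ be the eigenvalues of $E$ with multiplicity. Then $p_N:=\sum_k\lambda_k^N=1$ for all $N\ge1$. By Newton's identities the power sums $p_1,\dots,p_{D^2}$ determine the elementary symmetric polynomials, hence the characteristic polynomial. The data $p_N\equiv1$ coincide with those of the multiset $\{1,0,\dots,0\}$, so $E$ has exactly one nonzero eigenvalue, equal to $1$, and all others vanish. (Alternatively, the generating function $\sum_N p_N z^N=\sum_k \lambda_k z/(1-\lambda_k z)=z/(1-z)$ gives the same conclusion.) This already yields condition (ii) of normality: the spectral radius is $1$ and it is the unique eigenvalue of that modulus.

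The remaining and harder step is condition (i), the absence of a nontrivial invariant projector $P$ with $\tilde{\cal V}^iP=P\tilde{\cal V}^iP$ for $\tilde{\cal V}=d_{\mathrm{in}}^{-1/2}{\cal V}$, where $i$ runs over the combined physical index. We would argue by contradiction via the block-triangular structure. If such a $P$ exists, then in a basis adapted to $P$ every $\tilde{\cal V}^i$ is block upper-triangular. Consequently $E$ is block-triangular, and its spectrum is the union of the spectra of the diagonal-block transfer matrices $E_1$ and $E_2$, for the restricted tensors on $P$ and on $\mathbb{I}-P$. Each $E_k$ is a completely positive map, so by Perron–Frobenius its spectral radius is itself an eigenvalue with a positive semidefinite eigenvector. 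Moreover this spectral radius is nonzero unless the block tensor is nilpotent. Since $E$ has only one nonzero eigenvalue, one of the blocks, say $E_2$, must be nilpotent. Then the corresponding diagonal-block products $\tilde{\cal V}^{i_1}\cdots\tilde{\cal V}^{i_n}$ vanish for large $n$, so that block contributes nothing to $\Tr(\tilde{\cal V}^{i_1}\cdots\tilde{\cal V}^{i_N})$ for large $N$.

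That diagonal block can therefore be dropped, giving a tensor of strictly smaller bond dimension that generates the same MPI for large $N$. This is the standard reduction to canonical form, which we take as the meaning of normality ``without loss of generality'' as stated before the lemma. Iterating the reduction ends in a tensor with no invariant projector. Its transfer matrix has a unique eigenvalue $1$, so it is normal. We expect this reduction step to be the main obstacle. It requires being careful that removing nilpotent blocks preserves $V_N$ only for $N$ beyond the nilpotency index, and this is where blocking at most $D^4$ times enters later. Once the reduction is in place, the spectral argument above closes the proof.
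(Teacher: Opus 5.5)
Your proposal is correct and follows the paper's route: the identity $\Tr(E^N)=d_{\mathrm{in}}^{-N}\Tr(V_N^\dagger V_N)=1$ forces the spectrum $\{1,0,\dots,0\}$, and your block-triangular Perron--Frobenius reduction is an explicit version of the paper's remark that the canonical form can then contain only one normal block. Two small differences: your Newton-identity argument replaces the paper's citation of Lemma A.5 of Ref.~\cite{cirac_matrix_2017}, and strictly $E$ has four diagonal blocks, including the cross terms between the two sectors, but you only need that $\mathrm{spec}(E_1)$ and $\mathrm{spec}(E_2)$ sit inside $\mathrm{spec}(E)$ with multiplicity. The obstacle you anticipate does not arise: if $E_2$ is nilpotent, every word in the restricted block is itself nilpotent and hence traceless, so discarding that block (and the off-diagonal part) leaves $V_N$ unchanged for all $N\ge 1$; the $D^4$ blocking used later is for simplicity (Lemma~\ref{simpleafterblocking}), not for this reduction.
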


\begin{proof}
This proof follows the one in Ref.~\cite{cirac_matrix_2017-1}. Since $V_{N}$ is an isometry for every $N>1$, it follows that 
\[
\operatorname{tr}\!\left(E^N\right)
= \frac{1}{d_{\mathrm{in}}^N} \operatorname{tr}\!\left(V_N^{\dagger} V_N \right)
= 1, \quad \forall\, N>1.
\]
Hence, the spectrum of $E$ consists of a single eigenvalue equal to one, while all others vanish, as a direct consequence of Lemma A.5 in Ref.~\cite{cirac_matrix_2017}. This rules out the possibility of having more than one block in the canonical form. Moreover, since the transfer matrix cannot exhibit multiple eigenvalues of modulus one, it follows that the tensor is normal.
\end{proof}

Since in the homogeneous setting every normal tensor can be brought to CFII without loss of generality \cite{perez-garcia_matrix_2007}, we can assume that $\frac{1}{\sqrt{d_{\mathrm{in}}}} {\cal V}$ is in CFII.
In particular, we verify the following:

\begin{lem}
\label{simpleafterblocking}
For any tensor ${\cal V}$ generating an MPI, there exists an integer $k \leq D^4$ such that ${\cal V}_k$ is simple. In this case, the tensors $a$ and $b$ correspond to the left and right fixed points of the transfer operator $E$, denoted by $\Phi = \Id$ and $\rho$, respectively.
\end{lem}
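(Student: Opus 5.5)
We follow the strategy of Ref.~\cite{cirac_matrix_2017-1} for MPUs and adapt it to the isometric setting. By \cref{Vnormal}, $E$ has a single nonzero eigenvalue equal to one. In CFII its left fixed point is $\Phi=\Id$ and its right fixed point is $\rho>0$. All other eigenvalues vanish, so $E$ decomposes as
\[
E=\ket{\rho}\bra{\Phi}+\mathcal N ,
\]
where $\mathcal N$ is nilpotent and
\[
\mathcal N\ket{\rho}=0,\qquad \bra{\Phi}\mathcal N=0 .
\]
$E$ acts on a space of dimension $D^2$, so $\mathcal N^{D^2}=0$. Hence for $k\ge D^2$ we get $E^k=\ket{\rho}\bra{\Phi}$ exactly: the transfer matrix of ${\cal V}_k$ is a rank-one projector. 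This already bounds the blocking length by $D^2\le D^4$. We keep the weaker $D^4$ in reserve in case the second simplicity condition needs a further argument over a doubled space.

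\textbf{Step 1: first condition \eqref{simplicity1}.} Contract ${\cal V}_k\otimes\bar{\cal V}_k$ with $a=\Phi$ on the left and $b=\rho$ on the right, and leave the physical input legs open. This gives a $d_{\rm in}^k\times d_{\rm in}^k$ operator $X$ on the blocked input. We must show $X=\Id$. The plan is to use the isometry condition $V_N^\dagger V_N=\Id$ at system size $N=k+M$ with $M$ a multiple of $k$, which we take large. Leave one block's input legs open and close all the others. The closed blocks produce the transfer matrix with an extra factor $d_{\rm in}$ per site, and the prefactors normalize away, so the rest of the chain contributes $E_k^{M/k}=\ket{\rho}\bra{\Phi}$. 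The isometry condition then reads $X\otimes\Id=\Id$ after the trace over the other sites, which forces $X=\Id$. Since the closed-block product is exactly rank one for every $M\ge k$, the identity holds exactly and no limit is needed.

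\textbf{Step 2: second condition \eqref{simplicity2}.} This is the main obstacle. We must show that two adjacent doubled blocks, with all physical legs open, factorize through the bond space as $\ket{\rho}\bra{\Phi}$. Consider the map $T$ from the outer virtual legs, with the two middle bonds contracted, to operators on the $2k$ open input legs. Simplicity requires that inserting $\ket{\rho}\bra{\Id}$ on the middle bond does not change $T$. Following Ref.~\cite{cirac_matrix_2017-1}, we would prove this by a Cauchy--Schwarz or trace argument. Set
\[
Y=\bigl(\text{LHS}-\text{RHS}\bigr)\ \text{of \eqref{simplicity2}} .
\]
Embed $Y$ in a ring of size $N$ and use isometry together with $E_k=\ket{\rho}\bra{\Phi}$ to compute $\operatorname{tr}(Y^\dagger Y)$, or a suitable positive contraction of $Y$ with its conjugate. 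The aim is to show this quantity is zero. Every term reduces to $\operatorname{tr}(E_k^{\,\cdot})$ or to $\bra{\Phi}\rho\rangle=1$, because the open input legs of $V_N^\dagger V_N$ are the identity.

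If the direct computation fails because the isometry only constrains input legs, we fall back on the MPU trick of doubling. We consider the transfer matrix of the tensor ${\cal V}\otimes\bar{\cal V}\otimes{\cal V}\otimes\bar{\cal V}$ with the inputs contracted crosswise, whose fixed-point structure can again be controlled by the isometry condition. Its nilpotent part acts on a space of dimension $D^4$, which is where the bound $k\le D^4$ in the statement would come from. Once the doubled transfer operator becomes rank one after $D^4$ blockings, comparing the two sides of \eqref{simplicity2} reduces to a rank-one identity, and that identity gives simplicity with $a=\Phi=\Id$ and $b=\rho$.
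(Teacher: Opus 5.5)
Your Step~1 is correct. It actually needs no blocking: only the $M\ge D^2$ closed sites have to collapse to $\ket{\rho}\bra{\Phi}$, so closing a block of \emph{any} length with $\Phi$ and $\rho$ gives $\Id$. Also, after the partial trace the condition reads $d_{\rm in}^M X=d_{\rm in}^M\Id$, not $X\otimes\Id=\Id$.

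The gap is Step~2, which is the real content of the lemma. The paper outsources it to the MPU argument of Ref.~\cite{cirac_matrix_2017-1}, so your proof has to supply it. Let $\mathbb M^{ii'}=\sum_{o,e}\mathcal V^{(oe),i}\otimes\overline{\mathcal V^{(oe),i'}}$ be the local tensor of $V_N^\dagger V_N$, and let $Y=\mathbb M\mathbb M-\mathbb M\ket{\rho}\bra{\Phi}\mathbb M$. Your ring computation does succeed. The ring with $\mathbb M\mathbb M$ is $V^\dagger V=\Id$, and by cyclicity plus Step~1 on a longer block, the ring with $\mathbb M\ket{\rho}\bra{\Phi}\mathbb M$ is also $\Id$. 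So every ring embedding of $Y$ vanishes.

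That does not give $Y=0$. The isometry condition only ever closes the outer legs of $Y$ against further copies of $\mathbb M$. Your positive quantity is therefore $d_{\rm in}^n\bra{L}\textstyle\sum Y\otimes\overline{Y}\ket{R}$, where $L,R$ are the fixed points of $\mathbb T=\sum_{ii'}\mathbb M^{ii'}\otimes\overline{\mathbb M^{ii'}}$. Its vanishing only yields $L^{1/2}YR^{1/2}=0$. These fixed points are generally singular (for the shift, one of them is rank one), whereas simplicity is an identity with open outer legs.

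A decisive check is that this computation never used blocking. If it implied $Y=0$, every MPI tensor would be simple as given, which is false. For $d_{\rm in}=1$, simplicity reduces to $E^2=\ket{\rho}\bra{\Phi}$. This fails for an MPS of Bell pairs of range three, where one bond slot survives two transfer steps. Yet there $\operatorname{tr}(YE^n)=\operatorname{tr}E^{n+2}-\bra{\Phi}E^n\ket{\rho}=0$ for all $n$.

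Your fallback identifies the right object. Indeed $\operatorname{tr}\mathbb T^N=d_{\rm in}^N$, hence $\mathbb T^k=d_{\rm in}^k\ket{R}\bra{L}$ for $k\ge D^4$, and this is where $D^4$ comes from. But ``reduces to a rank-one identity'' skips exactly the step that needs proof. What is missing is roughly the following.

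(i) Rank-one $\mathbb T_k$ says the vectors $\sum_{ii'}(\mathbb M_k^{ii'})_{ab}\ket{ii'}$ have Gram matrix $d_{\rm in}^kR_{aa'}L_{bb'}$. So $\mathbb M_k$ is an isometry $J$ applied to $\ket{r_a}\otimes\ket{l_b}$, and the ring MPO is $J^{\otimes n}$ acting on bond states $\ket{\Omega}=\sum_b\ket{l_b}\ket{r_b}$.

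(ii) That ring MPO is $\Id$, whose vectorization is a product over blocks, and $J$ is isometric. Hence each $\ket{\Omega}$ must be a product vector, giving $\mathbb M_k\mathbb M_k=\mathbb M_k\ket{x}\bra{y}\mathbb M_k$ for some $x,y$. This is where one uses that $V_N^\dagger V_N$ carries no correlations across cuts, which the trace bookkeeping alone does not see.

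(iii) Apply this on three consecutive blocks and trace the middle inputs using $\sum_i\mathbb M_k^{ii}=d_{\rm in}^k\ket{\rho}\bra{\Phi}$. Together with $\bra{y}\rho\rangle\bra{\Phi}x\rangle=1$, which follows from Step~1 on two blocks, this turns $\ket{x}\bra{y}$ into $\ket{\rho}\bra{\Phi}$.

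Without (ii), your proposal does not establish the second simplicity condition.
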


\begin{proof}
By \cref{Vnormal}, $\frac{1}{\sqrt{d_{\mathrm{in}}}}{\cal V}$ is a normal tensor. Moreover, $V^{\dagger}_N V_N = \mathbb{I}$ and ${\cal V}$ is in CFII.
These are the only conditions used in the proof of Proposition III (b) of Ref.~\cite{cirac_matrix_2017-1}. Hence, the result follows by the exact same argument.
\end{proof}

\begin{lem}
\label{decomposition}
For any MPI tensor $\mathcal{V}$, there exist two decompositions $X_1 Y_1$, $X_2 Y_2$, obtained by grouping either the left auxiliary leg or the right auxiliary leg with the bottom physical index, while the remaining three indices are grouped together.  
The first decomposition has rank $r$ and the second has rank $\ell$.  
The tensors $X_i$ and $Y_i$ can be chosen such that
\be
\label{identities_lemma}
X_2^{\dagger} X_2=\mathbb{I}, 
\qquad 
X_1^{\dagger}(\mathbb{I}\otimes\rho) X_1=\mathbb{I},
\ee
where $\rho$ is a positive diagonal matrix.
Graphically, these two equivalent decompositions are represented as
\begin{align}
    \begin{array}{c}
        \begin{tikzpicture}[scale=0.9,baseline={([yshift=-0.65ex] current bounding box.center)}]
            \IsomTensor{0,0}{\leglength}{0.3}{$\mathcal{V}$}{0}
        \end{tikzpicture}
    \end{array}
        =
    \begin{array}{c}
        \begin{tikzpicture}[scale=0.5,baseline={([yshift=-0.65ex] current bounding box.center)}]
            \draw (-0.4,0.5) node {$r$};
            \LeftDeco{0,0}
        \end{tikzpicture}
    \end{array}
    =
    \begin{array}{c}
        \begin{tikzpicture}[scale=0.5,baseline={([yshift=-0.65ex] current bounding box.center)}]
            \draw (0.4,0.5) node {$\ell$};
            \RightDeco{0,0}
        \end{tikzpicture}
    \end{array}.
\end{align}
\end{lem}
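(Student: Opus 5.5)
The plan is to obtain both decompositions from ordinary matrix factorizations (SVD) of $\mathcal V$ reshaped along the two bipartitions, and then to use the gauge freedom $X\mapsto XG$, $Y\mapsto G^{-1}Y$ to impose the normalizations \eqref{identities_lemma}. The normalizations will come from the CFII fixed-point equations of the transfer matrix $E$, i.e. from $\Phi=\Id$ and $\rho$ being its left and right eigenvectors (\cref{Vnormal}).

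For the second decomposition, group the right virtual leg $n$ with the input index $i$, and group $(m,o,e)$ (left virtual, output, purification) together. This gives a matrix $M_2:\mathbb C^{D d_{\rm in}}\to\mathbb C^{D d_{\rm out}\chi}$ of rank $\ell$. The left fixed-point condition in CFII reads $\sum \overline{\mathcal V}\mathcal V$ contracted over $o,e$, over the identity on the left bond and over $i$ (with the $1/d_{\rm in}$ normalization) equal to $\Id$ on the right bond. That contraction does not by itself give $M_2^\dagger M_2=\Id$, since $i$ is summed rather than kept open.

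So I first need the stronger identity that isometry for all $N$ forces $\sum_{o,e,m}\overline{\mathcal V}^{\,i'}\mathcal V^{\,i}=\delta_{ii'}\Id$ on the right bond. I would derive this from simplicity (\cref{simpleafterblocking}, after blocking). Insert $V_N^\dagger V_N=\Id$ and use \eqref{simplicity2} to cut the chain. The environment of one site is then $a=\Phi=\Id$ on the left and $b=\rho$ on the right, and comparing with the identity on the open site gives the local relation \eqref{simplicity1}: contracting $\overline{\mathcal V}\mathcal V$ with $\Id$ on the left and $\rho$ on the right yields $\Id_{d_{\rm in}}$.

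The two one-sided relations then follow by feeding \eqref{simplicity1} into \eqref{simplicity2}.
\begin{itemize}
\item Contracting the right end of \eqref{simplicity2} with $\rho$ (a right fixed point) and applying \eqref{simplicity1} to the right site gives an identity whose left part is $\overline{\mathcal V}\mathcal V$ with $\Id$ on the left boundary, tensored with identity on the open legs. This yields $M_2^\dagger M_2$ expressed as something of the form $\Id\otimes(\cdot)$.
\item The symmetric manipulation, contracting the left end with $\Id$, yields $M_1^\dagger(\Id\otimes\rho)M_1$-type relations.
\end{itemize}

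With these, write the SVD $M_2=X_2'Y_2'$ of rank $\ell$. Since $M_2^\dagger M_2$ restricted to the support is invertible, choose $X_2=M_2(M_2^\dagger M_2)^{-1/2}$ on the support, giving $X_2^\dagger X_2=\Id$, and put the remaining factor into $Y_2$. Do the same for $M_1$ (left leg grouped with input), now using the weighted inner product $\langle x,y\rangle_\rho=x^\dagger(\Id\otimes\rho)y$. Because $\rho>0$ is diagonal in CFII this is a genuine inner product, so a $\rho$-weighted polar decomposition gives $X_1^\dagger(\Id\otimes\rho)X_1=\Id$. The ranks $r,\ell$ are simply the ranks of the two reshapings.

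The main obstacle is the step deriving the \emph{open-index} identities from the global isometry condition, i.e. showing that simplicity plus $\Phi=\Id$ pins down $\sum\overline{\mathcal V}\mathcal V$ as an identity map between the relevant grouped spaces rather than merely its trace. If that fails, the fallback is the plainer goal that \eqref{identities_lemma} is just a gauge choice. Since $(\Id\otimes\rho)$ is positive definite, any rank-$r$ factorization can be reorthonormalized, and the content of the lemma reduces to the existence of the factorizations with those ranks. The stronger identities would then be deferred to the proof of \cref{SMdepth2}, where the isometry of $u$ and $v$ is needed.
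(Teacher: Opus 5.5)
Your final construction is correct and is the paper's proof: the lemma is a pure gauge statement. The paper takes SVDs $\mathcal M_{1,2}=V_{1,2}^\dagger D_{1,2}U_{1,2}$ and sets
$X_2=V_2^\dagger$, $Y_2=D_2U_2$,
$X_1=V_1^\dagger[V_1(\Id\otimes\rho)V_1^\dagger]^{-1/2}$,
$Y_1=[V_1(\Id\otimes\rho)V_1^\dagger]^{1/2}D_1U_1$.
Your choices differ from these only by a unitary on the rank-$\ell$ (resp.\ rank-$r$) bond. Those choices are $X_2=M_2(M_2^\dagger M_2)^{-1/2}$ restricted to the support, and a $\rho$-weighted polar decomposition for $X_1$. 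The only input needed is that $\rho$ is diagonal with $\rho_n>0$, which comes from bringing $\mathcal V$ to CFII. Neither the isometry condition nor simplicity enters, so your ``fallback'' is the proof.

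What you should drop is the step you call the main one, because the identity it aims for is false. The claim $\sum_{o,e,m}\overline{\mathcal V}^{\,i'}\mathcal V^{\,i}=\delta_{ii'}\Id$ says that $M_2$ itself is an isometry, which would force $\ell=Dd_{\rm in}$. A counterexample is the shift: take $\chi=1$, $D=d_{\rm in}=d_{\rm out}=d$ and $\mathcal V^{io}_{mn}=\delta_{in}\delta_{mo}$, which is already in CFII with $\rho=\Id/d$. Then $M_2=\ket{\Omega}\bra{\Omega}$ with $\ket{\Omega}=\sum_a\ket{aa}$, so $\ell=1$ and $M_2^\dagger M_2=d\,\ket{\Omega}\bra{\Omega}\neq\Id$.

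In general, $M_2^\dagger M_2=Y_2^\dagger Y_2$ and $M_1^\dagger(\Id\otimes\rho)M_1=Y_1^\dagger Y_1$ are not identities. What simplicity actually gives is that the two-site doubled tensor factorizes as the tensor product of these two objects. That fact is used in \cref{simpleifisometry} to conclude $v^\dagger v=\Id$, not in this lemma. Your polar-decomposition step never uses the false identity, so removing the detour leaves a complete argument.
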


\begin{proof}
This proof is a straightforward generalization of the procedure in Ref.~\cite{  cirac_matrix_2017-1}.
In contrast to the MPU case, the only modification is the presence of an additional output leg of dimension $\chi$. 

In the first decomposition, the left auxiliary and the bottom physical index are grouped together, just as the remaining three indices. In the second decomposition, the grouping is reversed. Applying a singular value decomposition, we obtain
$$
\mathcal{M}_{1,2}=V_{1,2}^{\dagger} D_{1,2} U_{1,2},
$$
where the $U$’s and $V$’s are isometries, satisfying $V_i V_i^{\dagger}= U_i U_i^{\dagger}=\mathbb{I}$, and $D_{1,2}$ are diagonal positive matrices of dimensions $r$ and $\ell$, respectively. 

We now define matrices $X_i$ and $Y_i$ such that $\mathcal{M}_i=X_i Y_i$, expressed in terms of the above decompositions. Concretely, we set $X_2=V_2^{\dagger}$ and
\[
X_1= V_1^{\dagger}\left[V_1(\mathbb{I} \otimes \rho) V_1^{\dagger}\right]^{-1 / 2},
\]
\[
Y_1= \left[V_1(\mathbb{I} \otimes \rho) V_1^{\dagger}\right]^{1 / 2} D_1 U_1,
\]
\[
Y_2= D_2 U_2,
\]
so that the following relations hold:
\be
\label{identities}
X_2^{\dagger} X_2=\mathbb{I}, \quad X_1^{\dagger}(\mathbb{I} \otimes \rho) X_1=\mathbb{I} .
\ee
Here, $\rho$ is a diagonal matrix with entries $\rho_n$.

We identify some tensors graphically:
\begin{subequations}
    \begin{align}
        X_1 &=\begin{tikzpicture}[scale=\defaultscaling,baseline={([yshift=-0.65ex] current bounding box.center)}]
    \XoneDeco{0,0}
    \end{tikzpicture},&
    X_2 &=\begin{tikzpicture}[scale=\defaultscaling,baseline={([yshift=-0.65ex] current bounding box.center)}]
    \XtwoDeco{0,0}
    \end{tikzpicture},
    \end{align}

    \begin{align}
        Y_1 &=\begin{tikzpicture}[scale=\defaultscaling,baseline={([yshift=-0.65ex] current bounding box.center)}]
        \YoneDeco{0,0}
    \end{tikzpicture},&
    Y_2 &=\begin{tikzpicture}[scale=\defaultscaling,baseline={([yshift=-0.65ex] current bounding box.center)}]
        \YtwoDeco{0,0}
    \end{tikzpicture}.
    \end{align}
\end{subequations}

\end{proof}

With these at disposal, we further define:
\begin{align}
\label{uuvv}
u &= \begin{tikzpicture}[scale=\defaultscaling,baseline={([yshift=-0.65ex] current bounding box.center)}]
    \IsomU{0,0}
    \end{tikzpicture}, &
v &= \begin{tikzpicture}[scale=\defaultscaling,baseline={([yshift=-0.65ex] current bounding box.center)}]
    \IsomV{0,0}
    \end{tikzpicture}.
\end{align}

\begin{lem}
\label{lemuisometry}
For any MPI generating tensor ${\cal V}$, it holds that \( u^\dagger u = \Id_{d_{\mathrm{in}}^2} \), and hence \( r\ell \geq d_{\mathrm{in}}^2 \).
\end{lem}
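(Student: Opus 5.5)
The plan is to follow the MPU argument of Ref.~\cite{cirac_matrix_2017-1}, now with simplicity (\cref{simpleafterblocking}) and the normalization identities \eqref{identities} playing the roles that unitarity played there. Throughout we assume ${\cal V}$ has been blocked so that it is simple, with $a=\Phi=\Id$ and $b=\rho$, and that $\frac{1}{\sqrt{d_{\rm in}}}{\cal V}$ is in CFII. By \eqref{uuvv}, $u$ is built from $Y_2$ on the left site and $Y_1$ on the right site, contracted along the shared bond; it maps the two bottom physical legs ($d_{\rm in}^2$) to the legs $\ell$ and $r$. The quantity $u^\dagger u$ is therefore a two-site object: the product $Y_2^\dagger Y_2$ on the first site and $Y_1^\dagger Y_1$ on the second, with the middle bond contracted between ket and bra layers.

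\textbf{Step 1: express the Gram matrices of the $Y_i$ through the doubled tensor.} From ${\cal V}=X_2Y_2$ and $X_2^\dagger X_2=\Id$, contracting the three grouped legs (the two top legs and the right bond) of ${\cal V}$ with those of $\bar{\cal V}$ gives exactly $Y_2^\dagger Y_2$. Similarly, ${\cal V}=X_1Y_1$ with $X_1^\dagger(\Id\otimes\rho)X_1=\Id$ shows that $Y_1^\dagger Y_1$ equals the doubled tensor ${\cal V}\otimes\bar{\cal V}$ with its top legs contracted and $\rho$ inserted on its left bond. Substituting these two facts, $u^\dagger u$ becomes a doubled two-site diagram ${\cal V}{\cal V}$ over $\bar{\cal V}\bar{\cal V}$. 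The top legs of both sites are traced, $\rho$ closes the left bond of the second site, and the right bond of the first site is contracted with the left bond of the second. The cross-wiring of the bonds here needs to be checked carefully: the left bond of the first site and the right bond of the second site stay open and are joined to the corresponding bonds of $Y_2$ and $Y_1$.

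\textbf{Step 2: use simplicity to collapse the diagram.} Apply \eqref{simplicity2}, which says the doubled two-site tensor with the middle bond contracted equals the two doubled single-site tensors cut apart by $b=\rho$ and $a=\Id$. After this, each site carries a single-site doubled ${\cal V}$ with top legs traced and one bond closed by $\rho$ on its right or $\Id$ on its left. By \eqref{simplicity1}, each such object reduces to $\Id$ on the bottom physical leg, up to factors of $d_{\rm in}$ fixed by the CFII normalization. The result is $\Id_{d_{\rm in}}\otimes\Id_{d_{\rm in}}$.

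\textbf{Step 3: the dimension bound.} Since $u:\mathbb{C}^{d_{\rm in}^2}\to\mathbb{C}^{\ell r}$ is an isometry, its rank gives $d_{\rm in}^2\le \ell r$.

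The main obstacle is the bookkeeping in Steps 1–2: making sure the bond contractions in $u^\dagger u$ match the pattern on the left of \eqref{simplicity2}, that the $\rho$ produced by $X_1$ lands where simplicity requires $b$, and that the $d_{\rm in}$ normalizations cancel. I would first verify this on the MPU special case, $\chi=1$ and $d_{\rm in}=d_{\rm out}$, where it must reduce to Ref.~\cite{cirac_matrix_2017-1}. The extra purification leg only enlarges the traced top group and should not affect the argument.
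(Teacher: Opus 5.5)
Once the wiring is fixed, your argument proves $u^\dagger u=\Id$ for a \emph{simple} tensor, which is all Theorem~\ref{SMdepth2} needs. It does not, however, prove the lemma as stated. The lemma asserts $u^\dagger u=\Id$ for every MPI tensor, brought to CFII as Lemma~\ref{Vnormal} allows, without any blocking. The paper states it this way deliberately: $u$ is always isometric, and simplicity is what makes $v$ isometric, which is why Lemma~\ref{simpleifisometry} is an equivalence. Blocking replaces $\mathcal V$, $u$, $r$, $\ell$ by different objects, so a result about the blocked tensor does not transfer back.

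The missing idea is to use the isometry condition at large $N$ in place of simplicity. Write $\mathbb D^{ii'}=\sum_{o,e}\mathcal V^{i,oe}\otimes\overline{\mathcal V^{i',oe}}$ for the doubled tensor with top legs contracted. Tracing out sites $3,\dots,N$ of $V_N^\dagger V_N=\Id$ gives $\mathrm{tr}\bigl[\mathbb D\,\mathbb D\,(d_{\rm in}E)^{N-2}\bigr]=d_{\rm in}^{N-2}\,\Id_{d_{\rm in}^2}$, with the middle bond contracted. By Lemma~\ref{Vnormal}, $E=|\rho)(\Id|+\mathcal N$ with $\mathcal N$ nilpotent and $|\rho)(\Id|\,\mathcal N=\mathcal N\,|\rho)(\Id|=0$. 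Hence $E^{N-2}=|\rho)(\Id|$ once $N-2\ge D^2$, and $(\Id|\,\mathbb D\,\mathbb D\,|\rho)=\Id_{d_{\rm in}^2}$ for any MPI tensor, simple or not.

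That quantity is exactly $u^\dagger u$, but your Step 1 has the bonds reversed. As written it is also self-contradictory: $\rho$ cannot close the left bond of the second site while that same bond is contracted with the first site. In Lemma~\ref{decomposition}, $X_2$ carries the left bond together with the top legs, and $X_1$ carries the right bond. This is why the right bond of $Y_2$ meets the left bond of $Y_1$ inside $u$. So $X_2^\dagger X_2=\Id$ gives $Y_2^\dagger Y_2=(\Id|\,\mathbb D$, with the left bond closed by $\Phi=\Id$. Likewise $X_1^\dagger(\Id\otimes\rho)X_1=\Id$ gives $Y_1^\dagger Y_1=\mathbb D\,|\rho)$, with $\rho$ on the right bond, not the left. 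Contracting along the shared bond yields $u^\dagger u=(\Id|\,\mathbb D\,\mathbb D\,|\rho)$. This is the left-hand side of \eqref{simplicity2} capped by $a=\Id$ and $b=\rho$, so with the corrected wiring your Step 2 goes through for simple tensors, and the large-$N$ argument above covers the general case. No stray factors of $d_{\rm in}$ appear: $(\Id|\mathbb D|\rho)$ has trace $d_{\rm in}(\Id|E|\rho)=d_{\rm in}$, consistent with \eqref{simplicity1} giving exactly $\Id_{d_{\rm in}}$.
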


\begin{proof}
The proof follows directly from the MPU case in Ref.~\cite{  cirac_matrix_2017-1}.
\end{proof}
This lemma implies that $u$ is an isometry. If ${\cal V}$ is simple (or rendered simple via blocking), then $v$ is also isometric.

\begin{lem}
\label{simpleifisometry}
A hMPI tensor ${\cal V}$ is simple if and only if $v$ is isometric, i.e., \( v^\dagger v= \Id_{r\ell} \), implying $r\ell \leq d_{\mathrm{out}}^2 \chi^2$.
\end{lem}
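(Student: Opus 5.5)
The plan is to reduce both directions to contractions that follow from \cref{decomposition} and the normalization identities \cref{identities_lemma}, mirroring the MPU argument of Ref.~\cite{cirac_matrix_2017-1}. Recall that $v$ is built from $X_1$ (right half, attached to the $r$ leg) and $X_2$ (left half, attached to the $\ell$ leg), joined along the auxiliary bond of dimension $D$. In $v^\dagger v$ the output legs are contracted, including the purification legs, since these are just part of the output of $X_i$. The result is a network of $X_1^\dagger X_1$ and $X_2^\dagger X_2$ linked through the auxiliary bonds. The strategy is to express the same network through $\mathcal V^\dagger\mathcal V$ contractions, i.e.\ through two-site double tensors, and compare with the simplicity conditions \cref{simplicity1,simplicity2}.

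\emph{Direction ``simple $\Rightarrow$ $v$ isometric''.} First write two adjacent double tensors $\mathcal V\bar{\mathcal V}\,\mathcal V\bar{\mathcal V}$ using the decomposition $Y_1X_1$ on the left site and $X_2Y_2$ on the right site. The middle, with the purification and output legs contracted, is exactly $v^\dagger v$ sandwiched between $Y$'s. Simplicity \cref{simplicity2} says this equals the same pair with the bond cut and $b=\rho$, $a=\Phi=\Id$ inserted. On the right-hand side, $\Id$ inserted on the $X_2$ side gives $X_2^\dagger X_2=\Id$. On the $X_1$ side, $\rho$ yields $X_1^\dagger(\Id\otimes\rho)X_1=\Id$ by \cref{identities_lemma}.

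We therefore obtain $Y^\dagger (v^\dagger v) Y = Y^\dagger(\Id_{r\ell})Y$, with $Y=Y_1\otimes Y_2$ acting appropriately. To strip off the $Y$'s, use that they come from SVDs with $D_i>0$ and $U_i$ co-isometric, so $Y_i$ has a right inverse. Multiplying by the inverse on both sides gives $v^\dagger v=\Id_{r\ell}$.

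\emph{Direction ``$v$ isometric $\Rightarrow$ simple''.} Reverse the computation. Starting from $v^\dagger v=\Id$, reinsert the $Y$'s and read the identity as \cref{simplicity2}, after noting that $\Id_{r\ell}$ factorizes exactly as $X_1^\dagger(\Id\otimes\rho)X_1\otimes X_2^\dagger X_2$. Condition \cref{simplicity1} follows from the fixed-point properties: $\Phi=\Id$ and $\rho$ are the left and right eigenvectors of $E$ in CFII, together with the normalization factor $d_{\rm in}$. Concretely, contract \cref{simplicity2} with $\Id$ on the left and $\rho$ on the right. The outer double tensors then reduce by the fixed-point equations, which isolates the single-site identity.

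The dimension bound $r\ell\le d_{\rm out}^2\chi^2$ is immediate, since an isometry $\mathbb C^{r\ell}\to\mathbb C^{\chi^2d_{\rm out}^2}$ requires its domain dimension to be at most its codomain dimension.

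The main obstacle is bookkeeping rather than conceptual. The gluing of $X_1$ and $X_2$ inside $v$ must be oriented consistently with where the $\rho$ and $\Id$ fixed points sit. Invertibility of the $Y$'s must also hold on the correct side, i.e.\ the rank-$r$ and rank-$\ell$ legs must be surjected onto. I would first verify these on explicit index expressions before trusting the graphical manipulation.
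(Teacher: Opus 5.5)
Your argument for both directions is the paper's own. You write two adjacent double tensors as $v^\dagger v$ sandwiched between $Y_1\otimes Y_2$ and its adjoint, apply \eqref{simplicity2} with $b=\rho$, $a=\Id$, collapse with \eqref{identities_lemma}, and strip the $Y$'s by their right inverses. For the converse you run the same chain backwards, and the dimension count is also the same. One minor slip: $X_1$ is the \emph{left} factor of $v$, since its free auxiliary leg points right toward $X_2$; this is how you actually use it later.

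The step that fails is your derivation of \eqref{simplicity1}. Let $T$ be the single-site double tensor (outputs and purification contracted, inputs open), and let $Q=\langle\Id|T|\rho\rangle$, an operator on one input site. Contracting \eqref{simplicity2} with $\Id$ and $\rho$ gives $\langle\Id|TT|\rho\rangle=Q\otimes Q$. The fixed-point equations for $E$ only apply once an input leg is traced, and then both sides reduce to $d_{\rm in}Q$, which is a tautology.

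Fixed points alone give only $\tr Q=d_{\rm in}$. For example, take $D=1$: any $\mathcal V$ with $\tr(\mathcal V^\dagger\mathcal V)=d_{\rm in}$ satisfies \eqref{simplicity2} trivially, but violates \eqref{simplicity1} unless $\mathcal V$ is an isometry.

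The missing ingredient is the isometry condition itself. There are two ways to use it:
\begin{enumerate}[(i)]
\item Within your setup, apply \eqref{simplicity2} twice on the periodic $N=2$ chain to get $V_2^\dagger V_2=Q\otimes Q$. Hence $Q\otimes Q=\Id$, and $Q\succeq 0$ forces $Q=\Id$.
\item For any traceless $\sigma_\alpha$, one has $\langle\Id|S^\alpha|\rho\rangle=\tr[S^\alpha E^{N-1}]$ for $N-1\ge D^2$, because $E^k=|\rho\rangle\langle\Id|$ for such $k$ by \cref{Vnormal}. This trace vanishes by $V_N^\dagger V_N=\Id$.
\end{enumerate}
The second route shows that \eqref{simplicity1} holds for \emph{every} hMPI tensor in CFII, independently of $v$. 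That is why the paper's proof only checks \eqref{simplicity2}.
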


\begin{proof} 
\begin{itemize}
  \item[\((\Rightarrow)\)] 
   \begin{equation}
   \label{simplehenceisom}
       \begin{array}{c}
            \begin{tikzpicture}[scale=0.4,baseline={([yshift=-0.8ex] current bounding box.center)}]
                \IsomV{0,0}
                \YoneDeco{0,1.5}
                \YtwoDeco{2.,1.5}
                \begin{scope}[yshift=4.5cm, yscale=-1] 
                    \IsomV{0,0}
                    \YoneDeco{0,1.5}
                    \YtwoDeco{2.,1.5}
                \end{scope}
            \end{tikzpicture}
\end{array}
=
\begin{array}{c}
    \begin{tikzpicture}[scale=\defaultscaling,baseline={([yshift=-0.8ex] current bounding box.center)}]
        \SingleSiteDouble{0,0}{\leglength}{0.4}{${\cal V}$}{0}
        \RightEigen{\leglength,0}{\leglength}{0.4}{$\rho$}
        \LeftEigenStripped{5*\leglength - 0.5,0}{\leglength}{0.4}
        \SingleSiteDouble{6*\leglength - 0.5,0}{\leglength}{0.4}{${\cal V}$}{0}
    \end{tikzpicture}
\end{array}
=
\begin{array}{c}
    \begin{tikzpicture}[scale=0.4,baseline={([yshift=-0.8ex] current bounding box.center)}]
                \YoneDeco{0,1.5}
                \YtwoDeco{2.,1.5}
                \begin{scope}[yshift=2.5cm, yscale=-1] 
                    \YoneDeco{0,1.5}
                    \YtwoDeco{2.,1.5}
                \end{scope}
            \end{tikzpicture},
\end{array}
\end{equation}
 where in the last step we have used \cref{identities}. Using the invertibility of $Y_1$ and $Y_2$, we get $v^\dagger v= \Id_{r\ell}$, and hence $v$ is an isometry.

  \item[\((\Leftarrow)\)]
  The fact that $v$ is isometric provides us with the equality between the LHS and the RHS of \cref{simplehenceisom}. Plugging \cref{identities} into the RHS, we get the equality with the middle term in \cref{simplehenceisom}.
\end{itemize}
\end{proof}

For \( \chi = 1 \) and $d \coloneq d_{\mathrm{in}} = d_\mathrm{out}$, we recover the equality for the MPU case: $r \ell = d^2$, meaning the maps \(u\) and \(v\) are unitaries instead of isometries.

\begin{proof}[Proof of \cref{SMdepth2}]
    By blocking at most $D^4$ times, we obtain a simple tensor (\cref{simpleafterblocking}). \cref{decomposition} then shows the gate structure, and \cref{lemuisometry} and \cref{simpleifisometry} the isometric nature of the gates.
\end{proof}

    \section{Classification of Locally Purifiable (LP) Channels}
\label{sec:Classification}
We further analyze {\sffamily hLP} channels and ask whether they admit an index classification, in analogy with homogeneous MPUs. Recall that 1D QCA, i.e., unitary maps such that the support of the evolution of any local operator is bounded to a neighborhood of the original support, can be completely classified by an positive-rational-valued invariant known as the \emph{index}~\cite{gross_index_2012,schumacher_reversible_2004}. Intuitively, the index captures the net flow of quantum information across the system. An example of QCA is the one-site \emph{shift}, which translates each degree of freedom by one lattice site. Its index is strictly larger than zero, and it cannot be continuously connected to a finite-depth circuit, which is the paradigmatic example of index-zero QCAs.
The latter are also called \emph{trivial}, meaning that they are in the same \textit{phase} as finite-depth circuits. Two QCAs $\alpha_0$ and $\alpha_1$ belong to the same phase if there exists a continuous family of QCAs $\{\alpha(s)\}_{s \in [0,1]}$ such that $\alpha(0) = \alpha_0, \; \alpha(1) = \alpha_1,$ and for every $s \in [0,1]$, $\alpha(s)$ is a valid QCA.
Since homogeneous MPUs are equivalent to 1D QCA, this classification result directly extends to the homogeneous MPU setting~\cite{cirac_matrix_2017-1}, and provides the natural starting point to investigate the presence of an analogous index for {\sffamily{hLP}} channels.
 
We now show that in our case, exploiting the freedom that the extra output space $\chi$ grants us, all MPIs can be continuously deformed one into the other. In other words, all channels in {\sffamily{hLP}} are in the same phase, i.e., they are associated with the same index.

We hereby clarify what is meant by phases and equivalence. We will consider extensions of channels by the addition of input, output, and purification dimensions. More specifically, we allow for an extra input of dimension $x$ that will be simply traced out, which we denote as $\operatorname{Tr}^{(x)}$ and some extra output of dimensions $y$ in the form $\ket{0}\bra{0}_{y}$. We will denote $\mathcal{E}^{(x,y)} = \mathcal{E} \otimes \operatorname{Tr}^{(x)} \otimes \ket{0}\bra{0}_{y}$. Moreover, we can increase the dimension of the purification space, as this does not change the physical channel. With these extensions, we can account for channels with unequal input and output dimensions.

Generalizing the definitions of Ref.~\cite{cirac_matrix_2017-1}, we define the following.

\begin{defn}[Strict Equivalence]
Let $\mathcal E_0$ and $\mathcal E_1$ be channels in {\sffamily{hLP}} 
with identical input and output dimensions $d_{\mathrm{in}}$ and $d_{\mathrm{out}}$. 
They are \emph{strictly equivalent} if there exists a continuous map of hMPI-generating tensors
\begin{align}
    [0,1]\ni s \mapsto \mathcal{V}(s) =
    \begin{array}{c}
        \begin{tikzpicture}[scale=0.9,baseline={([yshift=-0.65ex] current bounding box.center)}]
            \IsomTensor{0,0}{\leglength}{0.4}{$\mathcal{V}(s)$}{0}
            \draw (-0.55*\leglength,0.65) node {\scriptsize $\chi$};
            \draw (0.85*\leglength,0.65) node {\scriptsize $d_{\mathrm{out}}$};
            \draw (1.3*\leglength,0.15) node {\scriptsize $D$};
            \draw (-1.3*\leglength,0.15) node {\scriptsize $D$};
            \draw (0.25,-0.65) node {\scriptsize $d_{\mathrm{in}}$};
        \end{tikzpicture}
    \end{array}
\quad D,\chi \text{ $\rm const.$} \;,
\end{align}
with associated MPI $V_N(s)$, such that
\begin{align}
    \mathcal{E}_i = \Tr_{\mathrm{pur}}[V_N(i)^{}(\cdot)V_N(i)^\dagger], \quad i \in \{ 0,1 \} \quad \forall N\;.
\end{align}
\end{defn}

\begin{defn}[Equivalence]
     Two channels $\mathcal{E}_1$ and $\mathcal{E}_2$ in {\sffamily{hLP}} are \emph{equivalent} if there exists integers $(p_1, p_2)$, $(p_1', p_2')$ such that $\mathcal{E}_1^{(p_1,p_1')}$ and $\mathcal{E}_2^{(p_2,p_2')}$ are strictly equivalent. 
\end{defn}

Strict equivalence essentially captures that two {\sffamily{hLP}} channels, with the same input-output dimensions, can be continuously connected within the {\sffamily{hLP}} class. Equivalence generalizes this to arbitrary input-output dimensions by allowing for auxiliary qubits and tracing out, but without affecting the resulting channel, except at the level of having a redundant input (which is traced out) and outputting an additional (fixed) product state. Note that our definitions, in the absence of the purification space, reduce to the established QCA classification in the MPU formalism~\cite{gross_index_2012,cirac_matrix_2017-1}. Our main result is:

\begin{thm}
    Any two channels in {\sffamily{hLP}} are equivalent.
\end{thm}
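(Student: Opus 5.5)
We first reduce to the case of matching dimensions, and then build an explicit continuous path between the two dilations, following the strategy sketched in the main text. Let $\mathcal E_1,\mathcal E_2$ be in {\sffamily hLP}, generated by MPI tensors $\mathcal V^{(1)},\mathcal V^{(2)}$ with input/output dimensions $(d^{(1)}_{\rm in},d^{(1)}_{\rm out})$ and $(d^{(2)}_{\rm in},d^{(2)}_{\rm out})$.

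\emph{Step 1: equalize dimensions.} We extend $\mathcal E_1$ by $\operatorname{Tr}^{(d^{(2)}_{\rm in})}\otimes\ket{0}\bra{0}_{d^{(2)}_{\rm out}}$, and $\mathcal E_2$ by $\operatorname{Tr}^{(d^{(1)}_{\rm in})}\otimes\ket{0}\bra{0}_{d^{(1)}_{\rm out}}$, with the tensor factors reordered so that both act on $\mathbb C^{d^{(1)}_{\rm in}}\otimes\mathbb C^{d^{(2)}_{\rm in}}\to\mathbb C^{d^{(1)}_{\rm out}}\otimes\mathbb C^{d^{(2)}_{\rm out}}$ per site. Both extensions remain in {\sffamily hLP}. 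For $\mathcal E_1^{(\cdot,\cdot)}$ the dilating tensor is $\mathcal W_0=\mathcal V^{(1)}\otimes\mathcal S$. Here $\mathcal S$ is the bond-trivial single-site isometry sending the extra input $\ket{i}$ to $\ket{0}_{\rm out}\otimes\ket{i}_{\rm pur}$: the traced input is swapped into the purification space, and a fixed $\ket 0$ is emitted on the extra output. $\mathcal W_0$ is a product of an hMPI and a product isometry, hence an hMPI with bond dimension $D_1$. The tensor $\mathcal W_1$ for the second channel is defined symmetrically, with bond dimension $D_2$. To give both a common bond space we take $\mathcal V^{(1)}\otimes\mathcal V^{(2)}$-type bonds, inserting a trivial bond of dimension $1$ where needed. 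We then enlarge both purification spaces to a common $\chi$ by appending a fixed $\ket0$ on the unused purification legs; this does not change the channels.

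\emph{Step 2: a common "both-running" form.} Add one ancilla-free trick: replace the input of the inactive tensor by its own purification copy. Concretely, write $\mathcal W_0$ as the tensor $\mathcal V^{(1)}\otimes\mathcal V^{(2)}$ precomposed with a local unitary $R$ that routes the physical input $d^{(2)}_{\rm in}$ into the purification, and feeds $\mathcal V^{(2)}$ with a fixed state. $\mathcal V^{(2)}$ acting on a fixed product input is still an hMPI (an MPS-like isometry from $\mathbb C$), so it is harmless. Its outputs are sent to purification legs by a local unitary $Q$ on the output side, which also emits $\ket0$ on the real outputs. Then $\mathcal W_0=(Q_0\otimes\Id)\,(\mathcal V^{(1)}\otimes\mathcal V^{(2)})\,R_0$, and likewise $\mathcal W_1=(Q_1\otimes\Id)\,(\mathcal V^{(1)}\otimes\mathcal V^{(2)})\,R_1$. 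Here $R_i$ and $Q_i$ are single-site unitaries on (input $\oplus$ fixed ancilla) and on (output $\oplus$ purification) spaces, respectively, with $\mathcal V^{(1)},\mathcal V^{(2)}$ each receiving a qubit/qudit register that is either a true input or a fixed ancilla state. To make this literal, each block gets its own ancilla register prepared in $\ket0$. The roles of "true input" and "ancilla" are then exchanged by a swap unitary $R$, and the roles of "real output" and "purification" by a swap unitary $Q$.

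\emph{Step 3: interpolate.} Since $U(n)$ is path connected, choose continuous paths $R(s)$ from $R_0$ to $R_1$ and $Q(s)$ from $Q_0$ to $Q_1$ (e.g. $e^{-i sH}$ interpolating the swaps). Set $\mathcal W(s)=(Q(s)\otimes\Id)(\mathcal V^{(1)}\otimes\mathcal V^{(2)})R(s)$ restricted to the fixed ancilla input. For each $s$ this is a single-site unitary/isometry composed with the tensor product of two hMPI tensors, so $V_N(s)=Q(s)^{\otimes N}(V^{(1)}_N\otimes V^{(2)}_N)R(s)^{\otimes N}\iota^{\otimes N}$ is an isometry for all $N$, with constant bond dimension $D_1D_2$ and constant $\chi$, and continuous in $s$. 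At $s=0,1$, tracing the purification reproduces the extended channels.

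The main obstacle is Step 2: checking carefully that the extra inputs, ancillas and outputs can be rearranged so that the endpoints are \emph{exactly} $\mathcal E_i^{(p_i,p_i')}$. In particular, the "idle" tensor must be fed a fixed state, with its outputs hidden in the purification and a fixed $\ket0$ emitted instead, and all of this must be accounted for in the enlarged dimensions. The continuity and isometry along the path are then immediate from connectedness of the unitary group.
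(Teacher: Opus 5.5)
Your proposal is correct and follows essentially the same route as the paper: both tensors run in parallel as $\mathcal V^{(1)}\otimes\mathcal V^{(2)}$ (bond dimension $D_1D_2$), the idle one is fed a fixed $\ket{0}$ with all its outputs sent to the purification, and single-site unitaries on the input and output sides (the paper's $W(s)$, $U(s)$, your $R(s)$, $Q(s)$) are interpolated between the identity and the register-swapping permutations using connectedness of the unitary group. The differences are only bookkeeping---you equalize dimensions via the products $d^{(1)}_{\rm in}d^{(2)}_{\rm in}$ and $d^{(1)}_{\rm out}d^{(2)}_{\rm out}$ rather than coprime multiples, and for a unitary $Q$ to ``emit'' $\ket{0}$ on the real outputs you need pass-through ancillas of dimensions $d^{(1)}_{\rm out}$, $d^{(2)}_{\rm out}$ prepared in $\ket{0}$ (or you can take $Q(s)$ to be a continuous path of isometries).
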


\begin{proof}
For simplicity, let us consider the proof for strict equivalence. We will comment in the end on how to generalize to mismatching input and output dimensions. 
    Given two channels  $\mathcal{E}_1$ and $\mathcal{E}_2$ in {\sffamily{hLP}} of bond dimensions $D_1$ and $D_2$, consider a choice of associated MPI tensors $\mathcal{V}_1$ and $\mathcal{V}_2$, with purification space $\chi_1$ and $\chi_2$ respectively, satisfying $\chi_1 p_1'' = \chi_2 p_2''$, where $p_1''$ and $p_2''$ can be chosen, wlog, to be coprimes. Construct the tensor $\mathcal{V}(s)$ as follows:
\begin{equation}
\label{deformation}
    \mathcal V(s) =
    \begin{tikzpicture}[scale=\defaultscaling,baseline={(0,-0.1)}]
        \definecolor{cream}{RGB}{255,255,221}
        \definecolor{purification}{RGB}{124, 138, 128}
        \pgfmathsetmacro{\height}{3.5}
        \pgfmathsetmacro{\spacing}{0.5}
        \pgfmathsetmacro{\shift}{1.}
        \foreach \i in {2,5} {
            \IdentityLine{\i*\spacing,0}{\height}
            \draw (\i*\spacing,-\height/2 - 0.3) node[rotate=-90] {\scriptsize $\ket{0}$};
        }
        \foreach \i in {0,3} {
            \IdentityLine{\i*\spacing ,-\height/4}{\height/2}
            \ifnum\i>0
                \draw (\i*\spacing,-\height/2 - 0.3) node[rotate=-90] {\scriptsize $\ket{0}$};
            \fi
        }
        \foreach \i in {0,...,5} {
            \IdentityLine{\i*\spacing,\height/4}{\height/2}
        }
        \foreach \i in {1,...,5} {
            \draw[very thick, color=purification] (\i*\spacing,\height/4) -- (\i*\spacing,\height/2);
        }
        \draw[very thick] (-0.5,-0.1) -- (5*\spacing + 0.5,-0.1);
        \draw[very thick] (-0.5,0.1) -- (5*\spacing + 0.5,0.1);
        \draw[ thick, fill=tensorcolor, rounded corners=2pt] (-0.2,-0.3) rectangle (5*\spacing + 0.2,0.3);
        \draw (5*\spacing / 2 ,0) node {\scriptsize ${\cal V}$};
        \draw[ thick, fill=cream, rounded corners=4pt] (-0.2,-0.3 + \shift) rectangle (5*\spacing + 0.2,0.3 +\shift);
        \draw (5*\spacing / 2 ,\shift) node {\scriptsize $U(s)$};
        \draw[ thick, fill=cream, rounded corners=4pt] (-0.2,-0.3 - \shift) rectangle (5*\spacing + 0.2,0.3 -\shift);
        \draw (5*\spacing / 2 ,-\shift) node {\scriptsize $W(s)$};
    \end{tikzpicture}
    =
        \begin{tikzpicture}[scale=\defaultscaling,baseline={(0,-0.1)}]
        \definecolor{cream}{RGB}{255,255,221}
        \definecolor{purification}{RGB}{124, 138, 128}
        \pgfmathsetmacro{\height}{3.5}
        \pgfmathsetmacro{\spacing}{0.5}
        \pgfmathsetmacro{\shift}{1.}
        \foreach \i in {0,1,2,5,6,7} {
            \IdentityLine{\i*\spacing,\height/4}{\height/2}
        }
        \foreach \i in {1,2,5,6,7} {
            \draw[very thick, color=purification] (\i*\spacing,\height/4) -- (\i*\spacing,\height/2);
        }
        \foreach \i in {2,7} {
            \IdentityLine{\i*\spacing,-\height/4}{\height/2}
            \ifnum\i>0
                \draw (\i*\spacing,-\height/2 - 0.3) node[rotate=-90] {\scriptsize $\ket{0}$};
            \fi
        }
        \foreach \i in {0,5} {
            \IdentityLine{\i*\spacing + \spacing/2,-\height/4}{\height/2}
            \ifnum\i>0
                \draw (\i*\spacing+ \spacing/2,-\height/2 - 0.3) node[rotate=-90] {\scriptsize $\ket{0}$};
            \fi
        }
        \draw[very thick] (-0.7,0) -- (\spacing + 0.7,0);
        \draw[very thick] (-0.7 + 5*\spacing,0) -- (6*\spacing + 0.7,0);
        \draw[ thick, fill=tensorcolor, rounded corners=2pt] (-0.2,-0.3) rectangle (\spacing + 0.2,0.3);
        \draw (\spacing / 2 ,0) node {\scriptsize $ \mathcal V_1$};
        \draw[ thick, fill=tensorcolor, rounded corners=2pt] (-0.2 + 5*\spacing,-0.3) rectangle (6*\spacing + 0.2,0.3);
        \draw (5.5*\spacing,0) node {\scriptsize $\mathcal V_2$};
        \draw[ thick, fill=cream, rounded corners=4pt] (-0.2,-0.3 + \shift) rectangle (7*\spacing + 0.2,0.3 +\shift);
        \draw (7*\spacing / 2 ,\shift) node {\scriptsize $U(s)$};
        \draw[ thick, fill=cream, rounded corners=4pt] (-0.2,-0.3 - \shift) rectangle (7*\spacing + 0.2,0.3 -\shift);
        \draw (7*\spacing / 2 ,-\shift) node {\scriptsize $W(s)$};
    \end{tikzpicture}
    ,
   \end{equation}
   with dimensions
   \be
            \begin{tikzpicture}[scale=\defaultscaling,baseline={(0,-0.1)}]
        \definecolor{cream}{RGB}{255,255,221}
        \pgfmathsetmacro{\height}{2.5}
        \pgfmathsetmacro{\spacing}{1.5}
        \pgfmathsetmacro{\shiftlabels}{.3}
        \foreach \i in {0,...,5} {
            \IdentityLine{\i*\spacing,\height / 4}{\height / 2}
        }
        \foreach \i in {0,2,3,5} {
            \IdentityLine{\i*\spacing,0}{\height}
        }
        \draw (0-\shiftlabels - 0.2,\height/2 - 0.3) node {\scriptsize $d_{\mathrm{out}}$};
        \draw (\spacing-\shiftlabels,\height/2 - 0.3) node {\scriptsize $\chi_1$};
        \draw (2*\spacing-\shiftlabels,\height/2 - 0.3) node {\scriptsize $p_1''$};
        \draw (3*\spacing-\shiftlabels - 0.2,\height/2 - 0.3) node {\scriptsize $d_{\mathrm{out}}$};
        \draw (4*\spacing-\shiftlabels,\height/2 - 0.3) node {\scriptsize $\chi_2$};
        \draw (5*\spacing-\shiftlabels,\height/2 - 0.3) node {\scriptsize $p_2''$};
        \draw (0-\shiftlabels - 0.1,-\height/2 + 0.3) node {\scriptsize $d_{\mathrm{in}}$};
        %\draw (\spacing-\shiftlabels,-\height/2 + 0.3) node {\scriptsize $\chi_1$};
        \draw (2*\spacing-\shiftlabels,-\height/2 + 0.3) node {\scriptsize $p_1''$};
        \draw (3*\spacing-\shiftlabels - 0.1,-\height/2 + 0.3) node {\scriptsize $d_{\mathrm{in}}$};
        %\draw (4*\spacing-\shiftlabels,-\height/2 + 0.3) node {\scriptsize $\chi_2$};
        \draw (5*\spacing-\shiftlabels,-\height/2 + 0.3) node {\scriptsize $p_2''$};
        \draw (5*\spacing + 0.5, 0.4) node {\scriptsize $D_1$};
        \draw (5*\spacing + 0.5, -0.4) node {\scriptsize $D_2$};
        \draw (0*\spacing - 0.5, 0.4) node {\scriptsize $D_1$};
        \draw (0*\spacing - 0.5, -0.4) node {\scriptsize $D_2$};
        \draw[very thick] (-1.0,-0.1) -- (5*\spacing + 1.0,-0.1);
        \draw[very thick] (-1.0,0.1) -- (5*\spacing + 1.0,0.1);
        \draw[ thick, fill=tensorcolor, rounded corners=2pt] (-0.2,-0.3) rectangle (5*\spacing + 0.2,0.3);
        \draw (5*\spacing / 2 ,0) node {\scriptsize ${\cal V}$};
    \end{tikzpicture}
   \ee
    As $U(s)$ and $W(s)$ endpoints, now choose identities for $s=0$ and respectively the cyclic permutation $\tau: (1,2,3,4,5,6) \;\mapsto\; (4,5,6,1,2,3)$ and $\tau': (1,2,3,4) \;\mapsto\; (3,4,1,2)$ for $s=1$. The path can be taken to be continuous, as the unitary group is connected, and the tensor generates an MPI for all $s$ (because of the unitarity of $U$ and $W$), and yields the correct endpoints $\mathcal{V}(0)$, $\mathcal{V}(1)$ (i.e.\ they generate $\mathcal{E}_1$, $\mathcal{E}_2$), by grouping and identifying the second to fifth output legs as $\chi$, i.e., the purification space to be traced out. 
    Graphically:
    \begin{align}
    \mathcal{V}(0) =
        \begin{tikzpicture}[scale=\defaultscaling,baseline={(0,-0.1)}]
        \definecolor{cream}{RGB}{255,255,221}
        \definecolor{purification}{RGB}{124, 138, 128}
        \pgfmathsetmacro{\height}{2.5}
        \pgfmathsetmacro{\spacing}{0.5}
        \pgfmathsetmacro{\shift}{1.}
        \foreach \i in {2,7} {
            \IdentityLine{\i*\spacing,0}{\height}
            \ifnum\i>1
                \draw (\i*\spacing,-\height/2 - 0.3) node[rotate=-90] {\scriptsize $\ket{0}$};
                \draw[very thick, color=purification] (\i*\spacing,-\height/2) -- (\i*\spacing,\height/2);
            \fi
        }
        \foreach \i in {0,1, 5, 6} {
            \IdentityLine{\i*\spacing,\height/4}{\height/2}
        }
        \foreach \i in {1, 5, 6} {
            \draw[very thick, color=purification] (\i*\spacing,0) -- (\i*\spacing,\height/2);
        }
        \foreach \i in {0, 5} {
            \IdentityLine{\i*\spacing + \spacing /2,-\height/4}{\height/2}
                        \ifnum\i>1
                \draw (\i*\spacing + \spacing /2,-\height/2 - 0.3) node[rotate=-90] {\scriptsize $\ket{0}$};
            \fi
        }
        \draw[very thick] (-0.7,0) -- (\spacing + 0.7,0);
        \draw[very thick] (-0.7 + 5*\spacing,0) -- (6*\spacing + 0.7,0);
        \draw[ thick, fill=tensorcolor, rounded corners=2pt] (-0.2,-0.3) rectangle (\spacing + 0.2,0.3);
        \draw (\spacing / 2 ,0) node {\scriptsize $\mathcal V_1$};
        \draw[ thick, fill=tensorcolor, rounded corners=2pt] (-0.2 + 5*\spacing,-0.3) rectangle (6*\spacing + 0.2,0.3);
        \draw (5.5*\spacing,0) node {\scriptsize $\mathcal V_2$};
    \end{tikzpicture},
    &&
    \mathcal{V}(1) =
        \begin{tikzpicture}[scale=\defaultscaling,baseline={(0,-0.1)}]
        \definecolor{cream}{RGB}{255,255,221}
        \pgfmathsetmacro{\height}{2.5}
        \pgfmathsetmacro{\spacing}{0.5}
        \pgfmathsetmacro{\shift}{1.}
        \foreach \i in {2,7} {
            \IdentityLine{\i*\spacing,0}{\height}
            \ifnum\i>1
                \draw (\i*\spacing,-\height/2 - 0.3) node[rotate=-90] {\scriptsize $\ket{0}$};
                \draw[very thick, color=purification] (\i*\spacing,-\height/2) -- (\i*\spacing,\height/2);
            \fi
        }
        \foreach \i in {0,1, 5, 6} {
            \IdentityLine{\i*\spacing,\height/4}{\height/2}
        }
        \foreach \i in {1, 5, 6} {
            \draw[very thick, color=purification] (\i*\spacing,0) -- (\i*\spacing,\height/2);
        }
        \foreach \i in {0, 5} {
            \IdentityLine{\i*\spacing + \spacing /2,-\height/4}{\height/2}
                        \ifnum\i>1
                \draw (\i*\spacing + \spacing /2,-\height/2 - 0.3) node[rotate=-90] {\scriptsize $\ket{0}$};
            \fi
        }
        \draw[very thick] (-0.7,0) -- (\spacing + 0.7,0);
        \draw[very thick] (-0.7 + 5*\spacing,0) -- (6*\spacing + 0.7,0);
        \draw[ thick, fill=tensorcolor, rounded corners=2pt] (-0.2,-0.3) rectangle (\spacing + 0.2,0.3);
        \draw (\spacing / 2 ,0) node {\scriptsize $\mathcal V_2$};
        \draw[ thick, fill=tensorcolor, rounded corners=2pt] (-0.2 + 5*\spacing,-0.3) rectangle (6*\spacing + 0.2,0.3);
        \draw (5.5*\spacing,0) node {\scriptsize $\mathcal V_1$};
    \end{tikzpicture}
    .
   \end{align}
    This shows the equivalence.

    Consider now the general case in which the input and output dimensions satisfy
\[
d_{\mathrm{in},1} p_1 = d_{\mathrm{in},2} p_2 \coloneq d_{\mathrm{in}},
\qquad
d_{\mathrm{out},1} p_1' = d_{\mathrm{out},2} p_2' \coloneq d_{\mathrm{out}},
\]
where $(p_1, p_2)$ and $(p_1', p_2')$ are pairs of coprime integers.
It suffices to understand how the additional input and output degrees of freedom manifest at the level of the MPI.
On the {output side}, the extra $p_1'$ and $p_2'$ legs correspond to ancillary systems prepared in the state $\ket{0}$. When forming the channels $\mathcal{E}_1$ and $\mathcal{E}_2$, these ancillary outputs are respectively traced out (or not). Graphically, this is represented by grey (or black) legs, respectively.
On the {input side}, the extra $p_1$ and $p_2$ legs correspond to simple identities. These legs are always traced out in the definition of the channels. Concretely, the former trace over an additional input space of dimension $p_1$, while the latter traces over one of dimension $p_2$. Graphically, these legs are therefore gray. At the endpoints of the path, either the input of the $p_1$ legs or of the $p_2$ legs is fixed to some ket $\ket{0}$ respectively.
    This is shown graphically for the initial point of the path:
    \begin{equation}
    \mathcal{V}(0) =
        \begin{tikzpicture}[scale=\defaultscaling,baseline={(0,-0.1)}]
        \definecolor{cream}{RGB}{255,255,221}
        \definecolor{purification}{RGB}{124, 138, 128}
        \pgfmathsetmacro{\height}{2.5}
        \pgfmathsetmacro{\spacing}{0.5}
        \pgfmathsetmacro{\shift}{1.}
        \foreach \i in {2} {
            \draw (\i*\spacing,-\height/2 - 0.3) node[rotate=-90] {\scriptsize $\ket{0}$};
            \draw[very thick, color=purification] (\i*\spacing,-\height/2) -- (\i*\spacing,\height/2);
        }
        \foreach \i in {4} {
            \draw (\i*\spacing,-\height/2 - 0.3) node[rotate=-90] {\scriptsize $\ket{0}$};
            \draw[very thick, color=black] (\i*\spacing,-\height/2) -- (\i*\spacing,\height/2);
        }
        \foreach \i in {3} {
            \draw[very thick, color=purification] (\i*\spacing,-\height/2) -- (\i*\spacing,\height/2);
        }
        \foreach \i in {0,1} {
            \IdentityLine{\i*\spacing,\height/4}{\height/2}
        }
        \foreach \i in {1} {
            \draw[very thick, color=purification] (\i*\spacing,0) -- (\i*\spacing,\height/2);
        }
        \foreach \i in {0} {
            \IdentityLine{\i*\spacing + \spacing /2,-\height/4}{\height/2}
                        \ifnum\i>1
                \draw (\i*\spacing + \spacing /2,-\height/2 - 0.3) node[rotate=-90] {\scriptsize $\ket{0}$};
            \fi
        }
        \draw[very thick] (-0.7,0) -- (\spacing + 0.7,0);
        \draw[ thick, fill=tensorcolor, rounded corners=2pt] (-0.2,-0.3) rectangle (\spacing + 0.2,0.3);
        \draw (\spacing / 2 ,0) node {\scriptsize $\mathcal V_1$};

        \begin{scope}[xshift= 7*\spacing cm]
            \foreach \i in {2} {
            \draw (\i*\spacing,-\height/2 - 0.3) node[rotate=-90] {\scriptsize $\ket{0}$};
            \draw[very thick, color=purification] (\i*\spacing,-\height/2) -- (\i*\spacing,\height/2);
        }
        \foreach \i in {4} {
            \draw (\i*\spacing,-\height/2 - 0.3) node[rotate=-90] {\scriptsize $\ket{0}$};
            \draw[very thick, color=purification] (\i*\spacing,-\height/2) -- (\i*\spacing,\height/2);
        }
        \foreach \i in {3} {
            \draw[very thick, color=purification] (\i*\spacing,-\height/2) -- (\i*\spacing,\height/2);
            \draw (\i*\spacing,-\height/2 - 0.3) node[rotate=-90] {\scriptsize $\ket{0}$};
        }
        \foreach \i in {0,1} {
            \draw[very thick, color=purification] (\i*\spacing,0) -- (\i*\spacing,\height/2);
        }
        \foreach \i in {1} {
            \draw[very thick, color=purification] (\i*\spacing,0) -- (\i*\spacing,\height/2);
        }
        \foreach \i in {0} {
            \IdentityLine{\i*\spacing + \spacing /2,-\height/4}{\height/2}
            \draw (\i*\spacing + \spacing /2,-\height/2 - 0.3) node[rotate=-90] {\scriptsize $\ket{0}$};
                        \ifnum\i>1
                \draw (\i*\spacing + \spacing /2,-\height/2 - 0.3) node[rotate=-90] {\scriptsize $\ket{0}$};
            \fi
        }
        \draw[very thick] (-0.7,0) -- (\spacing + 0.7,0);
        \draw[ thick, fill=tensorcolor, rounded corners=2pt] (-0.2,-0.3) rectangle (\spacing + 0.2,0.3);
        \draw (\spacing / 2 ,0) node {\scriptsize $\mathcal V_2$};
        \end{scope}
        
    \end{tikzpicture}.
   \end{equation}
\end{proof}

Notice that, e.g.,\ shift and identity, treated as unitary channels, are now connected. In fact, we have proven that any two unitary \emph{channels}, even if they are formed by unitaries with a distinct QCA index, are equivalent. The reason for this apparent contradiction is that, when treated as channels, we allow for a purification space for the unitaries (in the form of an isometry), even if it is redundant for realizing the unitary. However, this purification space, which is eventually traced out, is what makes possible the smooth deformation of one unitary channel into the other, which is not always possible in the space of homogeneous MPUs alone.

It is instructive to recall the role of ancillary degrees of freedom in this context. While not all QCA admit a representation as finite-depth quantum circuits, it has been shown that the inclusion of ancillas allows any QCA to be trivialized~\cite{arrighi_unitarity_2009,farrelly_causal_2014}. Ancillas thus provide additional freedom that can be leveraged to implement otherwise nontrivial dynamics in a local manner. The underlying mechanism can be formalized using the global swap construction: one considers two copies of the system, evolving one via the original QCA dynamics and the other via the inverse, and then decomposes the resulting joint dynamics into local unitaries acting on non-overlapping neighborhoods~\cite{farrelly_review_2020}. In this construction, the dynamics of the joint system can be implemented in parallel as a finite-depth circuit, demonstrating that any QCA is effectively localizable once ancillas are included. The spirit of our construction is similar: the purification space can be leveraged to move information around, and then, crucially, it can be traced out (which corresponds to the ancillas decoupling in the QCA construction). These features enable deformations for channels that are not allowed in the setting of MPUs.

Let us finally stress that the previous definitions and procedure define a proper \textit{equivalence} relation. Indeed, any hLP channel can be "deformed" to itself (reflexivity). Moreover, if $a$ can be mapped to $b$ by a map $U(s)$ with $s \in [0,1]$, then the same map, with the parameter taken in reverse order $s \in [1,0]$, maps $b$ back to $a$ (symmetry). Finally, if $a$ can be mapped to $b$ and $b$ to $c$, one can apply the construction of \cref{deformation} with three tensors instead of two, applying unitaries to the first and second and subsequently to the second and third tensors, thereby obtaining a mapping from $a$ to $c$ (transitivity).

\section{Area-law Preserving Operations}\label{arealaw}

In this section, we characterize area-law-preserving quantum operations in 1D.

\subsection{Unitaries Preserving the Area Law}

For pure states, the area law is typically formulated in terms of the bipartite entanglement entropy. Here we consider the strictest form of the law, quantified by the $\alpha = 0$ R\'enyi entropy of the reduced state, $S_0 (\rho) = \log \rank (\rho)$. MPS satisfy this area law by construction, since their entanglement entropy across any bipartition is upper bounded by the logarithm of the bond dimension. Vice versa, every 1D pure state satisfying the area law is also a constant bond dimension MPS~\cite{cirac_matrix_2021}.

One may therefore ask which closed-system evolutions preserve the area law, namely which unitaries map MPS to MPS. This is equivalent to seeking the class of unitary transformations that do not increase the Schmidt rank across any bipartition beyond a constant multiplicative factor for all system sizes. MPUs have this property by construction since their operator Schmidt rank is constant. However, there are unitaries beyond this class mapping MPS to MPS: consider, for instance, the permutation that reflects the chain of qubits $i = 1,\dots N$ as $i \mapsto N-i+1$. This permutation preserves the MPS structure, but its operator Schmidt rank along the middle bipartition is $d^{N}$ ($N$ even), i.e., the maximal possible. This can be understood as a many-body generalization of the fact that the swap operator does not generate bipartite entanglement, but has maximal operator Schmidt rank.

Nevertheless, MPUs emerge as the \emph{unique} class preserving the 1D area law if one demands that this property be stable under extension by ancillas. That is, we now consider the case where each system qudit is augmented with an ancilla of dimension $d_{\rm aux}$. We suppose the system is initialized in an MPS over the $N$-site chain $(\mathbb C^d \otimes \mathbb C^{d_{\rm aux}})^{\otimes N}$ (with system-ancilla qubits possibly entangled) and demand that the evolution $U_N \otimes \Id_{\rm aux}$ preserves the MPS structure. This scenario is a natural one from the perspective of quantum information, reminiscent, e.g., of the complete positivity setting.

\begin{prop}\label{prop:choimpu}
Let $U_N$ be a unitary over $(\mathbb C^d )^{\otimes N}$. The following are equivalent.
\begin{enumerate}[(i)]
    \item $U_N$ is an MPU of constant bond dimension.
    \item $U_N \otimes \Id_{\rm aux}$ acting on a chain $(\mathbb C^d \otimes \mathbb C^{d_{\rm aux}})^{\otimes N}$ preserves the 1D area law, where $d_{\rm aux} \ge d$ is a fixed constant.
\end{enumerate}
\end{prop}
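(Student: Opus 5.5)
The implication (i)$\Rightarrow$(ii) is immediate and I would dispose of it first. If $U_N$ is an MPU with bond dimension $D_U$, then $U_N\otimes\Id_{\rm aux}$ is an MPO with the same bond dimension on the enlarged chain. Applying it to an MPS of bond dimension $D_\psi$ gives an MPS of bond dimension at most $D_U D_\psi$. So across every cut the Schmidt rank grows by at most the constant factor $D_U$, and the area law (in the $S_0$ sense) is preserved.

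For (ii)$\Rightarrow$(i) I would use a Choi-type argument, as the name of the proposition suggests.

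\emph{Choice of input.} Since $d_{\rm aux}\ge d$, embed $\mathbb C^d\hookrightarrow\mathbb C^{d_{\rm aux}}$ and take as input the product state $\ket{\Phi^+}^{\otimes N}$. Here each site carries a maximally entangled pair between the system qudit and its own ancilla. This is a product state, hence an MPS of bond dimension $1$.

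\emph{Reading off the operator Schmidt rank.} The output $(U_N\otimes\Id)\ket{\Phi^+}^{\otimes N}$ is, up to normalization, the vectorization $|U_N\rangle\!\rangle$ of $U_N$, with site $k$ carrying the pair (output index, input index). Consider any bipartition of the chain into a left and right part. The Schmidt rank of this vector across the cut equals the operator Schmidt rank of $U_N$ across the same cut. By (ii) the output is an MPS with bond dimension bounded by a constant $D$ independent of $N$. Strictly, the area law gives an $N$-independent bound on the Schmidt rank of every cut. One must read (ii) as ``a constant-bond-dimension MPS is mapped to one whose bond dimension is bounded by a function of the input bond dimension only''. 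With that reading, every contiguous bipartition of $U_N$ has operator Schmidt rank at most $D$.

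\emph{Reconstructing an MPO.} Apply the successive-SVD construction recalled in Sec.~\ref{sec:TensorNetworksFormulation} to $|U_N\rangle\!\rangle$. This yields an MPS representation whose bond dimension at each cut equals that Schmidt rank. Hence every bond is at most $D$. Reshaping each site tensor back into operator form gives an MPO representation of $U_N$ with open boundary and uniformly bounded bond dimension, which is an MPU of constant bond dimension in the sense of the (inhomogeneous) {\sffamily MPQC} definition.

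The main obstacle is interpretational rather than technical. One needs (a) the bond-dimension bound to be uniform over input states of fixed bond dimension, and (b) the notion of ``MPU'' in (i) to be the general, possibly inhomogeneous, bounded-bond-dimension one. If (i) instead meant a homogeneous MPU with periodic boundary, the SVD construction would not deliver it, and I would restrict the claim to the open-boundary class. A smaller point is checking that a maximally entangled pair fits inside $\mathbb C^d\otimes\mathbb C^{d_{\rm aux}}$; this is exactly where the hypothesis $d_{\rm aux}\ge d$ is used. I would also remark that without ancillas the reflection example shows the implication fails, which explains why the hypothesis is needed.
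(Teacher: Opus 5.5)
Your proposal is correct and follows the paper's proof: (i)$\Rightarrow$(ii) holds because bond dimensions multiply, and for the converse you feed in the bond-dimension-one product of Bell pairs, so that the output is the Choi state of $U_N$, whose Schmidt rank across each cut equals the operator Schmidt rank of $U_N$. Your extra steps, namely the explicit successive-SVD reconstruction of an open-boundary MPO and reading ``MPU'' in the general bounded-bond-dimension sense, only spell out what the paper leaves implicit; caveat (a) is not needed, because area-law preservation for the single family $\{\ket{\Phi^+}^{\otimes N}\}_N$ already gives an $N$-independent bound on every cut.
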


\begin{proof}

$(i) \Rightarrow (ii)$: Follows directly from the fact that the bond dimension of $(U_N \otimes \Id_{\rm aux} )\ket{\Phi}_N$ is at most $D\cdot D_{\rm MPU}$, where $D_{\rm MPU}$ is the bond dimension of $U_N$ and $D$ of the input MPS $\ket{\Phi}_N$.

$(ii) \Rightarrow (i)$: Consider the case $d_{\rm aux} = d$ and as input the $D = 1$ MPS
\[
|\Phi\rangle_N=\bigotimes_{i=1}^N |\Phi^+\rangle_{i i},
\]
where $|\Phi^+\rangle_{i i} = d^{-1/2} \sum_{j=1}^d \ket{jj}_{ii}$ is a maximally entangled Bell state between the $j^{\rm th}$ system qudit and its corresponding ancilla. The action $C_{U_N} = (U_N \otimes \Id_{\rm aux}) \ket{\Phi}_N$ is nothing but the Choi state of $U_N$. The result follows since for any bipartition of the chain, the Schmidt rank of $C_{U_N}$ across the cut is equal to the operator Schmidt rank of $U$. The case $d_{\rm aux} \ge d$ follows similarly.
\end{proof}

\subsection{Quantum Channels Preserving the Area Law}

We now turn to mixed states and quantum channels. In this setting, the notion of an area law is less straightforward. A natural candidate is the mutual information across bipartitions \cite{Wolf_2008},
\[
I(A : A^C)_\rho = S(\rho_A) + S(\rho_{A^C}) - S(\rho) \;.
\]
This is a quantity that captures both classical and quantum correlations \cite{Groisman_2005}.
However, for a generic MPDO, it is not known whether the mutual information is bounded solely in terms of the bond dimension. Moreover, replacing the von Neumann entropy with the R\'enyi version does not lead to a meaningful measure of correlations: the resulting quantity loses the property of being monotonic under local quantum channels, which is essential for any reasonable quantifier of correlation.

To bypass these difficulties, we directly define the area-law in terms of the local purification of an MPDO. Recall that an MPDO admits a purification in the form of an MPS with --- a priori --- a  high-enough bond dimension (see the discussion before \cref{def:mpqc_class}). 
We define the area-law as follows: an MPDO is said to satisfy an area law if there exists a purifying MPS which satisfies an area law for $S_0$, i.e., if there exists a purification that can be expressed as an MPS with a constant (i.e., $N$-independent) bond dimension.

Notice that, since the purification satisfies an entanglement area law, one also obtains a corresponding bound on the mutual information of the resulting mixed state. Indeed, if
\[
\rho=\operatorname{Tr}_E |\Psi\rangle\!\langle\Psi|
\]
with $|\Psi\rangle$ an MPS of bond dimension $D$ and the partial trace over the purifying environment $E$, then for every bipartition the resulting MPDO satisfies
\[
I(A:A^c)_\rho
\le 2 S(\rho)
\le 2 \log D,
\] where $I(A:A^c)$ is the mutual information between $A$ and its complement. Bounded purification bond dimension hence implies a mutual-information area law. In that sense, our version of the area law is a strong one. Similar bounds follow as a consequence of our area-law definition for other quantifiers of total (i.e., quantum and classical) correlations, such as the R\'enyi entanglement of purification for any $\alpha \ge 0$~\cite{Terhal_2002,Guth_Jarkovsky_2020}.

In analogy to the unitary case, one may ask which quantum channels preserve the area-law property.
The resulting objects are matrix product quantum channels in local purification form, introduced earlier as LP channels.

%\begin{prop}
%Let $\rho$ be an LPDO with bounded purification bond dimension, and let $\mathcal{E}$ be a quantum channel acting on a one-dimensional system. Then $\mathcal{E}$ preserves the area-law structure of all such states (i.e., maps finite-bond-dimension LPDOs to finite-bond-dimension LPDOs) under arbitrary finite ancilla extensions (i.e., under maps of the form $\mathcal{E} \otimes \Id_{A}$ for any finite-dimensional ancillary system $A$) if and only if $\mathcal{E}$ is an LP channel.
%\end{prop}

\begin{prop}
Let $\mathcal E_N$ be a quantum channel over $(\mathbb C^d)^{\otimes N}$. The following are equivalent.
\begin{enumerate}[(i)]
    \item $\mathcal E_N$ admits a local purification with bounded bond dimension, i.e., $\mathcal E_N \in \text{\sffamily LP}$.
    \item $\mathcal E_N \otimes \Id_{\rm aux}$ acting on a mixed state over the chain $(\mathbb C^d \otimes \mathbb C^{d_{\rm aux}})^{\otimes N}$ preserves the 1D area law, where $d_{\rm aux} \ge d$ is a fixed constant.
\end{enumerate}
\end{prop}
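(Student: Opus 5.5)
We follow the proof of \cref{prop:choimpu}, with the Choi state of the channel in place of that of the unitary and the LP purification in place of the MPS structure.

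\emph{(i) $\Rightarrow$ (ii).} Let $\rho$ be an area-law mixed state on $(\mathbb C^d\otimes\mathbb C^{d_{\rm aux}})^{\otimes N}$. By our definition it has a purifying MPS $\ket{\Psi}$ of bond dimension $D$, with local purification legs of some dimension $\kappa$. Let $V_N$ be the MPI Stinespring dilation of $\mathcal E_N$, with bond dimension $D_V$. The vector $(V_N\otimes \Id_{\rm aux}\otimes\Id_{\kappa})\ket{\Psi}$ is obtained by contracting the MPS and MPI tensors site by site, so it is an MPS of bond dimension at most $D D_V$. The purification legs of $V_N$, together with those of $\ket\Psi$, are grouped into a new local environment leg. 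Tracing this environment gives exactly $(\mathcal E_N\otimes\Id_{\rm aux})(\rho)$. Hence the output has a constant-bond-dimension purification and obeys the area law.

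\emph{(ii) $\Rightarrow$ (i).} Take $d_{\rm aux}=d$ and the input $\ket{\Phi}_N=\bigotimes_i\ket{\Phi^+}_{ii}$, which is a product state and trivially area law. Its image $(\mathcal E_N\otimes\Id_{\rm aux})(\ket\Phi\bra\Phi_N)$ is the normalized Choi state $C_N/d^N$. By (ii), this state admits a purifying MPS $\ket{\Omega}$ with constant bond dimension $D'$ and local environment legs $e_i$. We regroup each site tensor of $\ket\Omega$ so that the ancilla leg $a_i$ (a copy of the input index) becomes an input leg via the transpose trick:
\[
\mathcal V^{i}_{o,e}=\sqrt{d}\,\Omega^{o,a=i,e}.
\]
This gives an MPO $V_N:(\mathbb C^d)^{\otimes N}\to(\mathbb C^{d}\otimes\mathbb C^{\chi})^{\otimes N}$ of bond dimension $D'$. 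By construction $\Tr_{\rm pur}[V_N(\cdot)V_N^\dagger]$ has Choi state $C_N$, so it equals $\mathcal E_N$, and $V_N$ is a Stinespring operator with bounded bond dimension. The case $d_{\rm aux}>d$ reduces to this one by embedding $\mathbb C^d\subset\mathbb C^{d_{\rm aux}}$, which only restricts the ancilla index range.

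The main point to check is trace preservation of the dilation, i.e.\ $V_N^\dagger V_N=\Id$. We have $V_N^\dagger V_N=\Tr_{\rm out}(C_N)^T$, and this equals $\Id$ because $\mathcal E_N$ is TP. So $V_N$ is automatically an isometry, hence an MPI, and $\mathcal E_N\in$ {\sffamily LP}.

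A subtle point is that the purification guaranteed by (ii) may use an arbitrary global environment rather than local legs. Our definition of the area law, however, requires the purifying MPS to have local environment legs of bounded dimension, so each site's environment is a local leg and the regrouping above is legitimate. A further subtlety concerns the sequence over $N$: the bond dimension must be uniform in $N$, and this follows because the area-law constant in (ii) is $N$-independent.
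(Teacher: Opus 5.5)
Your proposal is correct and follows the paper's proof. For (i)$\Rightarrow$(ii) you contract the purifying MPS with the MPI dilation, giving bond dimension at most $D\cdot D_{\rm LP}$; for (ii)$\Rightarrow$(i) you feed in the product of Bell pairs, so that the output is the Choi state, and reread its bounded-bond-dimension purification as the local Stinespring isometry. You also spell out two steps the paper leaves implicit: the index regrouping, and the check that $V_N^\dagger V_N=\Id$ follows from trace preservation.
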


\begin{proof}
Both directions follow as in the unitary case, working with purifications.

$(i) \Rightarrow (ii)$: Follows directly from the fact that, if $\rho_N$ is an MPDO with purifying MPS of bond dimension $D$, then $(\mathcal E_N \otimes \Id_{\rm aux} )(\rho_N)$ admits a purifying MPS of bond dimension at most $D\cdot D_{\rm LP}$, where $D_{\rm LP}$ is the bond dimension of the purifying isometry of $\mathcal E_N$.

$(ii) \Rightarrow (i)$: Consider the case $d_{\rm aux} = d$ and as input the $D = 1$ MPDO $\rho_N = \ket{\Phi}_N \bra{\Phi}$, where
\[
|\Phi\rangle_N=\bigotimes_{i=1}^N |\Phi^+\rangle_{i i},
\]
and $|\Phi^+\rangle_{i i} = d^{-1/2} \sum_{j=1}^d \ket{jj}_{ii}$ is a maximally entangled Bell state between the $j^{\rm th}$ system qudit and its corresponding ancilla. The action $C_{\mathcal E_N} = (\mathcal E_N \otimes \Id_{\rm aux}) \ket{\Phi}_N\bra{\Phi}$ is nothing but the Choi state of $\mathcal E_N$. Hence the Choi state of $\mathcal{E}_N$ admits a local purification with finite bond dimension. We conclude that $\mathcal{E}$ is an LP channel. The case $d_{\rm aux} \ge d$ follows similarly.
\end{proof}

\section{Beyond Homogeneous Channels}
\label{sec:sMPI}

As discussed in \cref{sec:MPIstructure}, channels in {\sffamily{hLP}} do not produce long-range entanglement, i.e., any reduced state on disjoint regions will be a product state as long as the regions are well separated and the input state is short-range correlated. However, consider the following example, analyzed in the main text.
\begin{ex}
\label{longrangeisometry}
    Consider the quantum channel $\mathcal{E}(\cdot) = \Tr(\cdot) \, \ket{\mathrm{GHZ}}\bra{\mathrm{GHZ}}$ acting on $N$ qubits. For all $N$, it admits a purifying MPI with $D_{\rm MPI} = \chi = 2$
\begin{align*} 
V_N = \frac{1}{\sqrt{2}} \;
\begin{tikzpicture}[scale=0.45,baseline={([yshift=-0.65ex] current bounding box.center)}] 
        \pgfmathsetmacro{\squarehalf}{0.4}
        \pgfmathsetmacro{\halfunit}{\squarehalf+\leglength}
        \SingleTrLeft{(-4*\leglength,0)} 
        \draw [very thick] (11*\leglength,0) to (16*\leglength,0); 
        \SingleTrRight{(16*\leglength,0)} 
        \draw [very thick] (-4*\leglength,0) to (0,0); 
        \GHZTensor{0,0}{\leglength}{\squarehalf}{0} 
        \GHZTensor{2*\halfunit,0}{\leglength}{\squarehalf}{0} 
        \SingleDots{4*\halfunit, 0}{2*\leglength} 
        \GHZTensor{6*\halfunit,0}{\leglength}{\squarehalf}{0}
        % y > 0 part
        \begin{scope}
          \clip (0,0) rectangle (7,1.5);
          \draw[very thick,color=purification]
            (\halfunit,\halfunit) .. controls (\halfunit,0) and (0,-0.5*\halfunit) .. (0,-1.5*\halfunit);
          \draw[very thick,color=purification]
            (3*\halfunit,\halfunit) .. controls (3*\halfunit,0) and (2*\halfunit,-0.5*\halfunit) .. (2*\halfunit,-1.5*\halfunit);
          \draw[very thick,color=purification]
            (7*\halfunit,\halfunit) .. controls (7*\halfunit,0) and (6*\halfunit,-0.5*\halfunit) .. (6*\halfunit,-1.5*\halfunit);
        \end{scope}
        % y < 0 part
        \begin{scope}
          \clip (-0.2,-1.5) rectangle (7,0);
          \draw[very thick]
            (\halfunit,\halfunit) .. controls (\halfunit,0) and (0,-0.5*\halfunit) .. (0,-1.5*\halfunit);
          \draw[very thick]
            (3*\halfunit,\halfunit) .. controls (3*\halfunit,0) and (2*\halfunit,-0.5*\halfunit) .. (2*\halfunit,-1.5*\halfunit);
          \draw[very thick]
            (7*\halfunit,\halfunit) .. controls (7*\halfunit,0) and (6*\halfunit,-0.5*\halfunit) .. (6*\halfunit,-1.5*\halfunit);
        \end{scope}
    \end{tikzpicture}
    \;,
\end{align*}
where
$
    \begin{tikzpicture}[scale=0.6,baseline={([yshift=-0.65ex] current bounding box.center)}] 
        \pgfmathsetmacro{\squarehalf}{0.4}
        \pgfmathsetmacro{\labels}{\squarehalf+\leglength/2}
        \GHZTensor{0,0}{\leglength}{\squarehalf}{0} 
        \draw (\labels,0.2) node {\scriptsize 0};
        \draw (-\labels,0.2) node {\scriptsize 0};
        \draw (0.2,\labels) node {\scriptsize 0};
    \end{tikzpicture}
    =
    \begin{tikzpicture}[scale=0.6,baseline={([yshift=-0.65ex] current bounding box.center)}] 
        \pgfmathsetmacro{\squarehalf}{0.4}
        \pgfmathsetmacro{\labels}{\squarehalf+\leglength/2}
        \GHZTensor{0,0}{\leglength}{\squarehalf}{0}
        \draw (\labels,0.2) node {\scriptsize 1};
        \draw (-\labels,0.2) node {\scriptsize 1};
        \draw (0.2,\labels) node {\scriptsize 1};
    \end{tikzpicture}
    =1
$
    or vanishing otherwise is the local tensor of an unnormalized $\mathrm{GHZ}$ state. The resulting channel generates long-range entanglement due to the correlations present in the $\mathrm{GHZ}$ state.
\end{ex}
This example seemingly violates the no-long-range-entanglement constraint of \cref{SMdepth2}. In reality, no discrepancy arises because this example does not fit within the scope of the theorem: homogeneity is broken, as an additional constant is required to normalize the state generated by the Kronecker Delta. Specifically, it generates the \emph{unnormalized} $\mathrm{GHZ}$ state, which has norm \(2\), and hence the tensor ${\cal V}$ generates a bona fide isometry only after a \emph{system-size independent normalization}: $\frac{1}{2}V_{N}^\dagger V_{N} = \Id_d^{\otimes N}$. The corresponding $\mathcal V$ tensor, thus, does not generate a homogeneous MPI -- these do not create long-range entanglement, on the basis of \cref{SMdepth2}. The requirement of the normalization constant is precisely what separates this example from the {\sffamily{hLP}} class. This motivates us to introduce and analyze the following generalization of this class, taking into account constants.

\subsection{Structure of scaled Homogeneous Matrix Product Isometries (sMPI)}

Motivated by the last example, we examine what happens when the strict isometry requirement $V_{N}^\dagger V_{N} = \Id_d^{\otimes N}$ is relaxed to allow a normalization factor which does not depend on the system size.
\begin{defn} [Scaled Homogeneity]
    An MPO tensor $\mathcal V$ generates a \emph{scaled homogeneous MPI} (sMPI) if there exists a (system-size independent) constant $c>0$ such that $V_N^\dagger V_N = c \Id$ for all $N$. The resulting class is denoted as {\sffamily sMPI}, and {\sffamily sLP} denotes the class proportional to channels, generated after tracing out a purification space.
\end{defn}
The different operations here should not be interpreted as unphysical, but instead as requiring a ``by-hand'' normalization, not included in the tensor-network description. We will show later in \cref{multiplicityrule} that $c$ is more specifically an integer. We have shown that isometries in {\sffamily sMPI} can create long-range correlations, as opposed to {\sffamily hMPI}, and thus we turn to analyze the structure of this class, which has very different physical properties.

Notice that in our definition we could consider, for instance, $c$ to be system-size dependent, giving rise to ${\overline{\text{\sffamily{hMPI}}}}$ in our earlier notation. An example in the latter class is ${V} = {U} \bigl(\ket{{{\rm AKLT}}}\otimes\Id\bigr)$, where \(U\) is a QCA unitary and AKLT denotes the ground state of the corresponding model, which is a homogeneous MPS with exponentially decaying correlations~\cite{affleck1987rigorous}. This shows that
\begin{align}
    \text{\sffamily hMPI} \subset \text{\sffamily sMPI} \subset {\overline{\text{\sffamily{hMPI}}}}
\end{align}
This distinction in classes can also be understood in close analogy with MPS: if a normalization for all system sizes is required, the only allowed states are fixed-point states~\cite{cirac:mpdo-rgfp}, such as product states. If we allow for a normalization constant independent of the system size, we find long-range entangled states such as the $\mathrm{GHZ}$ state. Then, if we allow for a normalization constant that can depend on the system size, we get normal MPS states like the $\mathrm{AKLT}$.

To analyze the general structure of {\sffamily sMPI}, we need some further definitions. Given a tensor ${\cal V}$ generating a sMPI, we define the transfer matrix $E$ and the operators $S_\alpha$ as
\begin{align}
    \label{EandSdefs}
    E &=  \frac{1}{d_\mathrm{in}}\begin{tikzpicture}[scale=\defaultscaling,baseline={([yshift=-0.65ex] current bounding box.center)}]
    \TransferMatrixTensor{0,0}{\leglength}{0.4}{${\cal V}$}{0}
    \end{tikzpicture}, &
    S_\alpha &= \frac{1}{d_\mathrm{in}}\begin{tikzpicture}[scale=\defaultscaling,baseline={([yshift=-0.65ex] current bounding box.center)}]
    \SMatrixTensorMiddleSigma{0,0}{\leglength}{0.4}{${\cal V}$}{0}{$\sigma_\alpha^T$}{0.5}{\leglength / 2 + 0.5}
    \end{tikzpicture},
\end{align}
where $\sigma_\alpha$, for $\alpha = 1, \dots, d_{\mathrm{in}}^2 - 1$ are orthonormal Hermitian traceless operators that form a basis together with $\sigma_0 = \mathbb{I}$.
With these definitions at hand, for any tensor generating a sMPI the following decomposition holds:
\begin{equation}
\label{sitedoubledecomposition}
    \begin{tikzpicture}[scale=\defaultscaling,baseline={([yshift=-0.65ex] current bounding box.center)}]
    \SingleSiteDouble{0,0}{\leglength}{0.4}{${\cal V}$}{0}
    \end{tikzpicture} = 
    E \otimes \Id + \sum_{\alpha} S^{\alpha} \otimes \sigma_{\alpha}
\end{equation}

Moreover, we use the concept of basis of normal tensors and orthogonality. Some of the definitions refer to MPS generating tensors, i.e., without differentiating the input/output/purification spaces of the MPO tensor.

\begin{defn}[Basis of Normal Tensors \cite{cirac:mpdo-rgfp}]
    A basis of normal tensors (BNT) for $A$ is a set of normal tensors $A_j \; {(j=1, \ldots, g)}$ so that (i) for each $N$, $\ket{\Psi({A})}_{N}$ can be written as a linear combination of $\ket{\Psi({A_j})}_{N}$, and (ii) there exists some $N_0$ such that for all $N>N_0$, the $\ket{\Psi({A_j})}_{N}$ are linearly independent.
\end{defn}

\begin{defn}[Orthogonality]
Two tensors, $A_j, A_{j^{\prime}}$ are \emph{weakly orthogonal} if
\begin{equation}
    \begin{array}{c}
    \begin{tikzpicture}[scale=\defaultscaling,baseline={([yshift=-0.8ex] current bounding box.center)}]
    \pgfmathsetmacro{\labelposition}{\leglength + 0.4}
        \TransferMatrixTensor{0,0}{\leglength}{0.4}{$$}{0}
        \draw (0,\labelposition) node {\scriptsize $\bar{A}_j$};
        \draw (0,-\labelposition) node {\scriptsize  $A_{j^{\prime}}$};
    \end{tikzpicture}
\end{array}
=0 \quad \forall j \ne j'.
\end{equation}
Two tensors, $A_j, A_{j^{\prime}}$ are \emph{strongly orthogonal} if
\begin{equation}
    \begin{array}{c}
    \begin{tikzpicture}[scale=\defaultscaling,baseline={([yshift=-0.8ex] current bounding box.center)}]
    \pgfmathsetmacro{\labelposition}{\leglength + 0.4}
        \SingleSiteDouble{0,0}{\leglength}{0.4}{$$}{0}
        \draw (0,\labelposition) node {\scriptsize $\bar{A}_j$};
        \draw (0,-\labelposition) node {\scriptsize  $A_{j^{\prime}}$};
    \end{tikzpicture}
\end{array}
=0 \quad \forall j \ne j'.
\end{equation}
We will say that two sets of tensors are weakly/strongly orthogonal if they are so pairwise.
\end{defn}

We will need the following simple lemma. The proof is a straightforward application of the Vandermonde matrix determinant formula and can be found in Ref.~\cite{de_Groot_2022}.
\begin{lem}
\label{gemmaresult}
Given $K$ distinct $\lambda_k \in \mathbb C\neq 0$, if
\[
\sum_{k=1}^K c_k \lambda_k^{\,N} = 0 
\qquad \forall\, N = N_0, \ldots, K + N_0 - 1,
\]
for some $N_0 \in \mathbb{N}_{>0}$, then $c_k = 0$ for all $k = 1, \ldots, K$.
\end{lem}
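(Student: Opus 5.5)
The plan is to write $P(N)=\sum_k c_k\lambda_k^N$ for $N=N_0,\dots,N_0+K-1$ as a single matrix equation and show that the matrix is invertible. Collect these $K$ conditions into the linear system $M c = 0$, where $c=(c_1,\dots,c_K)^T$ and $M_{nk}=\lambda_k^{N_0+n}$ for $n=0,\dots,K-1$.

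The key observation is that $M$ factors as $M = \Lambda_{\rm V}\, \mathrm{diag}(\lambda_1^{N_0},\dots,\lambda_K^{N_0})$. Here $\Lambda_{\rm V}$ is the square Vandermonde matrix with entries $(\Lambda_{\rm V})_{nk}=\lambda_k^{n}$.

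Its determinant then follows in two steps.
\begin{enumerate}
\item The Vandermonde determinant formula gives $\det \Lambda_{\rm V}=\prod_{1\le k<k'\le K}(\lambda_{k'}-\lambda_k)$. This is nonzero because the $\lambda_k$ are pairwise distinct.
\item The diagonal factor has determinant $\prod_k \lambda_k^{N_0}$. This is nonzero because every $\lambda_k\neq 0$; this is exactly where the hypothesis of nonvanishing $\lambda_k$ enters, since $N_0>0$.
\end{enumerate}

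Hence $\det M=\prod_k\lambda_k^{N_0}\prod_{k<k'}(\lambda_{k'}-\lambda_k)\neq 0$. So $M$ is invertible and the only solution is $c=0$, which proves the claim.

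There is no real obstacle. The only point needing care is the bookkeeping of the shift by $N_0$, namely factoring out the common powers $\lambda_k^{N_0}$ column by column. The argument would fail only if some $\lambda_k=0$ or two coincided, and both cases are excluded by hypothesis.
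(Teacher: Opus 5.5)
Your proof is correct and takes essentially the same approach as the paper, which simply invokes the Vandermonde determinant formula and defers the details to a reference. Factoring out $\mathrm{diag}(\lambda_1^{N_0},\dots,\lambda_K^{N_0})$ is exactly the step needed to reduce the shifted system to a standard Vandermonde matrix, and it is where the hypotheses $\lambda_k\neq 0$ and pairwise distinctness are both used.
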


We will also need the following lemma, which was proven by Nagata and Higman \cite{Nagata, Higman}, and successively strengthened by Razmyslov \cite{Razmyslov}.
\begin{lem}
\label{nagata-higman}
Let $\mathcal{A} \subset M_D(\mathbb{C})$ be the (associative) algebra generated by finitely many elements 
$\{S_\alpha\}_{\alpha\in I}$. 
Assume that every element $S \in \mathcal{A}$ is nilpotent. 
Then $\mathcal{A}$ is nilpotent of bounded index: there exists an integer $J' \le D^2$
such that for any sequence of generators $\alpha_1,\dots,\alpha_{J'}$ one has
\[
S_{\alpha_1} S_{\alpha_2} \cdots S_{\alpha_{J'}} = 0 .
\]
In particular, any product of $J'$ elements of $\mathcal{A}$ vanishes.
\end{lem}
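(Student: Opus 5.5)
The plan is to reduce the statement to a finite-dimensional algebraic fact about nilpotency. I would consider the algebra $\mathcal{A}$ generated by $S_1,\dots,S_m$ and first dispose of the easy direction. If some $S\in\mathcal{A}$ were not nilpotent, the products of generators would not all eventually vanish, so the substantive claim is the converse: the nilpotency of every element forces a uniform bound $J'\le D^2$ on the length of nonvanishing words.

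I would proceed through the following steps.

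\begin{enumerate}[(i)]
    \item Show that every $S\in\mathcal{A}$ nilpotent implies $\operatorname{tr}(S^k)=0$ for all $k\ge1$.
    \item Apply the trace identity to words. Since $\mathcal{A}$ is closed under products, every word $w=S_{\alpha_1}\cdots S_{\alpha_k}$ lies in $\mathcal{A}$. Hence $\operatorname{tr}(w)=0$, and more generally $\operatorname{tr}(a)=0$ for all $a\in\mathcal{A}$.
    \item Deduce that $\mathcal{A}$ is nil, and therefore nilpotent. Concretely, $\operatorname{tr}(ab)=0$ for all $a,b\in\mathcal{A}$, so the trace form vanishes on $\mathcal{A}$. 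A finite-dimensional algebra with this property has a Jacobson radical equal to the whole algebra; equivalently, Wedderburn's theorem shows that a finite-dimensional nil algebra is nilpotent. This gives $\mathcal{A}^{J}=0$ for some $J$.
    \item Bound $J$. Consider the chain $\mathcal{A}\supsetneq\mathcal{A}^2\supsetneq\cdots\supsetneq\mathcal{A}^{J}=0$. Each inclusion is strict: if $\mathcal{A}^k=\mathcal{A}^{k+1}$, then $\mathcal{A}^k=\mathcal{A}^{J}=0$. Since $\dim\mathcal{A}\le D^2$, the chain has length at most $D^2$, so $J'\le D^2$. A sharper route is to simultaneously triangularize $\mathcal{A}$ into strictly upper triangular form via Engel/Levitzki, which gives $J'\le D$. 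Either bound suffices.
    \item Conclude. Every product of $J'$ generators lies in $\mathcal{A}^{J'}=0$.
\end{enumerate}

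The main obstacle is step (iii), the passage from "every element is nilpotent" to "the algebra is nilpotent". I would argue it concretely through simultaneous triangularization. Take a common null vector: since $\mathcal{A}$ is a nil algebra of operators, the subspace $\bigcap_{a\in\mathcal{A}}\ker a$ is nonzero. This follows from Levitzki's theorem, or from the trace argument showing that $\mathcal{A}$ acts nilpotently on its minimal invariant subspaces. Then induct on the quotient space.
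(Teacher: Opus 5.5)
Your proof is correct, but it takes a different route from the paper. The paper gives no argument of its own; it cites the Nagata--Higman theorem with Razmyslov's bound. Implicitly, every element of $\mathcal{A}$ is a nilpotent $D\times D$ matrix and therefore satisfies $x^D=0$, so $\mathcal{A}$ is nil of bounded index $D$. Nagata--Higman, a result about arbitrary (even infinite-dimensional) algebras in characteristic zero, then gives nilpotency, and Razmyslov's estimate $n^2$ on the nilpotency index is exactly where the $J'\le D^2$ in the statement comes from.

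You instead use the concrete embedding $\mathcal{A}\subseteq M_D(\mathbb{C})$, via either Wedderburn's theorem on finite-dimensional nil algebras or simultaneous strict upper triangularization (Engel/Levitzki, or Burnside plus induction on invariant subspaces). This is more elementary and gives the sharper bound $J'\le D$. The Levitzki version also shows that closure under multiplication alone would suffice.

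There is a small slip in your dimension-chain variant. Strictly decreasing dimensions from $\dim\mathcal{A}$ down to $0$ give only $J\le \dim\mathcal{A}+1\le D^2+1$. To reach $D^2$, use $\mathcal{A}\subseteq\ker\operatorname{tr}$ (your step (ii)), which gives $\dim\mathcal{A}\le D^2-1$, or just rely on the triangularization bound. Steps (i)--(ii) are otherwise unnecessary here, since nilness of $\mathcal{A}$ is the hypothesis.
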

We stress here that \cref{nagata-higman} involves an algebra, and not merely a set of nilpotent matrices. This feature is crucial for the Lemma to hold.

To analyze the structure of a tensor $\mathcal V$ generating a sMPI, we first bring, without loss of generality, $\mathcal V$ in canonical form: 
\be \label{eq:bnt_v}
\mathcal{V}^i \;=\;
\bigoplus_{j=1}^g \bigoplus_{q=1}^{m_j} \mu_{j,q}\mathcal{V}_{j}^i \,,
\ee
where $\{ \mathcal V^i_j \}_j$ is a BNT for $\mathcal V$~\cite{cirac_matrix_2017-1} and $\mu_{j,q} \ne 0$.
The transfer operator consequently breaks down as
\be 
\label{Edecomp}
E \;=\;
\bigoplus_{j,j'=1}^g \bigoplus_{q,q'=1}^{m_j,{m_{j'}}} \mu_{j,q}\overline{\mu_{j',q'}} \mathbb{E}_{jj'}
\ee
with
\(\mathbb{E}_{jj'}=\sum_{i}\mathcal{V}_j^i\otimes\overline{\mathcal{V}_{j'}^i}\).

\begin{lem}
\label{multiplicityrule}
    For any tensor $\mathcal V$ generating a sMPI with normalization constant $c$, it holds that:
    \begin{enumerate}[(i)]
        \item $\mathcal V$ can be decomposed as $\mathcal{V}^{i} \;=\;\bigoplus_{j=1}^g (\Id_{m_j} \otimes  \mathcal{V}^{i}_j)$, where \(\mathcal{V}_j\) are the $g$ elements of a BNT which are weakly orthogonal and $m_j$ is an integer denoting the block multiplicity. The index $i$ refers to the grouped physical indices.
        \item $\sum_{j=1}^g m_j^2 = c$, and thus $c$ is an integer.
        \item Blocking a finite (system-size independent) number of times, it holds that $\mathbb E_{jj'} = 0$ for all $j \ne j'$. $\mathbb E_{jj}$ has a simple unit eigenvalue, which is equal to its spectral radius, and $\mathbb E_{jj}^2 = \mathbb E_{jj}$ for all $j$.
    \end{enumerate}
\end{lem}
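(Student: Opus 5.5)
The plan is to start from the canonical form \cref{eq:bnt_v} and turn the sMPI condition into spectral identities via the trace of powers of the transfer matrix. Taking the trace of $V_N^\dagger V_N = c\,\Id$ gives
\begin{align}
\Tr(E^N) = \frac{1}{d_{\mathrm{in}}^N}\Tr(V_N^\dagger V_N) = c \qquad \forall N .
\end{align}
By \cref{Edecomp}, $\Tr(E^N)$ is the sum of the diagonal-block contributions $\sum_{j,q}|\mu_{j,q}|^{2N}\Tr(\mathbb E_{jj}^N)$. The off-diagonal blocks $j\neq j'$ do not contribute to the trace of the direct sum.

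Each $\mathbb E_{jj}$ comes from a normal tensor, so we may normalize it to have a simple leading eigenvalue $1$. The spectrum of $E$ then consists of the values $|\mu_{j,q}|^2$ together with subleading eigenvalues $|\mu_{j,q}|^2\lambda$ with $|\lambda|<1$. Grouping distinct values and applying \cref{gemmaresult} to $\sum_k c_k\lambda_k^N = c\cdot 1^N$, every distinct nonzero eigenvalue other than $1$ must have total coefficient zero. Since the $|\mu_{j,q}|^2$ are positive, I expect to conclude that every $|\mu_{j,q}|=1$.

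Next I would show that the leading eigenvalue-$1$ multiplicity accounts for $c$, and that the subleading eigenvalues of each $\mathbb E_{jj}$ cancel. Because they could be complex, I would pass to the algebra generated by the nilpotent parts and invoke \cref{nagata-higman}: after blocking $J'\le D^2$ times, the non-leading parts of each $\mathbb E_{jj}$ vanish, so $\mathbb E_{jj}^2=\mathbb E_{jj}$ is a rank-one projector. This is part of (iii).

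The phases of the $\mu_{j,q}$ would be gauged away, as in the MPU case, by using $V_N^\dagger V_N=c\Id$ beyond the trace. Concretely, I would use the decomposition \cref{sitedoubledecomposition}: inserting $\sigma_\alpha$ on one site, the coefficients $\Tr(S_\alpha E^{N-1})$ must vanish for all $N$. Copies of the same normal tensor differing only by a phase are then related by a gauge/phase that can be absorbed, which yields the form $\bigoplus_j(\Id_{m_j}\otimes\mathcal V_j)$ of (i).

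For weak orthogonality and $\mathbb E_{jj'}=0$, I would consider the full operator $V_N^\dagger V_N$. Its cross terms $\Tr(\mathbb E_{jj'}^N)$ between distinct BNT elements must have spectral radius $<1$, since otherwise $\mathcal V_j$ and $\mathcal V_{j'}$ would be gauge-equivalent, contradicting the BNT property. The same trace identity together with \cref{gemmaresult} then forces these eigenvalues to vanish. Hence $\mathbb E_{jj'}$ is nilpotent, and after blocking $D^2$ times \cref{nagata-higman} gives $\mathbb E_{jj'}=0$.

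With this in place, $\Tr(E^N)$ counts the multiplicities. The $m_j$ copies on the ket side and $m_j$ on the bra side produce $m_j^2$ blocks $\mathbb E_{jj}$, each contributing $1$. This gives $\sum_j m_j^2=c$, proving (ii).

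The main obstacle is the step where the algebra must genuinely be generated by elements all of which are nilpotent, as \cref{nagata-higman} requires. Mere nilpotency of $E$ minus its projector is not enough. I would attempt to show nilpotency of every word in $E$ and the $S_\alpha$ restricted to the complement of the leading eigenspace, using $\Tr$ of products with arbitrary $\sigma_\alpha$ insertions, all fixed by $c\Id$, following the proof of Proposition III in Ref.~\cite{cirac_matrix_2017-1}.
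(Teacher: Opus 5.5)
There is a genuine gap at the first computation, and the rest of the proposal inherits it. Your overall route is the paper's: the trace identity $\Tr(E^N)=c$, \cref{gemmaresult}, spectral radius $<1$ for distinct BNT elements, and blocking. But by \cref{Edecomp}, $E=\bigoplus_{a,b}\mu_a\overline{\mu_b}\,\mathbb E_{ab}$, with $a=(j,q)$ and $b=(j',q')$, is a direct sum over \emph{all} ordered pairs of blocks; the tensor product of two block-diagonal matrices is block diagonal over pairs. Hence
\[
\Tr(E^N)=\sum_{a,b}\bigl(\mu_a\overline{\mu_b}\bigr)^N\,\Tr\bigl(\mathbb E_{ab}^N\bigr).
\]
The ``off-diagonal'' blocks $j\neq j'$, and the pairs $q\neq q'$ within the same $j$, are summands of this direct sum and do contribute. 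Your formula keeps only $a=b$.

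As a result it cannot see relative phases. For $\mathcal V=\mathcal V_1\oplus(-\mathcal V_1)$ with $\mathcal V_1$ an hMPI tensor, your formula returns $2$ for every $N$. In fact $V_N=(1+(-1)^N)V_{1,N}$ and $\Tr(E^N)=2+2(-1)^N$. For $\mathcal V_1\oplus\mathcal V_1$ it returns $2=m_1$ instead of the correct $4=m_1^2$, which contradicts your own final count of $m_j^2$ blocks. For the same reason, your claim that the trace identity kills the eigenvalues of $\mathbb E_{jj'}$, $j\neq j'$, conflicts with your claim that these blocks do not enter the trace.

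Your fix for the phases also fails. A relative phase between two copies of the same normal tensor is not a gauge degree of freedom, and $\Tr(S_\alpha E^{N-1})=0$ does not constrain it. It must be shown to be trivial; only a phase common to all copies of $\mathcal V_j$ can be absorbed into $\mathcal V_j$.

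The missing observation is that in $\sum_\alpha M(\alpha)\alpha^N=c$ the coefficients $M(\alpha)$ are algebraic multiplicities of eigenvalues of $E$, hence nonnegative integers. So no cancellation between distinct (possibly complex) eigenvalues is possible. This also makes \cref{nagata-higman} unnecessary here. By \cref{gemmaresult}, every nonzero eigenvalue $\mu_a\overline{\mu_b}\nu_{ab,k}$ of $E$ equals $1$, and $c=M(1)\in\mathbb N$. Reading this off pair by pair with the correct trace:
\begin{enumerate}[(a)]
\item For $a=b$, the normalized leading eigenvalue gives $|\mu_{j,q}|=1$, and all subleading eigenvalues of $\mathbb E_{jj}$ must vanish.
\item For $a=(j,q)$, $b=(j,q')$, one gets $\mu_{j,q}=\mu_{j,q'}$. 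The phase depends only on $j$ and is absorbed into $\mathcal V_j$, giving the form in (i).
\item For $j\neq j'$, the spectral radius is $<1$, so $\mathbb E_{jj'}$ is nilpotent and vanishes after blocking, which is weak orthogonality.
\end{enumerate}
Part (iii) then follows from the Jordan form of single matrices after blocking. The $m_j^2$ pairs $(q,q')$ in block $j$ each contribute $1$, giving (ii).

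The obstacle you single out, nilpotency of the algebra generated by $E$ and the $S_\alpha$, is not needed for this lemma. It is the ingredient the paper uses afterwards in \cref{sMPIstructure}.
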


\begin{proof}
    We assume, without loss of generality, that $\mathcal V$ is in canonical form as described above. Imposing
    \begin{equation}
            \tr(E^N)=c
    \quad\forall N
    \end{equation}
    gives
    \begin{equation} \label{eq:mult_vandermonde}
    \sum_{j,j',q,q'} \mu_{j,q}^N\overline{\mu_{j',q'}}^N
    \;\tr\bigl[\mathbb{E}_{jj'}^N\bigr]
    \;=\;  \sum_{j,j',q,q', k} \bigl(\mu_{j,q}\overline{\mu_{j',q'}}
    \;\nu_{jj',k}\bigr)^N
    \;=\; c \cdot 1^N.
    \end{equation}
After grouping all identical values taken by the quantities 
\(\mu_{j,q}\overline{\mu_{j',q'}}\,\nu_{jj',k}\), let \(\mathcal{A}\) denote the resulting finite set and, for each \(\alpha\in\mathcal{A}\), let \(M(\alpha)\) be its multiplicity.  
The identity above then reads \(\sum_{\alpha\in\mathcal{A}} M(\alpha)\,\alpha^{N}=c\) for all \(N\) with $\alpha \in \mathbb C$.  
Separating the contribution at \(\alpha=1\) yields
\[
\sum_{\alpha\neq 1} M(\alpha)\,\alpha^{N} + (M(1)-c) \cdot 1^N   = \sum_{\alpha\in\mathcal{A}} \widetilde M(\alpha)\,\alpha^{N}=0
\qquad\forall\,N,
\]
where we have defined \(\widetilde M(\alpha)=M(\alpha)\) for \(\alpha\neq 1\) and \(\widetilde M(1)=M(1)-c\).

By \cref{gemmaresult}, such an identity can hold only if \(c \in \mathbb{N}\), since \( M(1)\) counts occurrences and is therefore a non-negative integer, and if every term with $\alpha \ne 1$ vanishes. Notice, moreover, that for any $M(1)$ occurrence, the eigenvalue $\nu_{jj',k}$ and the coefficient $\mu_{j,q}\overline{\mu_{j',q'}}$ must be 1 themselves, as otherwise pairing of the coefficient with a different eigenvalue would lead to occurrences of \( M(\alpha \ne 1)\), which is not allowed.
Now, recall that if diagonal blocks \(\mathbb{E}_{jj}\) have spectral radius one, any ``off-diagonal" blocks have spectral radius \(\rho(\mathbb{E}_{jj'})<1\) for \(j\neq j'\)~\cite{cirac_matrix_2017-1}. This means all off-diagonal blocks \(\mathbb{E}_{jj'}\) with \(j\neq j'\) are nilpotent, i.e., $\nu_{jj',k} = 0$ if $j \ne j'$. As a result, they vanish after blocking.
Hence, only diagonal blocks with unit spectral radius remain. From normality, it follows that there is a unique eigenvalue equal to the spectral radius for each of the diagonal block~\cite{cirac_matrix_2017-1} and the rest are thus vanishing, i.e., $\nu_{jj',k} = \delta_{j,j'} \delta_{k,1}$. From its Jordan normal form, it follows that blocking eliminates any Jordan blocks corresponding to the zero eigenvalues. This establishes (iii). 
Using the established fact that $\mu_{j,q} = \mu_j \coloneqq
e^{i \phi_j}$, independent of $q$, \cref{eq:mult_vandermonde} reduces to $\sum_{j} m_j^2 = c$, which shows (ii). The decomposition in (i) follows by absorbing the phase $e^{i \phi_j}$ into the BNT $\mathcal V_j^i$.
\end{proof}

We now strengthen this result to show that the elements of the BNT, into which any sMPI can be broken down, are also \emph{strongly orthogonal} and \emph{hMPI-generating}. Below, we write $\mathcal V^{ik}$ separating the grouped physical output and input indices, respectively; this distinction is relevant because strong orthogonality separates them.

\begin{prop}
\label{sMPIstructure}
    After blocking a finite number of times, any sMPI can be expressed as
    \[
    \mathcal{V}^{ik} \;=\;\bigoplus_{j=1}^g (\Id_{m_j} \otimes \mathcal{V}^{ik}_j),
    \]
where \(\mathcal{V}_j\) are elements of a BNT of strongly orthogonal and hMPI-generating tensors.
\end{prop}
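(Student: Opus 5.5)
Starting from the decomposition in \cref{multiplicityrule}, I would block once more, enough that $\mathbb E_{jj'}=0$ for $j\neq j'$ and $\mathbb E_{jj}^2=\mathbb E_{jj}$. The first thing to settle is what the sMPI condition $V_N^\dagger V_N = c\,\Id$ means at the level of the double tensor. Using \cref{sitedoubledecomposition} blockwise, the single-site double tensor splits into pieces $E^{(jj')}\otimes\Id+\sum_\alpha S_\alpha^{(jj')}\otimes\sigma_\alpha$, where the superscript labels the pair of blocks $(j,j')$ in ket and bra. Here the contraction runs over the output/purification legs, and the input legs are left open. For every $N$, $V_N^\dagger V_N$ is the trace over the product of $N$ such double tensors, expanded in the basis $\sigma_{\alpha_1}\otimes\cdots\otimes\sigma_{\alpha_N}$. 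The sMPI condition then says that every coefficient with at least one $\alpha_k\neq0$ vanishes. Concretely, $\tr\bigl[\cdots E\,S_{\alpha_1}E\cdots E\,S_{\alpha_k}\cdots\bigr]=0$ for all placements, and the pure-$E$ term equals $c$.

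\textbf{Step 1: each $\mathcal V_j$ is hMPI-generating.} Since $E$ is block diagonal with blocks $m_j^2$ copies of $\mathbb E_{jj}$, and $\mathbb E_{jj}$ is a rank-one projector $\ket{\rho_j}\bra{\Phi_j}$ after blocking, the vanishing coefficients give linear relations of the form $\sum_j m_j^2\,\bra{\Phi_j}S^{(jj)}_{\alpha_1}\mathbb E_{jj}\cdots S^{(jj)}_{\alpha_k}\ket{\rho_j}=0$, together with off-diagonal contributions.

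The off-diagonal contributions are where the algebra comes in. I would form the algebra generated by the off-diagonal operators $E^{(jj')}$ and $S^{(jj')}_\alpha$. Using that all traces of products in this algebra vanish, as forced by the sMPI condition, every element of the algebra is nilpotent. \cref{nagata-higman} then gives a bound $J'\le D^2$ beyond which any product vanishes, so blocking $J'$ more sites kills all off-diagonal double tensors. This is exactly strong orthogonality: $\sum_i \mathcal V^{ik}_j\otimes\overline{\mathcal V^{ik'}_{j'}}=0$ for $j\neq j'$.

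With the off-diagonal terms gone, $V_N^\dagger V_N=\sum_j m_j^2\, V_{j,N}^\dagger V_{j,N}$. Each $V_{j,N}^\dagger V_{j,N}$ is an MPO whose constant term is $\tr(\mathbb E_{jj}^N)=1$. To isolate a single block, I would apply \cref{gemmaresult}-type independence in $N$, or use the fact that the $V_{j,N}$ are linearly independent BNT elements. Here I would sandwich with arbitrary product operators and compare the coefficients of distinct translation-invariant structures. Positivity helps: each $V_{j,N}^\dagger V_{j,N}\succeq0$, and their weighted sum equals $c\,\Id$ with $\sum_j m_j^2=c$. I would then argue that each term must individually equal $\Id$, for instance by showing that the traceless parts $S^{(jj)}$ are nilpotent in the same way, again via \cref{nagata-higman} applied to the algebra generated by $\{S^{(jj)}_\alpha\}$ with the projectors.

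\textbf{Main obstacle.} The main obstacle is proving nilpotency of every element of the relevant algebras, not just of the generators, so that \cref{nagata-higman} applies. The plan is to show $\tr(X^n)=0$ for every $X$ in the algebra and every $n$, by expanding $X^n$ into words and matching each word to a coefficient of $V_N^\dagger V_N$ for suitable $N$. The padding between letters would be handled by the idempotent $\mathbb E_{jj}$ and by the vanishing of the $E^{(jj')}$. Once all these traces are zero, each element is nilpotent, the uniform bound $J'$ follows, and finitely many blockings give the stated form.
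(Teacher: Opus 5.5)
\textbf{The gap is in the hMPI-generating property of the diagonal blocks.} Your strong-orthogonality step matches the paper's argument. Once $\mathbb E_{jj'}=0$ for $j\neq j'$, Nagata--Higman applied to the traceless parts kills every off-diagonal double tensor after $L$ further blockings.

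Two corrections to that part. First, nilpotency of every element is not really an obstacle. If all words have zero trace, then $X^n$ is a combination of words of length $\ge n$, so $\tr X^n=0$ and $X$ is nilpotent. Second, the sMPI condition only makes traces of words in the \emph{full} $S^\alpha$ vanish. Those traces are weighted sums over block pairs, so a word in the $\mathbb S^\alpha_{jj'}$ of a single block is not itself a coefficient of $V_N^\dagger V_N$. Argue as the paper does: the algebra generated by the full $S^\alpha$ is nilpotent. Everything is block diagonal, so restriction to a block is an algebra homomorphism, and each block algebra is nilpotent as well.

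After strong orthogonality you have $\sum_j m_j^2 V_{j,N}^\dagger V_{j,N}=c\,\Id$, and none of the tools you list separates the $j$'s.
\begin{itemize}
\item Positivity is not enough. $A+B=2\Id$ with $A,B\succeq0$, even with $\tr A=\tr B$, does not force $A=B=\Id$.
\item Linear independence of the $V_{j,N}$ does not transfer to the products $V_{j,N}^\dagger V_{j,N}$. By the very conclusion you want, these products all coincide.
\item Nilpotency of the algebra generated by the $\mathbb S^\alpha_{jj}$ alone does not control mixed coefficients such as $\tr(\mathbb E_{jj}W_1\mathbb E_{jj}W_2\cdots)$. Adjoining the idempotent $\mathbb E_{jj}$ destroys nilpotency, so Nagata--Higman cannot be applied to ``the algebra with the projectors''.
\end{itemize}

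The missing fact is that $\lambda_j(W):=\bra{\Phi_j}W\ket{\rho_j}=\tr(\mathbb E_{jj}W)$ vanishes for every word $W$ in the $\mathbb S^\alpha_{jj}$, for each $j$ separately. Once you have this, rank one of $\mathbb E_{jj}$ factorizes every mixed coefficient into products of such scalars. Together with nilpotency, all non-identity coefficients of $V_{j,N}^\dagger V_{j,N}$ then vanish.

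The paper gets $\lambda_j(W)=0$ from the periodic $\sigma$-string whose transfer-matrix word is $(E\,W\,E)^z$. Since $\mathbb E_{jj}W\mathbb E_{jj}=\lambda_j(W)\,\mathbb E_{jj}$, the corresponding vanishing coefficient reads $\sum_j m_j^2\,\lambda_j(W)^z=0$ for all $z\ge1$. Then \cref{gemmaresult} forces every $\lambda_j(W)=0$, because the weights $m_j^2$ are positive and grouping equal values cannot cancel them. Your mention of ``\cref{gemmaresult}-type independence'' points in this direction, but without the $z$-fold repetition there are no powers to apply it to.

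An equivalent fix uses the span of all words in $E,S^\alpha$ that contain at least one $S^\alpha$. This span is a subalgebra whose words are all traceless, so it is nilpotent. Its block-$j$ image contains $\lambda_j(W)\,\mathbb E_{jj}$, which is nilpotent only if $\lambda_j(W)=0$.
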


On the physical level, the Proposition shows that any sMPI can be expressed as $V = \frac{1}{\sqrt{c}}\sum_{j=0}^{g-1} m_j V_j$. Each $V_j$ is a hMPI, and $V^\dagger_j V_k = \Id \delta_{jk}$ -- in fact, by strong local orthogonality, this holds locally as well.

\begin{proof}
    We first show that the elements of the BNT are strongly orthogonal tensors, and then that they are hMPI-generating.
    We first block such that part (iii) of \cref{multiplicityrule} holds. The starting point is the isometry condition: $V_{N}^\dagger V_{N}^{} = c\,\Id_d^{\otimes N}$ for all $N$. From it, we can infer that 
    \be
    \label{trace with traceless}
    \begin{array}{c}
        \begin{tikzpicture}[scale=\defaultscaling,baseline={([yshift=-0.75ex] current bounding box.center)}]
            \foreach \x in {1,...,1}{
                \TransferMatrixTensor{(-\doubledx*\x,0),0}{\leglength}{0.4}{${\cal V}$}{0}
                \SingleTrRight{(0,.9)}
                \begin{scope}[yscale=-1]{
                    \SingleTrRight{(0,.9)};
                }
                \end{scope}
            }
            \foreach \x in {2,...,2}{
                \SMatrixTensorMiddleSigma{(-\doubledx*\x,0)}{\leglength}{0.4}{${\cal V}$}{0}{$\sigma_\alpha^T$}{0.5}{\leglength / 2 + 0.5}
            }
            \foreach \x in {3,...,3}{
                \DoubleDots{-\doubledx*\x,0}{\doubledx/2}{.9};
            }
            \foreach \x in {4,...,4}{
                \TransferMatrixTensor{(-\doubledx*\x,0),0}{\leglength}{0.4}{${\cal V}$}{0}
            }
            \foreach \x in {5,...,5}{
            \pgfmathsetmacro{\xp}{\x+0.5}
            \draw[very thick, color=black](-\doubledx*\x + 0.7,0.9) -- (-\doubledx*\xp,0.9);
            \begin{scope}[yscale=-1]{
                    \draw[very thick, color=black](-\doubledx*\x + 0.7,0.9) -- (-\doubledx*\xp,0.9);
                }
                \end{scope}
            }
            \foreach \x in {5.5,...,5.5}{
                \SingleTrLeft{(-\doubledx*\x,.9)};
                \begin{scope}[yscale=-1]{
                    \SingleTrLeft{(-\doubledx*\x,.9)};
                }
                \end{scope}
            }
        \end{tikzpicture}
    \end{array}
    = 0 ,
\ee
i.e., one gets zero any time one or more of the physical legs are contracted with a traceless matrix $\sigma_\alpha$. It follows in particular that the set $\{S^{\alpha} \}_{\alpha}$ (cf.~\cref{EandSdefs}) generates a nilpotent algebra.
In fact, all of the blocks generate a nilpotent algebra independently, due to the direct sum structure,
\be 
    \label{Sdecomp}
S^\alpha \;=\;
\bigoplus_{j,j'=1}^g \bigoplus_{q,q'=1}^{m_j,{m_{j'}}}  \mathbb{S}^{\alpha}_{jj'},
\ee
where $\mathbb{S}^{\alpha}_{jj'}$ is obtained by contracting $\mathcal{V}_j$ with the conjugate of $\mathcal{V}_{j'}$ in the structure of \cref{EandSdefs}. As a consequence, by \cref{nagata-higman}, there exists a finite, $N$-independent blocking length $L$ such that
\begin{align} \label{eq_nagata}
\mathbb S_{jj'}^{\alpha_1}\cdots \mathbb S^{\alpha_L}_{jj'} = 0\quad \forall\, \alpha,  j, j' \,.
\end{align}
Now, denote $\mathbb {S'}^{\alpha}_{jj'}$ the resulting matrices after contracting with a basis of traceless elements in the physical space, but after blocking $L$ times. Note that these might consist not only of products of $\mathbb S_{jj'}^{\alpha}$ but also $\mathbb E_{jj'}$ (at most $L-1$ times). If $j \ne j'$, the $\mathbb {S'}^{\alpha}_{jj'}$ matrices vanish, since they can only have the form $ \mathbb S_{jj'}^{\alpha_1}\cdots \mathbb S^{\alpha_L}_{jj'}$, as if any $\mathbb E_{jj'}$ appeared in the string, it would make the product zero, because $\mathbb{E}_{jj'}=0$, which follows from \cref{multiplicityrule}. This means that the blocks $\mathcal V^{ik}_{j}$ become strongly orthogonal after blocking.

For the hMPI generating property, we only have to consider the case where $j=j'$. We now show that, for all blocks $j$, all lengths $m \ge 1$, and all indices $\alpha_m$,
\begin{align} \label{eq:tr_ese}
\tr[\mathbb E_{jj} \mathbb S_{jj}^{\alpha_1}\cdots \mathbb S_{jj}^{\alpha_m} ] = \mathbb E_{jj} \mathbb S_{jj}^{\alpha_1}\cdots \mathbb S_{jj}^{\alpha_m} \mathbb E_{jj} = 0 \;.    
\end{align}
Indeed, choose $N=z\ell+2$, where $z,\ell \in \mathbb{N}$. Then,
\begin{align}
    0 = \tr[(E S^{\alpha_1}\cdots S^{\alpha_\ell} E)^z] = \sum_i \tr[(\mathbb E_{jj} \mathbb S_{jj}^{\alpha_1} \cdots \mathbb S_{jj}^{\alpha_\ell} \mathbb E_{jj})^z] \overset{\text{\tiny (1)}}{=} \sum_i [\tr(\mathbb E_{jj} \mathbb S_{jj}^{\alpha_1}\cdots \mathbb S_{jj}^{\alpha_\ell} \mathbb E_{jj})]^z
\end{align}
In (1) we make use of a consequence of \cref{multiplicityrule}: $\mathbb{E}_{jj}$ have rank one, while $\mathbb E_{jj'} = 0$ for $j \ne j'$. Then \cref{eq:tr_ese} follows from \cref{gemmaresult}. Combining \cref{eq_nagata} and \cref{eq:tr_ese}, it follows that any trace of the form $ \tr [ \mathbb X_1 \cdots \mathbb X_L ]$, $\mathbb X_i \in \{ \mathbb E_{jj}\} \cup \{ \mathbb S_{jj}^\alpha \}_\alpha$ vanishes unless $\mathbb X_i = \mathbb E_{jj}$ for all $i = 1,\dots, L$. As a result, we conclude that each block $\mathcal V_j^{ik}$ generates an MPI, up to a constant, which can be fixed to one as in \cref{multiplicityrule}.
\end{proof}

\section{Physical Implementation of scaled Homogeneous Matrix Product Isometries (\lowercase{s}MPI)}

We now address how to physically --- and efficiently --- implement the isometries in the $\text{\sffamily sMPI}$ class. The tensor networks representing sMPIs give a useful formal description. However, their individual tensors do not directly relate to physical operations. The issue we tackle here is how to turn these abstract tensor-network isometries into real physical processes. Specifically, we want \textit{finite-depth} quantum circuits made up of \textit{geometrically local} unitary gates. These may also include local measurements, whose results influence the next gates through \textit{feedforward}. The local correcting unitaries may depend on all measurement outcomes, and thus be based on nonlocal classical communication. Finite-depth means that the depth is independent of the system size $N$, while geometrically local refers to the gates acting on connected and constant-sized sets of qudits. When measurements are permitted, calling a circuit finite-depth means that all required measurements can be arranged into a finite number of parallel rounds.
By \textit{efficiently}, we mean that the number of gates, as well as the number of ancillas and repetitions of the protocol, must be finite. One might think that, due to the long-range correlations created by the sMPIs, it should be impossible to create them with finite-depth circuits. For example, generating the GHZ state with only geometrically local gates would require $\Theta(N)$ operations. Here is exactly where measurements and corrections turn out useful: allowing for them yields a protocol for generating GHZ in $O(1)$ \cite{Piroli_2024, Piroli_2021}. We aim to reproduce such an improvement in our setting.

To achieve this, we present two distinct approaches. The first approach uses local measurements and feedforward. It provides a protocol for implementing any sMPI in constant depth, despite the fact that the isometry, and thus the corresponding channel, can create long-range correlations. The second approach avoids measurements and, although more complex, it can be used to reformulate the sMPI as an MPU with a boundary that acts on an input state in the GHZ form.

\begin{thm}
\label{SMapproach1}
        Any sMPI can be deterministically implemented by a constant-depth circuit of local unitary gates, assisted by two rounds of local measurements and local unitary corrections, as well as $O(N)$ local ancilla qubits.
\end{thm}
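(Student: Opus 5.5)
The plan follows the protocol sketched in the main text and rests on \cref{sMPIstructure}. After blocking a finite number of times, we write $V_N=\frac{1}{\sqrt c}\sum_{j=0}^{g-1} m_j V_{j,N}$, where each $V_{j,N}$ is an hMPI generated by $\mathcal V_j$, and the tensors are strongly orthogonal, so that $V_{j,N}^\dagger V_{k,N}=\delta_{jk}\Id$.

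The first step is to build a single controlled circuit $CU=\sum_j \ket{j}\bra{j}^{\otimes N}_A\otimes U_{j}$, where each $U_j$ implements $V_{j,N}$ after a product ancilla input is fixed. For each $j$ we apply \cref{SMdepth2} to $\mathcal V_j$. Since every block is blocked to the same length, we obtain isometries $u_j,v_j$ in a depth-two brickwork. We pad the dimensions $r_j,\ell_j$ to a common $r,\ell$ and extend $u_j,v_j$ to unitaries $B_j,T_j$ acting on fixed ancilla states. We then define local gates $B=\sum_j \ket{j}\bra{j}\otimes B_j$ and $T=\sum_j\ket{j}\bra{j}\otimes T_j$, each controlled by the local copy of the control qudit. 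When the control register is in $\ket{j}^{\otimes N}$, the brickwork of controlled gates reproduces $U_j$ exactly. Because the control is constant across the chain, every gate reads the same $j$, so $CU$ has the depth-two form displayed in the main text.

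The second step prepares $\ket{\mathrm{GHZ}}_A=\frac{1}{\sqrt c}\sum_j m_j\ket{j}^{\otimes N}$. We first prepare the standard $g$-level GHZ state in constant depth with one round of measurement and Pauli-type feedforward corrections \cite{Piroli_2024}. We then apply a single-site non-unitary-free rotation: one control qudit is rotated so that the amplitudes become $m_j/\sqrt c$. This works via a local unitary on one site, before the entangling fusion step, so that the branch amplitudes carry over. Applying $CU$ to $\ket{\mathrm{GHZ}}_A\otimes\ket{\psi}\ket{0\cdots}$ gives $\frac{1}{\sqrt c}\sum_j m_j\ket{j}^{\otimes N}\otimes V_{j,N}\ket\psi$.

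The third step disentangles the control register. We measure every control qudit in the Fourier basis $\{\ket{\tilde k}\}$. An outcome string $\vec k$ projects the system onto $\propto\sum_j m_j\,\omega^{j\sum_i k_i}V_{j,N}\ket\psi$. By strong orthogonality the $V_{j,N}\ket\psi$ are mutually orthogonal with equal norm, so every outcome occurs with probability $g^{-N}$ and no post-selection is needed. The remaining relative phases $\omega^{jK}$, with $K=\sum_i k_i$, are removed by a single-site phase unitary that depends on $K$, computed through nonlocal classical communication.

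This last correction is the step I expect to be the main obstacle. We must exhibit a local operator $P_K$ on one site such that $P_K V_{j,N}=\omega^{-jK}V_{j,N}$ for all $j$. The plan is to exploit strong orthogonality at the level of a single blocked tensor: the output ranges of different $\mathcal V_j$ on one site are orthogonal subspaces. We take $P_K=\sum_j\omega^{-jK}\Pi_j$ on one site, where $\Pi_j$ projects onto the local output range of block $j$ together with its purification leg, and the complement is assigned an arbitrary phase. This operator is unitary and local. Verifying that strong orthogonality of the block tensors, which is a statement about the local contraction vanishing, indeed implies orthogonality of the single-site output supports within the depth-two circuit is the delicate point. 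If it fails at one site, I would instead apply the phase on a block of $O(1)$ sites, where strong orthogonality after blocking guarantees distinguishability.
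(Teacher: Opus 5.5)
Your proposal is correct and follows essentially the same route as the paper's proof. The shared steps are the block decomposition of \cref{sMPIstructure}; unitary extensions of the depth-two isometric gates, controlled by a GHZ register with weights $m_j/\sqrt{c}$ (the reweighting must go into the initial product state before the fusion measurements, as you eventually say, since no single-site unitary can reweight an already-prepared uniform GHZ state); a Fourier-basis measurement of the controls; and a single-site phase unitary built from the projectors onto the ranges of the block tensors. The step you flag as delicate is immediate and needs no fallback to larger blocks. Strong orthogonality says exactly that $\widetilde{\mathcal V}_j^\dagger \widetilde{\mathcal V}_k = 0$ for $j \neq k$, where $\widetilde{\mathcal V}_j$ is the single-site tensor viewed as a map from the input and both virtual legs to the output and purification legs; so the range projectors are mutually orthogonal, and the correction multiplies the MPO $V_{k,N}$ by the required phase regardless of how it is realized as a circuit.
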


\begin{proof}
\cref{sMPIstructure} shows that any sMPI, on the physical level, can be written as $V = \frac{1}{\sqrt{c}}\sum_{j=0}^{g-1} m_j V_j$.
From \cref{SMdepth2}, we know that each isometry $V_j$ can be implemented as a depth-2 circuit of isometric gates.
The implementation strategy for $V$ is as follows. By adding $N$ ancillas, which we use as controls, we construct a locally-controlled circuit. The control circuit applies the local gates implementing $V_j$, whenever the control register is in state $\ket{j}$. To globally ``synchronize'' the control ancillas, we feed a suitable GHZ state to them, so that the resulting global operation is a superposition of $V_j$ for all $j$. The preparation of the GHZ state only requires a single round of measurements and feedforward~\cite{Piroli_2024}. The added control ancillas are, however, now entangled with the system, and must be decoupled before being discarded. This will be accomplished through an additional round of measurements followed by local corrections, which are always (deterministically) possible due to the strong orthogonality property of the tensors.

Special care must be taken to implement the controlled circuit. Indeed, in the setting of \cref{SMdepth2}, the inner contracted dimensions $r_j$ and $\ell_j$ need not be equal (e.g. for $\chi= 1$, $r_j=d^2$ and $\ell_j=1$ represent the right shift, a valid unitary) and hence we must take precaution when defining a circuit that selects different bottom ($B_j$) and top ($T_j$) isometric gates for each site, as they might have different $r$ and $\ell$ dimensions.
The issue is solved by extending the bottom and top isometric gates to unitaries $\widetilde{B}_j, \widetilde{T}_j$ on $\mathbb{C}^{d^4\chi^4}$ and then fixing part of the input and discarding part of the output to recover the correct isometry.
%\definecolor{purification}{RGB}{124, 138, 128}
%
In pictures:

\pgfmathsetmacro{\spacing}{3.5cm}
\be \label{eq:BandBTilde}
    \begin{tikzpicture}[scale=\defaultscaling,x=\spacing, baseline=-3.3ex]
    % --- first top diagram left ---
    \definecolor{cream}{RGB}{255,255,221}
    \pgfmathsetmacro{\height}{1.3}
    \pgfmathsetmacro{\shift}{0.7}
    \pgfmathsetmacro{\labelpositions}{\height / 2 + \shift / 2 + 0.2}
    \pgfmathsetmacro{\labelcloseness}{0.1}
    \pgfmathsetmacro{\extra}{.15}
    \foreach \i in {1} {
        \draw[color=black,very thick] (\i,-\shift) -- (\i,-3*\shift);
        \draw (\i - \labelcloseness, -2*\shift) node {\scriptsize $d$};
    }
    \foreach \i in {2} {
        \draw[color=black,very thick] (\i,-\shift) -- (\i,-3*\shift);
        \draw (\i + \labelcloseness, -2*\shift) node {\scriptsize $d$};
    }
    \foreach \i in {2} {
        \draw[color=black,dashed, very thick] (\i,-\shift) -- (\i,0.5*\shift);
        \draw (\i + \labelcloseness, 0) node {\scriptsize $r_i$};
    }
    \foreach \i in {1} {
        \draw[color=black,ultra thick] (\i,-\shift) -- (\i,0.5*\shift);
        \draw (\i - \labelcloseness, 0) node {\scriptsize $\ell_i$};
    }
    \foreach \i in {1} {
        \draw[thick, fill=cream, rounded corners=2pt]
            (\i - \extra, -0.3 - \shift) rectangle (\i+1 + \extra, 0.3 - \shift);
        \draw (\i+0.5,-\shift) node {\scriptsize $B_i$};
    }
\end{tikzpicture}
\to
\begin{tikzpicture}[scale=\defaultscaling,x=\spacing, baseline=-3.3ex]
    % --- first top diagram right ---
    \definecolor{cream}{RGB}{255,255,221}
    \pgfmathsetmacro{\height}{1.3}
    \pgfmathsetmacro{\shift}{0.7}
    \pgfmathsetmacro{\labelpositions}{\height / 2 + \shift / 2 + 0.2}
    \pgfmathsetmacro{\labelcloseness}{0.1}
    \pgfmathsetmacro{\extra}{.15}
    \foreach \i in {1} {
        \draw[color=black,very thick] (\i,-\shift) -- (\i,-3*\shift);
        \draw (\i - \labelcloseness, -2*\shift) node {\scriptsize $d$};
    }
    \foreach \i in {2} {
        \draw[color=black,very thick] (\i,-\shift) -- (\i,-3*\shift);
        \draw (\i + \labelcloseness, -2*\shift) node {\scriptsize $d$};
    }
    \foreach \i in {2} {
        \draw[color=black, very thick] (\i,-\shift) -- (\i,0.5*\shift);
        \draw (\i + \labelcloseness + 0.1, 0) node {\scriptsize $(d \chi)^2$};
    }
    \foreach \i in {1} {
        \draw[color=black,very thick] (\i,-\shift) -- (\i,0.5*\shift);
        \draw (\i - \labelcloseness - 0.1, 0) node {\scriptsize $(d \chi)^2$};
    }
    \foreach \i in {1} {
        \draw[color=black,very thick] (\i + 0.1,-\shift) -- (\i+ 0.1,-3*\shift);
        \draw (\i + 0.15 + \labelcloseness, -2*\shift) node {\scriptsize $d\chi^2$};
    }
    \foreach \i in {2} {
        \draw[color=black,very thick] (\i - 0.1,-\shift) -- (\i - 0.1,-3*\shift);
        \draw (\i - 0.15 - \labelcloseness, -2*\shift) node {\scriptsize $d\chi^2$};
    }
    \foreach \i in {1} {
        \draw[thick, fill=cream, rounded corners=2pt]
            (\i - \extra, -0.3 - \shift) rectangle (\i+1 + \extra, 0.3 - \shift);
        \draw (\i+0.5,-\shift) node {\scriptsize $\widetilde{B}_i$};
    }
\end{tikzpicture}
\text{such that}
\begin{tikzpicture}[scale=\defaultscaling,x=\spacing, baseline=-3.3ex]
    % --- first top diagram right ---
    \definecolor{cream}{RGB}{255,255,221}
    \pgfmathsetmacro{\height}{1.3}
    \pgfmathsetmacro{\shift}{0.7}
    \pgfmathsetmacro{\labelpositions}{\height / 2 + \shift / 2 + 0.2}
    \pgfmathsetmacro{\labelcloseness}{0.1}
    \pgfmathsetmacro{\extra}{.15}
    \foreach \i in {1} {
        \draw[color=black,very thick] (\i,-\shift) -- (\i,-3*\shift);
        \draw (\i - \labelcloseness, -2*\shift) node {\scriptsize $d$};
    }
    \foreach \i in {2} {
        \draw[color=black,very thick] (\i,-\shift) -- (\i,-3*\shift);
        \draw (\i + \labelcloseness, -2*\shift) node {\scriptsize $d$};
    }
    \foreach \i in {2} {
        \draw[color=black, very thick] (\i,-\shift) -- (\i,0.5*\shift);
        \draw (\i + \labelcloseness + 0.1, 0) node {\scriptsize $(d \chi)^2$};
    }
    \foreach \i in {1} {
        \draw[color=black,very thick] (\i,-\shift) -- (\i,0.5*\shift);
        \draw (\i - \labelcloseness - 0.1, 0) node {\scriptsize $(d \chi)^2$};
    }
    \foreach \i in {1} {
        \draw[color=black,very thick] (\i + 0.1,-\shift) -- (\i+ 0.1,-3*\shift);
        \draw (\i + 0.15 + \labelcloseness, -2*\shift) node {\scriptsize $d\chi^2$};
        \draw (\i + 0.1,-3*\shift - 0.25) node[rotate=-90] {\scriptsize $\vb{\ket{0}}$};
    }
    \foreach \i in {2} {
        \draw[color=black,very thick] (\i - 0.1,-\shift) -- (\i - 0.1,-3*\shift);
        \draw (\i - 0.15 - \labelcloseness, -2*\shift) node {\scriptsize $d\chi^2$};
        \draw (\i - 0.1,-3*\shift - 0.25) node[rotate=-90] {\scriptsize $\vb{\ket{0}}$};
    }
    \foreach \i in {1} {
        \draw[thick, fill=cream, rounded corners=2pt]
            (\i - \extra, -0.3 - \shift) rectangle (\i+1 + \extra, 0.3 - \shift);
        \draw (\i+0.5,-\shift) node {\scriptsize $\widetilde{B}_i$};
    }
\end{tikzpicture}
=
\begin{tikzpicture}[scale=\defaultscaling,x=\spacing, baseline=-3.3ex]
    % --- first top diagram right ---
    \definecolor{cream}{RGB}{255,255,221}
    \pgfmathsetmacro{\height}{1.3}
    \pgfmathsetmacro{\shift}{0.7}
    \pgfmathsetmacro{\labelpositions}{\height / 2 + \shift / 2 + 0.2}
    \pgfmathsetmacro{\labelcloseness}{0.1}
    \pgfmathsetmacro{\extra}{.15}
    \foreach \i in {1} {
        \draw[color=black,very thick] (\i,-\shift) -- (\i,-3*\shift);
        \draw (\i - \labelcloseness, -2*\shift) node {\scriptsize $d$};
    }
    \foreach \i in {2} {
        \draw[color=black,very thick] (\i,-\shift) -- (\i,-3*\shift);
        \draw (\i + \labelcloseness, -2*\shift) node {\scriptsize $d$};
    }
    \foreach \i in {2} {
        \draw[color=black,dashed, very thick] (\i,-\shift) -- (\i,0.5*\shift);
        \draw (\i + \labelcloseness, 0) node {\scriptsize $r_i$};
    }
    \foreach \i in {1} {
        \draw[color=black,ultra thick] (\i,-\shift) -- (\i,0.5*\shift);
        \draw (\i - \labelcloseness, 0) node {\scriptsize $\ell_i$};
    }
    \foreach \i in {1} {
        \draw[color=black,very thick] (\i + 0.2,-\shift + 0.93) -- (\i+ 0.2,\height);
        \draw (\i + 0.2,-\shift + 0.7) node[rotate=-90] {\scriptsize $\vb{\ket{0}}$};
        \draw (\i + 0.27 + \labelcloseness, \height - 0.2) node {\scriptsize $\frac{(d\chi)^2}{\ell_i}$};
    }
    \foreach \i in {2} {
        \draw[color=black,very thick] (\i - 0.2,-\shift + 0.93) -- (\i - 0.2,\height);
        \draw (\i - 0.2,-\shift + 0.7) node[rotate=-90] {\scriptsize $\vb{\ket{0}}$};
        \draw (\i - \labelcloseness +0.1, \height - 0.2) node {\scriptsize $\frac{(d\chi)^2}{r_i}$};
    }
    \foreach \i in {1} {
        \draw[thick, fill=cream, rounded corners=2pt]
            (\i - \extra, -0.3 - \shift) rectangle (\i+1 + \extra, 0.3 - \shift);
        \draw (\i+0.5,-\shift) node {\scriptsize $B_i$};
    }
\end{tikzpicture},
\ee

\pgfmathsetmacro{\spacing}{3.5cm} % define spacing once
\be
\label{unitextension}
\begin{tikzpicture}[scale=\defaultscaling,x=\spacing, baseline=2.3ex]
    \definecolor{cream}{RGB}{255,255,221}
    \pgfmathsetmacro{\height}{2.2}
    \pgfmathsetmacro{\shift}{0.7}
    \pgfmathsetmacro{\labelpositions}{\height / 2 + \shift / 2 + 0.2}
    \pgfmathsetmacro{\labelcloseness}{0.1}
    \pgfmathsetmacro{\extra}{.15}
    \pgfmathsetmacro{\extrachi}{.05}
    \foreach \i in {1} {
        \draw[color=black,very thick] (\i,\shift) -- (\i,\height);
    }
    \foreach \i in {0} {
        \draw[color=purification,very thick] (\i,\shift) -- (\i,\height);
    }
    \foreach \i in {0} {
        \draw[color=black,dashed, very thick] (\i,-\shift) -- (\i,\shift);
        \draw[color=black,very thick] (\i + \extrachi,\shift) -- (\i + \extrachi,\height);
        \draw (\i - \labelcloseness, 0) node {\scriptsize $r_i$};
        \draw (\i - \labelcloseness, \labelpositions) node {\scriptsize $\chi$};
        \draw (\i + \labelcloseness + \extrachi, +\labelpositions) node {\scriptsize $d$};
    }
    \foreach \i in {1} {
        \draw[color=black,ultra thick] (\i,-\shift) -- (\i,\shift);
        \draw[color=purification,very thick] (\i - \extrachi,\shift) -- (\i - \extrachi,\height);
        \draw (\i + \labelcloseness, 0) node {\scriptsize $\ell_i$};
        \draw (\i + \labelcloseness, +\labelpositions) node {\scriptsize $d$};
        \draw (\i - \labelcloseness - \extrachi, +\labelpositions) node {\scriptsize $\chi$};
    }
    \foreach \i in {0} {
        \draw[thick, fill=cream, rounded corners=2pt]
            (\i - \extra, -0.3 + \shift) rectangle (\i+1+\extra, 0.3 + \shift);
        \draw (\i+0.5,\shift) node {\scriptsize $T_i$};
    }
\end{tikzpicture}
\to
\begin{tikzpicture}[scale=\defaultscaling,x=\spacing, baseline=2.3ex]
    \definecolor{cream}{RGB}{255,255,221}
    \pgfmathsetmacro{\height}{2.2}
    \pgfmathsetmacro{\shift}{0.7}
    \pgfmathsetmacro{\labelpositions}{\height / 2 + \shift / 2 + 0.2}
    \pgfmathsetmacro{\labelcloseness}{0.1}
    \pgfmathsetmacro{\extra}{.15}
    \pgfmathsetmacro{\extrachi}{.05}
    \foreach \i in {0} {
        \draw[color=purification,very thick] (\i,\shift) -- (\i,\height);
    }
    \foreach \i in {1} {
        \draw[color=black,very thick] (\i,\shift) -- (\i,\height);
    }
    \foreach \i in {0} {
        \draw[color=black,very thick] (\i + 0.1,-\shift) -- (\i+ 0.1,\shift);
        \draw (\i + 0.17 + \labelcloseness, -\shift + 0.5) node {\scriptsize $\frac{(d\chi)^2}{r_i}$};
    }
    \foreach \i in {1} {
        \draw[color=black,very thick] (\i - 0.1,-\shift) -- (\i - 0.1,\shift);
        \draw (\i - \labelcloseness - 0.17, -\shift + 0.5) node {\scriptsize $\frac{(d\chi)^2}{\ell_i}$};
    }
    \foreach \i in {0} {
        \draw[color=black,dashed, very thick] (\i,-\shift) -- (\i,\shift);
        \draw[color=black,very thick] (\i + \extrachi,\shift) -- (\i + \extrachi,\height);
        \draw (\i - \labelcloseness, 0) node {\scriptsize $r_i$};
        \draw (\i - \labelcloseness, \labelpositions) node {\scriptsize $\chi^2$};
        \draw (\i + \labelcloseness + \extrachi, +\labelpositions) node {\scriptsize $d^2$};
    }
    \foreach \i in {1} {
        \draw[color=black,ultra thick] (\i,-\shift) -- (\i,\shift);
        \draw[color=purification,very thick] (\i - \extrachi,\shift) -- (\i - \extrachi,\height);
        \draw (\i + \labelcloseness, 0) node {\scriptsize $\ell_i$};
        \draw (\i + \labelcloseness, +\labelpositions) node {\scriptsize $d^2$};
        \draw (\i - \labelcloseness - \extrachi, +\labelpositions) node {\scriptsize $\chi^2$};
    }
    \foreach \i in {0} {
        \draw[thick, fill=cream, rounded corners=2pt]
            (\i - \extra, -0.3 + \shift) rectangle (\i+1+\extra, 0.3 + \shift);
        \draw (\i+0.5,\shift) node {\scriptsize $\widetilde{T}_i$};
    }
\end{tikzpicture}
\quad
\text{such that}
\quad
\begin{tikzpicture}[scale=\defaultscaling,x=\spacing, baseline=2.3ex]
    \definecolor{cream}{RGB}{255,255,221}
    \pgfmathsetmacro{\height}{2.2}
    \pgfmathsetmacro{\shift}{0.7}
    \pgfmathsetmacro{\labelpositions}{\height / 2 + \shift / 2 + 0.2}
    \pgfmathsetmacro{\labelcloseness}{0.1}
    \pgfmathsetmacro{\extra}{.15}
    \pgfmathsetmacro{\extrachi}{.05}
    \foreach \i in {0} {
        \draw[color=purification,very thick] (\i,\shift) -- (\i,\height);
    }
    \foreach \i in {1} {
        \draw[color=black,very thick] (\i,\shift) -- (\i,\height);
    }
    \foreach \i in {0} {
        \draw[color=black,very thick] (\i + 0.1,-\shift) -- (\i+ 0.1,\shift);
        \draw (\i + 0.17 + \labelcloseness, -\shift + 0.5) node {\scriptsize $\frac{(d\chi)^2}{r_i}$};
        \draw (\i + 0.1,-\shift - 0.25) node[rotate=-90] {\scriptsize $\vb{\ket{0}}$};
    }
    \foreach \i in {1} {
        \draw[color=black,very thick] (\i - 0.1,-\shift) -- (\i - 0.1,\shift);
        \draw (\i - \labelcloseness - 0.17, -\shift + 0.5) node {\scriptsize $\frac{(d\chi)^2}{\ell_i}$};
        \draw (\i - 0.1,-\shift - 0.25) node[rotate=-90] {\scriptsize $\vb{\ket{0}}$};
    }
    \foreach \i in {0} {
        \draw[color=black,dashed, very thick] (\i,-\shift) -- (\i,\shift);
        \draw[color=black,very thick] (\i + \extrachi,\shift) -- (\i + \extrachi,\height);
        \draw (\i - \labelcloseness, 0) node {\scriptsize $r_i$};
        \draw (\i - \labelcloseness, \labelpositions) node {\scriptsize $\chi^2$};
        \draw (\i + \labelcloseness + \extrachi, +\labelpositions) node {\scriptsize $d^2$};
    }
    \foreach \i in {1} {
        \draw[color=black,ultra thick] (\i,-\shift) -- (\i,\shift);
        \draw[color=purification,very thick] (\i - \extrachi,\shift) -- (\i - \extrachi,\height);
        \draw (\i + \labelcloseness, 0) node {\scriptsize $\ell_i$};
        \draw (\i + \labelcloseness, +\labelpositions) node {\scriptsize $d^2$};
        \draw (\i - \labelcloseness - \extrachi, +\labelpositions) node {\scriptsize $\chi^2$};
    }
    \foreach \i in {0} {
        \draw[thick, fill=cream, rounded corners=2pt]
            (\i - \extra, -0.3 + \shift) rectangle (\i+1+\extra, 0.3 + \shift);
        \draw (\i+0.5,\shift) node {\scriptsize $\widetilde{T}_i$};
    }
\end{tikzpicture}
=
\begin{tikzpicture}[scale=\defaultscaling,x=\spacing, baseline=2.3ex]
    \definecolor{cream}{RGB}{255,255,221}
    \pgfmathsetmacro{\height}{2.2}
    \pgfmathsetmacro{\shift}{0.7}
    \pgfmathsetmacro{\labelpositions}{\height / 2 + \shift / 2 + 0.2}
    \pgfmathsetmacro{\labelcloseness}{0.08}
    \pgfmathsetmacro{\extra}{.15}
    \pgfmathsetmacro{\extrachi}{.05}
    \foreach \i in {0} {
        \draw[color=purification,very thick] (\i,\shift) -- (\i,\height);
    }
    \foreach \i in {1} {
        \draw[color=black,very thick] (\i,\shift) -- (\i,\height);
    }
    \foreach \i in {0} {
        \draw[color=black,dashed, very thick] (\i,-\shift) -- (\i,\shift);
        \draw[color=black,very thick] (\i + \extrachi,\shift) -- (\i + \extrachi,\height);
        \draw (\i - \labelcloseness, 0) node {\scriptsize $r_i$};
        \draw (\i - \labelcloseness, \labelpositions) node {\scriptsize $\chi$};
        \draw (\i + \labelcloseness + \extrachi, +\labelpositions) node {\scriptsize $d$};
    }
    \foreach \i in {1} {
        \draw[color=black,ultra thick] (\i,-\shift) -- (\i,\shift);
        \draw[color=purification,very thick] (\i - \extrachi,\shift) -- (\i - \extrachi,\height);
        \draw (\i + \labelcloseness, 0) node {\scriptsize $\ell_i$};
        \draw (\i + \labelcloseness, +\labelpositions) node {\scriptsize $d$};
        \draw (\i - \labelcloseness - \extrachi, +\labelpositions) node {\scriptsize $\chi$};
    }
    \foreach \i in {0} {
        \draw[thick, fill=cream, rounded corners=2pt]
            (\i - \extra, -0.3 + \shift) rectangle (\i+1+\extra, 0.3 + \shift);
        \draw (\i+0.5,\shift) node {\scriptsize $T_i$};
    }
    \draw[color=purification,very thick] (0.3 - \extrachi / 2,\shift + 0.92) -- (0.3 - \extrachi / 2 ,1.5*\height);
    \draw[color=purification,very thick] (0.3 + \extrachi / 2,\shift + 0.92) -- (0.3 + \extrachi / 2,1.5*\height);
    \draw (0.3 + \extrachi / 2 + \labelcloseness, 1.2*\height) node {\scriptsize $\chi$};
    \draw (0.3 + \extrachi / 2 - \labelcloseness - \extrachi, 1.2*\height) node {\scriptsize $d$};
    \draw[color=purification,very thick] (0.7 - \extrachi / 2,\shift + 0.92) -- (0.7 - \extrachi / 2 ,1.5*\height);
    \draw[color=purification,very thick] (0.7 + \extrachi / 2,\shift + 0.92) -- (0.7 + \extrachi / 2,1.5*\height);
    \draw (0.7 + \extrachi / 2 + \labelcloseness, 1.2*\height) node {\scriptsize $\chi$};
    \draw (0.7 + \extrachi / 2 - \labelcloseness - \extrachi, 1.2*\height) node {\scriptsize $d$};
    \draw (0.3,\shift + 0.7) node[rotate=-90] {\scriptsize $\vb{\ket{0}}$};
    \draw (0.7,\shift + 0.7) node[rotate=-90] {\scriptsize $\vb{\ket{0}}$};
\end{tikzpicture}.
\ee
In extending the isometries, we assumed that $\frac{(d\chi)^2}{\ell_i}$ and $\frac{(d\chi)^2}{r_i}$ are integers for all $i$. If this condition does not hold, one can instead enlarge the dimensions to the least common multiple of these quantities, and the construction proceeds in exactly the same way.

As announced, we use $N$ additional ancillas, each of local dimension $g$, denoted as $A$. We define the control version of the unitaries $B$ and $T$ acting over $\mathbb{C}^{d^4\chi^4g}$ such that if the control qudit is in state $\ket{j}$, the resulting unitary will act as $\widetilde{B}_j$ and $\widetilde{T}_j$ respectively.

We define the global controlled unitary based on this extension as:
\begin{equation}
\label{controlunitary}
    CU = 
\pgfmathsetmacro{\spacing}{1.5cm}
\begin{tikzpicture}[scale=0.7,x=\spacing, baseline={([yshift=-0.65ex] current bounding box.center)}]
    \definecolor{cream}{RGB}{255,255,221}
    \pgfmathsetmacro{\height}{1.5}
    \pgfmathsetmacro{\shift}{0.7}
    \pgfmathsetmacro{\labelpositions}{\height / 2 + \shift / 2 + 0.2}
    \pgfmathsetmacro{\labelcloseness}{0.15}
    \pgfmathsetmacro{\extra}{.15}

    \clip (0.9,-1.5) rectangle (8.1,1.5);
    
    \foreach \i in {0,1,3,4,6,7,9} {
        \draw[color=black,very thick] (\i,\shift) -- (\i,\height);
    }
    \foreach \i in {0,3,6,9} {
        \draw[color=purification,very thick] (\i,\shift) -- (\i,\height);
    }
    \foreach \i in {1.5, 4.5, 7.5} {
        \draw[color=black,very thick] (\i,\height) -- (\i,-\height);
        \fill (\i,\shift) circle (3pt);
        \fill (\i,-\shift) circle (3pt);
        
    }
    \foreach \i in {0,2,3,5,6,8,9} {
        \draw[color=black,very thick] (\i,-\shift) -- (\i,-\height);
        %\draw (\i - \labelcloseness, -\labelpositions) node {\scriptsize $d$};
    }
    \foreach \i in {0,3,6,9} {
        \draw[color=black,very thick] (\i,-\shift) -- (\i,\shift);
        \draw[color=black,very thick] (\i + \extra,\shift) -- (\i + \extra,\height);
    }
    \foreach \i in {1,4,7} {
        %\draw[color=black,very thick] (\i,-\shift) -- (\i,\shift);
        \draw[color=purification,very thick] (\i - \extra,\shift) -- (\i - \extra,\height);
    }
    \foreach \i in {1,4,7} {
            \draw[color=black,very thick, rounded corners=8pt]
    (\i, \shift) -- (\i, 0) -- (\i+1, 0) -- (\i+1, -\shift);
        }
    \foreach \i in {0,3,6} {
        \draw[color=black,very thick] (\i,\shift) -- (\i+1.5,\shift);
        \draw[thick, fill=cream, rounded corners=2pt]
            (\i - \extra, -0.3 + \shift) rectangle (\i+1+\extra, 0.3 + \shift);
        \draw (\i+0.5,\shift) node {\scriptsize $T$};
    }
    \foreach \i in {-1,2,5,8} {
        \draw[color=black,very thick] (\i,-\shift) -- (\i+2.5,-\shift);
        \draw[thick, fill=cream, rounded corners=2pt]
            (\i - \extra, -0.3 - \shift) rectangle (\i+1 + \extra, 0.3 - \shift);
            \draw (\i+0.5,-\shift) node {\scriptsize $B$};     
    }
\end{tikzpicture}
\end{equation}

The protocol begins by preparing a generalized GHZ state of the form $\ket{\mathrm{GHZ}} = \frac{1}{\sqrt{c}} \sum_j m_j \ket{j}_A^{\otimes N}$ on the ancilla space $A$. This can be done in constant depth with measurements, without additional ancillas~\cite{Piroli_2024}. We then feed this state to the control input of $CU$, and also use the input $\ket{0}$ to the two middle legs in \cref{eq:BandBTilde}), for all $\widetilde{B}_i$ gates, such that they act as the isometries $B_i$. Overall, this implements to an arbitrary input state the action of the isometry $$V' = \frac{1}{\sqrt{c}} \sum_j m_j V_j \otimes \ket{j}_A^{\otimes N} \,.$$

Subsequently, we need to decouple the $N$ ancillas, since they are entangled and cannot yet be discarded. For this, we measure each of the ancilla qudits individually in the Fourier basis {$\{\ket{F(x)} \bra{F(x)}\}_{x}$}, defined by
$${\ket{x}\mapsto \ket{F(x)}= \frac{1}{\sqrt{g}} \sum_{k=0}^{g-1} \omega^{-kx}\ket{k}} \,.$$
If we record outcomes $z_\alpha \in \{ 0, \dots,g-1 \}$ from measuring the ancilla qudits, labeled by $\alpha$, the state after the measurement will be $$V'' = \frac{1}{\sqrt{c}} \sum_j m_j \omega^{- j \sum_\alpha z_\alpha} V_j \otimes [ \bigotimes_\alpha \ket{F(z_\alpha)}_A],$$ from which we can extract $V'''=\frac{1}{\sqrt{c}} \sum_j m_j \omega^{- j \sum_\alpha z_\alpha} V_j$, as the state of the ancillas is now disentangled and can be discarded. 

The goal is now to revert the unwanted phases to obtain the required $V = \frac{1}{\sqrt{c}} \sum_j m_j V_j$. To this purpose, consider the individual block tensors as maps $\widetilde{\mathcal{V}}_j$ from $\mathbb{C}^{dD^2}$ to $\mathbb{C}^{d\chi}$:
\begin{equation}
    \widetilde{\mathcal V}_j=
\begin{tikzpicture}[scale=\defaultscaling, baseline={([yshift=-0.65ex] current bounding box.center)}]
    \pgfmathsetmacro{\leglength}{0.6}
    \pgfmathsetmacro{\squarehalf}{0.4}
    \IsomTensorMap{0,0}{\leglength}{\squarehalf}{${\cal V}_j$}
\end{tikzpicture},
\ee
Define the local orthogonal projectors on the image of the individual blocks as $P^j_\text{loc}=\widetilde{\mathcal{V}}_j \widetilde{\mathcal{V}}_j^{-1}$, where $-1$ denotes the (Moore-Penrose) pseudoinverse. With them, we further define the projector operator over a single site $P = \sum_j P^j_\text{loc}$ and  $ {O = (\sum_j \omega^{zj} P^j_\text{loc} + \Id - P)\otimes \Id^{\otimes N-1}}$, where we denote $z = \sum_\alpha z_\alpha$. This operator is clearly local (it acts non-trivially only over a the first site) and unitary, as $O^{\dagger} O = (\Id_{\text{Im} P} + \Id_{\text{Im} (\Id - P)}) \otimes \Id^{\otimes N-1} = \Id^{\otimes N}$, where we used the strong orthogonality of the block tensors (\cref{sMPIstructure}), implying $P^j_\text{loc} P^k_\text{loc} = \delta_{jk} P^j_\text{loc}$. Applying $O$ to the isometry, we obtain $O V'''=V$, which is the isometry we sought for.
\end{proof}

Summarizing, the protocol for implementing a sMPI $V$ on a system of $N$ qudits works as follows:
\begin{itemize}
    \item Prepare, using the procedure of Ref.~\cite{Piroli_2024}, a state $\ket{\mathrm{GHZ}} = \frac{1}{\sqrt{c}} \sum_{j=0}^{g-1} m_j \ket{j}_A^{\otimes N}$ on an ancillary system $A$ consisting of $N$ $g$-dimensional qudits distributed across the system. This can be done deterministically with local operations in constant depth with a single round of measurements and feedforward.
    \item Feed this to the control space of the unitary $CU$, and also use the input $\ket{0}$ to the two middle legs in \cref{eq:BandBTilde}).
    \item Measure the ancilla qudits in the Fourier Basis and record the outcomes. Discard the ancillas, which decouple after the measurements.
    \item Apply the operator O, which is a phase unitary at a single site, but depends on all classical outcomes of the previous measurement step.
\end{itemize}

\section{sMPIs as MPUs with GHZ input}

\cref{SMapproach1} addresses the implementation of a sMPI as a finite-depth circuit assisted by local measurements and corrections. On top of that, however, a more structural perspective might be possible, providing a characterization of sMPIs in terms of simpler objects, namely MPS and MPUs.
To illustrate this idea, consider the following example.

\begin{ex}
    Take a homogeneous MPU whose single-site tensor has bipartite input and output space, and feed an (un-normalized) generalized GHZ state $\ket{\mathrm{GHZ}} = \sum_{j=0}^{g-1} m_j \ket{j}^{\otimes N}$ into one of the input legs:
    \begin{equation}
        \begin{tikzpicture}[scale=\defaultscaling,baseline={(0,-0.1)}]
        \pgfmathsetmacro{\squarehalf}{0.4}
        \MPOTensorDoubleLegsWithInputGHZ{6*\leglength + 6*\squarehalf,0}{\leglength}{\squarehalf}{${\cal U}$}{0}
    \end{tikzpicture}
    \quad
    \to
    \quad
    \begin{tikzpicture}[scale=\defaultscaling,baseline={(0,-0.1)}]
        \pgfmathsetmacro{\squarehalf}{0.4}
        \MPOTensorDoubleLegsWithInputGHZ{6*\leglength + 6*\squarehalf,0}{\leglength}{\squarehalf}{${\cal U}$}{0}
        \draw (6*\leglength + 6*\squarehalf + 0.7, - 1.35*\leglength) node {\scriptsize $j$};
    \end{tikzpicture}
    =
    \begin{tikzpicture}[scale=\defaultscaling,baseline={(0,-0.1)}]
        \pgfmathsetmacro{\squarehalf}{0.4}
        \IsomTensor{6*\leglength + 6*\squarehalf,0}{\leglength}{\squarehalf}{${\cal V}_j$}{0}
    \end{tikzpicture}
    \end{equation}
    The resulting tensor is a sMPI generating tensor, where the branches of the GHZ state define the individual blocks. Notice that the presence of the GHZ inherently provides the strong orthogonality of the blocks, as any $\mathbb E_{jj'}$ or $\mathbb S_{jj'}$, as defined in \cref{Edecomp} and \cref{Sdecomp}, will vanish after blocking for any $j\ne j'$. This can be explicitly verified by using the fact that every homogeneous MPU can be decomposed, after blocking, into a depth-2 quantum circuit~\cite{cirac_matrix_2017-1}.
\end{ex}

The importance of this example is that it suggests a general correspondence might be possible: not only does an MPU combined with a GHZ state yield a sMPI, but conversely, any sMPI could possibly be expressed in the MPU with input GHZ form. The plausibility of the reverse direction being true lies in the fact that the GHZ encodes naturally the block form and the strong orthogonality. In the following, we demonstrate this equivalence through an alternative implementation approach, with one important caveat: the MPU must include a boundary tensor. We say an MPO has homogeneous bulk and a boundary (boundary-hMPO) if all its local tensors are identical, up to a single boundary tensor that may differ from the rest \cite{styliaris2025matrix}. Graphically:
\begin{equation}
    O = \begin{tikzpicture}[scale=\defaultscaling,baseline={([yshift=-0.65ex] current bounding box.center)}]
            \pgfmathsetmacro{\leglength}{0.5}
            \pgfmathsetmacro{\squarehalfside}{0.4}
            \SingleTrLeft{(-4*\leglength,0)}
            \draw [very thick] (-4*\leglength,0) to  (0,0);
            \MPOTensor{0,0}{\leglength}{\squarehalfside}{${\cal O}$}{0}
            \MPOTensor{2*\leglength + 2*\squarehalfside,0}{\leglength}{\squarehalfside}{${\cal O}$}{0}
            \SingleDots{4*\leglength + 4*\squarehalfside, 0}{2*\leglength}
            \MPOTensor{6*\leglength + 6*\squarehalfside,0}{\leglength}{\squarehalfside}{${\cal O}$}{0}
            %\draw [very thick] (11*\leglength,0) to  (14*\leglength,0);
            \SingleTrRight{(8*\leglength + 7.5*\squarehalfside,0)}
            \filldraw[color=black, fill=whitetensorcolor, thick] (-2.5*\leglength,0) circle (0.5);
            \node at (-2.5*\leglength,0) {\scriptsize $\lambda$};
    \end{tikzpicture}.
\end{equation}

Our result is:
\begin{thm}[(Formal version) sMPI = boundary-hMPU + GHZ] \label{SMthm:boundary}
Let $\Sigma$ be an $N$-qudit system and $A_2$ a corresponding set of $N$ ancillas of local dimension $g$. 
Any (normalized) $g$-hMPI $V$ acting on $\Sigma$ can be implemented by applying a suitable MPU with homogeneous bulk and boundary on the composite system $\Sigma \otimes A_2$, where the ancillas are prepared in the generalized GHZ state $\ket{\mathrm{GHZ}}_{A_2}=\frac{1}{\sqrt{c}}\sum_{j=0}^{g-1}m_j \ket{j}_{A_2}^{\otimes N}.$ After the action of the MPU, the ancillas decouple from $\Sigma$, and tracing out $A_2$ yields $V$ on $\Sigma$.
\end{thm}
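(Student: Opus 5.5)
We build on Proposition~\ref{sMPIstructure} and the construction in the proof of Theorem~\ref{SMapproach1}. After finitely many blockings, $V=\frac{1}{\sqrt c}\sum_{j=0}^{g-1} m_j V_j$ with the $V_j$ hMPIs whose tensors are strongly orthogonal. Theorem~\ref{SMdepth2} realizes each $V_j$ as a depth-two brickwork of isometric gates. Extending these gates to unitaries $\widetilde B_j,\widetilde T_j$ as in \cref{eq:BandBTilde} and \cref{unitextension} yields the controlled circuit $CU$ of \cref{controlunitary}. This circuit is a homogeneous MPU on $\Sigma\otimes A_2$, together with the fixed $\ket{0}$ ancillas, which we absorb into the local input space. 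Acting on $\ket{\psi}\otimes\ket{\mathrm{GHZ}}_{A_2}$, it outputs $\frac{1}{\sqrt c}\sum_j m_j (V_j\ket{\psi})\otimes\ket{j}^{\otimes N}_{A_2}$.

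It remains to decouple $A_2$ by a unitary alone, without the measurement of Theorem~\ref{SMapproach1}. The target output is $(V\ket\psi)\otimes\ket{\mathrm{GHZ}}$, or more generally $V\ket\psi\otimes\ket{\phi}$ for some fixed state $\ket{\phi}$. Strong orthogonality gives orthogonal local projectors $P^j_{\rm loc}$ onto the images of the blocks. Since the $P^j_{\rm loc}$ are orthogonal, a single-site controlled phase such as $\sum_j P^j_{\rm loc}\otimes(\cdots)$ is unitary and local. We therefore aim to map the branch state $\ket{j}^{\otimes N}$ to a $j$-independent ancilla state using unitaries conditioned on $P^j_{\rm loc}$. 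On each site the branch label $j$ is locally readable from the system, because the image of block $j$ is locally distinguishable. Hence a product of on-site unitaries $\sum_j P^j_{\rm loc}\otimes X_j$ with $X_j\ket{j}=\ket{0}$, plus the identity on $\Id-P$, would disentangle the ancillas.

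The catch is that the local images are only disjoint after blocking and on the purification-extended output. Moreover, $P^j_{\rm loc}$ acts on a single-site output while the block decomposition involves bond legs, so $\sum_j P^j_{\rm loc}\otimes X_j$ need not act correctly on the full entangled output. This is where the paper hints at amplitude amplification. An alternative route avoids local reading altogether. We would treat $W=CU$ as the signal unitary, treat the overlap of the output with the reflection about $\ket{\mathrm{GHZ}}$ as the block to amplify, and apply a fixed number of oblivious amplitude amplification rounds. The relevant amplitude is $\sum_j m_j^2/c$, with the relative weight $1/\sqrt c$, and it is $N$-independent, so the number of rounds is constant. The reflection about $\ket{\mathrm{GHZ}}$ is, however, not an MPU with homogeneous bulk. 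The reflection $\Id-2\ket{\mathrm{GHZ}}\bra{\mathrm{GHZ}}$ is an MPO with bond dimension $g+1$, but it needs a boundary tensor to carry the $1/c$ normalization. This is exactly the boundary $\lambda$. We would show that each factor in the amplification sequence is a boundary-hMPO. Their product is then a boundary-hMPO of bounded bond dimension, and it is unitary by construction.

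The main obstacle is the amplification step. Since $c$ is an integer, exact amplification to amplitude $1$ requires tuning the rotation angles as in fixed-point or exact oblivious amplification (Ref.~\cite{gilyen2019quantum}). One must also check that the oblivious property holds, i.e.\ that the reduced action is proportional to an isometry on every input. This follows from $V^\dagger V=\Id$ together with the orthogonality $V_i^\dagger V_j=\delta_{ij}\Id$. If exact angles are not attainable, we would adjoin one extra qubit in the boundary tensor to adjust the amplitude to a value where a standard round is exact.
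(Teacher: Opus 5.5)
Your first idea already proves the statement, and the ``catch'' you raise against it is not real. After the blocking of Proposition~\ref{sMPIstructure}, which you assume anyway, one can write $V_j\ket{\psi}=\bigl((\widetilde{\mathcal V}_j)_k\otimes\Id\bigr)\ket{\Xi}$, where $\ket{\Xi}$ is the contraction of $\ket{\psi}$ with all other tensors. So the site-$k$ output of $V_j\ket{\psi}$ lies in $\mathrm{Im}\,\widetilde{\mathcal V}_j$, however entangled the global state is. Strong orthogonality says precisely $\widetilde{\mathcal V}_{j'}^\dagger\widetilde{\mathcal V}_j=0$ for $j\neq j'$. Hence $(P^{j'}_{\rm loc})_k V_j=\delta_{jj'}V_j$ for every $k$; this is the same identity the paper uses to show that its correction $O$ in Theorem~\ref{SMapproach1} is unitary. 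The extra outputs of $\widetilde B_j,\widetilde T_j$ sit in a $j$-independent $\ket{0}$ and are untouched. Now let $X_j$ map $\ket{j}^{\otimes q}\mapsto\ket{0}^{\otimes q}$ on the $q$ ancillas of a blocked site, and set $W=\prod_k\bigl[\sum_j (P^j_{\rm loc})_k\otimes (X_j)_k+(\Id-P)_k\otimes\Id\bigr]$. This $W$ is a single layer of on-site unitaries, and $W\,CU$ sends $\ket{\psi}\ket{0}_{A_1}\ket{\mathrm{GHZ}}_{A_2}$ to $(V\ket{\psi})\ket{0}\ket{0}^{\otimes N}_{A_2}$. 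Being a finite-depth translation-invariant circuit, $W\,CU$ is a homogeneous MPU after blocking, even without a boundary tensor. This proves the theorem by a genuinely different and much lighter route than the paper's. The paper instead Fourier-transforms $A_2$, writes $R_\Phi,R_\Psi$ as boundary MPOs via \cref{MPOlincomb}, and runs subspace oblivious amplitude amplification (\cref{lem:amp_amp}) with tuned angles.

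The amplification route you commit to instead has two concrete gaps. First, you project onto the wrong state. One has $\bra{\mathrm{GHZ}}_{A_2}CU\ket{\psi}\ket{\mathrm{GHZ}}_{A_2}=\frac1c\sum_j m_j^2V_j\ket{\psi}$, which is not proportional to $V\ket{\psi}$ unless all $m_j$ coincide. The amplitude $\sum_j m_j^2/c$ you quote equals $1$, which is inconsistent with needing any amplification. You must un-prepare the uniform state $\frac{1}{\sqrt g}\sum_j\ket{j}^{\otimes N}$, which gives the block $\frac{1}{\sqrt g}V$. This has to be an entangled GHZ-type state. A product of single-site Fourier transforms followed by reflection about $\ket{0}^{\otimes N}$, which is the form of the paper's $U=[\Id\otimes(F^\dagger)^{\otimes N}]CU$, has overlap only $g^{-N/2}$ with each branch $\ket{j}^{\otimes N}$, so the number of rounds would scale like $g^{N/2}$. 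Second, once the fixed ancillas $A_1$ are absorbed into the input, $M=\frac{1}{\sqrt c}\sum_j m_jU_j$ is isometric only on the subspace $\mathcal S$ where $A_1$ is in $\ket{0}$, not ``on every input''. The relations $V^\dagger V=\Id$ and $V_i^\dagger V_j=\delta_{ij}\Id$ only fix $P_{\mathcal S}M^\dagger M P_{\mathcal S}=P_{\mathcal S}$. For $\ket{\psi}\in\mathcal S$, the cross terms $U_i^\dagger U_j\ket{\psi}$ with $i\neq j$ lie entirely in $\mathcal S^\perp$ and need not cancel. Standard oblivious amplitude amplification therefore leaks out of the two-dimensional rotation plane, which is exactly why the paper's $R_\Psi$ contains $P_{\mathcal S^\perp}$. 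Both defects are repairable, but with the on-site disentangler above you do not need amplification at all.
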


Our goal is to construct a quantum circuit that implements the isometry $V = \frac{1}{\sqrt{c}}\sum_{j=0}^{g-1} m_j V_j$ on an $N$-qubit system $\Sigma$. In contrast to \cref{SMapproach1}, we now require the entire operation to be unitary, given that we seek for an MPU, meaning that measurements and feedforward are not allowed. This is precisely where \textit{amplitude amplification} proves useful, as it enables us to obtain the desired isometry with certainty, effectively eliminating unwanted components step by step, via unitary operations only (\cref{lem:amp_amp}).

To prove our result, we will need the following lemmas.

\begin{lem}
\label{MPOlincomb}
    Any linear combination $\sum_i \lambda_i O_i$, $\lambda_i \in \mathbb{C}$ of MPOs $O_i$ of bond dimension $D_i$ can be expressed as a single MPO $\widetilde{O}$ in block form and with a boundary $\Lambda$ containing the coefficients of the combination, such that $$ \widetilde{\mathcal O}^{jk} = \bigoplus_i^{} {\mathcal O}^{jk}_i, \qquad \Lambda = \bigoplus_i \lambda_i \Id_{D_i}.$$
\end{lem}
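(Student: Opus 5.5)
The plan is to build $\widetilde{\mathcal O}$ as a direct sum of the individual tensors and to let the boundary $\Lambda$ act as a projector-weighted selector that picks out each summand with its coefficient. We then check, by expanding the trace over the bond space, that the contracted MPO reproduces $\sum_i \lambda_i O_i$.

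First, we fix the setting. Each $O_i$ is given, in the homogeneous or boundary form used above, by $O_i = \sum \Tr[\Lambda_i\, \mathcal O_i^{j_1k_1}\cdots \mathcal O_i^{j_Nk_N}]\,\ket{j_1\dots j_N}\bra{k_1\dots k_N}$. Here $\Lambda_i=\Id_{D_i}$ in the purely homogeneous case. If the $O_i$ already carry their own boundaries $\Lambda_i$, we replace $\lambda_i\Id_{D_i}$ by $\lambda_i\Lambda_i$ throughout. The physical dimensions of all $O_i$ must agree, since otherwise the linear combination is undefined. The bond dimensions $D_i$ may differ, and the direct sum absorbs this.

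Second, we define $\widetilde{\mathcal O}^{jk}=\bigoplus_i \mathcal O_i^{jk}$, which acts on $\bigoplus_i\mathbb C^{D_i}$, with total bond dimension $\sum_i D_i$, and $\Lambda=\bigoplus_i\lambda_i\Id_{D_i}$. Products of block-diagonal matrices are block diagonal, so for every string of physical indices
\[
\Lambda\,\widetilde{\mathcal O}^{j_1k_1}\cdots\widetilde{\mathcal O}^{j_Nk_N}=\bigoplus_i \lambda_i\,\mathcal O_i^{j_1k_1}\cdots\mathcal O_i^{j_Nk_N}.
\]
Taking the trace splits it into a sum of traces over the blocks, which gives $\sum_i\lambda_i\Tr[\mathcal O_i^{j_1k_1}\cdots\mathcal O_i^{j_Nk_N}]$. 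Summing against $\ket{j_1\dots j_N}\bra{k_1\dots k_N}$ then yields exactly $\sum_i\lambda_iO_i$ for every $N$.

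There is no real obstacle here. The only point that needs care is consistency of placement: the boundary must sit at the same bond in every term, and the blocks must be indexed compatibly on the left and right virtual legs of each tensor so that the direct sum is well defined. We also note, for the later use in \cref{SMthm:boundary}, that the bulk tensor $\widetilde{\mathcal O}$ stays homogeneous and $N$-independent, and all coefficient information is confined to the single boundary matrix $\Lambda$.
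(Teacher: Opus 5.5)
Your proposal is correct and matches the paper's proof: both take the direct-sum bulk tensor $\widetilde{\mathcal O}^{jk}=\bigoplus_i\mathcal O_i^{jk}$ with boundary $\Lambda=\bigoplus_i\lambda_i\Id_{D_i}$. Both then observe that the trace of the block-diagonal product splits into $\sum_i\lambda_i\Tr[\mathcal O_i^{j_1k_1}\cdots\mathcal O_i^{j_Nk_N}]$. Your remark about pre-existing boundaries, replacing $\lambda_i\Id_{D_i}$ by $\lambda_i\Lambda_i$, is a harmless extension of the paper's note that homogeneity and periodic boundary conditions are not essential.
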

\begin{proof}
    We start from a linear combination $\sum_i \lambda_i O_i$, $\lambda_i \in \mathbb{C}$. For simplicity, assume periodic boundary conditions and homogeneity, although neither assumption is essential for the construction. Define the MPO $\widetilde{O}$ with single-site tensors fulfilling $\widetilde{\mathcal O}^{jk} = \bigoplus_i^{} {\mathcal O}^{jk}_i$, where $\mathcal O_i$ denotes the MPO single site tensor, and the matrix $\Lambda = \bigoplus_i \lambda_i \Id_{D_i}$ as the boundary. This yields the statement, as $\operatorname{Tr}\left(\Lambda \;\widetilde{\mathcal O}^{j_{1},k_{1}} \cdots \widetilde{\mathcal O}^{j_{N},k_{N}}\right) = \sum_i \lambda_i \operatorname{Tr}\left({\mathcal O}^{j_{1},k_{1}}_i \cdots {\mathcal O}^{j_{N},k_{N}}_i\right)$.
\end{proof}

\begin{lem}[Subspace amplitude amplification \cite{gilyen2019quantum,Berry_2014,styliaris2025quantumcircuitcomplexitymatrixproduct}] \label{lem:amp_amp}
    Let $U$ be a unitary on $\mathcal H_S \otimes \mathcal H_{A_2} = (\mathbb{C}^d)^{\otimes N} \otimes (\mathbb{C}^{g})^{\otimes N}$, $M$ an operator acting on $\mathcal H_S$, and $\mathcal S$ a subspace of $\mathcal H_S$. Denote $P_{\mathcal S^\perp}$ the orthogonal projector onto the orthogonal complement of $\mathcal S$. Denote $\ket{\mathrm{GHZ}}_{A_2} = \sum_{i=0}^{d'-1} \ket{i}^{\otimes N}$. Suppose that:
    \begin{enumerate}[(i)]
        \item $\bra{\psi'} M^\dagger M \ket{\psi} = \bra{\psi'} \psi \rangle$ for all $\ket{\psi},\ket{\psi'} \in \mathcal S$.
        \item For all $\ket{\psi} \in \mathcal S$,
        \begin{align}
            U (\ket{\psi}_S\ket{\mathrm{GHZ}}_{A_2}) = \sin\theta \ket{\Phi} + \cos\theta \ket{\Phi^\perp}
        \end{align}
        with $\sin\theta\ne 0$, $\cos\theta \ne 0$. Here $\ket{\Phi} \coloneqq (M\ket{\psi}_S) \ket{0}_{A_2}^{\otimes N}$ satisfying $\bra{\Phi^\perp} \Phi \rangle = 0$ and $_{A_2}^{\otimes N}\bra{0} \Phi^\perp \rangle = 0$.
    \end{enumerate}
    Denoting $\ket{\Psi} \coloneqq \ket{\psi}_S\ket{0}_{A_2}^{\otimes N}$, define its orthogonal vector via
    \begin{align}
        U \ket{\Psi^\perp} \coloneqq \cos\theta \ket{\Phi} - \sin \theta \ket{\Phi^\perp}
    \end{align}
    which depends on $\ket{\psi}$. Then:
    \begin{enumerate}[(a)]
        \item For all $\ket{\psi}, \ket{\psi'} \in \mathcal S$,
        \begin{align} \label{eq:orth_psi_perp}
        _S\bra{\psi'}(_{A_2}^{\otimes N}\bra{0}\Psi^\perp \rangle)  = 0 \;.
        \end{align}
        \item For all $\ket{\psi} \in \mathcal S$, the unitary
    \begin{align}
        R_{\Phi} & \coloneqq \1_S \otimes 2 (\ket{0}_{A_2}\bra{0})^{\otimes N} - \1
    \end{align}
    reflects along $\ket{\Phi}$ on the $\{\ket{\Phi}, \ket{\Phi^\perp}\}$ subspace, while
    \begin{align}
        R_{\Psi} & \coloneqq \left( \1_S \otimes 2(\ket{0}_{A_2} \bra{0})^{\otimes N} - \1 \right) \left( \1 - P_{\mathcal S^\perp} \otimes 2 (\ket{0}_{A_2}\bra{0})^{\otimes N} \right)
    \end{align}
    reflects along $\ket{\Psi}$ on the $\{\ket{\Psi}, \ket{\Psi^\perp}\}$ subspace.

    In other words, the reflections are such that $$\begin{aligned}
R_{\Phi}\ket{\Phi} &= \ket{\Phi},\\
R_{\Phi}\ket{\Phi^\perp} &= -\ket{\Phi^\perp},
\end{aligned}$$ and $$\begin{aligned}
R_{\Psi}\ket{\Psi} &= \ket{\Psi},\\
R_{\Psi}\ket{\Psi^\perp} &= -\ket{\Psi^\perp}.
\end{aligned}$$
    \end{enumerate}
\end{lem}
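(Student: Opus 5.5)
The plan is to verify (a) and (b) by direct computation in the two-dimensional subspaces $\mathrm{span}\{\ket{\Phi},\ket{\Phi^\perp}\}$ and $\mathrm{span}\{\ket{\Psi},\ket{\Psi^\perp}\}$, using only unitarity of $U$ and the two orthogonality properties of $\ket{\Phi^\perp}$.

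\emph{Conventions.} I read hypothesis (ii) as $U\ket{\Psi} = \sin\theta\ket{\Phi}+\cos\theta\ket{\Phi^\perp}$ with $\ket{\Psi}=\ket{\psi}_S\ket{0}^{\otimes N}_{A_2}$, so that the (unitary) preparation of the GHZ state from $\ket{0}^{\otimes N}_{A_2}$ is absorbed into $U$. This is what makes $\ket{\Psi^\perp}$, defined through $U\ket{\Psi^\perp}=\cos\theta\ket{\Phi}-\sin\theta\ket{\Phi^\perp}$, orthogonal to $\ket{\Psi}$. I also take $\theta$ to be independent of $\ket{\psi}\in\mathcal S$, as implicit in the statement. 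For $\ket{\psi}$ normalized this is consistent with (i), since (i) gives $\|\Phi\|=\|\psi\|$, and I treat $\ket{\Phi^\perp}$ as normalized.

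\emph{Step 1: Gram data.} Let $\ket{\psi},\ket{\psi'}\in\mathcal S$, with associated $\ket{\Phi},\ket{\Phi^\perp}$ and $\ket{\Phi'},\ket{\Phi'^\perp}$. By (i),
\[
\langle\Phi'|\Phi\rangle=\bra{\psi'}M^\dagger M\ket{\psi}=\langle\psi'|\psi\rangle .
\]
The cross terms $\langle\Phi'|\Phi^\perp\rangle$ and $\langle\Phi'^\perp|\Phi\rangle$ vanish. This is because $\ket{\Phi'}$ and $\ket{\Phi}$ are supported on ancilla state $\ket{0}^{\otimes N}$, while $\ket{\Phi^\perp}$ and $\ket{\Phi'^\perp}$ have no component there.

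Unitarity of $U$ then gives
\[
\langle\psi'|\psi\rangle=\langle\Psi'|\Psi\rangle=\langle U\Psi'|U\Psi\rangle=\sin^2\theta\,\langle\psi'|\psi\rangle+\cos^2\theta\,\langle\Phi'^\perp|\Phi^\perp\rangle .
\]
Since $\cos\theta\neq0$, this yields $\langle\Phi'^\perp|\Phi^\perp\rangle=\langle\psi'|\psi\rangle$. I expect this step to be the main subtle point, because it is exactly where the $\ket{\psi}$-independence of $\theta$ is needed.

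\emph{Step 2: proof of (a).} Using Step 1,
\[
\langle\Psi'|\Psi^\perp\rangle=\langle U\Psi'|U\Psi^\perp\rangle=\sin\theta\cos\theta\,\langle\psi'|\psi\rangle-\cos\theta\sin\theta\,\langle\psi'|\psi\rangle=0 ,
\]
where the cross terms again drop out. Since $\langle\Psi'|\Psi^\perp\rangle={}_S\bra{\psi'}\big({}^{\otimes N}_{A_2}\bra{0}\Psi^\perp\rangle\big)$, this is precisely \cref{eq:orth_psi_perp}.

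\emph{Step 3: proof of (b) for $R_\Phi$.} Write $\Pi=\1_S\otimes(\ket{0}\bra{0})^{\otimes N}$, so that $R_\Phi=2\Pi-\1$. We have $\Pi\ket{\Phi}=\ket{\Phi}$ and $\Pi\ket{\Phi^\perp}=0$. Hence $R_\Phi\ket{\Phi}=\ket{\Phi}$ and $R_\Phi\ket{\Phi^\perp}=-\ket{\Phi^\perp}$.

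\emph{Step 4: proof of (b) for $R_\Psi$.} Write $R_\Psi=R_\Phi\,G$ with $G=\1-2P_{\mathcal S^\perp}\otimes(\ket{0}\bra{0})^{\otimes N}$.
\begin{itemize}
\item On $\ket{\Psi}$: since $P_{\mathcal S^\perp}\ket{\psi}=0$, we get $G\ket{\Psi}=\ket{\Psi}$. Then $R_\Phi\ket{\Psi}=\ket{\Psi}$ because $\Pi\ket{\Psi}=\ket{\Psi}$.
\item On $\ket{\Psi^\perp}$: write $\Pi\ket{\Psi^\perp}=\ket{\chi}_S\ket{0}^{\otimes N}$. By (a), $\ket{\chi}\perp\mathcal S$, hence $P_{\mathcal S^\perp}\ket{\chi}=\ket{\chi}$ and $G\ket{\Psi^\perp}=\ket{\Psi^\perp}-2\ket{\chi}\ket{0}^{\otimes N}$. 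Applying $R_\Phi=2\Pi-\1$ gives
\[
2\ket{\chi}\ket{0}^{\otimes N}-\ket{\Psi^\perp}-4\ket{\chi}\ket{0}^{\otimes N}+2\ket{\chi}\ket{0}^{\otimes N}=-\ket{\Psi^\perp}.
\]
\end{itemize}
This completes (b).
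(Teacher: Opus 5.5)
Your proof is correct. The paper gives no proof of this lemma and only cites Gily\'en et al., Berry et al.\ and Styliaris et al.; your argument is the standard oblivious-amplitude-amplification proof of Berry et al.\ restricted to $\mathcal S$: unitarity plus (i) give $\langle\Phi'^\perp|\Phi^\perp\rangle=\langle\psi'|\psi\rangle$, hence $\langle\Psi'|\Psi^\perp\rangle=0$, and this is exactly what makes $R_\Psi=2P_{\mathcal S}\otimes(\ket{0}\bra{0})^{\otimes N}-\1$ a reflection about $\ket{\Psi}$.

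Your rereading of (ii), with input $\ket{\Psi}=\ket{\psi}\ket{0}^{\otimes N}$, is genuinely necessary: with the literal input $\ket{\psi}\ket{\mathrm{GHZ}}$, statement (a) is false (already for $N=1$, $\mathcal H_S=\mathbb C$, $g=2$ one finds $\ket{\Psi^\perp}\propto\ket{-}$, so $\langle 0|\Psi^\perp\rangle\neq 0$). Correspondingly, where the paper later applies $G=-UR_\Psi U^\dagger R_\Phi$ to $U\ket{\psi}\ket{\mathrm{GHZ}}$, the inner reflection should be $2P_{\mathcal S}\otimes\ket{\mathrm{GHZ}}\bra{\mathrm{GHZ}}-\1$ rather than $R_\Psi$. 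This replacement is still a linear combination of MPOs.
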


With these at our disposal, we can now prove \cref{SMthm:boundary}.

\begin{proof}
    As explained, the goal is to find a quantum circuit that acts as $V = \frac{1}{\sqrt{c}}\sum_{j=0}^{g-1} m_j V_j$ on an N-qubit system $\Sigma$. The overall protocol needs to be, unlike in \cref{SMapproach1}, a unitary, so that measurements are not allowed. 
    The first steps resemble the previous protocol. Remember that, in that construction, the isometric tensors were extended to unitary gates (\cref{eq:BandBTilde}), yielding unitary circuits overall. In other words, the $V_j$ are turned into unitaries $U_j$. In particular, the bottom tensors are extended to unitary gates via two ancillas $A_1$ of dimension $d\chi^2$.
    Phrased differently, this means that the operator $M =\frac{1}{\sqrt{c}}\sum_j m_j U_j $ is an isometry on the subspace $\mathcal S$ defined by fixing the first set of ancillas to $\ket{0}_{A_1}$: $\mathcal{S} = \{\ket{\psi} \otimes \ket{0}_{A_1}^{\otimes N}\}_{\ket{\psi} }$. The goal is now to implement $M$, aided by the set of $N$ ancillas $A_2$, of local dimension $g$, i.e., of local dimension equal to the number of independent MPIs in the sMPI decomposition.
    In Ref.~\cite{gilyen2019quantum}, it was first shown how to circuitally implement an isometry such as $M$, by means of a purposefully crafted unitary and an orthogonal projector on the image of the target isometry. However, here we follow a different but equivalent approach, presented in Ref.~\cite{styliaris2025quantumcircuitcomplexitymatrixproduct}. To that aim, we construct the controlled unitary $CU$ as in the first approach (see \cref{controlunitary}), using an $N$-ancilla system $A_2$ for the control qudits.
We then input a generalized GHZ of the form $\ket{\mathrm{GHZ}} = \frac{1}{\sqrt{c}} \sum_{j=0}^{g-1} m_j \ket{j}_{A_2}^{\otimes N}$ on the control qudits, yielding $${CU \ket{\mathrm{GHZ}}_{A_2} = \frac{1}{\sqrt{c}} \sum_{j=0}^{g-1} m_j U_j \otimes \ket{j}_{A_2}^{\otimes N}}.$$
We then act on each of the $N$ $A_2$ ancillas with the inverse Fourier Transform defined by the inverse of \\ ${\ket{x}\mapsto \ket{F(x)}= \frac{1}{\sqrt{d}} \sum_{k=0}^{d-1} \omega^{-kx}\ket{k}}$, where $\omega=e^{2\pi i/d}$. In our case, $F \ket{0}_{A_2} \coloneqq \frac{1}{\sqrt{g}}\sum_{j=0}^{g-1} \ket{j}_{A_2}$. Ultimately we define $U \coloneqq [\Id_{\Sigma} \otimes (F^\dagger)^{\otimes N}](CU)$. Now, for any $\ket{\psi}_{S}$ in the previously defined $\mathcal S$, a direct calculation gives
\begin{align} \label{eq:aa_subspace}
    U \ket{\psi}_{S} \ket{\mathrm{GHZ}}_{A_2} = \frac{1}{\sqrt{cd}} \ket{\Phi} + \sqrt{1 - \frac{1}{cd}} \ket{\Phi^\perp} \;,
\end{align}
where $\ket{\Phi} \coloneqq (M\ket{\psi}_{S}) \ket{0}_{A_2}^{\otimes N}$ such that $_{A_2}^{\otimes N}\langle 0 \ket{\Phi^\perp} = 0$. 
In Ref.~\cite{styliaris2025quantumcircuitcomplexitymatrixproduct}, it is shown how the output state can be worked upon by successive rounds of amplitude amplification, specifically by rotations in $\text{span}\{\ket{\Phi}, \ket{\Phi^\perp}\}$, to eliminate the component orthogonal to $\ket{0}_{A_2}^{\otimes N}$, i.e, $\ket{\Phi^\perp}$, which is associated to an action on the system different than the required $\frac{1}{\sqrt{c}}\sum_j m_j V_j$. In particular, the relevance of Ref.~\cite{styliaris2025quantumcircuitcomplexitymatrixproduct} lies in the correspondence to our setting: the procedure works in a \textit{subspace} $\mathcal S$ of the full Hilbert space, where it can be guaranteed that $M$ is an isometry. Previous work \cite{Berry_2014} dealt only with oblivious amplitude amplification in the full space.

By applying the generalization of oblivious amplitude amplification of \cref{lem:amp_amp}, we now define the following operators:

\begin{itemize}
    \item $R_{\Phi} \coloneqq \1_S \otimes 2 (\ket{0}_{A_2}\bra{0} )^{\otimes N}- \1$
    \item $R_{\Psi} \coloneqq \left( \1_S \otimes 2(\ket{0}_{A_2}\bra{0} )^{\otimes N} - \1 \right) \left( \1 - P_{\mathcal S^\perp} \otimes 2(\ket{0}_{A_2}\bra{0} )^{\otimes N} \right) =  P_S \otimes 2(\ket{0}_{A_2}\bra{0} )^{\otimes N} - \1 $ 
    \item $G \coloneqq - U R_{\Psi} U^\dagger R_{\Phi}$.
\end{itemize}
Denoting $\frac{1}{\sqrt{cd}} $ as $\sin \theta $, we straightforwardly get: 
\begin{equation}
    G^\ell U (\ket{\psi}_S\ket{\mathrm{GHZ}}_{A_2}) = G^\ell\left( \sin\theta \ket{\Phi} + \cos\theta \ket{\Phi^\perp} \right) \nonumber \\
         = \sin[(2\ell +1)\theta] \ket{\Phi} + \cos[(2\ell +1)\theta] \ket{\Phi^\perp} \;,
\end{equation}
which becomes, for $\ell$ satisfying $(2\ell + 1)\theta = \pi/2 $:
\begin{equation}
\label{rotation}
    G^\ell U (\ket{\psi}_S\ket{\mathrm{GHZ}}_{A_2}) = (M\ket{\psi}_{S}) \ket{0}_{A_2}^{\otimes N}.
\end{equation}
This means that after applying $\ell$ amplitude amplification rotations, the ancilla decouples and we are left with the action of $\frac{1}{\sqrt{c}}\sum_j m_j V_j$ on the first system (remember that $U_j \ket{0}_{A_1}^{\otimes N} = V_j$). Notice that if no integer $\ell$ satisfies $(2\ell + 1)\theta = \pi/2 $, we could redefine $\frac{1}{\sqrt{cd}} $ by adding an ancillary qubit and rotating it by $\phi$: $M' = M \otimes (\cos\phi \; \Id + i \sin \phi \;Z)$, leading to a prefactor of $\frac{1}{\sqrt{cd} \; (|\cos\phi| + |\sin \phi|) }$. Adjusting $\phi$ leads to an integer $\ell$.

We constructed a unitary $ U_{\mathrm{total}} = G^\ell U $ acting on system endowed with ancillas $\Sigma \otimes A_2$.
By preparing the registers $A_2$ in a generalized GHZ state and contracting those legs,
the induced map on $\Sigma$ is precisely $V$.
Since the ancilla systems are decoupled at the end of the process,
they can be discarded without affecting the action on $\Sigma$.

What is left to show is that $U_{\text {total}}$ is an MPU with homogeneous bulk and non-trivial boundary. This is indeed the case. $U$ is an MPU (with periodic boundary conditions) essentially because the $V_j$ are by hypothesis; its bulk single-site tensor is given by: 
\be
\pgfmathsetmacro{\spacing}{1.5cm}
\begin{tikzpicture}[scale=\defaultscaling,x=\spacing, y = \spacing, baseline={([yshift=-0.65ex] current bounding box.center)}]
    \definecolor{cream}{RGB}{255,255,221}
    \pgfmathsetmacro{\height}{1.5}
    \pgfmathsetmacro{\shift}{0.7}
    \pgfmathsetmacro{\labelpositions}{\height / 2 + \shift / 2 + 0.2}
    \pgfmathsetmacro{\labelcloseness}{0.15}
    \pgfmathsetmacro{\extra}{.15}

    \clip (2.35,-1.5) rectangle (5.65,1.5);
    
    \foreach \i in {0,1,3,4,6,7,9} {
        \draw[color=black,very thick] (\i,\shift) -- (\i,\height);
    }
    \foreach \i in {3} {
        \draw[color=purification,very thick] (\i,\shift) -- (\i,\height);
    }
    \foreach \i in {1.5, 4.5, 7.5} {
        \draw[color=black,very thick] (\i,\height) -- (\i,-\height);
        \fill (\i,\shift) circle (3pt);
        \fill (\i,-\shift) circle (3pt);
        
    }
    \foreach \i in {0,2,3,5,6,8,9} {
        \draw[color=black,very thick] (\i,-\shift) -- (\i,-\height);
        %\draw (\i - \labelcloseness, -\labelpositions) node {\scriptsize $d$};
    }
    \foreach \i in {0,3,6,9} {
        \draw[color=black,very thick] (\i,-\shift) -- (\i,\shift);
        \draw[color=black,very thick] (\i + \extra,\shift) -- (\i + \extra,\height);
    }
    \foreach \i in {1,4,7} {
        %\draw[color=black,very thick] (\i,-\shift) -- (\i,\shift);
        \draw[color=purification,very thick] (\i - \extra,\shift) -- (\i - \extra,\height);
    }

    \foreach \i in {1,4,7} {
            \draw[color=black,very thick, rounded corners=8pt]
    (\i, \shift) -- (\i, 0) -- (\i+1, 0) -- (\i+1, -\shift);
        }
    \foreach \i in {0,3,6} {
        \draw[color=black,very thick] (\i,\shift) -- (\i+1.5,\shift);
        \draw[thick, fill=cream, rounded corners=2pt]
            (\i - \extra, -0.2 + \shift) rectangle (\i+1+\extra, 0.2 + \shift);
        \draw (\i+0.5,\shift) node {\scriptsize $T$};
    }
    \foreach \i in {-1,2,5,8} {
        \draw[color=black,very thick] (\i,-\shift) -- (\i+2.5,-\shift);
        \draw[thick, fill=cream] (\i, -\shift) circle [radius=0.3];
        \draw[thick, fill=cream] (\i+1, -\shift) circle [radius=0.3];
        \draw (\i,-\shift) node {\scriptsize $B_\ell$};     
        \draw (\i+1,-\shift) node {\scriptsize $B_r$};     
    }
\end{tikzpicture},
\ee
where $B_\ell$ and $B_r$ are obtained via operator Schmidt decomposition of $B$. The extra output of the top gate (i.e., the inner grey legs in \cref{unitextension}) is not shown, as it is decoupled and is hence discarded. 

Each amplitude amplification rotation $G$ also can be written as an MPU, this time with a boundary, as it is the composition of $U$, $U^\dagger$ and the two reflections $R_{\Phi}$ and $R_{\Psi}$, which are two unitary linear combinations of MPOs, and hence can be expressed as an MPU with boundary themselves (\cref{MPOlincomb}). For $R_{\Psi}$, we have to show that is indeed a linear combination of MPOs, and in particular that $P_{\mathcal S^\perp}$ can be expressed as such. This is true, and follows from the actual implementation of $P_{\mathcal S^\perp}$: if the isometry defining the subspace $\mathcal S$ is implemented via a unitary acting on the system together with auxiliary qudits initialized in $\ket{0}_{A_1}^{\otimes N}$, then the action of the projector $P_{\mathcal{S}^\perp}$ can be realized by first applying the inverse of the unitary, projecting onto the subspace where the auxiliary qudits are not in $\ket{0}_{A_1}^{\otimes N}$, and finally reapplying the unitary. In our case, the isometry defining $\mathcal{S}$ is simply: $\Id^{\otimes N} \otimes \ket{0}_{A_1}^{\otimes N}$, which can clearly be realized in a local manner. 

As a consequence, $U_{\text {total}}$ can also be written in this way, given that all of its factors are MPUs. It is easy to see that the bond dimension of the MPUs in question remain independent of the system size, because extending them to additional sites amounts to replicating an identical bulk tensor with size-independent bond dimension.
Then, inputting the $ \ket{\mathrm{GHZ}}_{A_2}$ state on $U_{\text {total}}$ yields the statement of the proposition, as highlighted by \cref{rotation}.
\end{proof}

As a side remark, the MPS, i.e., the $\mathrm{GHZ}$, can always be chosen to be homogeneous, such that the only boundary dependence resides in the MPU. Given the definition of generalized GHZ state as $\ket{\mathrm{GHZ}} = \sum_i c_i \ket{i}^{\otimes N}$, with $c_i = n_i/m \in \mathbb{Q}$, we can indeed represent such a state by introducing a boundary. However, if, instead of introducing the boundary, one repeats each block 
$r_i$ times, the resulting GHZ state has coefficients $c'_i = r_i^2$. Hence, up to a global 
normalization, the coefficients $c_i$ can be approximated arbitrarily well by the $c'_i$. 
In particular, GHZ states with arbitrary real boundary coefficients can be approximated 
arbitrarily well by a system-size independent normalization constant and a homogeneous 
bulk GHZ state of larger bond dimension. The conclusion is that the MPS can always be made homogeneous, and the boundary is only contained in the MPU.

\section{Locality and Causality Preservation under Matrix Product Isometries}
We just saw how the inclusion of a normalization factor might lead to the creation of long-range correlations. In terms of the previously defined classes, we saw how the elements of {\sffamily{sMPI}} might create long-range correlations. Reconsider the previously described \cref{longrangeisometry}, which we now want to analyze under a different light. We had: 
\[
\mathcal{E}(\cdot) = \Tr(\cdot) \, \ket{\mathrm{GHZ}}\bra{\mathrm{GHZ}}.
\] 
As we mentioned, this channel clearly generates long-range entanglement and hence does not fit within the homogeneous framework. Nevertheless, the example remains physically relevant, as it highlights an interesting scenario — namely, channels that are \emph{Causality Preserving} but not \emph{Locality Preserving}. We recall the relevant definitions from Ref.~\cite{piroli_quantum_2020}, where these concepts were first adapted to channels (see also Ref.~\cite{arrighi_overview_2019}).

Generally, Causality Preserving Quantum Channels (CPQCs), with corresponding class {\sffamily{CPQC}}, are those for which the evolution of an observable localized at site $x$ depends only on the initial state in a neighborhood of $x$. In the Heisenberg picture, this is expressed by the adjoint map $\mathcal{E}^\dagger$: for any observable $O_x$ supported on site $x$, there exists an operator $O_{\mathcal{N}(x)}$ supported on a neighborhood $\mathcal{N}(x)$ of $x$ such that
\be
\mathcal{E}^\dagger(O_x) = O_{\mathcal{N}(x)}.
\ee
This ensures that the expectation value of $O_x$ on the evolved state is fully determined by the restriction of the initial state to $\mathcal{N}(x)$.
QCA channels are the unitary version of CPQC. 

A quantum channel $\mathcal{E}$ is called a Locality Preserving Quantum Channel (LPQC), contained in the class {\sffamily{LPQC}}, if, for every region $A$ of the lattice (with $a$ denoting its neighborhood and $\bar{A} = A \cup a$, and with analogous definitions for the two blocks of the complement $\bar{B}$ of $\bar{A}$), and for every positive semidefinite $\rho_{\bar{A}}, \rho_{\bar{B}}$ with support in the regions indicated by the corresponding subscript, the following relation holds:
\be
    \Tr_{a,b}\!\left[ \mathcal{E}({\rho_{\bar{A}} \rho_{\bar{B}}}) \right] = \frac{1}{d^N}\, \Tr_{a,\bar{B}}\!\left[ \mathcal{E}(\rho_{\bar{A}}) \right] 
   \Tr_{\bar{A},b}\!\left[ \mathcal{E}(\rho_{\bar{B}}) \right].
\ee
This condition expresses that when $\mathcal{E}$ acts on a product state, it does not generate correlations between the disjoint regions $A$ and $B$, except possibly in the immediate neighbourhood of their boundary.

\cref{longrangeisometry} is causality preserving, but not locality preserving, because this construction produces long-range correlations, reflecting that the GHZ tensor is not simple. The causal light cone on observables, i.e., the causality, is however still preserved, meaning the tensor is an isometric QCA. This is easily shown by considering the adjoint channel's action on a local observable $O$:
\begin{equation}
\label{lightcone}
\mathcal{E}^\dagger (O) 
    \hspace{0.3cm}
=     
\hspace{0.3cm}
    \begin{tikzpicture}[scale=0.6,baseline={([yshift=-0.65ex] current bounding box.center)}] 
        \definecolor{cream}{RGB}{255,255,221}
        \pgfmathsetmacro{\squarehalf}{0.4}
        \pgfmathsetmacro{\halfunit}{\squarehalf+\leglength}
        \pgfmathsetmacro{\extrafat}{0.1}
        \SingleTrLeft{(-4*\leglength,0)} 
        \draw [very thick] (11*\leglength,0) to (16*\leglength,0); 
        \SingleTrRight{(16*\leglength,0)} 
        \draw [very thick] (-4*\leglength,0) to (0,0); 
        \GHZTensor{0,0}{\leglength}{\squarehalf}{0} 
        \GHZTensor{2*\halfunit,0}{\leglength}{\squarehalf}{0} 
        \SingleDots{4*\halfunit, 0}{2*\leglength} 
        \GHZTensor{6*\halfunit,0}{\leglength}{\squarehalf}{0}
        \draw[very thick]  (\halfunit,\halfunit) .. controls (\halfunit,0) and (0,-0.5*\halfunit) .. (0,-1.5*\halfunit);
        \draw[very thick]  (3*\halfunit,\halfunit) .. controls (3*\halfunit,0) and (2*\halfunit,-0.5*\halfunit) .. (2*\halfunit,-1.5*\halfunit);
        \draw[very thick]  (7*\halfunit,\halfunit) .. controls (7*\halfunit,0) and (6*\halfunit,-0.5*\halfunit) .. (6*\halfunit,-1.5*\halfunit);
          \begin{scope}[yscale=-1, shift={(0,-3*\halfunit)}]
            \SingleTrLeft{(-4*\leglength,0)} 
            \draw [very thick] (11*\leglength,0) to (16*\leglength,0); 
            \SingleTrRight{(16*\leglength,0)} 
            \draw [very thick] (-4*\leglength,0) to (0,0); 
            \GHZTensor{0,0}{\leglength}{\squarehalf}{0} 
            \GHZTensor{2*\halfunit,0}{\leglength}{\squarehalf}{0} 
            \SingleDots{4*\halfunit, 0}{2*\leglength} 
            \GHZTensor{6*\halfunit,0}{\leglength}{\squarehalf}{0}
            \draw[very thick]  (\halfunit,\halfunit) .. controls (\halfunit,0) and (0,-0.5*\halfunit) .. (0,-1.5*\halfunit);
            \draw[very thick]  (3*\halfunit,\halfunit) .. controls (3*\halfunit,0) and (2*\halfunit,-0.5*\halfunit) .. (2*\halfunit,-1.5*\halfunit);
            \draw[very thick]  (7*\halfunit,\halfunit) .. controls (7*\halfunit,0) and (6*\halfunit,-0.5*\halfunit) .. (6*\halfunit,-1.5*\halfunit);
          \end{scope}
        \foreach \i in {0,1,6,7} {
            \IdentityLine{\i*\halfunit,1.5*\halfunit}{\halfunit};
        }
        \draw[ thick, fill=cream, rounded corners=4pt] (2*\halfunit - \extrafat,\halfunit - \extrafat) rectangle (3*\halfunit +\extrafat,2*\halfunit+\extrafat);
        \draw (2.5*\halfunit,1.5*\halfunit) node {\scriptsize $O$};
    \end{tikzpicture}
    \hspace{1cm}
    =
    \hspace{1cm}
\begin{tikzpicture}[scale=0.6,baseline={([yshift=-0.65ex] current bounding box.center)}] 
        \definecolor{cream}{RGB}{255,255,221}
        \pgfmathsetmacro{\squarehalf}{0.4}
        \pgfmathsetmacro{\halfunit}{\squarehalf+\leglength}
        \pgfmathsetmacro{\extrafat}{0.1}
        \IdentityLine{-\halfunit,1.5*\halfunit}{4*\halfunit}
        \GHZTensor{2*\halfunit,0}{\leglength}{\squarehalf}{0} 
        \draw[very thick]  (3*\halfunit,1.5*\halfunit) -- (3*\halfunit,-0.5*\halfunit);
        \SingleDots{5*\halfunit, 0}{2*\leglength};
          \begin{scope}[yscale=-1, shift={(0,-3*\halfunit)}]
                \GHZTensor{2*\halfunit,0}{\leglength}{\squarehalf}{0} 
                \draw[very thick]  (3*\halfunit,1.5*\halfunit) -- (3*\halfunit,-0.5*\halfunit);
                \SingleDots{5*\halfunit, 0}{2*\leglength};
          \end{scope}
        %\foreach \i in {0,1,6,7} {
        %    \IdentityLine{\i*\halfunit,1.5*\halfunit}{\halfunit};
        %}
        \IdentityLine{6.5*\halfunit,1.5*\halfunit}{4*\halfunit}
        \IdentityLine{7.5*\halfunit,1.5*\halfunit}{4*\halfunit}
        \draw[ thick, fill=cream, rounded corners=4pt] (2*\halfunit - \extrafat,\halfunit - \extrafat) rectangle (3*\halfunit +\extrafat,2*\halfunit+\extrafat);
        \draw (2.5*\halfunit,1.5*\halfunit) node {\scriptsize $O$};
        \draw[very thick] (3*\halfunit,0) .. controls (4*\halfunit,0) and (4*\halfunit,3*\halfunit) .. (3*\halfunit,3*\halfunit);
        \draw[very thick] (\halfunit,0) .. controls (0,0) and (0,3*\halfunit) .. (\halfunit,3*\halfunit);
    \end{tikzpicture}
\end{equation} 
Remember that the GHZ tensor ${\cal V}$ generates a bona fide isometry only after a system-size independent normalization, implying that the overall tensor is not homogeneous. In Ref.~\cite{piroli_quantum_2020}, it is shown that, for unitary channels, {\sffamily{CPQC}} (that is, usual QCA expressed as unitary channels) coincide with {\sffamily{LPQC}}.
If we define isometric QCA as {\sffamily{CPQC}} implemented by an isometry (i.e.,\ $\mathcal{E}(\rho) = V\rho V^\dagger$, where ${V}$ is an isometry), this begs the question of whether isometric {\sffamily{QCA}} = isometric {\sffamily{LPQC}}, as in the unitary case. This is not the case, as \cref{longrangeisometry} shows. Hence, while unitary QCAs coincide with homogenous MPUs, not all isometric QCAs can be represented as homogeneous MPIs.

As an addendum, the example also shows that {\sffamily{CPQC}} is not the convex hull of {\sffamily{LPQC}}, as the channel is extremal, and cannot be expressed as convex combination of {\sffamily{LPQC}}, contrary to what hypothesized in Ref.~\cite{piroli_quantum_2020}.

Finally, we denote by {\sffamily{dQC}} the set of CPQC that are obtained by a Stinespring dilation in terms of a QCA \cite{piroli_quantum_2020}
$\mathcal{E}_u: L(\mathcal{H} \otimes \mathcal{H}) \rightarrow L(\mathcal{H} \otimes \mathcal{H})$ as
$$
\mathcal{E}(\rho)=\operatorname{tr}_{V^{\prime}}\left[\mathcal{E}_u\left(\rho \otimes(|0\rangle\langle 0|)^{\otimes N}\right)\right],
$$
where $|0\rangle$ is a state of the ancilla qudits, and by {\sffamily{tnQC}} all causality-preserving channel whose Choi-Jamilkowski state can be written in MPO form.

Having set the stage with the definitions from Ref.~\cite{piroli_quantum_2020}, we show where the newly defined classes fall with respect to those.

\begin{prop}
The quantum channel classes considered here satisfy the inclusion relations illustrated in the diagram below:
\begin{equation}
    	\includegraphics[width=0.5\linewidth]{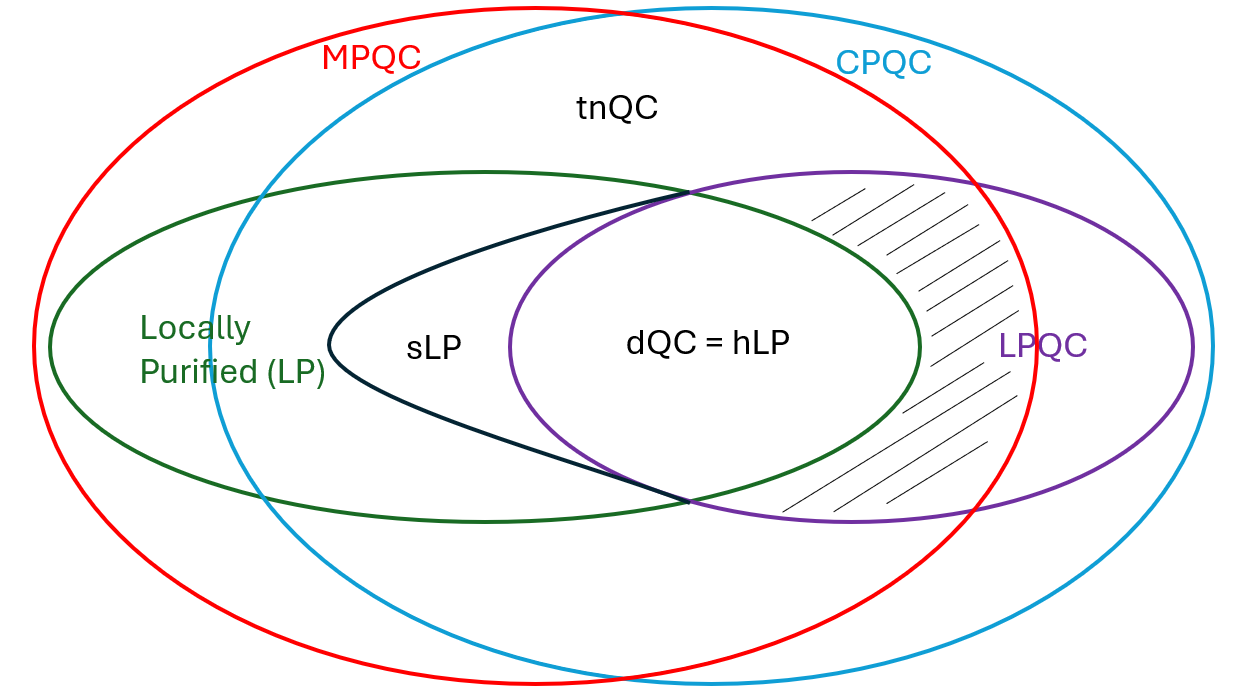}
\end{equation}
\end{prop}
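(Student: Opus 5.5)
The inclusion diagram bundles a list of pairwise relations. My plan is to establish each arrow or containment separately, in increasing order of generality, and then to witness every strict inclusion with an explicit example already in the paper.

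The chain I would follow is
\[
\text{\sffamily hLP} \subseteq \text{\sffamily sLP} \subseteq \text{\sffamily LP} \subseteq \text{\sffamily MPQC}.
\]
The first inclusion is the case $c=1$. The second holds because for an sMPI one can absorb the constant $1/\sqrt{c}$ into a single site tensor. This breaks homogeneity but keeps the bond dimension bounded, so the channel lies in {\sffamily LP}. The third inclusion is immediate from the construction of the Choi state as $WW^\dagger$.

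Next I place these classes relative to the causal classes of Ref.~\cite{piroli_quantum_2020}. For {\sffamily hLP} I will use \cref{SMdepth2}. After blocking, the Stinespring isometry is a depth-two brickwork of isometric gates. Extending each gate to a unitary, as in \cref{eq:BandBTilde}, gives a QCA dilation with ancillas fixed in $\ket{0}$, so {\sffamily hLP} $\subseteq$ {\sffamily dQC}. The strict light cone then makes these channels both causality preserving and locality preserving, so {\sffamily hLP} $\subseteq$ {\sffamily LPQC} $\cap$ {\sffamily CPQC}.

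For {\sffamily sLP} I will use \cref{sMPIstructure}, which writes $V=\frac{1}{\sqrt c}\sum_j m_j V_j$ with strongly orthogonal hMPIs $V_j$. Because of the strong orthogonality, the Heisenberg image $V^\dagger (O_x\otimes \Id) V$ collapses to $\frac1c\sum_j m_j^2 V_j^\dagger O_x V_j$, and each term has a bounded light cone. Hence {\sffamily sLP} $\subseteq$ {\sffamily CPQC}. The Choi state is a constant-bond-dimension MPO, so {\sffamily sLP} $\subseteq$ {\sffamily tnQC} as well.

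For the strictness and non-inclusion claims I use the examples already available. \cref{longrangeisometry} lies in {\sffamily sLP} $\cap$ {\sffamily CPQC} but not in {\sffamily LPQC}, because the GHZ output is long-range correlated. The same example shows that {\sffamily sLP} is not contained in {\sffamily hLP}. \cref{ineqLocNTI} and \cref{TIinequivalence} separate {\sffamily MPQC} from {\sffamily LP}. A non-causal MPU, such as a boundary MPU of the type $\frac1{\sqrt2}(\Id+iX^{\otimes N})$, shows that {\sffamily LP} is not contained in {\sffamily CPQC}.

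The step I expect to be the main obstacle is the precise positioning of {\sffamily sLP} with respect to {\sffamily dQC}, that is, whether a channel with a GHZ-type dilation admits a QCA dilation with product ancillas. My first attempt would be to check whether the GHZ replacement channel has a dilation by a QCA acting on a product ancilla state. If the answer is negative, I would argue via \cref{SMthm:boundary}: the only dilation available there uses a boundary MPU acting on a GHZ input, and one would show that a product-ancilla QCA cannot produce GHZ correlations. This would place {\sffamily sLP} outside {\sffamily dQC} but inside {\sffamily CPQC}, consistent with the diagram.
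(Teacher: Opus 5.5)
Your proofs of {\sffamily sLP} $\subseteq$ {\sffamily CPQC} and {\sffamily hLP} $\subseteq$ {\sffamily dQC} are exactly the paper's. For the first, strong orthogonality collapses $V^\dagger O V$ to $\sum_j m_j^2 V_j^\dagger O V_j/c$. For the second, you extend the depth-two brickwork gates to unitaries with ancillas in $\ket{0}$. Your non-causal boundary MPU does the job of the paper's multi-controlled $Z$.

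The genuine gap is that the diagram identifies {\sffamily hLP} with {\sffamily dQC}, and you prove only one inclusion. The missing direction, {\sffamily dQC} $\subseteq$ {\sffamily hLP}, goes as follows.

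A channel in {\sffamily dQC} has the form $\mathcal E(\rho)=\Tr_{\rm anc}[U_N(\rho\otimes\ket{0}\bra{0}^{\otimes N})U_N^\dagger]$, with $U_N$ a (translation-invariant) QCA. In 1D, translation-invariant QCAs coincide with homogeneous MPUs \cite{cirac_matrix_2017-1}, so $U_N$ is generated by a single repeated tensor $\mathcal U$. Contract the ancilla input leg of $\mathcal U$ with $\ket{0}$ and relabel the ancilla output leg as the purification leg. This gives a tensor $\mathcal V$ repeated on every site. The operator $V_N=U_N(\Id\otimes\ket{0}^{\otimes N})$ satisfies $V_N^\dagger V_N=\Id$ for all $N$, so $\mathcal V$ generates an hMPI whose traced channel is $\mathcal E$. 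Translation invariance of the QCA is used here; a non-TI QCA would only give an inhomogeneous MPI. Without this direction, your diagram has {\sffamily dQC} only as a superset of {\sffamily hLP}.

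Once the equality is in place, the step you flag as the main obstacle disappears. You get {\sffamily dQC} $=$ {\sffamily hLP} $\subseteq$ {\sffamily sLP}, and the GHZ replacement channel shows the inclusion is strict.

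Your proposed route for that step is also flawed as stated. \cref{SMthm:boundary} only asserts that \emph{some} dilation of the form boundary-MPU plus GHZ input exists. It cannot tell you that this is ``the only dilation available''.

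The argument that works is the one you mention last. A QCA acting on a product ancilla state has a strict light cone, so every channel in {\sffamily dQC} is locality preserving. The GHZ replacement channel produces long-range correlations from product inputs. It therefore lies outside {\sffamily LPQC}, and hence outside {\sffamily dQC}.
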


\begin{proof}
We establish the non-trivial inclusions one by one. The right part of the diagram, and relative inclusions, are taken from Ref.~\cite{piroli_quantum_2020}. In particular, the shaded area refers to the open question of whether {\sffamily{dQC}} $=$ {\sffamily{tnQC}} $\cap$ {\sffamily{LPQC}}.
\begin{enumerate}[(i)]

    \item {\sffamily{LP}} $\subsetneq$ {\sffamily{MPQC}}: This is the inequivalence shown in \cref{ineqLocNTI}.

    \item {\sffamily{MPQC}} $\ne$ {\sffamily{tnQC}}: Consider a multicontrol Z gate. This can be written in tensor-network form, but it is not causality-preserving.

    \item {\sffamily{sLP}} $\subseteq$ {\sffamily{CPQC}}: By applying the decomposition from \cref{sMPIstructure}, we compute how the adjoint channel of a sMPI $V$ acts upon a given local operator:
    $$\mathcal{E}^\dagger(O) = V^\dagger O V= (\sum_i m_i V^\dagger_i) O (\sum_j m_j V_j) = \sum_{i,j} m_i m_j V^\dagger_i O V_j \stackrel{(1)}{=}
 \sum_{i} m_i^2 V^\dagger_i O V_i = \sum_{i} m_i^2 \mathcal{E}_i^\dagger(O),$$ where in $(1)$ the strong orthogonality of the $V_i$ was used. The last term is a sum over operators with the same --- bounded --- support, proving that sLPs are causality-preserving.  
    
    \item {\sffamily{hLP}} $=$ {\sffamily{dQC}}: Given that in one dimension homogeneous MPUs coincide with translationally invariant (TI) quantum cellular automata, the Stinespring dilation of any hLP channel is implemented by a TI QCA unitary (from which the hMPI is obtained by fixing the ancilla input). Consequently, {\sffamily{dQC}} $\;\subseteq\;${\sffamily{hLP}}. Conversely, the depth-two quantum circuit decomposition of a hMPI tensor (see \cref{uuvv}) shows that every hMPI can be realized as an MPU with one input fixed. Thus, we have shown the following: {\sffamily{hLP}} $\;\subseteq\;$ {\sffamily{dQC}}, and, consequently, {\sffamily{hLP}} $=$ {\sffamily{dQC}}.
I.e., in 1D, a channel is in {\sffamily{dQC}} if and only if it is in {\sffamily{hLP}}.
\end{enumerate}
\end{proof}

\end{document}